\definecolor{Gray}{gray}{0.90}
\definecolor{LightCyan}{rgb}{0.65,1,1}
\definecolor{Cyan}{rgb}{0.95,1,1}
\definecolor{GRAY}{gray}{0.75}
\newcommand{\bsub}{\begin{subequations}}
\newcommand{\esub}{\end{subequations}$\!$}
\renewcommand\theequation{\thesection.\arabic{equation}}
\renewcommand{\theequation}{\arabic{section}.\arabic{equation}}
\newtheorem{remark}{Remark}
\newtheorem{prop}{Proposition}
\newtheorem{lemma}{Lemma}[section]
\newcommand{\R}{\mathbb{R}}
\renewcommand{\v}[1]{\mathbf{#1}}
\newcommand{\x}{\v{x}}
\newcommand{\y}{\v{y}}
\newcommand{\eps}{\varepsilon}
\newcommand{\lb}{{\pmb l}}
\title{Synchronized Memory-Dependent Intracellular Oscillations for a
  Cell-Bulk ODE-PDE Model in $\R^2$} \author{Merlin
  Pelz\thanks{Dept. of Mathematics, Univ. of British Columbia,
    Vancouver, B.C., Canada.}  \and Michael
  J. Ward\footnotemark[1]\,\, \thanks{corresponding author,
    \texttt{ward@math.ubc.ca}}}
\begin{document}

\maketitle

\begin{abstract}
  {For a cell-bulk ODE-PDE model in $\R^2$, a hybrid
    asymptotic-numerical theory is developed to provide a new
    theoretical and computationally efficient approach for studying
    how oscillatory dynamics associated with spatially segregated
    dynamically active “units” or “cells” are regulated by a PDE bulk
    diffusion field that is both produced and absorbed by the entire
    cell population. The study of oscillator synchronization in a PDE
    diffusion field was one of the initial aims of Yoshiki Kuramoto's
    foundational work.  For this cell-bulk model, strong localized
    perturbation theory, as extended to a time-dependent setting, is
    used to derive a new integro-differential ODE system that
    characterizes intracellular dynamics in a memory-dependent
    bulk-diffusion field. For this nonlocal reduced system, a novel
    fast time-marching scheme, relying in part on the {\em
      sum-of-exponentials method} to numerically treat convolution
    integrals, is developed to rapidly and accurately compute
    numerical solutions to the integro-differential system over long
    time intervals.  For the special case of Sel'kov reaction
    kinetics, a wide variety of large-scale oscillatory dynamical
    behavior including phase synchronization, mixed-mode oscillations,
    and quorum-sensing are illustrated for various ranges of the
    influx and efflux permeability parameters, the bulk degradation
    rate and bulk diffusivity, and the specific spatial configuration
    of cells. Results from our fast algorithm, obtained in under one
    minute of CPU time on a laptop, are benchmarked against PDE
    simulations of the cell-bulk model, which are performed with a
    commercial PDE solver, that have run-times that are orders of
    magnitude larger.}
\end{abstract}

\section{Introduction}\label{sec:intro}

{Over the past four decades, the Kuramoto model \cite{kuram_orig}
\begin{equation}\label{intro:model}
  \dot{\theta}_i = \omega_i + \frac{K}{N}\sum_{j=1}^N \sin(\theta_j -\theta_i)
  \,,
\end{equation}
and its variants have provided the primary theoretical modeling
framework for analyzing the synchronization properties of a collection
of $N$ nonlinear oscillators. In (\ref{intro:model}), $\theta_i$ is
the phase of the $i^\text{th}$ oscillator, $\omega_i$ is its natural
frequency, and $K$ is the coupling strength. Key quantities for
analyzing the synchronization behavior for (\ref{intro:model}) are
the Kuramoto order parameters $r(t)$ and $\psi(t)$ defined by
$r e^{i\psi} = N^{-1} \sum_{j=1}^N e^{i\theta_j}$. In the mean field
limit $N\to \infty$, phase transitions to complete phase
synchronization $r=1$ as $K$ is increased above a threshold have been
characterized. For a survey of results see \cite{strogatz2000},
\cite{acebron2005kuramoto} and \cite{da2018kuramoto}.

One key aspect of Kuramoto-type phase oscillator systems is that they
can be formally derived from nonlinear dynamical systems of coupled
oscillators by using asymptotic phase reduction methods that are valid
in the limit of weak coupling (cf.~\cite{nakao_phasered},
\cite{nakao_philtrans} and \cite{kuram_cont}).  An important current
research theme is the study of oscillator synchronization on networks,
as surveyed in \cite{sync1}, owing to their many diverse applications
such as power grids \cite{grzybowski2016} and cortical brain activity
\cite{breakspear2010generative}, among others.

With the aim of incorporating diffusive effects, in
\cite{kuramoto1995} Kuramoto emphasized the need for characterizing
synchronization behavior for a collection of coupled oscillators or
``cells'', each fixed in space, in which an extra substance secreted
from each cell can diffuse over the entire space and effectively
mediate the inter-cell interaction.  His phenomenological model in
\cite{kuramoto1995} of this interaction has the form
\begin{equation}\label{intro:kuram_pde}
  \v{u}_j^{\prime} = \v{F}(\v{u}_j) + \v{g}(A(\x_j,t)) \,, \quad
  \varepsilon A_t = D\Delta A - \eta A + \sum_{k=1}^{N} h(\v{u}_k)
  \delta(\x-\x_k) \,,
\end{equation}
where $\v{g}$ and $h$ are prescribed coupling functions.  For $\x\in \R^1$
and in the limit $\varepsilon\to 0$, (\ref{intro:kuram_pde}) reduces
to
\begin{equation}\label{intro:kuram_adia}
  \v{u}_j^{\prime} = \v{F}(\v{u}_j) + \v{g}\left(\sum_{k=1}^{N}
    \sigma(|\x_j-\x_k|) h(\v{u}_k)\right) \,.
\end{equation}
where $\sigma(r)=Ce^{-r\sqrt{\eta/D}}$ is the 1-D quasi-static Green's
function for some $C>0$.  Further approximating
(\ref{intro:kuram_adia}) near a Hopf bifurcation point of the
uncoupled dynamics $\v{u}_j^{\prime}=\v{F}(\v{u}_{j})$ by using a
weakly nonlinear analysis, the normal form of the underlying dynamics
leads to a nonlocally coupled Complex Ginzburg-Landau (CGL)
system. These nonlocal CGLs yield a wide variety of highly intricate
spatio-temporal dynamics, commonly known as {\em chemical turbulence}
(\cite{kuramoto_book}, \cite{kuramoto1995}, \cite{nakao_heat}). More
recently, in a series of articles (\cite{viana_heat}, \cite{viana} and
\cite{viana_2023}) there has been a renewed focus on analyzing
synchronization properties for phase oscillator systems with coupling
as in (\ref{intro:kuram_pde}), with the main emphasis being for the
$\varepsilon\to 0$ limit (\ref{intro:kuram_adia}).}

{Two primary limitations of the phenomenological model
  (\ref{intro:kuram_pde}) is that $A(\v{x},t)$ has a singularity at
  $\x=\x_j$ in $\R^2$ and $\R^3$, and that it is not clear how to
  choose the coupling functions $g$ and $h$ for a specific
  application. For many problems in biology and chemical physics, it
  is essential to {\em explicitly} model the diffusive coupling of
  dynamically active units, without incorporating extraneous coupling
  functions. In particular, for many microbial systems, cell-cell
  communication occurs through the diffusion of extracellular
  signaling molecules, referred to as autoinducers, that are both
  produced and absorbed by the entire collection of cells
  (cf.~\cite{dunny1997cell}, \cite{taga2003chemical},
  \cite{dano2}). For a colony of \textit{Dictyostelium discoideum}, it
  is known that the autoinducer cyclic adenosine monophosphate (cAMP)
  triggers intracellular oscillations that initiate the process of the
  spatial aggregation of the colony in low nutrient environments
  (cf.~\cite{gregor2010}, \cite{nandy1998},
  \cite{goldbeter1997biochemical}). In addition, the autoinducer
  acetaldehyde (Ace) leads to glycolytic oscillations in a colony of
  yeast cells (cf.~\cite{dano}, \cite{dano2}, \cite{de2007dynamical},
  \cite{hauser}), while acylated homoserine lactones (AHLs) are
  implicated in triggering bioluminescence behavior associated with
  the marine bacterium {\em Aliivibrio fischeri} that resides in the light
  organ of certain species of tropical squid
  (cf.~\cite{taga2003chemical}). For these microbial systems, one key
  aspect is to characterize quorum-sensing behavior, whereby
  collective dynamics can only occur if the cell population exceeds a
  threshold (cf.~\cite{schwab}, see \cite{perez2016} for a survey).
  In a chemical physics context, collective oscillatory dynamics of
  catalyst loaded pellets can occur owing to the diffusive coupling of
  Belousov-Zhabotinsky (BZ) reagents in a chemical mixture
  (cf. \cite{show_epstein}, \cite{taylor1}, \cite{taylor2},
  \cite{tinsley1}, \cite{tinsley2}). In addition, certain
  microemulsions of dynamically active surface-stabilized BZ droplets
  that are immersed in oil lead to synchronized oscillatory dynamics
  (cf.~\cite{tompkins}). More recently, chaotic oscillations have been
  observed for a compartmentalized surface reaction nanosystem
  \cite{raab}.}

{An ideal modeling framework to investigate collective dynamics
  arising from the coupling of a bulk diffusion field are the
  cell-bulk models originating from \cite{muller2006} and
  \cite{muller2013} in a 3-D setting and from \cite{gou2d} and
  \cite{smjw_diff} in 2-D domains. From a mathematical viewpoint,
  cell-bulk models are compartmental-reaction diffusion (RD) systems
  with rich dynamics that can be used for analyzing how intracellular
  oscillations associated with spatially segregated dynamically active
  ``units'' or ``cells'' are mediated by one or more diffusing
  substances that are both produced and absorbed by the entire cell
  population. This modeling framework is also well-aligned with
  Kuramoto's original aims of investigating oscillator synchronization
  through the effect of diffusive coupling as discussed in
  \cite{kuramoto1995}. The survey article \cite{show_epstein} has also
  emphasized the need for theoretical modeling frameworks that couple
  discretely interacting dynamical units.
  
  Previous studies of synchronization or pattern-forming properties of
  cell-bulk systems, i.e., systems that involve a cell membrane
  connected to a diffusion field (the bulk), include the 1-D analysis
  in \cite{gomezmarin} with oscillatory FitzHugh-Nagumo kinetics on
  the diffusively coupled boundaries and the bulk-membrane analysis of
  \cite{levine2005} in disk-shaped domains. In a 1-D context, and with
  one bulk diffusing species, this compartmental-RD system modeling
  paradigm has been shown to lead to triggered oscillatory
  instabilities for various reaction kinetics involving conditional
  oscillators (cf.~\cite{gou2015}, \cite{gou2016},
  \cite{gou2017}). Amplitude equations characterizing the local
  branching behavior for these triggered oscillations have been
  derived in \cite{paquin_1d} using a weakly nonlinear
  analysis. Extensions of this framework incorporating time-delay
  effects have been used to model intracellular polarization and
  oscillations in fission yeast (cf.~\cite{xu}). In a
  2-D domain, cell-bulk models with one diffusing bulk species have
  been formulated and used to model quorum-sensing behavior
  (cf.~\cite{gou2d}, \cite{smjw_diff}, \cite{ridgway},
  \cite{q_survey}). In a 2-D bounded domain with no-flux boundary
  conditions, and in the limit of large bulk diffusivity and small
  cell radii, the study of intracellular dynamics for cell-bulk models
  can be asymptotically reduced to the study of an ODE system with
  global coupling (cf.~\cite{smjw_diff}, \cite{smjw_quorum}), where
  the global mode arises from the approximately spatially uniform bulk
  diffusion field.  With two-bulk diffusing species, and as inspired
  by the trans-membrane signal transduction study in \cite{rauch2004},
  in 1-D \cite{turing1d} and in 2-D \cite{turing2d} domains it has
  been shown that a symmetry-breaking bifurcation leading to a
  linearly stable asymmetric pattern can occur for equal bulk
  diffusivities when the ratio of reaction rates across the cell
  boundaries for the two bulk species is sufficiently large.}

\begin{figure}[htbp]
  \centering
    \begin{subfigure}[b]{0.48\textwidth}  
      \includegraphics[width=\textwidth]{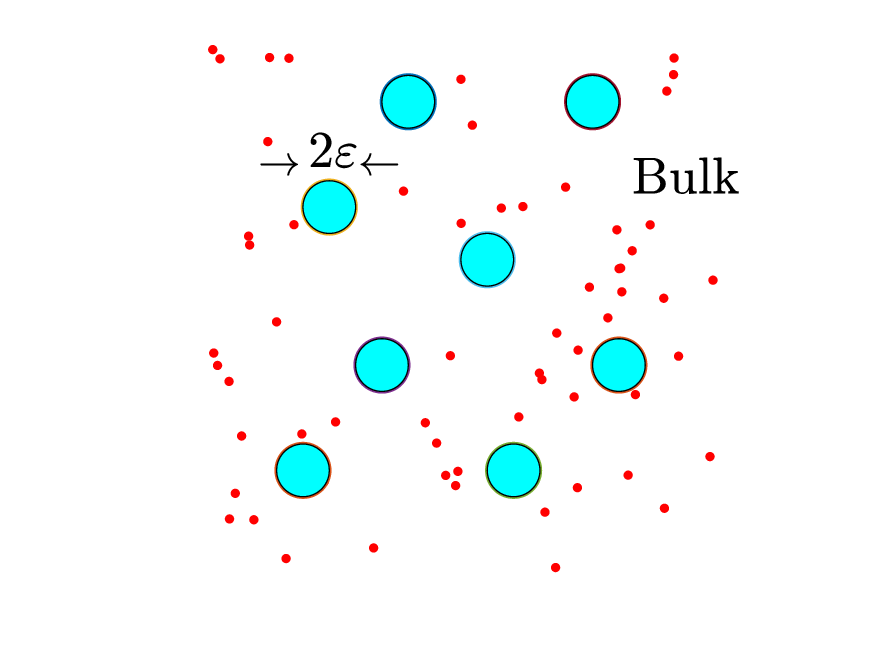}
        \caption{Cells coupled by bulk diffusion}
        \label{fig:schematic}
    \end{subfigure}
    \begin{subfigure}[b]{0.48\textwidth}
      \includegraphics[width=\textwidth]{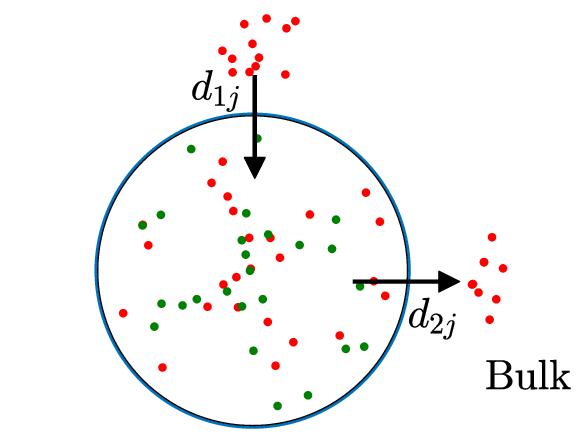}
        \caption{Exchange between a cell and the bulk} 
        \label{fig:schematic_zoom}
    \end{subfigure}
    \caption{Left: Dynamically active circular signaling compartments
      (in cyan) coupled through a bulk diffusion field (red dots) in
      $\R^2$. Right: Zoom of the chemical exchange between the bulk
      and any given cell when there are two intracellular species
      (green and red dots). Only the red species is secreted into the
      extracellular bulk region, while the bulk species can bind to
      the cell membrane. A Robin boundary condition models this
      chemical exchange.}
\label{fig:onebulk_scheme}
\end{figure}

{We now formulate the coupled dimensional cell-bulk ODE-PDE model of
\cite{gou2d} and \cite{smjw_diff}, as inspired by \cite{muller2006}
and \cite{muller2013}, with one bulk species in $\R^2$. We assume that
there are $N$ dynamically active cells, modeled by a collection of
disks of a common radius $R_0$, denoted by $\Omega_j$ and centered at
$\x_j\in \R^2$ for $j\in \lbrace{1,\ldots,N\rbrace}$. In the bulk, or
extracellular, region $\R^2\setminus \cup_{j=1}^{N}\,\Omega_j$, the
concentration $\mathcal{U}(\v{X},T)$ of the autoinducer or bulk
signal, with diffusivity $D_B$ and bulk degradation $k_B$, is assumed
to satisfy
\begin{subequations}\label{Dim_bulk} 
\begin{align}
   \partial_{T}\mathcal{U} = & D_B \, \Delta \,
     \mathcal{U} - k_B \, \mathcal{U}\,, \quad T>0\,,\quad 
    \v{X} \in \R^2\,\setminus \cup_{j=1}^{N}\,\Omega_j\,, \label{Dim_bulka}\\
    D_B \, \partial_{n_{\v{X}}} \, \mathcal{U}  &=
  \beta_{1j} \, \mathcal{U} - \beta_{2j} \, \mu_{1j}\,, \quad
    \x \in \partial \Omega_j\,,\qquad j \in \lbrace{1, \ldots, N\rbrace}\,.
                                                  \label{Dim_bulkb}
\end{align}
The permeabilities $\beta_{1j}>0$ and $\beta_{2j}>0$ control the
influx and efflux into and out of the $j^{\mbox{th}}$ cell,
respectively, while $\partial_{n_{\v{X}}}$ is the outer normal
derivative pointing into the bulk region.

Inside each cell we assume that there are $m$ species
$\v{\mu}_j \equiv (\mu_{1j},\ldots,\mu_{mj})^{T}$ that undergo interactions
with the local reaction kinetics $\v{F}_j$, and that one species,
labeled by $\mu_{1j}$, can permeate across the cell boundary. The
cells are assumed to be sufficiently small so that we can
neglect any spatial gradients in the intracellular species within each
cell. The intracellular dynamics in the $j^{\mbox{th}}$ cell are
coupled to the bulk diffusion field (\ref{Dim_bulk}) via an
integration of the diffusive flux across the cell membrane as
\begin{equation}\label{Dim_Intra}
\begin{split}
  \frac{d\v{\mu}_j}{dT} & = k_R\, \mu_c \, \v{F}_j
  \left( \v{\mu}_j/\mu_c  \right)  + \v{e}_1 \int_{\partial \Omega_j}\,
  \left( \beta_{1j} \, \mathcal{U} - \beta_{2j} \, \mu_{1j} \right) \,
  {\rm d}S_{\v{X}} \,, \quad j\in \lbrace{1,\ldots,N\rbrace}\,.
\end{split}
\end{equation}
\end{subequations}}
{In (\ref{Dim_Intra}),
$\v{e}_1 \equiv (1,0,\ldots,0)^{T}$, $k_R > 0$ is the dimensional
intracellular reaction rate and $\mu_c>0$ is a typical magnitude of
the vector $\v{\mu}_j$.  In this formulation, $\mu_{1j}$ can permeate
the cell membrane with an efflux rate $\beta_{2j}$ per unit
length. The influx rate $\beta_{1j}$ controls the feedback into the
$j^{\mbox{th}}$ cell from the global bulk diffusion field that is
produced by the entire collection of spatially segregated cells.

For our asymptotic limit, we will assume that the common radius $R_0$
of the signaling compartments is small relative to the minimum
inter-cell separation distance $L$, and so we introduce a small
parameter $\eps \equiv R_0/L \ll 1$. From the non-dimensionalization of
the cell-bulk model \eqref{Dim_bulk} given in Appendix \ref{app:nondim}
we obtain that the dimensionless bulk field $U(\v{x},t)$ satisfies
(see Fig.~\ref{fig:onebulk_scheme} for a schematic)
\begin{subequations}\label{DimLess_bulk}
\begin{align}
  \partial_{t} U =&\,  D \, \Delta U -  \sigma \, U\,, \quad
   t > 0\,,\quad \x \in \R^2 \setminus \cup_{j=1}^{N}\,
   \Omega_{\varepsilon_j}\,; \quad U(\v{x},0)=0 \,, \label{DimLess_bulka} \\
   \varepsilon  D\, \partial_n U  &= d_{1j} \, U - d_{2j} \, u_{1j}\,,
   \quad \x \in \partial \Omega_{\varepsilon_j}\,,\qquad j \in \lbrace{1,
                                    \ldots, N\rbrace} \,,
    \label{DimLess_bulkb}
\end{align}
which is coupled to the dimensionless intracellular dynamics within
the $j^{\mbox{th}}$ cell by
\begin{align}\label{DimLess_Intra}
  \frac{\text{d} \v{u}_j}{\text{d}t} & =  \,
 \v{F}_j \left( \v{u}_j  \right)  + \frac{\v{e}_1}{\eps}
  \int_{\partial \Omega_{\varepsilon_j}}\, ( d_{1j} \, U - d_{2j} \, u_{1j}) \,\,
              {\rm d}S_{\v{x}} \,,\qquad j \in \lbrace{1, \ldots, N\rbrace}\,.
\end{align}
\end{subequations}
Here $\v{u}_j=(u_{1j},\ldots,u_{mj})^{T}$ is the dimensionless vector
of intracellular species in the $j^{\mbox{th}}$ cell, labeled by
$\Omega_{\varepsilon_j}\equiv \lbrace{\v{x} \, \vert \, \,
  |\v{x}-\v{x}_j| \leq \varepsilon\rbrace}$. We assume that the
centers of the cells are well-separated in the sense that
$\mbox{dist}(\v{x}_j,\v{x}_k)={\mathcal O}(1)$ for $j\neq k$. In 
\eqref{DimLess_bulk}, the key dimensionless parameters are
\begin{equation}\label{dim:param}
  D \equiv \frac{D_B}{k_R L^2} \,, \quad
  d_{1j} \equiv \varepsilon \frac{\beta_{1j}}{k_R L} = {\mathcal O}(1) \,, \quad
  d_{2j} \equiv \varepsilon \frac{\beta_{2j} L}{k_R} = {\mathcal O}(1) \,, \quad
   \sigma \equiv \frac{k_B}{k_R} \,.
\end{equation}
Here $D$ and $\sigma$ are the effective bulk diffusivity and
bulk degradation rate. In \eqref{dim:param}, the ratios
${\beta_{1j}/(k_RL)}$ and ${\beta_{2j} L/k_R}$ are chosen as
$\mathcal{O}(\varepsilon^{-1})$ so that there is an
${\mathcal O}(1)$ transport across the cell membrane. 

Our goal for the conceptual cell-bulk model (\ref{DimLess_bulk}) is to
develop a hybrid asymptotic-numerical approach to study how intracellular
oscillations are both initiated and synchronized by the bulk
diffusion field that is created by the entire collection of cells. In
contrast to the study in \cite{smjw_diff} that focused only on the steady-state
problem and the spectral properties of the linearization of the steady-state,
the main goal in this paper is to use a hybrid approach to
study the {\em large-scale} intracellular dynamics for (\ref{DimLess_bulk})
that occur away from stable steady-states.

In \S \ref{2:ode_memory} we will extend the strong localized
perturbation theory, as surveyed in \cite{ward2018spots}, to the
time-dependent setting of (\ref{DimLess_bulk}) in order to derive, in
the limit $\varepsilon\to 0$ of small cell radii, a new
integro-differential ODE system that characterizes how intracellular
dynamics are coupled to the bulk diffusion field that is produced by
the entire cell population. This reduced integro-differential system,
as described in Proposition \ref{prop:2ode}, is asymptotically
accurate to all orders in $\nu={-1/\log\varepsilon}$. As discussed in
Remark \ref{rem:kuram} below, the ``outer problem'' in our analysis
shares some common features with Kuramoto's original model
(\ref{intro:kuram_pde}).

Assuming that $U(\x,0)=0$, in \S \ref{sec:transient} we analyze the
short-time behavior of the cell-bulk system on the time-scale
$\tau = \varepsilon^2 t$ within an ${\mathcal O}(\varepsilon)$
neighborhood of each cell. This transient solution, involving
Ramanujun's integral, is essential for providing the initial behavior
for the integro-differential ODE system in Proposition \ref{prop:2ode}
that is valid on the ${\mathcal O}(1)$ time-scale. In \S \ref{sec:ss},
we show that the steady-states of the cell-bulk model
(\ref{DimLess_bulk}), as obtained from strong localized perturbation
theory, coincide with the steady-state limiting values of the
integro-differential ODE system. In addition, the linear stability
properties of the steady-states are shown to be characterized by the
roots of a nonlinear matrix eigenvalue problem, referred to as the
globally coupled eigenvalue problem (GCEP).

We emphasize that a direct numerical study of the integro-differential
system in Proposition \ref{prop:2ode} is highly challenging owing to
the fact that ${N(N+1)/2}$ memory-dependent convolution integrals,
each with integrable singularities, would have to be computed
numerically up to time $t$ in order to advance the solution one time
step to time $t+\Delta t$. To overcome this difficulty, in \S
\ref{sec:numerics} we develop a fast time-marching numerical scheme,
which relies significantly on the {\em sum-of-exponentials} method of
\cite{greengard_2015} and \cite{beylkin_2005} (see also
\cite{beylkin_2010}) together with Duhamel's principle, that allows us
to both rapidly and accurately compute solutions to the
integro-differential system for arbitrary reaction kinetics over long
time intervals.

In \S \ref{sec:selkov} we illustrate our hybrid asymptotic-numerical
theory for the choice of Sel'kov reaction dynamics, which have been
used previously in simple models of glycolysis oscillations
(cf.~\cite{nandy1998}, \cite{Selkov}). With Sel'kov kinetics the
steady-state solution branches of (\ref{DimLess_bulk}) have no
transcritical or fold bifurcations and, as a result, steady-states are
destabilized only via Hopf bifurcations. For various specific spatial
configurations of cells, phase diagrams in the ${1/\sigma}$ versus $D$
parameter plane, as computed from the winding-number of the
determinant of the nonlinear matrix eigenvalue problem, are provided
to identify the number of destabilizing modes of the linearization of the
steady-state. The corresponding eigenvector of the GCEP is shown to
encode both the relative magnitude of the signaling gradient near each
cell, as well as the phase shift in small amplitude oscillations
between cells that occur near an unstable steady-state. Quorum-sensing
and diffusion-sensing behavior, as well as the key effects on
intracellular oscillations of varying the influx and efflux
permeabilities, are examined for various spatial configurations of
cells. In particular, we show that a single oscillating cell can
trigger coherent intracellular oscillations in an entire collection of
cells.  Moreover, by computing the Kuramoto order parameter, we show
an apparent phase transition to complete phase coherence as the bulk
diffusivity is increased for a one-shell hexagonal arrangement of
identical cells. These cells would be in a quiescent state without any
cell-bulk coupling. Finally, in \S \ref{sec:discussion} we 
discuss several related cell-bulk problems that are now tractable to
study with our hybrid approach.}

\section{Transient analysis near the cells}\label{sec:transient}

We first analyze the transient solution near the $j^{\mbox{th}}$ cell,
which is valid for $t={\mathcal O}(\eps^2)$ when $U(\v{x},0)=0$ and
for arbitrary initial values $\v{u}_{j}(0)$ for
$j\in \lbrace{1,\ldots,N\rbrace}$. This short-time analysis will provide
the initial conditions for the long-time dynamics studied in \S
\ref{2:ode_memory}.

We first assume that $u_{1j}(0)\neq 0$. In the $j^{\mbox{th}}$ inner
region we let $\y=\eps^{-1}(\v{x}-\v{x}_j)$ with $\rho=|\v{y}|$ and we
introduce the short time-scale $\tau$, defined by $t=\eps^2
\tau$. From (\ref{DimLess_bulka}) and (\ref{DimLess_bulkb}), we obtain
that $V(\rho,\tau)=U(\v{x}_j+\eps\v{y}, \eps^2\tau)$ satisfies, to
leading-order, the local problem
\begin{equation}\label{3:local_trans}
\begin{split}
  \partial_{\tau} V &=D \left(\partial_{\rho\rho} V + \frac{1}{\rho}
    \partial_{\rho} V \right) \,,
  \quad \rho>1\,, \,\,\, \tau\geq 0\,;
  \quad V(\rho,0) = 0  \,, \\
  D \partial_{\rho} V &= d_{1j} V - d_{2j} u_{1j}(0) \,, \quad \mbox{on} \,\,\,
  \rho=1 \,.
\end{split}
\end{equation}
By taking the Laplace transform of (\ref{3:local_trans}), we
calculate $\hat{V}(\rho,s)={\mathcal L}\left[V(\rho,\tau)\right]\equiv
\int_{0}^{\infty} V(\rho, \tau) e^{-s\tau}d\tau$ as
\begin{equation}\label{3:lap_trans}
  \hat{V}(\rho,s) = u_{1j}(0) \frac{d_{2j}}{s}
  \left( \frac{K_{0}\left(\rho \sqrt{s/D}\right)}{
      d_{1j} K_{0}\left(\sqrt{s/D}\right) +
      \sqrt{sD} K_{1}\left(\sqrt{s/D}\right)}\right) \,,
\end{equation}
where the branch cut is taken along $\mbox{Re}(s)\leq 0$ with
$\mbox{Im}(s)=0$.  Here $K_{0}(z)$ and $K_{1}(z)$ are modified Bessel
functions of the second kind.

Since the singularity of (\ref{3:lap_trans}) with the largest real
part is at $s=0$, to determine the large $\tau$ behavior of $V$
for fixed $\rho>1$ we must first determine the behavior of
(\ref{3:lap_trans}) as $s\to 0$.  Upon using
$K_{0}(z)\sim -\log{z} - \gamma_e + \log{2}+o(1)$ and
$K_{1}(z)\sim z^{-1}+o(1)$ as $z\to 0$, where $\gamma_e$ is Euler's
constant, we get
\begin{subequations}
\begin{equation}\label{3:small_s}
  \hat{V}(\rho,s) \sim \frac{d_{2j} u_{1j}(0)}{d_{1j} s} +
  \frac{2 d_{2j} u_{1j}(0)}{d_{1j}} \left(\log\rho + \frac{D}{d_{1j}}
  \right) \frac{1}{s\log(\kappa_{0j} s)} \,, \quad \mbox{as} \quad s\to 0 \,,
\end{equation}
where $\kappa_{0j}$ is defined by
\begin{equation}\label{3:small_F}
  \kappa_{0j} \equiv \frac{1}{4D} e^{2\left(\gamma_e - {D/d_{1j}}\right)}\,.
\end{equation}
\end{subequations}

To determine $V(\rho,\tau)$ for $\tau\gg 1$ and
$\rho={\mathcal O}(1)$, we must determine a function whose
Laplace transform is analytic in $\mbox{Re}(s)>0$ and where its
singularity with the largest real part occurs at $s=0$ with the
leading-order local behavior $\left[s\log(\kappa_{0j} s)\right]^{-1}$ as
$s\to 0$. We emphasize that $\hat{V}(\rho,s)$ in (\ref{3:lap_trans})
is analytic in $\mbox{Re}(s)>0$, and that the pole at $s={1/\kappa_{0j}}$
in the local behavior as $s\to 0$ in (\ref{3:small_s}) is spurious. As
shown in \cite{LS}, and summarized in Appendix \ref{app:trans}, for
any $\kappa_{0j}>0$ we have the exact relation
\begin{equation}\label{3:N1}
  {\mathcal L}\left[ N\left({\tau/\kappa_{0j}}\right)\right]
  = \frac{1}{s-{1/\kappa_{0j}}} - \frac{1}{s\log(s\kappa_{0j})} \,,
\end{equation}
where $N(x)$ for $x>0$ is Ramanujan's integral defined by
\begin{equation}\label{3:rama}
  N(x) \equiv \int_{0}^{\infty} \frac{e^{-x\xi}}{\xi\left[\pi^2+
      (\log \xi)^2\right]} \, d\xi \,.
\end{equation}
It is readily shown that $s={1/\kappa_{0j}}$ is a removable singularity
for (\ref{3:N1}) and that
${\mathcal L}\left[N\left({\tau/\kappa_{0j}}\right)\right]$ is analytic
in $\mbox{Re}(s)>0$ and the singularity with the largest real part
satisfies
${\mathcal L}\left[ N\left({\tau/\kappa_{0j}}\right)\right]\sim
-\left[s\log(s\kappa_{0j})^{-1} \right]+{\mathcal O}(1)$ as $s\to 0$. By
using the well-known integral asymptotics (cf.~\cite{Wyman},
\cite{Bouk}, \cite{LS})
\begin{equation}\label{3:R_asy_1}
  N(x) \sim \frac{1}{\log{x}} - \frac{\gamma_e}{(\log{x})^2} +
  {\mathcal O}((\log{x})^{-3}) \,, \qquad \mbox{as} \quad x\to +\infty \,,
\end{equation}
{where $\gamma_e$ is Euler's constant, we absorb the second
  term into the leading-order approximation to obtain 
\begin{equation}\label{3:R_asy}
  N(x) \sim \frac{1}{\log\left(xe^{\gamma_e}\right)} -
  {\mathcal O}(\left[\log\left(xe^{\gamma_e}\right)\right]^{-3}) \,,
  \qquad \mbox{as} \quad x\to +\infty \,.
\end{equation}}
By using this result and (\ref{3:small_F}), we readily obtain 
for $\tau\gg 1$ and $\rho>1$ fixed that
\begin{subequations}\label{3:v_asy}
\begin{equation}\label{3:v_asy_1}
  V(\rho,\tau) \sim u_{1j}(0) \frac{d_{2j}}{d_{1j}} +
  \frac{B_j}{2\pi D} \left(\log\rho + \frac{D}{d_{1j}}\right)
   \,,
\end{equation}
where $B_j$ satisfies
\begin{equation}\label{3:v_asy_2}
  {B_j\sim -u_{1j}(0)\frac{4\pi D d_{2j}}{d_{1j}
    \log\left({\tau/\left[\kappa_{0j}e^{-\gamma_e}\right]}
      \right)} + {\mathcal O}\left(
      \left[\log\left({\tau/(\kappa_{0j}e^{-\gamma_e})}\right)
        \right]^{-2}\right) \,, \quad
   \mbox{for} \quad \tau\gg 1\,.}
\end{equation}
\end{subequations}

A similar analysis can be done for the case where $u_{1j}(0)=0$. By
evaluating (\ref{DimLess_Intra}) at $t=0$, and using $U(\v{x},0)=0$ and
$u_{1j}(0)=0$, we conclude for $\tau={\mathcal O}(1)$ that
\begin{equation}\label{3:uprime}
  u_{1j}\sim \eps^2 \tau u_{1j}^{\prime}(0) \,, \quad
  \mbox{where} \quad u_{1j}^{\prime}(0)=\v{e}_1^{T} \v{F}_j(\v{u}_j(0))\,.
\end{equation}
As a result, we replace $u_{1j}(0)$ in the boundary condition on $\rho=1$
in (\ref{3:local_trans}) with $\eps^2\tau u_{1j}^{\prime}(0)$ to derive,
in place of (\ref{3:lap_trans}), that
\begin{equation}\label{3:lap_trans_1}
  \hat{V}(\rho,s) = d_{2j} \eps^2 u_{1j}^{\prime}(0)
  \left( \frac{K_{0}\left(\rho \sqrt{s/D}\right)}{s^2\left[
      d_{1j} K_{0}\left(\sqrt{s/D}\right) +
      \sqrt{sD} K_{1}\left(\sqrt{s/D}\right)\right]}\right) \,.
\end{equation}
To determine the large $\tau$ behavior for $\rho>1$ fixed, we let
$s\to 0$ in (\ref{3:lap_trans_1}) to conclude that
\begin{equation}\label{3:small_s_new}
  V(\rho,\tau) \sim \eps^2 \tau u_{1j}^{\prime}(0) \frac{d_{2j}}{d_{1j}} +
  + 2\eps^2 u_{1j}^{\prime}(0) \frac{d_{2j}}{d_{1j}} \left( \log\rho +
      \frac{D}{d_{1j}}\right) {\mathcal L}^{-1} \left[
      \frac{1}{s^2\log(\kappa_{0j} s)} \right]\,,
\end{equation}
where $\kappa_{0j}$ is given in (\ref{3:small_F}).  By convolving
${\mathcal L}^{-1}\left[ (s\log(\kappa_{0j} s)^{-1}\right]=-
N\left({\tau/\kappa_{0j}}\right)$ and ${\mathcal L}^{-1}\left[s^{-1}\right]=1$, we
obtain from (\ref{3:small_s_new}) for $\tau\gg 1$ and
$\rho={\mathcal O}(1)$ that
\begin{subequations}\label{3:v_asy_new}
\begin{equation}\label{3:v_asy_1new}
  V(\rho,\tau) \sim \eps^2\tau u_{1j}^{\prime}(0) \frac{d_{2j}}{d_{1j}} +
  \frac{B_j}{2\pi D} \left(\log\rho + \frac{D}{d_{1j}}\right)
   \,,
\end{equation}
where, in terms of Ramanujan's integral $N(x)$ in (\ref{3:rama}) and
with $u_{1j}^{\prime}(0)=\v{e}_1^{T} \v{F}_j(\v{u}_j(0))$, $B_j$
now satisfies
\begin{equation}\label{3:v_asy_2new}
  B_j\sim -\eps^2 u_{1j}^{\prime}(0)\frac{4\pi D d_{2j}}{d_{1j}}
  \int_{0}^{\tau} N\left({\xi/\kappa_{0j}}\right)\, d\xi \,, \quad \mbox{for}
  \quad \tau \gg 1\,.
\end{equation}
\end{subequations}

Finally, to estimate $B_j$ for the range ${\mathcal O}(1)\ll \tau\ll
{\mathcal O}(\eps^{-2})$, we observe that since $\int_{0}^{\infty}
N\left({\xi/\kappa_{0j}}\right)\, d\xi$ is not integrable, the dominant
contribution to the integral in (\ref{3:v_asy_2new}) arises from
the upper endpoint $\xi=\tau\gg 1$ where we can use the asymptotics
(\ref{3:R_asy}). Upon integrating by parts, we calculate that
\begin{equation}\label{asy:int}
  \int_{0}^{\tau} \frac{1}{\log\left({\xi/(\kappa_{0j}e^{-\gamma_e})}\right)}\,
  d\xi =
  -\kappa_{0j} E_1\left[-\log\left({\tau/(\kappa_{0j}e^{-\gamma_e})}\right)
  \right] \sim
  \frac{\tau}{\log\left({\tau/(\kappa_{0j}e^{-\gamma_e})}\right)} \,,  \quad
  \mbox{as} \quad \tau\to \infty\,,
\end{equation}
where we used $E_{1}(z)\sim {e^{-z}/z}$ for $z\to +\infty$.  In this
way, we obtain from (\ref{3:R_asy}), (\ref{3:v_asy_2new}) and
(\ref{asy:int}) that
\begin{equation}\label{3:v_asy_2neww}
  {B_{j} \sim -\eps^2 \int_{0}^{\tau}
  \frac{u_{1j}^{\prime}(0) \gamma_j}
  {\log\left({\xi/(\kappa_{0j}e^{-\gamma_e})}\right)} \, d\xi =
  -\eps^2 
  \frac{u_{1j}^{\prime}(0) \gamma_j \tau}
  {\log\left({\tau/(\kappa_{0j}e^{-\gamma_e})}\right)} \,, \quad
  {\mathcal O}(1)\ll \tau \ll {\mathcal O}(\eps^{-2}) \,.}
\end{equation}

The explicit far-field behavior of the transient solution given in
(\ref{3:v_asy_2}) and (\ref{3:v_asy_2neww}), valid for
${\mathcal O}(1)\ll \tau\ll {\mathcal O}(\eps^{-2})$ and
$\rho={\mathcal O}(1)$, is essential for providing the initial
behavior for the integro-differential system, derived below in \S
\ref{2:ode_memory}, that is valid on long time scales.

\section{Derivation of ODE system for intracellular kinetics}\label{2:ode_memory}

We now derive a reduced ODE system with memory for the intracellular
kinetics valid that is valid for $t\gg {\mathcal O}(\eps^2)$.

For $t={\mathcal O}(1)$, in the $j^{\mbox{th}}$ inner region we let
$\y=\eps^{-1}(\x-\x_j)$ with $\rho=|\y|$, to obtain from
(\ref{DimLess_bulka}) and (\ref{DimLess_bulkb}) that
$V(\v{y},t)=U(\x_j+\eps\v{y},t)$ satisfies the leading-order
quasi-steady problem
\begin{equation}\label{2:inn_qss}
  \Delta_{\y} V = 0 \,, \quad \mbox{for} \quad \rho>1 \,; \qquad
  D \partial_{\rho} V= d_{1j}V - d_{2j} u_{1j} \,, \quad \mbox{on} \quad \rho=1\,.
\end{equation}
In terms of some $B_j(t)$ to be found, the radially symmetric solution to
(\ref{2:inn_qss}) is
\begin{equation}\label{2:inn_sol}
  V(y,t) =\frac{B_{j}(t)}{2\pi D} \log|\y| +
  \frac{B_{j}(t)}{2\pi d_{1j}} + \frac{d_{2j}}{d_{1j}} u_{1j}(t) \,.
\end{equation}
By calculating the flux as
$\eps^{-1}\int_{\Omega_{j}} \left(d_{1j}U - d_{2j} u_{1j}\right)\,
{\rm d}S_{\v{x}} = B_j(t)$, we obtain from (\ref{DimLess_Intra}) that
\begin{equation}\label{2:intra_1}
  \frac{d\v{u}_j}{dt} = \v{F}_j(\v{u}_j) +  B_j(t) \v{e}_1\,,
  \quad j\in \lbrace{1,\ldots, N\rbrace}\,.
\end{equation}
Next, by asymptotically matching (\ref{2:inn_sol}) to the outer solution
$U$, we conclude that $U$ must satisfy
\begin{equation}\label{2:bulk}
  \begin{split}
\partial_{t} U = D \Delta U &- \sigma U \,, \quad \v{x}\in \R^2\backslash
    \lbrace{\v{x}_1,\ldots,\v{x}_N\rbrace} \,; \qquad U(\v{x},0)=0 \,, \\
    U \sim \frac{B_j}{2\pi D}\log|\v{x}-\v{x}_j| &+ \frac{B_j}{2\pi
      D\nu} + \frac{B_j}{2\pi d_{1j}} + \frac{d_{2j}}{d_{1j}} u_{1j}
    \,, \quad \mbox{as} \quad \v{x}\to \v{x}_j \,, \quad
    j\in \lbrace{1,\ldots,N\rbrace} \,,
  \end{split}
\end{equation}
with $U(\v{x},t)\to 0$ as $|\v{x}|\to\infty$ for each $t>0$. Here
$\nu\equiv {-1/\log\eps}$, $B_{j}=B_{j}(t)$, and $u_{1j}=u_{1j}(t)$.

\begin{remark}\label{rem:kuram} {The ``outer'' problem
    (\ref{2:intra_1}) coupled to the bulk diffusion field via
    (\ref{2:bulk}) will self-consistently lead to an
    integro-differential ODE system for $B_j(t)$ and $\v{u}_j(t)$ for
    $j\in\lbrace{1,\ldots,N\rbrace}$. For the cell-bulk model,
    (\ref{2:intra_1}) together with (\ref{2:bulk}) replaces the
    phenomenological diffusive coupling model of \cite{kuramoto1995}
    as given in (\ref{intro:kuram_pde}).}
\end{remark}

To solve (\ref{2:bulk}) we first consider an auxiliary problem for
$v_{k}(\v{x},t)$ given by
\begin{equation} \label{2:vk}
\begin{split}
 \partial_{t} v_{k} = D \Delta v_{k}- &\sigma v_{k} -
  B_k(t)\delta(\v{x}-\v{x}_k)\,;
  \quad v_k\to 0  \quad \mbox{as} \quad |\v{x}|\to \infty \,;
  \quad v_k(\v{x},0)=0 \,,\\
  v_{k}(\v{x},t) &\sim \frac{B_k(t)}{2\pi D} \log|\v{x}-\v{x}_k| +
  R_{k}(t) + o(1)\,, \qquad \mbox{as} \quad \x \to \x_k \,,
\end{split}
\end{equation}
where $R_{k}(t)$ is to be determined in terms of $B_k(t)$. As shown in
our analysis of the transient solution in \S \ref{sec:transient}, we must
impose that $B_k(t)\to 0$ as $t\to 0^{+}$.

By taking the Laplace transform of (\ref{2:vk}) we obtain that
$\hat{v}_{k}(\v{x},s)= {\mathcal L}\left[v_{k}(\v{x},t)\right]$ satisfies
\begin{equation}\label{2:lt_vk}
  \begin{split}
   \Delta \hat{v}_{k} &- \frac{(\sigma+s)}{D} \hat{v}_k = \frac{\hat{B}_k(s)}{D}
    \delta(\v{x}-\v{x}_k)\,; \quad \hat{v}_k\to 0  \quad \mbox{as} \quad
    |\v{x}|\to \infty \,,\\
    \hat{v}_{k} &\sim \frac{\hat{B_{k}(s)}}{2\pi D} \log|\v{x}-\v{x}_k| +
        \hat{R}_{k}(s)\,, \qquad \mbox{as} \quad \x \to \x_k \,,
  \end{split}
\end{equation}
where $\hat{B}_k(s)={\mathcal L}\left[B_{k}(t)\right]$ and
$\hat{R}_k(s)={\mathcal L}\left[R_{k}(t)\right]$. By decomposing
$\hat{v}_k=-\hat{B}_k \hat{G}_k$, we find that $\hat{G}_k$ satisfies
\begin{equation}\label{2:lt_gk}
   \Delta \hat{G}_{k}- \frac{ (\sigma+s) }{D} \hat{G}_k = -\frac{1}{D}
    \delta(\v{x}-\v{x}_k)\,; \quad \hat{G}_k\to 0  \quad \mbox{as} \quad
    |\v{x}|\to \infty \,.
\end{equation}
In terms of the modified Bessel function $K_0(z)$ of the second kind of
order zero, the solution to (\ref{2:lt_gk}) is
\begin{equation}\label{2:l_gk}
  \hat{G}_k(\v{x},s) = \frac{1}{2\pi D} K_0\left(\sqrt{\frac{s+\sigma}{D}}
    |\v{x}-\v{x}_k|\right) \,.
\end{equation}
In (\ref{2:l_gk}) we specify the principal branch of the square root,
so that $\hat{G}_k$ is analytic in the complex $s$ plane except along
the branch cut where $\mbox{Re}(s)\leq -\sigma$ and $\mbox{Im}(s)=0$.

Upon using $K_{0}(z)\sim -\log{z} - \gamma_e + \log{2}$ as
$z\to 0$, where $\gamma_e$ is Euler's constant, we can calculate
$\hat{v}_k(\v{x},s)$ as $\v{x}\to\v{x}_k$ and enforce that this
limiting behavior agrees with that required in (\ref{2:lt_vk}). By
specifying the principal branch of $\log(s+\sigma)$, this determines that
\begin{equation}\label{2:L_RK}
  \hat{R}_k(s) = \frac{1}{4\pi D}\hat{B}_k(s) \log(s+\sigma)  +
   \frac{\hat{B}_k(s)}{2\pi D} \left(\gamma_e - \log(2\sqrt{D})\right)\,.
\end{equation}
Then, by using the well-known inverse Laplace transform (cf.~\cite{AS})
\begin{equation*}
  {\mathcal L}^{-1}\left[K_{0}(a\sqrt{s})\right] = \frac{1}{2t} e^{-a^2/{(4t)}}
     \quad \mbox{for} \quad a>0 \,,
\end{equation*}
together with the shift property of the Laplace transform, we conclude from
(\ref{2:lt_gk}) that
\begin{equation}\label{2:gk_inv}
  G_{k}(\v{x},t) = {\mathcal L}^{-1}\left[\hat{G}_{k}(\v{x},s)\right]
  =\frac{1}{4\pi D t}
    e^{-\sigma t} e^{-{|\v{x}-\v{x}_k|^2/(4Dt)}} \,,
\end{equation}
which is simply the fundamental solution of the heat equation with bulk
degradation.  By using the convolution property we calculate
$v_{k}(\v{x},t) = -{\mathcal L}^{-1}\left[\hat{B}_k\hat{G}_{k}\right]$
as
\begin{equation}\label{2:vk_solve}
  v_{k}(\v{x},t) = -\int_{0}^{t} B_{k}(\tau) G_{k}(\v{x},t-\tau) \, d\tau = 
  - \int_{0}^{t} \frac{B_{k}(\tau) e^{-\sigma(t-\tau)}}{4\pi D
    (t-\tau)} e^{-|\v{x}-\v{x}_k|^2/(4D(t-\tau))} \, d\tau\,.
\end{equation}

Next, we invert (\ref{2:L_RK}) under the assumption that $B_k(0)=0$ to
derive that
\begin{equation}\label{2:L_RK_a}
  R_k(t) = \frac{1}{4\pi D} {\mathcal L}^{-1}\left[ s \hat{B}_k(s)
    \left( \frac{\log(s+\sigma)}{s}\right)\right] +
  \frac{B_{k}(t)}{2\pi D} \left(\gamma_e - \log\left(2\sqrt{D}\right)
    \right)\,.
\end{equation}
From the inverse transforms
$B_k^{\prime}(t)={\mathcal L}^{-1}\left[s\hat{B}_k(s)\right]$ and
${\mathcal L}\left[s^{-1}\log(s+\sigma)\right]=E_{1}(\sigma t) +
\log\sigma$ for $\sigma>0$ (cf.~\cite{AS}), we obtain from
the convolution property and $B_k(0)=0$ that
\begin{equation}\label{2:L_RK_b}
  {\mathcal L}^{-1} \left[\hat{B}_k \log(s+\sigma)\right] 
  = \int_{0}^{t} B_{k}^{\prime}(\tau) E_1(\sigma(t-\tau))
  \, d\tau + B_k(t) \log\sigma \,.
\end{equation}
In this way, in terms of the exponential integral $E_1(z)$, we 
conclude from (\ref{2:L_RK_a}) that
\begin{equation}\label{2:L_RK_fin}
  R_k(t) = \frac{B_{k}(t)}{2\pi D}\left[\gamma_{e} -
    \log\left(2 \sqrt{\frac{D}{\sigma}
      }\right)\right] + \frac{1}{4\pi D}
  \int_{0}^{t} B_{k}^{\prime}(\tau) E_1(\sigma(t-\tau))\, d\tau\,.
\end{equation}

We represent the solution to (\ref{2:bulk}) as the superposition
$U(\x,t)=\sum_{k=1}^{N} v_k(\x,t)$. By letting $\x\to\x_j$, and
enforcing that the limiting behavior of $U(\x,t)$ agrees with that
required in (\ref{2:bulk}), we conclude that
\begin{equation}\label{2:match}
  R_j(t)+ \sum_{\stackrel{k=1}{k\neq j}}^{N} v_k(\x_j,t) =
  \frac{B_j(t)}{2\pi D} \left(
    \frac{1}{\nu} + \frac{D}{d_{1j}} \right) + \frac{d_{2j}}{d_{1j}} u_{1j}(t)\,,
  \quad j\in \lbrace{1,\ldots,N\rbrace} \,.
\end{equation}
Upon using (\ref{2:L_RK_fin}) for $R_{j}(t)$, we observe that
(\ref{2:match}) determines $B_j(t)$ in terms of $u_{1j}(t)$.

Finally, upon combining (\ref{2:match}) with the intracellular dynamics
(\ref{2:intra_1}) we obtain an integro-differential system for the
intracellular species $\v{u}_j(t)$ that is coupled to the
time-history of the source strengths $B_j(t)$. We summarize our result
in the following formal proposition.

\begin{prop}\label{prop:2ode} For $\eps\to 0$, and with the initial condition
  $U(\x,0)=0$, the solution $U(\x,t)$ and $\v{u}_j(t)$, for
  $j\in\lbrace{1,\ldots,N\rbrace}$, to the cell-bulk model
  (\ref{DimLess_bulk}) is approximated for $t\gg {\mathcal O}(\eps^2)$
  by 
  \begin{subequations}\label{2:reduced}
  \begin{align}
    \frac{d\v{u}_j}{dt} &= \v{F}_j(\v{u}_j) + \v{e}_1 B_j(t) \,, 
    \label{2:reduced_1} \\
    \int_{0}^{t} B_{j}^{\prime}(\tau) E_1(\sigma(t-\tau))\, d\tau &=
                   \eta_j B_{j}(t) + \gamma_j u_{1j}(t) \nonumber \\
          & \qquad + \sum_{\stackrel{k=1}{k\neq j}}^{N}
  \int_{0}^{t} \frac{B_{k}(\tau) e^{-\sigma(t-\tau)}}{
    t-\tau} e^{-|\v{x}_j-\v{x}_k|^2/(4D(t-\tau))} \, d\tau\,,\label{2:reduced_2}
  \end{align}
  for $j\in\lbrace{1,\ldots,N\rbrace}$. In this integro-differential
  system $\eta_j$ and $\gamma_j$ are defined by
  \begin{equation}\label{2:reduced_3}
    \eta_j \equiv 2 \left( \frac{1}{\nu} + \frac{D}{d_{1j}} +
      \log\left(2 \sqrt{\frac{D}{\sigma} }\right) - \gamma_e \right)=
    -\log(\eps^2\kappa_{0j}\sigma) \,, \qquad
    \gamma_j \equiv \frac{4\pi D d_{2j}}{d_{1j}} \,,
  \end{equation}
\end{subequations}
where $\kappa_{0j}$ is defined in (\ref{3:small_F}).  In terms of
$B_{j}(t)$, the approximate solution in the bulk region is 
\begin{equation}\label{2:reduced_bulk}
  U(\x,t) \sim -\frac{1}{4\pi D} \sum_{j=1}^{N}\int_{0}^{t}
  \frac{B_{j}(\tau) e^{-\sigma(t-\tau)}}{t-\tau}
  e^{-|\v{x}-\v{x}_j|^2/(4D(t-\tau))} \, d\tau \,,
\end{equation}
while in the vicinity of the $j^{\mbox{th}}$ cell we have for
$\rho=\eps^{-1}|\v{x}-\v{x}_j|={\mathcal O}(1)$ that
\begin{equation}\label{2:reduced_local}
  U \sim\frac{B_{j}(t)}{2\pi D} \log\rho +
  \frac{B_{j}(t)}{2\pi d_{1j}} + \frac{d_{2j}}{d_{1j}} u_{1j}(t) \,.
\end{equation}
\end{prop}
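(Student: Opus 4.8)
The proposition consolidates the outer-region matching analysis of this section into the closed system \eqref{2:reduced}, so my plan is to assemble the already-derived ingredients and then reconcile the two stated forms of the coupling coefficients. Equation \eqref{2:reduced_1} requires no new work: it is the intracellular dynamics \eqref{2:intra_1}, which follows from \eqref{DimLess_Intra} once the membrane flux is identified with $B_j(t)$. The substantive content is the nonlocal constraint \eqref{2:reduced_2}, which I would extract from the matching condition \eqref{2:match}.

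First I would insert the regular-part formula \eqref{2:L_RK_fin} for $R_j(t)$ and the convolution representation \eqref{2:vk_solve} for $v_k(\x_j,t)$ into \eqref{2:match}. Multiplying through by $4\pi D$ and moving the memory integral $\int_0^t B_j^{\prime}(\tau)E_1(\sigma(t-\tau))\,d\tau$ to one side isolates it against everything else. Collecting the terms proportional to $B_j(t)$ gives
\begin{equation*}
  \eta_j = 2\left(\frac{1}{\nu} + \frac{D}{d_{1j}}\right) - 2\left[\gamma_e - \log\left(2\sqrt{\frac{D}{\sigma}}\right)\right] = 2\left(\frac{1}{\nu} + \frac{D}{d_{1j}} + \log\left(2\sqrt{\frac{D}{\sigma}}\right) - \gamma_e\right)\,,
\end{equation*}
the term proportional to $u_{1j}(t)$ yields $\gamma_j = 4\pi D d_{2j}/d_{1j}$, and the off-diagonal sum $\sum_{k\neq j} v_k(\x_j,t)$ supplies the convolution kernel carrying the Gaussian factor $e^{-|\x_j-\x_k|^2/(4D(t-\tau))}$. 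This reproduces \eqref{2:reduced_2} and \eqref{2:reduced_3}.

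To finish, I would verify the compact form $\eta_j = -\log(\eps^2\kappa_{0j}\sigma)$ by substituting the definition \eqref{3:small_F} of $\kappa_{0j}$ together with $\log\eps^2 = -2/\nu$ (from $\nu = -1/\log\eps$); after expanding $\log(2\sqrt{D/\sigma}) = \log 2 + \tfrac{1}{2}\log D - \tfrac{1}{2}\log\sigma$, the two forms coincide, which simultaneously cross-checks the present outer analysis against the transient (inner) analysis of \S\ref{sec:transient}. The bulk representation \eqref{2:reduced_bulk} then follows from the superposition $U(\x,t)=\sum_{k=1}^N v_k(\x,t)$ with \eqref{2:vk_solve}, and the near-cell behavior \eqref{2:reduced_local} is just the inner solution \eqref{2:inn_sol}.

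The algebra above rests on the hypothesis $B_k(0)=0$, which is precisely what legitimizes the convolution identities behind \eqref{2:L_RK_fin}. Accordingly, the genuine difficulty is not in the present bookkeeping but in the two inputs it assumes: establishing \eqref{2:L_RK_fin} through the Laplace inversion of $s^{-1}\log(s+\sigma)$ into the $E_1$ memory kernel, and matching the logarithmic singularity of \eqref{2:inn_sol} to the outer field so that the $1/\nu$ contribution in \eqref{2:bulk} appears with the correct coefficient. I expect the latter matching step to be the main obstacle, since it is where the all-orders-in-$\nu$ accuracy claimed for the reduction is secured by retaining the full logarithmic capacitance rather than expanding it. The required condition $B_j(0)=0$ is itself consistent with the short-time behavior \eqref{3:v_asy_2} and \eqref{3:v_asy_2neww}, so the transient analysis of \S\ref{sec:transient} supplies it without extra verification.
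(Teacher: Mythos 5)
Your proposal is correct and follows essentially the same route as the paper: the proposition is stated as a summary of the preceding derivation, so assembling \eqref{2:intra_1}, \eqref{2:L_RK_fin}, \eqref{2:vk_solve} and the matching condition \eqref{2:match}, multiplying by $4\pi D$, and reading off $\eta_j$ and $\gamma_j$ is precisely what the paper does, and your verification of the identity $\eta_j=-\log(\eps^2\kappa_{0j}\sigma)$ via \eqref{3:small_F} and $\log\eps^2=-2/\nu$ checks out. Your closing remarks correctly locate the real content in the Laplace inversion behind \eqref{2:L_RK_fin} and in retaining the full $1/\nu$ term in the matching, which is how the all-orders-in-$\nu$ accuracy is obtained.
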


\subsection{Matching to the Transient Solution}\label{sec:match}

We now study the limiting behavior as $t\to 0^{+}$, but with
$t\gg {\mathcal O}(\eps^2)$, for $B_j(t)$ in (\ref{2:reduced}), which
satisfies $B_j\to 0$ as $t\to 0$. Our small time analysis is based on
taking the Laplace transform of (\ref{2:reduced_2}) and letting
$s\to \infty$. More specifically, we will derive that
\begin{equation}\label{mat:bj}
{  B_{j}(t)\sim  \left\{\begin{array}{cl}
   -{u_{1j}^{\prime}(0) \gamma_j t/\log\left({t/(\kappa_je^{-\gamma_e})}\right)}
                         \,, &  \mbox{as}
  \,\,\,  t \to 0^{+} \,\,\, \mbox{if}\,\,\, u_{1j}(0)=0 \,, \\[10pt]
  -{u_{1j}(0) \gamma_j /\log\left({t/(\kappa_je^{-\gamma_e})}\right)} \,,
                             &  \mbox{as}
     \,\,\,  t \to 0^{+} \,\,\, \mbox{if}\,\,\, u_{1j}(0)\neq 0\,,
                        \end{array} \right.}
\end{equation}
where $\kappa_j=\eps^2\kappa_{0j}$ with $\kappa_{0j}$ defined in
(\ref{3:small_F}). These limiting results agree with those derived in
\S \ref{sec:transient} from the far-field behavior of the transient
solution. Moreover, we have
\begin{equation}\label{mat:bjp}
{  B_{j}^{\prime}(t)\sim  \left\{\begin{array}{cl}
       -\frac{u_{1j}^{\prime}(0) \gamma_j}
           {\log\left({t/(\kappa_je^{-\gamma_e})}\right)}
    \left(1 - \frac{1}{\log\left({t/(\kappa_je^{-\gamma_e})}\right)}\right)\,,
                                  &  \mbox{as}
  \,\,\,  t \to 0^{+} \,\,\, \mbox{if}\,\,\, u_{1j}(0)=0 \,, \\[10pt]
     {u_{1j}(0) \gamma_j/\left(t \left[\log\left({t/(\kappa_j e^{-\gamma_e})}
                                  \right)\right]^2\right)} \,,
    &  \mbox{as} \,\,\,  t \to 0^{+} \,\,\, \mbox{if}\,\,\, u_{1j}(0)\neq 0\,.
                        \end{array} \right.}
\end{equation}
Therefore, $B_{j}^{\prime}(t)\to 0$ as $t\to 0^{+}$ if
$u_{1j}(0)=0$, while $|B_{j}^{\prime}(t)|\to \infty$ as $t\to 0^{+}$
if $u_{1j}(0)\neq 0$.  

To derive (\ref{mat:bj}), we will first assume that
$u_{1j}(0)=0$. Setting $B_j(0)=0$, we obtain from (\ref{2:reduced_1})
that $u_{1j}(t)\sim t u_{1j}^{\prime}(0)$ as $t\to 0^{+}$, where
$u_{1j}^{\prime}(0)= \v{e}_1^{T}\v{F}(\v{u}_j(0))$. By taking the Laplace
transform of (\ref{2:reduced_2}) and by using
${\mathcal L}\left[B_j^{\prime}(t)\right]=s\hat{B}_j(s)$ and
${\mathcal L}\left[E_1(\sigma
  t)\right]={\log\left(1+{s/\sigma}\right)/s}$ we readily calculate
that
\begin{equation}\label{ode:mat_1}
  \hat{B}_j \left[\log\left(1+{s/\sigma}\right) - \eta_j\right]
  - 2 \sum_{\stackrel{k=1}{k\neq j}}^{N} \hat{B}_k
  K_0 \left( \sqrt{ \frac{s+\sigma}{D}} |\v{x}_j-\v{x}_k|\right) =
  \hat{u}_{j1} \gamma_j\,.
\end{equation}
By using the exponential decay of $K_0$ and
$\hat{u}_{j1}\sim {u_{1j}^{\prime}(0)/s^2}$ for $s\to \infty$, (\ref{ode:mat_1})
becomes
\begin{equation}\label{ode:mat_2}
  \hat{B}_j \left[\log\left( \frac{se^{-\eta_j}}{\sigma}\right) +
    {\mathcal O}(s^{-1})\right] \sim \frac{\gamma_j}{s^2} u_{1j}^{\prime}(0)\,,
  \quad \mbox{as} \quad s\to \infty\,.
\end{equation}
By neglecting the ${\mathcal O}(s^{-1})$ term, we use 
(\ref{2:reduced_3}) for $\eta_j$ to calculate from (\ref{ode:mat_2})
that
\begin{equation}\label{ode:mat_4}
  \hat{B}_j \sim \frac{u_{1j}^{\prime}(0) \gamma_j}{s^2\log(\kappa_j  s)} \,,
  \quad \mbox{as} \quad s\to \infty \,, \quad \mbox{where} \quad
  \kappa_j \equiv \frac{e^{-\eta_j}}{\sigma} = \eps^2 \kappa_{0j} \,.
\end{equation}
As a result, for $t\to 0^{+}$, but with $t\gg {\mathcal O}(\eps^2)$, we
use the convolution property and $\kappa_j=\eps^2\kappa_{0j}$ to obtain
\begin{equation}\label{ode:mat_5}
  B_j(t) \sim -u_{1j}^{\prime}(0)\gamma_j \int_{0}^{t}
  N\left(\frac{\xi}{\eps^2\kappa_{0j}}\right) \, d\xi\,, \quad
  \mbox{for} \quad {\mathcal O}(\eps^2)\ll t \ll {\mathcal O}(1)\,,
\end{equation}
where $N(x)$ is Ramanujan's integral (\ref{3:rama}), $\kappa_{0j}$ is
defined in (\ref{3:small_F}) and
$u_{1j}^{\prime}(0)=\v{e}_1^{T} \v{F}_j(\v{u}_j(0))$. This limiting
result matches identically with the result derived in
(\ref{3:v_asy_2new}) of \S \ref{sec:transient}. Moreover, by setting
$t=\eps^2\tau$ in (\ref{3:v_asy_2neww}) we obtain the first result in
(\ref{mat:bj}).

A very similar short-time analysis can be done when $u_{1j}(0)\neq
0$. In place of (\ref{ode:mat_4}), we obtain that
\begin{equation}\label{node:mat_3}
  \hat{B}_j \sim \frac{u_{1j}(0)\gamma_j}{s\log(\kappa_j  s)} \,,
  \quad \mbox{as} \quad s\to \infty \,.
\end{equation}
By inverting (\ref{node:mat_3}), and recalling that
$\gamma_j={4\pi D d_{2j}/d_{1j}}$, we obtain the second result in
(\ref{mat:bj}). This result matches identically with the far-field
behavior (\ref{3:v_asy_2}) as obtained from our transient analysis
in \S \ref{sec:transient}.
  
\section{Steady-state and linear stability analysis}\label{sec:ss}

In this section, we first classify the long-time dynamics of
(\ref{2:reduced}) for solutions that tend to limiting values as
$t\to\infty$. We show that this limiting dynamics coincides with the
steady-state solution, which is constructed using strong localized
perturbation theory from the steady-state of the cell-bulk PDE system
(\ref{DimLess_bulk}). We also formulate the linear stability problem
for steady-state solutions. We begin with the following lemma:

\begin{lemma} Suppose that $B_j(0)=0$, $B_j(t)$ is bounded for $t>0$
  and that $B_j(t)\to B_{j\infty}$ as $t\to\infty$ for
  $j\in\lbrace{1,\ldots,N\rbrace}$, where $B_{j\infty}$ are constants
  for $j\in\lbrace{1,\ldots,N\rbrace}$. Then,
  \begin{subequations}
    \begin{align}
&      \lim_{t\to\infty} C_{jk}(t)  =
    2B_{j\infty} K_0\left(\sqrt{\frac{\sigma}{D}}|\v{x}_j-\v{x}_k|\right) \,,
                    \qquad \lim_{t\to\infty} D_{j}(t) =0 \,, \label{4:limit_1}\\
      \noindent \mbox{where} \qquad  & C_{jk}(t)\equiv
    \int_{0}^{t} \frac{B_j(\tau)e^{-\sigma(t-\tau)}} {t-\tau}
        e^{-|\v{x}_j-\v{x}_k|^2/(4D(t-\tau))} \, d\tau
        \,, \label{4:limit_2} \\
      \noindent \mbox{and} \qquad
      & D_{j}(t) \equiv \int_{0}^{t} B_{j}^{\prime}(\tau)\, E_1\left(\sigma
      (t-\tau)\right) \, d\tau\,. \label{4:limit_3}
    \end{align}
   \end{subequations}
 \end{lemma}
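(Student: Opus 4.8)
The plan is to handle the two limits separately. In both cases I would first apply the substitution $w=t-\tau$, so that the cell strength $B_j$ is evaluated along $t-w$ and the kernel depends only on $w$, and then pass the limit $t\to\infty$ through the integral by dominated convergence. For $C_{jk}$ I take $j\neq k$, so that $r\equiv|\v{x}_j-\v{x}_k|>0$ (this is the only case used in the reduced system, and the only case in which the stated limit is finite). After the substitution,
\[
  C_{jk}(t)=\int_0^\infty \mathbf{1}_{\{w<t\}}\,B_j(t-w)\,\frac{e^{-\sigma w}}{w}\,e^{-r^2/(4Dw)}\,dw .
\]
The kernel $w^{-1}e^{-\sigma w}e^{-r^2/(4Dw)}$ lies in $L^1(0,\infty)$: the factor $e^{-r^2/(4Dw)}$ (with $r>0$) annihilates the $1/w$ singularity as $w\to0^+$, while $e^{-\sigma w}$ gives exponential decay as $w\to\infty$. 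Since $B_j$ is bounded and $B_j(t-w)\to B_{j\infty}$ for each fixed $w$, while $\mathbf{1}_{\{w<t\}}\to1$, dominated convergence yields
\[
  \lim_{t\to\infty}C_{jk}(t)=B_{j\infty}\int_0^\infty \frac{1}{w}\,e^{-\sigma w - r^2/(4Dw)}\,dw .
\]
The claim then follows from the classical identity $\int_0^\infty w^{-1}e^{-qw-p/w}\,dw = 2K_0(2\sqrt{pq})$, applied with $p=r^2/(4D)$ and $q=\sigma$, which gives $2K_0(r\sqrt{\sigma/D})$ and hence $(\ref{4:limit_1})$.

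For $D_j$ the appearance of $B_j'$ together with the logarithmic singularity of $E_1$ at the origin makes a naive integration by parts diverge at the upper endpoint, so I would instead move the derivative outside the convolution. Since $B_j(0)=0$, one has $D_j(t)=\frac{d}{dt}\big(B_j * E_1(\sigma\,\cdot)\big)(t)$, where $(B_j*E_1(\sigma\cdot))(t)=\int_0^t B_j(\tau)E_1(\sigma(t-\tau))\,d\tau$. Because $E_1(\sigma\cdot)\in L^1(0,\infty)$ with $\int_0^\infty E_1(\sigma w)\,dw = 1/\sigma$, the same substitution-plus-dominated-convergence argument gives $(B_j*E_1(\sigma\cdot))(t)\to B_{j\infty}/\sigma$, a constant. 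A differentiable function converging to a constant whose derivative also converges must have derivative tending to $0$; hence, once the limit of $D_j$ is shown to exist, it is forced to equal $0$, which is $(\ref{4:limit_1})$.

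The main obstacle is therefore establishing that $\lim_{t\to\infty}D_j(t)$ exists. To do so I would use the representation $E_1(z)=\int_1^\infty u^{-1}e^{-zu}\,du$ and Fubini to write $D_j(t)=\int_1^\infty u^{-1}I_u(t)\,du$ with $I_u(t)=\int_0^t B_j'(\tau)e^{-\sigma u(t-\tau)}\,d\tau$. The inner kernel is smooth, so integrating by parts and using $B_j(0)=0$ gives $I_u(t)=B_j(t)-\sigma u\int_0^t B_j(t-w)e^{-\sigma u w}\,dw$, an expression in $B_j$ alone; letting $t\to\infty$ and using $B_j\to B_{j\infty}$ shows $I_u(t)\to B_{j\infty}-B_{j\infty}=0$ for each fixed $u\geq1$. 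The delicate point is the outer interchange of limit and integral in $u$: the crude bound $|I_u(t)|\leq 2\|B_j\|_\infty$ is not integrable against $du/u$, so I would instead exploit the uniform continuity of $B_j$ on $[0,\infty)$ (automatic for a continuous function possessing a finite limit) to split the $w$-integral and obtain a bound $|I_u(t)|\le g(u)$ with $g(u)\to 0$ fast enough that $\int_1^\infty g(u)/u\,du<\infty$, after which dominated convergence in $u$ gives $\lim_{t\to\infty}D_j(t)=0$. As a consistency check, the Laplace-transform final-value heuristic $s\widehat{D}_j(s)=\big(s\widehat{B}_j(s)\big)\log(1+s/\sigma)\to B_{j\infty}\cdot 0=0$ as $s\to0^+$ independently predicts the same vanishing limit.
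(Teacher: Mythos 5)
Your argument is correct in substance but takes a genuinely different route from the paper. The paper works entirely in Laplace-transform space: for $C_{jk}$ it identifies the simple pole of $\hat{C}_{jk}(s)=2K_0\bigl(\sqrt{(s+\sigma)/D}\,|\v{x}_j-\v{x}_k|\bigr)\hat{B}_j(s)$ at $s=0$ and invokes a Tauberian/final-value argument $\lim_{t\to\infty}C_{jk}=\lim_{s\to 0}s\hat{C}_{jk}(s)$, and for $D_j$ it observes that $\hat{D}_j(s)=\hat{B}_j(s)\log(1+s/\sigma)$ has a removable singularity at $s=0$ and concludes decay from analyticity in $\mbox{Re}(s)\geq 0$. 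You instead argue in the time domain: substitution plus dominated convergence against the $L^1$ kernel $w^{-1}e^{-\sigma w-r^2/(4Dw)}$, the classical identity $\int_0^\infty w^{-1}e^{-qw-p/w}\,dw=2K_0(2\sqrt{pq})$, and, for $D_j$, the reduction $D_j=(B_j*E_1(\sigma\cdot))'$ combined with the fact that a function converging to a constant whose derivative also converges must have derivative tending to zero. Your treatment of $C_{jk}$ is complete and, if anything, more rigorous than the paper's (final-value theorems require hypotheses the paper does not verify); your correct restriction to $j\neq k$ and your observation that naive integration by parts on $D_j$ produces cancelling divergences at the upper endpoint both show good awareness of the real difficulties. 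What the Laplace route buys is uniformity of method (the same two transform identities reappear throughout the paper, e.g.\ in the sum-of-exponentials scheme); what your route buys is an elementary, self-contained proof that does not lean on an unjustified Tauberian step.

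The one soft spot is the step you yourself flag: establishing that $\lim_{t\to\infty}D_j(t)$ exists (equivalently, vanishes). Uniform continuity of $B_j$ alone will not deliver a $u$-integrable dominating function for $u^{-1}I_u(t)$: after your splitting one is left with a term of size $\omega(\delta(u))$, and making $\int_1^\infty \omega(\delta(u))\,u^{-1}du$ finite requires a quantitative modulus (Dini, H\"older, or a bound on $B_j^{\prime}$ for large $t$), not mere uniform continuity. The cleanest repair inside your own framework is to carry out the $u$--$w$ Fubini exchange the other way, which yields the regularized integration by parts
\begin{equation*}
  D_j(t)=B_j(t)E_1(\sigma t)+\int_0^{t}\bigl(B_j(t)-B_j(t-w)\bigr)\frac{e^{-\sigma w}}{w}\,dw\,,
\end{equation*}
(the same identity the paper uses in its Lemma 5.3 at $t=\Delta t$); splitting the integral at a fixed $w=A$, using boundedness of $B_j^{\prime}$ for large $t$ to dominate the $1/w$ singularity, and sending $t\to\infty$ then $A\to\infty$ gives the result. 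Since $D_j$ is defined through $B_j^{\prime}$, this extra regularity is implicit in the lemma anyway, and the paper's own justification of this same step (analyticity of $\hat{D}_j$ in the closed right half-plane implying decay) is not a valid general principle either, so your proof is no less rigorous than the original at this point.
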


 \begin{proof} To establish the first limit in (\ref{4:limit_1}), we
   take the Laplace transform of (\ref{4:limit_2}) to obtain
   \begin{equation}\label{4:plim_1}
     \hat{C}_{jk}(s) =  {\mathcal L}\left[\frac{e^{-\sigma t}}{t}
       e^{-|\v{x}_j-\v{x}_k|^2/(4Dt)}\right] \hat{B}_j(s) = 2
     K_0\left(\sqrt{\frac{s+\sigma}{D}}|\v{x}_j-\v{x}_k|\right)
     \hat{B}_j(s) \,.
   \end{equation}
   To determine the long-time behavior of $C_{jk}(t)$, we will use the
   Tauberian theorem. Since $B_j(t)$ is bounded for $t>0$ and
   satisfies $B_j(t)\to B_{j\infty}$ as $t\to \infty$, it follows that
   $\hat{B}_j(s)$ is analytic for $\mbox{Re}(s)>0$ and that the singularity
   with the largest real part is a simple pole at $s=0$ for which
   $\hat{B}_j(s)\sim {B_{j\infty}/s}$ as $s\to 0$. As a result, the
   singularity of $\hat{C}_{jk}(s)$ with the largest real part is a
   simple pole at $s=0$ and we have
   \begin{equation}
     \lim_{t\to\infty}C_{jk}(t) = \lim_{s\to 0} s \hat{C}_{jk}(s) =
     2 K_0\left(\sqrt{\frac{\sigma}{D}}|\v{x}_j-\v{x}_k|\right)
     \lim_{s\to 0} s \hat{B}_j(s)\,,
   \end{equation}
   where $\lim_{s\to 0} s \hat{B}_j(s)=B_{j\infty}$. This proves the first
   result in (\ref{4:limit_1}).

   The second result in (\ref{4:limit_1}) is established in a similar
   way.  Upon taking the Laplace transform of $D_j(t)$ in
   (\ref{4:limit_3}), we use $B_{j}(0)=0$ to obtain that
   \begin{equation}\label{4:plim_2}
     \hat{D}_j(s) = {\mathcal L}\left[B_{j}^{\prime}(t)\right]
     {\mathcal L}\left[ E_1(\sigma t)\right] = s \hat{B}_j(s)
     \frac{\log\left(1+{s/\sigma}\right)}{s} = \hat{B}_j(s)
     \log\left(1+{s/\sigma}\right) \,,
   \end{equation}
   where the branch cut for $\log(1+{s/\sigma})$ occurs for
   $\mbox{Re}(s)\leq -\sigma<0$ and $\mbox{Im}(s)=0$.  To establish
   that $D_j(t)\to 0$ as $t\to\infty$ it suffices to show that
   $\hat{D}_j(s)$ is analytic in $\mbox{Re}(s)\geq 0$. This follows
   since from (\ref{4:plim_2}) we observe that, although
   $\hat{B}_j\sim {B_{j\infty}/s}$ as $s\to 0$, the point $s=0$ is a
   removable singularity for $\hat{D}_j(s)$.
 \end{proof}

 With this lemma, we can readily characterize those solutions to
 (\ref{2:reduced}) that tend to limiting values as $t\to \infty$. We
 summarize this result as follows:

 \begin{prop}\label{prop:limit} Suppose that
   $B_j(0)=0$ and that $B_j(t)$ and $\v{u}_{j}(t)$ are bounded for
   $t\geq 0$ with limiting values $B_{j}(t)\to B_{j\infty}$ and
   $\v{u}_{j}\to \v{u}_{j\infty}$ as $t\to \infty$ for each
   $j\in\lbrace{1,\ldots,N\rbrace}$. Then, with
  $\eta_j$ and $\gamma_j$ as defined in (\ref{2:reduced_3}), $B_{j\infty}$ and
   $\v{u}_{j\infty}$ satisfy the $N(m+1)$ dimensional nonlinear
   algebraic system  (NAS)
   \begin{subequations}\label{4:limit_sys}
     \begin{align}
       \v{F}_j(\v{u}_{j\infty}) + \v{e}_1 B_{j\infty} &= 0 \,,
          \quad j\in\lbrace{1,\ldots,N\rbrace}\,,\\
       \eta_j B_{j\infty} +  2  \sum_{\stackrel{k=1}{k\neq j}}^{N}
     B_{k\infty} K_0\left(\sqrt{\frac{\sigma}{D}}|\v{x}_j-\v{x}_k|\right)
                &=-\gamma_j \v{e}_1^{T}\v{u}_{j\infty} \,,
         \quad j\in\lbrace{1,\ldots,N\rbrace} \,.
      \end{align}
    \end{subequations}
 \end{prop}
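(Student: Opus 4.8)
The plan is to pass to the limit $t\to\infty$ in both components of the reduced system (\ref{2:reduced}), invoking the preceding Lemma to dispose of the two families of convolution integrals. Since the Lemma has already carried out the analytically delicate Tauberian step, the remaining work is essentially the bookkeeping of limits, term by term.

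First I would handle the intracellular ODE (\ref{2:reduced_1}). By hypothesis $\v{u}_j(t)\to\v{u}_{j\infty}$ and $B_j(t)\to B_{j\infty}$, so continuity of $\v{F}_j$ shows that the right-hand side, and hence $d\v{u}_j/dt$, converges as $t\to\infty$ to $\v{F}_j(\v{u}_{j\infty})+\v{e}_1 B_{j\infty}$. The key step is then to argue that this limiting derivative must equal $\zero$: if it were a nonzero vector $\v{c}$, the identity $\v{u}_j(t)=\v{u}_j(0)+\int_0^t \dot{\v{u}}_j(s)\,ds$ would force $|\v{u}_j(t)|\to\infty$, contradicting the assumed convergence (and boundedness) of $\v{u}_j$. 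This yields the first equation of the NAS (\ref{4:limit_sys}). I expect this to be the only genuinely subtle point in the proof: one cannot conclude $\dot{\v{u}}_j\to\zero$ from boundedness of $\v{u}_j$ alone, since a slowly oscillating tail could keep the derivative from vanishing; it is the combination of the convergence of $\v{u}_j$ together with the existence of $\lim_{t\to\infty}\dot{\v{u}}_j$ (guaranteed here by continuity of $\v{F}_j$ and the assumed limits) that forces the limit of the derivative to be zero.

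Finally I would pass to the limit in the integro-differential relation (\ref{2:reduced_2}). Its left-hand side is exactly $D_j(t)$ of (\ref{4:limit_3}), so the Lemma gives $D_j(t)\to 0$. On the right-hand side the first two terms converge directly to $\eta_j B_{j\infty}+\gamma_j\,\v{e}_1^{T}\v{u}_{j\infty}$, using $u_{1j}=\v{e}_1^{T}\v{u}_j$, while the coupling sum is $\sum_{k\neq j}C_{kj}(t)$ in the notation of (\ref{4:limit_2}). Applying the Lemma with the roles of its indices $j$ and $k$ interchanged, and using $|\v{x}_k-\v{x}_j|=|\v{x}_j-\v{x}_k|$, gives $C_{kj}(t)\to 2B_{k\infty}K_0\bigl(\sqrt{\sigma/D}\,|\v{x}_j-\v{x}_k|\bigr)$ for each $k\neq j$. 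Equating the $t\to\infty$ limits of the two sides and rearranging produces the second equation of (\ref{4:limit_sys}), completing the characterization.
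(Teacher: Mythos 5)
Your proof is correct and follows essentially the same route as the paper: the paper simply invokes the preceding lemma to replace $D_j(t)$ and the coupling sums by their limits and reads off the NAS, exactly as you do. The only addition is your explicit justification that $\lim_{t\to\infty}\dot{\v{u}}_j=\zero$ (via convergence of both $\v{u}_j$ and its derivative), a detail the paper leaves implicit but which you handle correctly, including the index swap $C_{kj}$ in the coupling sum.
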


 We now show that the limiting NAS (\ref{4:limit_sys}) is
 precisely the same system as can be derived from a strong localized
 perturbation analysis for steady-state solutions of
 (\ref{DimLess_bulk}) by following the methodology of
 \cite{smjw_diff}. In the limit $\eps\to 0$, the steady-state
 problem for the outer bulk solution $U_{s}(\v{x})$ is
\begin{equation}\label{4:ssbulk}
  \begin{split}
   &  D \Delta U_{s} - \sigma U_{s} =0\,, \quad \v{x}\in \R^2\backslash
    \lbrace{\v{x}_1, \ldots,\v{x}_N\rbrace} \,,\\
    U_s \sim \frac{B_{js}}{2\pi D}\log|\v{x}-\v{x}_j| &+ \frac{B_{js}}{2\pi
      D\nu} + \frac{B_{js}}{2\pi d_{1j}} + \frac{d_{2j}}{d_{1j}} u_{1js}
    \,, \,\,\, \mbox{as} \,\,\, \v{x}\to \v{x}_j \,, \quad
    j\in\lbrace{1,\ldots,N\rbrace} \,,
  \end{split}
\end{equation}
where $\nu\equiv {-1/\log\eps}$. This system is coupled to the steady-state
of the intracellular kinetics, given by
\begin{equation}\label{4:ssintra}
  \v{F}_{j}\left(\v{u}_{js}\right) + \v{e}_1 B_{js} =0 \,, \quad
  j\in\lbrace{1,\ldots,N\rbrace}\,.
\end{equation}
In terms of the modified Bessel function $K_0(z)$, the
solution to (\ref{4:ssbulk}) is
\begin{equation}\label{4:ssbulk_sol}
  U_{s}(\v{x}) = -\sum_{k=1}^{N} \frac{B_{ks}}{2\pi D}
  K_0\left( \sqrt{\frac{\sigma}{D}} |\v{x}-\v{x}_k| \right) \,.
\end{equation}
To determine $B_{js}$, for $j\in\lbrace{1,\ldots,N\rbrace}$, we
enforce that the local behavior of $U_{s}(\v{x})$ as
$\v{x}\to \v{x}_j$ agrees with that required in (\ref{4:ssbulk}). Upon
using $K_0(z)\sim -\log\left({z/2}\right)- \gamma_e+o(1)$ as $z\to 0$,
this condition yields
\begin{equation}\label{4:ss_nas}
    \eta_j B_{js} + \gamma_j u_{1js} +  2
       \sum_{\stackrel{k=1}{k\neq j}}^{N} B_{ks}
      K_0\left(\sqrt{\frac{\sigma}{D}}|\v{x}_j-\v{x}_k|\right)=0 \,,
                \quad j\in\lbrace{1,\ldots,N\rbrace} \,.
\end{equation}
The coupled system (\ref{4:ss_nas}) and (\ref{4:ssintra})
characterizing the steady-state solution is identical to that in
(\ref{4:limit_sys}).

\subsection{Linear Stability Analysis}\label{sec:stab}

We now derive a globally coupled eigenvalue problem (GCEP) for the
linearization of the cell-bulk model (\ref{DimLess_bulk}) around the
steady-state solution. To derive the GCEP we first perturb around the
steady-state solution by introducing the eigen-perturbation
\begin{equation}\label{5:pert}
  U=U_{s}(\v{x}) + e^{\lambda t} \Phi(\v{x}) \,, \quad
  \v{u}_j = \v{u}_{js} + e^{\lambda t} \v{\zeta}_j \,, \quad
  j\in\lbrace{1,\ldots,N\rbrace}\,,
\end{equation}
into (\ref{DimLess_bulk}). Upon linearizing, we obtain the eigenvalue
problem
\begin{subequations}\label{5:eig_prob}
\begin{align}
 &  \Delta \Phi - \frac{(\lambda+\sigma)}{D} \Phi =0\,, \quad
      \x \in \R^2 \setminus \cup_{j=1}^{N}\,\Omega_{\varepsilon_j}\,,
      \label{5:eig_prob_1} \\
 &  \varepsilon  D\, \partial_n \Phi  = d_{1j} \, \Phi - d_{2j} \, \zeta_{j1}\,,
   \quad \x \in \partial \Omega_{\varepsilon_j}\,,\qquad
   j \in \lbrace{1, \ldots, N\rbrace} \,, \label{5:eig_prob_2}
\end{align}
which is coupled to the linearized intracellular dynamics within the
$j^{\mbox{th}}$ cell  by
\begin{align}\label{5:eig_prob_3}
  \lambda \v{\zeta}_j & =  J_{j} \v{\zeta}_j +
   \frac{\v{e}_1}{\eps}\int_{\partial \Omega_{\varepsilon_j}}\, ( d_{1j} \,
                        \Phi - d_{2j} \, \zeta_{j1}) \, {\rm d}S_{\v{x}}\,,
                        \qquad j \in\lbrace{1, \ldots, N \rbrace} \,.
\end{align}
\end{subequations}
Here $J_j\equiv D_{\v{u}_j} \v{F}_j$ is the Jacobian of the
intracellular kinetics evaluated at $\v{u}_j=\v{u}_{js}$.

For $\eps\to 0$, we now analyze (\ref{5:eig_prob}) using strong
localized perturbation theory. In the inner region near the
$j^{\mbox{th}}$ cell, we have to leading-order from
(\ref{5:eig_prob_1}) and (\ref{5:eig_prob_2}) that
\begin{equation}\label{5:inn_eig}
  \Delta_{\y} \Phi = 0 \,, \quad \mbox{for} \quad \rho>1 \,; \qquad
  D \Phi_{\rho}= d_{1j}\Phi - d_{2j} \zeta_{j1} \,, \quad \mbox{on} \quad \rho=1\,.
\end{equation}
In terms of some $c_j$ to be found, (\ref{5:inn_eig}) has the radially
symmetric solution
\begin{equation}\label{5:inn_sol}
   \Phi =\frac{c_j}{2\pi D} \log|\y| +
  \frac{c_j}{2\pi d_{1j}} + \frac{d_{2j}}{d_{1j}} \zeta_{j1} \,.
\end{equation}
By calculating the surface integral 
$\eps^{-1}\int_{\Omega_{j}} \left(d_{1j}\Phi - d_{2j}
  \zeta_{j1}\right)\, {\rm d}S_{\v{x}}= c_j$, we obtain that
(\ref{5:eig_prob_3}) becomes
\begin{equation}\label{5:intra_1}
  \left(\lambda I - J_j\right) \v{\zeta}_j = \v{e}_1 c_j \,,
  \quad j\in\lbrace{1,\ldots, N\rbrace}\,,
\end{equation}
where $I$ is the $m\times m$ identity matrix. Upon assuming that
$\lambda I -J_j$ is invertible, we calculate $\zeta_{j1}$ as
\begin{equation}\label{5:zeta_j1}
  \zeta_{j1}=  {\mathit K}_{j} c_j \quad \mbox{where} \quad
  {\mathit K}_j \equiv \v{e}_1^{T} \left(\lambda I - J_j\right)^{-1}\v{e}_1 \,.
\end{equation}

By writing (\ref{5:inn_sol}) in terms of the outer variable, the
asymptotic matching condition shows that in the outer region $\Phi$
satisfies
\begin{subequations}\label{5:outer}
  \begin{align}
 & \Delta \Phi - \frac{(\lambda+\sigma)}{D} \Phi = 0\,, \quad
      \x \in \R^2 \setminus \cup_{j=1}^{N}\,\Omega_{\varepsilon_j}\,,
      \label{5:outer_1} \\
  &   \Phi \sim \frac{c_j}{2\pi D}\log|\v{x}-\v{x}_j| + \frac{c_j}{2\pi
      D\nu} + \frac{c_j}{2\pi d_{1j}} + \frac{d_{2j}}{d_{1j}} \zeta_{j1}
    \,, \quad \mbox{as} \quad \v{x}\to \v{x}_j \,, \label{5:outer_2}
  \end{align}
\end{subequations}
for $j\in\lbrace{1,\ldots,N\rbrace}$, where $\nu\equiv
{-1/\log\eps}$. The solution to (\ref{5:outer}) is
\begin{equation}\label{5:eigbulk_sol}
  \Phi(\v{x}) = -\sum_{k=1}^{N} \frac{c_j}{2\pi D}
  K_0\left( \sqrt{\frac{\sigma+\lambda}{D}} |\v{x}-\v{x}_k| \right) \,.
\end{equation}
Upon enforcing that the limiting behavior of $\Phi(\v{x})$ as
$\v{x}\to\v{x}_j$ agree with that required in (\ref{5:outer_2}), and
where $\zeta_{j1}$ is given in terns of $c_j$ by (\ref{5:zeta_j1}), we
conclude for $j\in \lbrace{1,\ldots,N\rbrace}$ that
\begin{equation}\label{5:lin_sys}
 2 \left(\frac{1}{\nu} + \frac{D}{d_{1j}} - \gamma_e +
   \log\left(2 \sqrt{\frac{D}{\sigma+\lambda}}\right)\right) c_j +
\frac{4\pi D d_{2j}}{d_{1j}} \zeta_{j1} +
  2 \sum_{\stackrel{k=1}{k\neq j}}^{N} c_{k}
  K_0\left( \sqrt{\frac{\sigma+\lambda}{D}} |\v{x}_j-\v{x}_k| \right) =0
  \,.
\end{equation}
Finally, upon writing (\ref{5:lin_sys}) and (\ref{5:zeta_j1}) in matrix
form, we obtain that the discrete eigenvalues
$\lambda$ of the linearization around a steady-state solution of
(\ref{DimLess_bulk}) are obtained from a nonlinear matrix eigenvalue
problem, which we refer to as the GCEP. Our result is summarized as
follows:
  
\begin{prop}\label{prop:stab} For $\eps\to 0$, the discrete eigenvalues
  $\lambda$ associated
  with the linearization around a steady-state solution to
  (\ref{DimLess_bulk}), for which $\det(\lambda I-J_j)\neq 0$ for
  any $j\in\lbrace{1,\ldots,N\rbrace}$, are the set of values
  \begin{subequations}\label{5:gcep}
    \begin{equation}\label{TransDent}
   \Lambda({\mathcal M}) \equiv \lbrace{ \lambda \, \vert \,
    \det \mathcal{M}(\lambda)=0 \rbrace} \,,
\end{equation}
where the $N\times N$ dimensional matrix ${\mathcal M}(\lambda)$ is
defined by
\begin{equation}\label{5:gcep_2}
   \mathcal{M}(\lambda) \equiv I + 2\pi \nu \mathcal{G}_{\lambda}  + \nu\,D\,P_1 +
 2\pi \nu D P_2 \mathcal{K}(\lambda) \,,
\end{equation}
\end{subequations}
with $\nu={-1/\log\eps}$. In (\ref{5:gcep_2}), $P_1$, $P_2$ and
${\mathcal K}$ are the diagonal matrices
\begin{subequations}
\begin{align}
  P_1 &\equiv \mbox{diag}\Big{(} \frac{1}{d_{11}}, \ldots,
        \frac{1}{d_{1N}} \Big{)}
  \,, \qquad P_2 \equiv \mbox{diag}\Big{(}\frac{d_{21}}{d_{11}}, \ldots,
  \frac{d_{2N}}{d_{1N}}\Big{)} \,, \label{5:p1p2} \\
   \mathcal{K}(\lambda) &\equiv \mbox{diag}\left( \mathit{K}_1,\ldots,
  \mathit{K}_N\right)\,, \quad \mbox{where} \quad {\mathit K}_j \equiv
  \v{e}_1^{T} \left(\lambda I - J_j
  \right)^{-1}\v{e}_1 \,.\label{5:kjac}
\end{align}
\end{subequations}
In addition, ${\mathcal G}_{\lambda}$ is the eigenvalue-dependent
Green's matrix with matrix entries
\begin{align}\label{GreenMat}
  ({\mathcal G}_{\lambda})_{ij} &= ({\mathcal G}_{\lambda})_{ji} \equiv
  \frac{1}{2\pi}K_0\left( \sqrt{\frac{\sigma+\lambda}{D}}
    |\v{x}_j-\v{x}_k| \right) \,, \quad i\neq j\,, \\
  ({\mathcal G}_{\lambda})_{jj} &= R_{\lambda j} \equiv \frac{1}{2\pi} \left(
 \log\left(2 \sqrt{\frac{D}{\sigma+\lambda}}\right) - \gamma_e\right)\,.
\end{align}
For any specific $\lambda_0\in \Lambda({\mathcal M})$, we have
$\det\mathcal{M}(\lambda_0)=0$, and so
${\mathcal M}(\lambda_0) \v{c}=\v{0}$ has a nontrivial solution
$\v{c}=(c_1,\ldots,c_N)^{T}$ that can be normalized as $|\v{c}|=1$. We
conclude that the steady-state is linearly stable if, whenever
$\lambda\in \Lambda({\mathcal M})$, we have $\mbox{Re}(\lambda)<0$.
\end{prop}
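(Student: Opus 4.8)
The plan is to recognize that essentially all of the analytic content has already been assembled in the preceding derivation, so that the proof is simply a matter of casting the scalar matching conditions into matrix form. The derivation has produced, for each $j\in\lbrace{1,\ldots,N\rbrace}$, the scalar closure relation (\ref{5:lin_sys}) linking $c_j$, $\zeta_{j1}$, and the off-diagonal self-consistent sum, together with the intracellular relation (\ref{5:zeta_j1}) that expresses $\zeta_{j1}=\mathit{K}_j c_j$ in terms of $c_j$ alone. The goal is to combine these $N$ relations into the single homogeneous system $\mathcal{M}(\lambda)\v{c}=\v{0}$ with $\mathcal{M}(\lambda)$ as stated.

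First I would substitute $\zeta_{j1}=\mathit{K}_j c_j$ from (\ref{5:zeta_j1}) into (\ref{5:lin_sys}), eliminating $\zeta_{j1}$ so that (\ref{5:lin_sys}) becomes a closed linear system in the unknowns $c_1,\ldots,c_N$. Next I would split the coefficient multiplying $c_j$ in (\ref{5:lin_sys}). The purely logarithmic self-interaction piece $2\bigl(-\gamma_e+\log(2\sqrt{D/(\sigma+\lambda)})\bigr)$ is precisely $4\pi R_{\lambda j}$, the regular part of the $K_0$ Green's function defined in (\ref{GreenMat}); this diagonal piece combines with the off-diagonal sum $2\sum_{k\neq j}c_k K_0(\sqrt{(\sigma+\lambda)/D}\,|\v{x}_j-\v{x}_k|)=4\pi\sum_{k\neq j}(\mathcal{G}_\lambda)_{jk}c_k$ to form the complete Green's-matrix contribution $4\pi\sum_{k}(\mathcal{G}_\lambda)_{jk}c_k$. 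The remaining diagonal terms $2/\nu$, $2D/d_{1j}$, and $4\pi D d_{2j}\mathit{K}_j/d_{1j}$ are kept separate.

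To obtain the stated normalized form, I would multiply the whole equation by $\nu/2$, which sends the $2/\nu$ term to unity and hence to the identity matrix $I$. Reading off the diagonal pieces against the definitions in (\ref{5:p1p2}) and (\ref{5:kjac}), the term $\nu D/d_{1j}$ assembles into $\nu D P_1$, the term $2\pi\nu D d_{2j}\mathit{K}_j/d_{1j}$ assembles into $2\pi\nu D P_2\mathcal{K}(\lambda)$, and the Green's-matrix contribution becomes $2\pi\nu\mathcal{G}_\lambda$. This reproduces exactly the matrix $\mathcal{M}(\lambda)=I+2\pi\nu\mathcal{G}_\lambda+\nu D P_1+2\pi\nu D P_2\mathcal{K}(\lambda)$ of (\ref{5:gcep_2}) acting on $\v{c}=(c_1,\ldots,c_N)^{T}$. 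Since the system is homogeneous, a nontrivial eigenvector $\v{c}$ exists if and only if $\det\mathcal{M}(\lambda)=0$, which is the defining condition for $\Lambda(\mathcal{M})$ in (\ref{TransDent}). The hypothesis $\det(\lambda I-J_j)\neq 0$ is exactly what is required for $\mathit{K}_j$ in (\ref{5:kjac}) to be well-defined, and since the perturbation in (\ref{5:pert}) evolves as $e^{\lambda t}$, linear stability is equivalent to $\mathrm{Re}(\lambda)<0$ for every $\lambda\in\Lambda(\mathcal{M})$.

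Because the inner/outer construction, the explicit $K_0$ Green's function, and the asymptotic matching have all been carried out prior to the statement, the remaining argument is essentially bookkeeping and carries no genuine analytic obstacle. The one point demanding care is the correct identification of the regular self-interaction part $R_{\lambda j}$ of the singular Green's function and the proper routing, under the overall $\nu/2$ normalization, of the logarithmic-capacitance term $2/\nu$ and the flux term $2D/d_{1j}$ into the identity and into $P_1$, respectively; a sign or factor error in any of these would corrupt the claimed four-term structure of $\mathcal{M}(\lambda)$.
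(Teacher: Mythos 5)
Your proposal is correct and follows essentially the same route as the paper: the paper itself obtains Proposition \ref{prop:stab} simply by writing (\ref{5:lin_sys}) and (\ref{5:zeta_j1}) in matrix form, and your bookkeeping (substituting $\zeta_{j1}=\mathit{K}_jc_j$, multiplying by ${\nu/2}$, and identifying $4\pi R_{\lambda j}$ and $4\pi(\mathcal{G}_\lambda)_{jk}$ with the diagonal and off-diagonal coefficients) is exactly right.
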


We remark that the normalized nullvector $\v{c}=(c_1,\ldots,c_N)^{T}$
encodes the relative magnitude of the perturbation of the spatial
gradient of the bulk signal near the cell boundaries. It also can be
used to predict the relative magnitude and phase shift of
intracellular oscillations for the permeable species $u_{1j}$ that can
arise from bifurcations of the steady-state. To see this, we use the
steady-state solution $U_{s}(\v{x})$ and the perturbation (\ref{5:pert})
to calculate for the $j^{\mbox{th}}$ cell that
\begin{subequations}\label{nstabform:c}
\begin{align}
  D \partial_{\rho} U \vert_{\rho=1} &\sim \frac{1}{2\pi} \left(
          B_{js} + \sum_{\lambda_0\in\Lambda({\mathcal M})} c_j
                                e^{\lambda_0 t}\right) \,,
  \quad j\in\lbrace{1,\ldots,N\rbrace} \,,  \label{nstabform:c1}  \\
 u_{1j} &\sim u_{1js} +
  \sum_{\lambda_0\in\Lambda({\mathcal M})} {\mathit K}_{j}(\lambda_0) c_j
          e^{\lambda_0 t} \,, \quad j\in\lbrace{1,\ldots,N\rbrace} \,,
          \label{nstabform:c2}
\end{align}
\end{subequations}
{where
  ${\mathit K}_j(\lambda_0)=\v{e}_1^{T} \left(\lambda_0 I -
    J_j\right)^{-1}\v{e}_1$.  From (\ref{nstabform:c2}) we observe
  that if $\lambda_0$ is complex-valued, the real and imaginary parts
  of the $j^{\mbox{th}}$ component of the complex-valued
  matrix-eigenvector product ${\mathcal K}\v{c}$ encode both the
  relative magnitude and phase shift of oscillations for the permeable
  species $u_{1j}$ in the cell population. Moreover, from
  (\ref{nstabform:c1}), the components of the eigenvector $\v{c}$
  determine the strength and phase shift of the eigen-perturbation of
  the signaling gradients near the cells.}

\section{Time-marching scheme for the integro-differential system}
\label{sec:numerics}

A direct numerical approach to solve (\ref{2:reduced}) would require
at each time step a numerical quadrature of ${\mathcal O}(N^2)$
memory-dependent convolution integrals. This naive approach would be
prohibitively expensive for large $N$ and would also require storing
the full time history of each $B_{j}(t)$ in order to advance one
time-step.

As such, we now develop a time-marching algorithm to compute solutions
to (\ref{2:reduced}). This approach is based on a highly accurate
approximation of the kernels in the nonlocal terms of
(\ref{2:reduced_2}) by a sum of exponentials, which leads naturally to
an exponential time differencing marching scheme. Rigorous results for
the \emph{sum-of-exponentials} approximation, together with the
development of time-marching methods for convolution integrals in
other contexts are given in \cite{beylkin_2005}, \cite{beylkin_2010},
\cite{lopez_1}, \cite{lopez_2}, \cite{lopez_3} and
\cite{greengard_2015}.

With an exponential kernel, our derivation of a time-marching scheme
relies on a Duhamel-type lemma:

\begin{lemma}\label{prop:duh} Let $f(t)$ be continuous and define the
  convolution
  ${\mathcal F}(t)\equiv\int_{0}^{t} e^{\omega (t-\tau)} f(\tau)\,
  d\tau$.  Then, we have
  ${\mathcal F}^{\prime}(t)=\omega {\mathcal F}(t) + f(t)$ with
  ${\mathcal F}(0)=0$. Moreover, we have the marching scheme
\begin{equation}\label{duh:update}
    {\mathcal F}(t+\Delta t)={\mathcal F}(t) e^{\omega \Delta t}
    +\, {\mathcal U}(t,\Delta t) \,, \quad \mbox{with} \quad
    {\mathcal U}(t,\Delta t)\equiv e^{\omega \Delta t} \int_{0}^{\Delta t}
    e^{-\omega z} f(t+z) \, dz \,.
\end{equation}
An {\em exponential time differencing} ETD2 scheme
(cf.~\cite{lopez_1}), ensuring that the update integral ${\mathcal U}$
exact for linear functions $f(t)$, yields, with an error
${\mathcal O}\left( (\Delta t)^3\right)$, the approximation
\begin{equation}\label{duh:etd2}
  {\mathcal F}(t+\Delta t) \approx {\mathcal F}(t) e^{\omega \Delta t}
  + f(t)\left( \frac{e^{\omega \Delta t}-1}{\omega}\right) +
  \left[ f(t+\Delta t)-f(t)\right] \left( \frac{e^{\omega \Delta t} - 1 -
      \omega \Delta t}{\omega^2\Delta t}\right) \,.
\end{equation}
\end{lemma}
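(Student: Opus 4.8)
The plan is to establish the three assertions in turn, all by elementary manipulation of the explicit convolution ${\mathcal F}(t)=\int_0^t e^{\omega(t-\tau)}f(\tau)\,d\tau$. For the differential characterization, I would first factor the $t$-dependence out of the kernel by writing ${\mathcal F}(t)=e^{\omega t}\int_0^t e^{-\omega\tau}f(\tau)\,d\tau$, and then differentiate with the Leibniz rule: the derivative of the prefactor returns $\omega{\mathcal F}(t)$, while the endpoint contribution from the fundamental theorem of calculus gives $e^{\omega t}e^{-\omega t}f(t)=f(t)$, so that ${\mathcal F}'(t)=\omega{\mathcal F}(t)+f(t)$. The initial value ${\mathcal F}(0)=0$ is immediate because the range of integration collapses. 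Continuity of $f$ is all that is needed here.

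For the exact marching identity, I would split $\int_0^{t+\Delta t}=\int_0^t+\int_t^{t+\Delta t}$. Pulling the factor $e^{\omega\Delta t}$ out of the kernel $e^{\omega(t+\Delta t-\tau)}$ in the first piece recovers $e^{\omega\Delta t}{\mathcal F}(t)$, while in the second piece the substitution $\tau=t+z$ turns the integral into $e^{\omega\Delta t}\int_0^{\Delta t}e^{-\omega z}f(t+z)\,dz={\mathcal U}(t,\Delta t)$. No approximation enters, so \eqref{duh:update} holds exactly.

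For the ETD2 formula the idea is to replace $f(t+z)$ inside ${\mathcal U}$ by its linear interpolant $f(t)+(z/\Delta t)[f(t+\Delta t)-f(t)]$ on $z\in[0,\Delta t]$, which is constructed precisely so that the update is exact when $f$ is linear. This leaves the two closed-form moment integrals $\int_0^{\Delta t}e^{-\omega z}\,dz=(1-e^{-\omega\Delta t})/\omega$ and $\int_0^{\Delta t}z\,e^{-\omega z}\,dz$, the latter evaluated by a single integration by parts. Multiplying through by $e^{\omega\Delta t}$ and collecting the coefficients of $f(t)$ and of $[f(t+\Delta t)-f(t)]$ then reproduces \eqref{duh:etd2}; the one piece of bookkeeping worth checking is that the second coefficient collapses to $(e^{\omega\Delta t}-1-\omega\Delta t)/(\omega^2\Delta t)$.

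The error estimate is the only step requiring genuine care. Assuming $f\in C^2$, the linear-interpolation remainder on $[0,\Delta t]$ satisfies $|f(t+z)-L(z)|=\tfrac12|f''(\xi)|\,|z(z-\Delta t)|\le \tfrac18(\Delta t)^2\|f''\|_\infty$. Integrating this bound against the bounded kernel $e^{-\omega z}$ over an interval of length $\Delta t$, and noting that the outer factor $e^{\omega\Delta t}$ stays bounded as $\Delta t\to 0$ for fixed $\omega$, yields a local truncation error of ${\mathcal O}((\Delta t)^3)$. I expect no serious obstacle: every step is routine calculus, and the only places that demand attention are the algebraic simplification of the quadrature weight and the clean justification of the cubic error bound through the interpolation remainder.
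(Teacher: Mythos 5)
Your proposal is correct and follows essentially the same route as the paper: the exact marching identity via splitting the integral (which the paper dismisses as ``immediate''), and the ETD2 formula by substituting the linear interpolant $f(t)+\tfrac{z}{\Delta t}\left[f(t+\Delta t)-f(t)\right]$ into the update integral and evaluating the two moment integrals explicitly, with the quadrature weights matching \eqref{duh:etd2}. Your explicit $C^2$ interpolation-remainder bound for the ${\mathcal O}\left((\Delta t)^3\right)$ error is slightly more careful than the paper's tacit ${\mathcal O}\left((\Delta t)^2\right)$ remainder, but it is the same argument in substance.
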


\begin{proof} The proof of (\ref{duh:update}) is immediate. To derive
  (\ref{duh:etd2}) we substitute
  \begin{equation*}
    f(t+z) = f(t) +\frac{z}{\Delta t} \left[ f(t+\Delta t)-f(t)\right]
    +{\mathcal O}\left((\Delta t)^2\right) \,,
  \end{equation*}
  into the update integral ${\mathcal U}$ in (\ref{duh:update}) and
  integrate the resulting expression explicitly.
\end{proof}

We now develop a time-marching scheme for (\ref{2:reduced_2}), which we write
compactly as
\begin{equation}\label{duh:r2}
  D_{j}(t) = \eta_j B_{j}(t) + \gamma_j u_{1j}(t) +
  \sum_{\stackrel{k=1}{k\neq j}}^{N} C_{jk}(t) \,,
\end{equation}
where $C_{jk}(t)$ and $D_{j}(t)$ are defined in (\ref{4:limit_2}) and
(\ref{4:limit_3}), respectively. We observe that both memory integrals
$C_{jk}(t)$ and $D_j(t)$ are improper, as their kernels each have an
integrable singularity at $t=\tau$.

\subsection{Sum-of-Exponentials Approximation for $D_{j}(t)$
  and $C_{jk}(t)$}\label{sec:sum_of_exponetials}

We first develop a \emph{sum-of-exponentials} approximation for
$D_{j}(t)$. To do so, we use the sector analyticity of the Laplace
transform of the exponential integral to deform the initial vertical
Bromwich line $\Gamma_B$ to the curve $\Gamma$, defined by
(\ref{dh:curve}), with endpoints at infinity in the left-half plane
$\mbox{Re}(s)<0$ (see Fig.~\ref{fig:laplace_discret} below). This
yields that
\begin{equation}\label{dj:lap}
  E_1(\sigma t) =- \frac{1}{2\pi i} \int_{\Gamma} {\mathcal E}(s) e^{st} \, ds
  \quad
  \mbox{where} \quad {\mathcal E}(s)=\frac{\log\left(1 + {s/\sigma}\right)}
  {s} \,.
\end{equation}
Observe that ${\mathcal E}(s)$ is analytic except along the branch cut
$\mbox{Re}(s)\leq -\sigma$ with $\mbox{Im}(s)=0$.

The \emph{sum-of-exponentials} method following \cite{beylkin_2005} (see
also \cite{greengard_2015}) establishes rigorous results for the
quadrature of (\ref{dj:lap}) along a family of hyperbolic shaped
curves
\begin{equation}\label{dh:curve}
  \Gamma \equiv \lbrace{\, s=\chi P(x) \,, \,\, x\in \R\rbrace} \,, \quad
  \mbox{where} \quad P(x)\equiv 1-\sin(\alpha + i\, x) \,,
\end{equation}
with $0<\alpha<{\pi/2}$ and $\chi>0$, where $s=\chi(1-\sin\alpha)>0$
at $x=0$. The curve $\Gamma$  has the limiting behavior
\begin{equation*}
  \mbox{Im}(s) \to \mp\infty\,,\quad \mbox{Re}(s)\to -\infty\,,\quad
  \frac{\mbox{Im}(s)}{\mbox{Re}(s)}\to \chi\cot\alpha \,, \quad
  \mbox{as} \,\,\, x \to \pm \infty\,.
\end{equation*}

To evaluate (\ref{dj:lap}) on $\Gamma$ we use $ds=\chi P^{\prime}(x)\, dx$
and (\ref{dh:curve}) for $P(x)$ to calculate that
\begin{equation}\label{dj:lap_x}
  E_1(\sigma t) = \frac{\chi}{2\pi} \int_{-\infty}^{\infty}
  e^{\chi P(x) t} {\mathcal E}\left[\chi P(x)\right]\, \cos(\alpha + i\, x)
    \, dx \,.
\end{equation}
By discretizing (\ref{dj:lap_x}) uniformly in $x$, with
$x_{\ell}=\ell h$ for $|\ell|\leq n$, and by labeling $s_{\ell}=\chi P(x_{\ell})$,
we get
\begin{subequations}\label{dj:lap_dsic}
\begin{equation}\label{dj:lap_disca}
  E_1(\sigma t) \approx E_n(t) \equiv \sum_{\ell=-n}^{n} e_{\ell} \,
  e^{s_{\ell} t}\,,
\end{equation}
where the coefficients are given explicitly by
\begin{equation}\label{dj:lap_discb}
  e_{\ell} = \frac{\chi h}{2\pi} \cos\left(\alpha + i \, \ell h\right)
\frac{\log\left(1 + {s_{\ell}/\sigma}\right)}
  {s_{\ell}} \,, \quad
  s_{\ell} = \chi\left[1-\sin(\alpha)\cosh(\ell h)\right]
  -i \chi \cos(\alpha)\sinh(\ell h) \,.
\end{equation}
\end{subequations}

Since $|{\mathcal E}(s)|\leq {1/\left(2|s|^{1/2}\right)}$ in the cut
plane $\mathbb{C}\backslash \, (-\infty,0)$ as $|s|\to \infty$, Lemma
1 and Corollary 1 of \cite{greengard_2015}, as adapted from
\cite{lopez_3}, provides the estimate for the difference
$|E_{1}(\sigma t)-E_n(t)|$.  The result is as follows.

\begin{lemma}\label{prop:approx}(cf.~\cite{greengard_2015})
  Consider the time interval
    $0<\delta\leq t\leq T_f$, with $T_f\geq 1000\delta$ and let
    $\varepsilon_f$ with $0<\varepsilon_f<0.1$ be a prescribed
    error-tolerance. Then, for the choice of parameters $h$ and $\chi$
    defined by
    \begin{equation}\label{approx:cond}
      h=\frac{a(\theta)}{n} \,, \quad
      \chi=\frac{2\pi \beta n(1-\theta)}{T_f a(\theta)} \,, \quad
      \mbox{where} \quad
      a(\theta)\equiv \mbox{cosh}^{-1}\left(\frac{2T_f}{\delta(1-\theta)
          \sin\alpha}\right) \,, \quad
    \end{equation}
    with $0<\alpha-\beta<\alpha+\beta<{\pi/2}$ and $0<\theta<1$, we have
    the uniform estimate
    \begin{equation}\label{approx:thm}
      \| E_{1}(\sigma t) - E_n(t)\|\leq \frac{\varepsilon_f}{\sqrt{t}} 
      \quad \mbox{on} \quad \delta\leq t\leq T_f \,,
    \end{equation}
    when $n$ is sufficiently large of the order
    \begin{equation}\label{approx:n_cond}
   n = {\mathcal O} \left( \left(-\log\varepsilon_f + \log\log\left({T_f/\delta}
          \right) \right) \log\left({T_f/\delta}\right)\right) \,.
    \end{equation}
\end{lemma}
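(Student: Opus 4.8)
The plan is to treat this lemma not as a self-contained computation but as a specialization of the abstract sum-of-exponentials / hyperbolic-contour quadrature theory of \cite{greengard_2015} (their Lemma~1 and Corollary~1, themselves adapted from \cite{lopez_3}). That theory furnishes a uniform trapezoidal-quadrature error bound for an inverse Laplace transform written as a contour integral $-\frac{1}{2\pi i}\int_{\Gamma}\mathcal{F}(s)e^{st}\,ds$ along a hyperbola of the form (\ref{dh:curve}), provided the symbol $\mathcal{F}(s)$ is analytic in a sectorial neighborhood of $\Gamma$ and decays algebraically, $|\mathcal{F}(s)|\le M|s|^{-\mu}$ with $0<\mu<1$, as $|s|\to\infty$ in the cut plane. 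In that abstract estimate the error is weighted by $t^{-\mu}$, and the parameter prescriptions and node count are stated in terms of $\mu$, the aspect angles $\alpha,\beta$, the fraction $\theta$, and the window $[\delta,T_f]$. The first step, then, is to record this abstract statement and identify our integrand $\mathcal{E}(s)=\log(1+s/\sigma)/s$ from (\ref{dj:lap}) as the symbol $\mathcal{F}$ with decay exponent $\mu=1/2$.

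The substantive step is to verify that $\mathcal{E}$ meets the hypotheses. For analyticity I would note that $\log(1+s/\sigma)$ has its only branch point at $s=-\sigma$, with the principal cut on $\mathrm{Re}(s)\le-\sigma$, $\mathrm{Im}(s)=0$, while the apparent pole of $1/s$ at the origin is removable because $\log(1+s/\sigma)\sim s/\sigma$ as $s\to0$, so $\mathcal{E}(0)=1/\sigma$. Hence $\mathcal{E}$ is analytic on $\mathbb{C}\setminus(-\infty,-\sigma]$. For $0<\alpha<\pi/2$ and $\chi>0$ the hyperbola $\Gamma$ in (\ref{dh:curve}) opens into $\mathrm{Re}(s)\to-\infty$ along fixed asymptotic directions set by $\alpha$ and stays strictly to the right of this cut, so $\mathcal{E}$ is analytic in a sector containing $\Gamma$. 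For the decay, since $\log(1+s/\sigma)\sim\log s$ grows only logarithmically, we have $|\mathcal{E}(s)|\sim|\log s|/|s|=o(|s|^{-1/2})$, so the bound $|\mathcal{E}(s)|\le\tfrac12|s|^{-1/2}$ recorded before the lemma holds uniformly once $|s|$ is large in the cut plane, confirming $\mu=1/2$.

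With the hypotheses in hand, the proof reduces to substituting $\mu=1/2$ into the abstract result: the $t^{-\mu}$ weight becomes the $1/\sqrt{t}$ of (\ref{approx:thm}); the admissible thresholds $T_f\ge1000\delta$ and $\varepsilon_f<0.1$ are the regimes in which the reference's optimization of the hyperbola parameters is valid; and the prescriptions for $h$, $\chi$, and $a(\theta)$ in (\ref{approx:cond}), together with the logarithmic node count (\ref{approx:n_cond}), are exactly the $\mu=1/2$ instances of the general formulas. I would then quote these, and, to make the dependence transparent, re-derive the scalings $h=a(\theta)/n$ and $\chi\propto n/T_f$ from the balance between the truncation error (governed by the decay of $\mathcal{E}$ along $\Gamma$) and the discretization error (governed by the analyticity width of the symbol).

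The main obstacle is bookkeeping rather than analysis: one must confirm that our normalization of the contour and symbol matches that of \cite{greengard_2015} precisely, so that the constants (the $\tfrac12$ in the decay bound, the factor $1000$, the tolerance $0.1$) and the exact form of $a(\theta)$ transcribe correctly. A secondary subtlety is that the theorem typically presumes the decay bound on a whole sector, whereas ours is stated only asymptotically; I would either enlarge the constant to absorb the bounded region $|s|\le M_0$, or invoke a uniform sectorial bound $|\mathcal{E}(s)|\le C_\sigma|s|^{-1/2}$ valid on all of $\mathbb{C}\setminus(-\infty,-\sigma]$ after accounting for the logarithm. Either way the exponent $\mu=1/2$ is unchanged, so (\ref{approx:thm})--(\ref{approx:n_cond}) persist verbatim.
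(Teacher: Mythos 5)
Your proposal is correct and follows essentially the same route as the paper: the paper offers no self-contained proof of this lemma, but simply notes that $|{\mathcal E}(s)|\leq \tfrac{1}{2}|s|^{-1/2}$ as $|s|\to\infty$ in the cut plane and then invokes Lemma 1 and Corollary 1 of \cite{greengard_2015} (adapted from \cite{lopez_3}), which is exactly your strategy of verifying the sectorial analyticity and $\mu={1/2}$ decay of ${\mathcal E}(s)=\log(1+{s/\sigma})/s$ and specializing the abstract quadrature estimate. Your additional checks (the removable singularity at $s=0$, the branch point at $s=-\sigma$, and the absorption of the logarithm into the $|s|^{-1/2}$ bound) are sound and, if anything, make the hypotheses more explicit than the paper does.
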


In (\ref{approx:cond}), the parameters $\alpha$, $\beta$ and $\theta$,
satisfying the constraint in Lemma \ref{prop:approx}, can be
optimized so as to minimize the number of terms needed to achieve a
prescribed accuracy. As in \cite{greengard_2015}, we chose $\alpha=0.8$
and $\beta=0.7$ in all the results below with $\theta$ in the range
$[0.90,0.95]$ (see below). The key conclusion is that the number of terms in
the sum grows very slowly as either the tolerance $\varepsilon_f$
decreases or as ${T_f/\delta}$ increases.

In a similar way, we can develop a sum-of-exponentials approximation
for $C_{jk}(t)$ defined in (\ref{4:limit_2}). From the
Laplace-transform pair
\begin{equation}\label{cjk:laplace}
{\mathcal L}\left[G(a_{jk},t)\right]  = 2
K_0\left(2 a_{jk} \sqrt{s+\sigma}\right)  \,, \quad
\mbox{where}  \quad G(a_{jk},t) \equiv \frac{e^{-\sigma t}}{t}
e^{-a_{jk}^2/t} \,, \quad a_{jk}\equiv
\frac{|\x_j-\x_k|}{\sqrt{4D}}\,,
\end{equation}
we observe that $K_0\left(2 a_{jk} \sqrt{s+\sigma}\right)$ is analytic
in the complex $s$-plane except across the branch cut
$\mbox{Re}(s)\leq -\sigma$ with $\mbox{Im}(s)=0$.  Therefore, we can
approximate $G(a_{jk},t)$ by a sum of exponentials, similar to that
done in (\ref{dj:lap_dsic}), to obtain
\begin{subequations}\label{cj:lap_dsic}
\begin{equation}\label{cj:lap_disca}
  G(a_{jk},t) \approx G_{n}(a_{jk},t) \equiv \sum_{\ell=-n}^{n} \omega_{jk\ell} \,
  e^{s_{{\ell}} t}\,,
\end{equation}
where the coefficients are
\begin{equation}\label{cj:lap_discb}
\omega_{jk\ell} = \frac{\chi h}{2\pi} \cos\left(\alpha + i \, \ell h\right)
 2 K_0\left(2 a_{jk} \sqrt{s_{\ell}+\sigma}\right) \,, \qquad
  s_{\ell} = \chi P(x_{\ell})\,.
\end{equation}
\end{subequations}
Owing to the same branch cut structure as for approximating
$E_1(\sigma t)$, for simplicity in our approximation
(\ref{cj:lap_disca}) we choose the same values of the parameters
$\alpha=0.8$ and $\beta=0.7$ used in (\ref{dj:lap_dsic}), which
provides a comparable accuracy. We emphasize that the
coefficients $\omega_{jk\ell}$ will depend on $j$ and $k$ owing to the
inter-cell distances $|\x_j-\x_k|$. For $N$ cells, there are
$N(N-1)/2$ sets
$\lbrace{\omega_{jk\ell} \, \vert \, -n\leq \ell \leq n\rbrace}$ of
coefficients that need to be calculated once and stored for the
algorithm developed below in \S \ref{sec:all_march}.

We now provide some numerical results for three
\emph{sum-of-exponentials} approximations. Under the conditions of
Lemma \ref{prop:approx} the difference between a function $f(x, t)$,
with Laplace transform $\hat{f}(x,s)$, and its
\emph{sum-of-exponentials} approximation
$f_{a}(x,t)=\sum_{-n}^n w_k e^{s_k t}$ with an error magnitude
$\varepsilon_f$ is
\begin{equation} \label{eq:erroreps}
  \| f(x, t) - f_a(x, t)\| \leq \frac{\varepsilon_f}{\sqrt{t}}\,, \quad
  \mbox{on} \quad  t \in [\delta,T]\,.
\end{equation}
In the tables below $f(x,t)$ is taken either as the 1-D heat-kernel
$G_{1D}(x,t)$ considered in \cite{greengard_2015}, the 2-D heat-kernel
with bulk degradation
$G_{2D}(x, t) \equiv (4 \pi t)^{-1} e^{-x^2/(4 t) - \sigma
  t}=(4\pi)^{-1}G(x,t)$, where $G(x,t)$ is given in
(\ref{cjk:laplace}), or the exponential integral $E_{1}(\sigma t)$. Recall
that the discretization points are
\begin{equation*}
  w_k = \frac{\chi h}{2
\pi} \cos(\alpha + ikh) \hat{f}(x,s_k)\,, \quad \mbox{where} \quad
    s_k = \chi (1 - \sin(\alpha + ikh))\,.
\end{equation*}
In order to estimate $\sqrt{t} \, \| f(x, t) - f_a(x, t)\|$ depending
on the number $2n+1$ of terms in the \emph{sum-of-exponentials}
approximation we proceed as in \cite{greengard_2015}. We take a
$50 \times 1000$ grid $(x_j, t_k)$ where $x_0=0$, $x_j=2^{-16+j}$ for
$j\in \{1, ..., 49\}$ (not applicable for $E_1(\sigma t)$), and
$t_k = \delta e^{k\blacktriangle}$ with
$\blacktriangle\equiv{\log\left({T/\delta}\right)/999}$ for
$k\in\lbrace{0,\ldots, 999\rbrace}$ (equi-spaced on a logarithmic scale). We
then compute
$\max_{j, k} \, \sqrt{t} \, | f(x_j, t_k) - f_a(x_j, t_k)|$, which
provides the error results given in the tables below on the three
different intervals $I_1 \equiv [\delta, T] = [10^{-3}, 1]$,
$I_2 \equiv [\delta, T] = [10^{-3}, 10^3]$ and
$I_3 \equiv [\delta, T] = [10^{-5}, 10^4]$.

In Table \ref{tab:G1eps} we first reproduce the numerical error table
of \cite{greengard_2015} for the \emph{sum-of-exponentials}
approximation of the 1-D heat kernel
$G_{1D}(x, t) \equiv \left(4 \pi t\right)^{-1/2} e^{-x^2/(4t)}$.

\begin{table}[h!]
\centering
\begin{tabular}{ c c c c }
        \hline
 $\varepsilon_f$   &$n(I_1)$       &$n(I_2)$ & $n(I_3)$ \\
        \hline
$10^{-3}$   &$15  \,\, (6.598\cdot 10^{-4})$    &$23 \,\, (9.768\cdot 10^{-4})$
&$32  \,\, (9.095\cdot 10^{-4})$ \\
$10^{-6}$   &$31  \,\, (7.795\cdot 10^{-7})$    &$50 \,\, (8.709\cdot 10^{-7})$
&$68  \,\, (9.643\cdot 10^{-7})$ \\
$10^{-9}$   &$47  \,\, (9.199\cdot 10^{-10})$   &$77 \,\, (9.107\cdot 10^{-10})$
&$105 \,\, (8.817\cdot 10^{-10})$ 
\end{tabular}
\caption{The number $2n+1$ of terms needed to approximate the 1-D heat
  kernel $G_{1D}(x, t) \equiv (4\pi t)^{-1/2} e^{-x^2/(4t)}$ with
  Laplace transform
  ${\mathcal L}\left[G_1(x,t)\right]= \frac{1}{2 \sqrt{s}}
  e^{-\sqrt{s}|x|}$ to a precision $\varepsilon_f$ corresponding to
  \eqref{eq:erroreps} for the three time intervals
  $I_1 = [10^{-3}, 1]$, $I_2 = [10^{-3}, 10^3]$ and
  $I_3 = [10^{-5}, 10^4]$.  This table essentially reproduces Table 1
  in \cite{greengard_2015} with a few additional explicit error
  values.}
\label{tab:G1eps}
\end{table}

In Table \ref{tab:G2eps} and Table \ref{tab:E1eps} we provide similar
numerical error tables for the 2-D heat kernel with degradation
$G_{2D}(x, t) = (4\pi t)^{-1} e^{-x^2/(4 t) - \sigma t}$ with
$\sigma=1$ and for the exponential integral
$E_1(\sigma t) = \int_{\sigma t}^\infty \eta^{-1}e^{-\eta}\, d\eta$
with $\sigma=1$, respectively.  For the 2-D heat kernel, we observe
that we require a slightly higher number of terms in the
\emph{sum-of-exponentials} approximation than for the 1-D heat kernel
or the exponential integral.  However, this causes essentially no time
constraint for our overall numerical scheme in \S \ref{sec:all_march}
since the exponential sum representation has to be created only once
at the beginning of the time marching stepping for a given spatial
arrangement of cells. Creating these exponential sum
approximations for both the 2-D heat kernel and the exponential
integral takes less than $2s$ on a laptop for $n(I_2) = 77$.

\begin{table}[h!]
\centering
\begin{tabular}{ c c c c }
        \hline
 $\varepsilon_f$  &$n(I_1)$   &$n(I_2)$  &$n(I_3)$ \\
        \hline
$10^{-3}$   &$31 \,\, (8.085\cdot 10^{-4})$    &$49 \,\, (8.663\cdot 10^{-4})$
&$91  \,\, (8.089\cdot 10^{-5})$ \\
$10^{-6}$   &$45  \,\, (6.690\cdot 10^{-7})$    &$75 \,\, (9.272\cdot 10^{-7})$
&$114  \,\, (9.799\cdot 10^{-7})$ \\
$10^{-9}$   &$64  \,\, (7.028\cdot 10^{-10})$   &$110 \,\, (8.417\cdot 10^{-11})$
&$150 \,\, (9.343\cdot 10^{-10})$ 
\end{tabular}
\caption{Number $2n+1$ of terms needed to approximate the 2-D heat
  kernel with degradation
  $G_{2D}(x, t) = (4\pi t)^{-1} e^{-x^2/(4 t) - \sigma t}$ with
  Laplace transform
  ${\mathcal L}\left[G_{2D}(x,t)\right] = (2\pi)^{-1} K_0(x
  \sqrt{s+\sigma})$ to a precision $\varepsilon_f$ for $\sigma=1$
  corresponding to \eqref{eq:erroreps} for the three time intervals
  $I_1 = [10^{-3}, 1]$, $I_2 = [10^{-3}, 10^3]$ and
  $I_3 = [10^{-5}, 10^4]$.}
\label{tab:G2eps}
\end{table}

\begin{table}[h!]
\centering
\begin{tabular}{ c c c c }
        \hline
$\varepsilon_f$   &$n(I_1)$               &$n(I_2)$   &$n(I_3)$ \\
        \hline
$10^{-3}$   &$31  \,\, (8.824\cdot 10^{-5})$    &$49 \,\, (9.457\cdot 10^{-5})$
&$91  \,\, (8.424\cdot 10^{-5})$ \\
$10^{-6}$   &$45  \,\, (8.362\cdot 10^{-7})$    &$75 \,\, (7.902\cdot 10^{-7})$
&$114 \,\, (9.532\cdot 10^{-7})$ \\
$10^{-9}$   &$64  \,\, (9.620\cdot 10^{-10})$   &$110 \,\, (9.229\cdot 10^{-10})$
&$150 \,\, (9.525\cdot 10^{-10})$ 
\end{tabular}
\caption{Number $2n+1$ of terms needed to approximate the exponential
  integral $E_1(\sigma t)$ for $\sigma =1$ with Laplace transform
  ${\mathcal L}\left[E_1(\sigma t)\right] = {\log\left(1
      +{s/\sigma}\right)/s}$ to a precision $\varepsilon_f$
  corresponding to \eqref{eq:erroreps} for the three time intervals
  $I_1 = [10^{-3}, 1]$, $I_2 = [10^{-3}, 10^3]$ and
  $I_3 = [10^{-5}, 10^4]$.}
\label{tab:E1eps}
\end{table}

We remark that $\theta=0.95$ was chosen for Table \ref{tab:G2eps}
while $\theta=0.90$ was chosen for Table \ref{tab:E1eps}. For Table
\ref{tab:G1eps}, $\theta=0.9$ was chosen only for $I_1$, with
$\theta = 0.95$ otherwise. In our time-stepping algorithm developed
below in \S \ref{sec:all_march} and implemented in \S \ref{sec:selkov}
for Sel'kov reaction kinetics, we primarily used $\theta=0.95$ and
$n=75$ for the \emph{sum-of-exponentials} approximation for
$E_1(\sigma t)$ and $G_{2D}(x,t)$, so that the discretization points
$s_{\ell}$ were common to both approximations. Overall, this achieved
an accuracy of roughly $10^{-9}$ for $E_1(\sigma t)$ on the time
interval $[10^{-3},10^{3}]$. In \S \ref{cell:hexagonal} below, where
we will study the phase coherence of intracellular oscillations by
computing the Kuramoto order parameter over long time intervals, we
will use $n=114$.

\begin{figure}[htbp]
  \centering
 \includegraphics[width=1.0\textwidth,height=4.5cm]{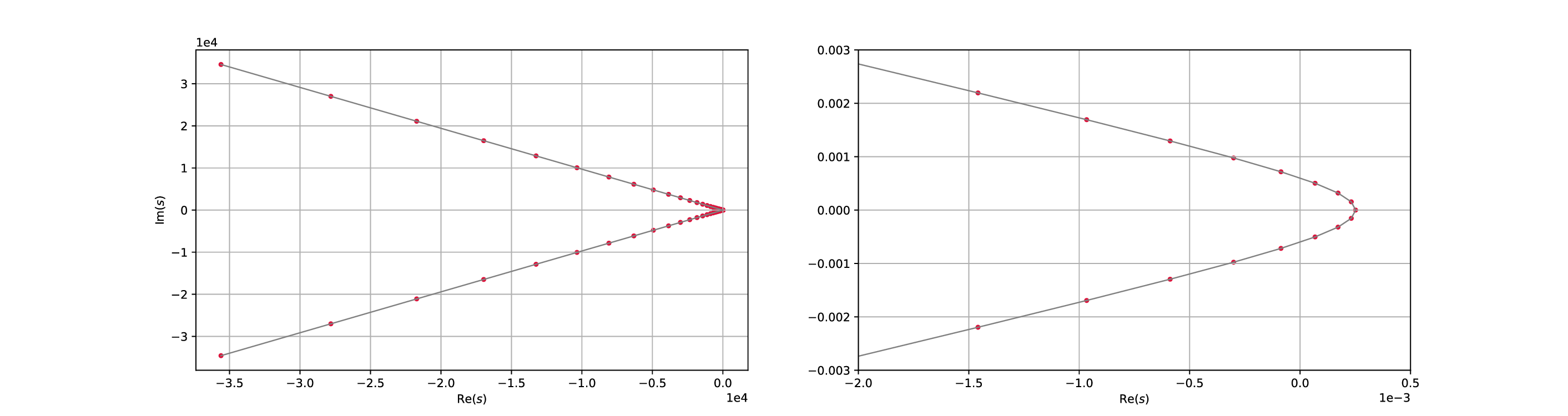}
      \caption{Left: Plot of the hyperbolic-shaped contour $\Gamma$ in
        the Laplace transform plane defined by (\ref{dh:curve}) with
        $\theta=0.95$, $\alpha=0.8$ and $\beta=0.7$ in
        (\ref{approx:cond}). The discretization points
        $\lbrace{s_{\ell}\rbrace}$ for the \emph{sum-of-exponentials}
        approximation of $E_{1}(\sigma t)$ with $\sigma=1$ and $n=75$
        are the red dots.  Right: Zoomed plot showing the
        discretization near the origin.}
    \label{fig:laplace_discret}
\end{figure}

\subsection{Time-marching scheme for $D_{j}(t)$}\label{sec:dj_march}

We first decompose $D_{j}(t)$ into the sum of a local term $D_{Lj}(t)$ near
the singularity of the kernel and a history term $D_{Hj}(t)$ as
\begin{subequations}\label{dj:decomp}
\begin{equation}\label{dj:decomp_1}
  D_{j}(t)=D_{Hj}(t)+D_{Lj}(t) \,,
\end{equation}
where for some $\Delta t$, with $0<\Delta t\ll 1$, we define
\begin{align}
  D_{Hj}(t) &\equiv \int_{0}^{t-\Delta t} B_{j}^{\prime}(\tau) E_1\left[\sigma
              (t-\tau)\right] \, d\tau \,, \label{dj:decomp_H} \\
  D_{Lj}(t) &\equiv \int_{t-\Delta t}^{t} B_{j}^{\prime}(\tau) E_1\left[\sigma
              (t-\tau)\right] \, d\tau = \int_{0}^{\Delta t} B_{j}^{\prime}(t-z)
              E_1(\sigma z) \, dz \,. \label{dj:decomp_L}
\end{align}
\end{subequations}

{The approximation given in the next lemma for the local
  contribution $D_{Lj}(t)$ ensures that the quadrature is exact for
  linear functions on ranges where $B_{j}^{\prime}(\tau)$ is
  continuous.  However, the estimate for $D_{Lj}(\Delta t)$ is more delicate
  since, owing to the singular behavior for $B_j^{\prime}(\tau)$ given
  in (\ref{mat:bj}) when $u_{1j}(0)\neq 0$, we cannot in general
  assume that $B_j^{\prime}(\tau)$ is approximately linear on
  $0<\tau<\Delta t$.}

\begin{lemma}\label{prop:dlj} For $0<\Delta t\ll 1$ and for $t\geq 2\Delta t$,
  we have
  \begin{subequations}\label{dlj:approx}
  \begin{equation}\label{dlj:approx_1}
      D_{Lj}(t) = B_j^{\prime}(t) b_1 + B_{j}^{\prime}(t-\Delta t) b_2 +
      {\mathcal O}\left((\Delta t)^3\right)\,,
  \end{equation}
  where $b_2$ and $b_1$ are defined by
  \begin{equation}\label{dlj:b1b2}
    b_2 = \frac{\Delta t}{2} E_1(\sigma \Delta t) -
    \frac{e^{-\sigma \Delta t}}{2\sigma}
    + \frac{\left(1-e^{-\sigma \Delta t}\right)}{2\sigma^2 \Delta t} \,, \qquad
    b_1 = \Delta t E_1(\sigma \Delta t) +
    \frac{\left(1-e^{-\sigma \Delta t}\right)}{
      \sigma} - b_2 \,.
  \end{equation}
\end{subequations}
{Moreover, for $t=\Delta t$, and with $B_j$ satisfying the limiting
behavior in (\ref{mat:bj}), we have the exact relation}
\begin{equation}\label{dlj:first}
{  D_{Lj}(\Delta t) = E_1(\sigma \Delta t) B_{j}(\Delta t) +
  \int_{0}^{\Delta t} \left( B_{j}(\Delta t) - B_{j}(\Delta t-z)\right)
  \frac{ e^{-\sigma z}}{z} \, dz \,.}
\end{equation}
\end{lemma}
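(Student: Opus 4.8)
The plan is to treat the two assertions separately, since the case $t=\Delta t$ is genuinely more delicate than $t\ge 2\Delta t$. Throughout I work with the form $D_{Lj}(t)=\int_0^{\Delta t}B_j'(t-z)E_1(\sigma z)\,dz$ from (\ref{dj:decomp_L}) and exploit the elementary identity $\frac{d}{dz}E_1(\sigma z)=-e^{-\sigma z}/z$, together with $\int_\epsilon^{\Delta t} e^{-\sigma z}/z\,dz = E_1(\sigma\epsilon)-E_1(\sigma\Delta t)$.

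For the first assertion (\ref{dlj:approx_1}), I first note that by (\ref{mat:bjp}) the only singularity of $B_j'$ lies in the initial layer near $t=0$; hence for $t\ge 2\Delta t$ the argument $t-z$ stays in $[t-\Delta t,t]\subset[\Delta t,\infty)$, where $B_j'$ is smooth with bounded second derivative. I then replace $B_j'(t-z)$ by its linear interpolant $L(z)=B_j'(t)(1-z/\Delta t)+B_j'(t-\Delta t)(z/\Delta t)$ through the nodes $z=0$ and $z=\Delta t$, and integrate $L(z)E_1(\sigma z)$ exactly. This reduces the computation to the two moments $M_0\equiv\int_0^{\Delta t}E_1(\sigma z)\,dz$ and $M_1\equiv\int_0^{\Delta t}z\,E_1(\sigma z)\,dz$, which I evaluate by integration by parts (using the derivative identity above and the fact that $z\,E_1(\sigma z)\to0$ as $z\to0^+$). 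One obtains $M_0=\Delta t\,E_1(\sigma\Delta t)+(1-e^{-\sigma\Delta t})/\sigma$ and a similar closed form for $M_1$. Matching the coefficients of $B_j'(t)$ and $B_j'(t-\Delta t)$ then gives $b_2=M_1/\Delta t$ and $b_1=M_0-M_1/\Delta t$, which I expect to coincide termwise with (\ref{dlj:b1b2}). Finally, the interpolation remainder $B_j'(t-z)-L(z)=O\!\big(z(\Delta t-z)\big)$ integrated against the integrable kernel $E_1(\sigma z)$ yields the stated ${\mathcal O}((\Delta t)^3)$ error.

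For the second assertion (\ref{dlj:first}) the difficulty is that at $t=\Delta t$ the singular endpoint $z=0$ of the kernel coincides with the node $\tau=\Delta t$, where $B_j(\Delta t)$ need not be small; a naive integration by parts then produces the divergent boundary term $E_1(\sigma\epsilon)B_j(\Delta t)$. The plan is to integrate by parts with a cutoff, taking $u=E_1(\sigma z)$ and $v=-B_j(\Delta t-z)$ on $[\epsilon,\Delta t]$. The upper boundary term vanishes because $B_j(0)=0$ (guaranteed by (\ref{mat:bj})), leaving $E_1(\sigma\epsilon)B_j(\Delta t-\epsilon)-\int_\epsilon^{\Delta t}B_j(\Delta t-z)e^{-\sigma z}/z\,dz$. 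I then add and subtract $B_j(\Delta t)$ inside the remaining integral; the piece $B_j(\Delta t)\int_\epsilon^{\Delta t}e^{-\sigma z}/z\,dz=B_j(\Delta t)\big(E_1(\sigma\epsilon)-E_1(\sigma\Delta t)\big)$ exactly cancels the divergent boundary contribution, and what remains is the claimed exact identity (\ref{dlj:first}).

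The main obstacle is this cancellation in the second case, and the one estimate that must be checked carefully is that the surviving boundary term $E_1(\sigma\epsilon)\big(B_j(\Delta t-\epsilon)-B_j(\Delta t)\big)\to0$ as $\epsilon\to0^+$: here the difference is $O(\epsilon)$ by differentiability of $B_j$ at $\Delta t>0$, which beats the ${\mathcal O}(\log(1/\epsilon))$ growth of $E_1(\sigma\epsilon)$. One must also confirm that both integrals converge -- the left-hand integrand is integrable near $z=\Delta t$ because there $B_j'(\Delta t-z)$ is at worst $[\,\tau(\log\tau)^2]^{-1}$ singular (from (\ref{mat:bj})--(\ref{mat:bjp}) when $u_{1j}(0)\neq0$) against the finite factor $E_1(\sigma\Delta t)$, while the subtracted kernel $\big(B_j(\Delta t)-B_j(\Delta t-z)\big)e^{-\sigma z}/z$ is bounded near $z=0$. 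These are precisely the structural facts about $B_j$ supplied by the limiting behavior (\ref{mat:bj}).
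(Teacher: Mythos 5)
Your proposal is correct and follows essentially the same route as the paper: for $t\ge 2\Delta t$ the paper likewise replaces $B_j^{\prime}(t-z)$ by its linear interpolant on $[0,\Delta t]$ and evaluates the two moments $\int_0^{\Delta t}E_1(\sigma z)\,dz$ and $\int_0^{\Delta t}zE_1(\sigma z)\,dz$ in closed form, and for $t=\Delta t$ it performs the same integration by parts, with your cutoff-plus-cancellation argument being exactly the $\epsilon\to 0^{+}$ limit written out explicitly, including the key observation that $E_1(\sigma\epsilon)\bigl[B_j(\Delta t)-B_j(\Delta t-\epsilon)\bigr]\to 0$ because the $O(\epsilon)$ difference beats the logarithmic growth of $E_1$. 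Your added remarks on the integrability of the kernel near $z=\Delta t$ via (\ref{mat:bjp}) are a welcome, if implicit in the paper, piece of bookkeeping.
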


\begin{proof} On $0\leq z\leq \Delta t$, and assuming $t\ge 2\Delta t$, we
  can approximate
  \begin{equation*}
    B_{j}^{\prime}(t-z) = B_{j}^{\prime}(t) + \frac{z}{\Delta t}
    \left[ B_{j}^{\prime}(t-\Delta t) - B_{j}^{\prime}(t) \right] +
    {\mathcal O}\left((\Delta t)^2\right) \,,
  \end{equation*}
  in $D_{Lj}(t)$ of (\ref{dj:decomp_L}). By integrating the resulting
  expression explicitly we get
  \begin{equation*}
    D_{Lj}(t) = \frac{B_{j}^{\prime}(t)}{\Delta t} \int_{0}^{\sigma \Delta t}
    E_1(\eta) \, d\eta + \frac{\left[B_{j}^{\prime}(t-\Delta t)-B_j^{\prime}(t)
      \right]}{\sigma^2\Delta t} \int_{0}^{\sigma\Delta t} \eta E_1(\eta)\,
    d\eta + {\mathcal O}\left((\Delta t)^3\right)\,.
  \end{equation*}
  Using $\int_{0}^{x} E_1(\eta) \, d\eta =x E_1(x) +1 - e^{-x}$ and
  $\int_{0}^{x} \eta E_{1}(\eta) \, d\eta = \frac{1}{2}\left[x^2 E_1(x) - (x+1)
      e^{-x} + 1 \right]$, we get (\ref{dlj:approx}).

  {To derive (\ref{dlj:first}), we integrate
    $D_{Lj}(\Delta t)$ by parts using $B_j(0)=0$ to obtain
    \begin{equation}\label{dlj:first_1}
      \begin{split}
  D_{Lj}(\Delta t) &=  -\lim_{z\to 0} E_1(\sigma z)\left[ B_j(\Delta t) -
        B_j(\Delta t-z)\right]  +
      E_1(\sigma \Delta t) B_{j}(\Delta t) \\
       & \qquad +
        \int_{0}^{\Delta t} \left( B_{j}(\Delta t) - B_{j}(\Delta t-z)\right)
        \frac{ e^{-\sigma z}}{z} \, dz \,.
      \end{split}
    \end{equation}
    Upon using the local behavior (\ref{mat:bj}) for $B_j$, together
    with $E_1(\sigma z)\sim -\log(\sigma z)+{\mathcal O}(1)$ as
    $z\to 0$, we readily obtain that the first term on the right-hand
    side of (\ref{dlj:first_1}) vanishes when either $u_{1j}(0)=0$ or
    $u_{1j}(0)\neq 0$. This yields (\ref{dlj:first}). In particular,
    for $u_{1j}(0)\neq 0$ we have the estimate
    \begin{equation*}
      \lim_{z\to 0} E_1(\sigma z)\left[ B_j(\Delta t) -
        B_j(\Delta t-z)\right]= -\frac{u_{1j}(0)\gamma_j}{\Delta t
        \log\left({\Delta t/(\kappa_j e^{-\gamma_e})}\right)}
      \lim_{z\to 0}\left[-z\log(\sigma z)\right]=0 \,.
     \end{equation*}}
\end{proof}

For the history term $D_{Hj}(t)$, we substitute (\ref{dj:lap_disca}) into
(\ref{dj:decomp_H}) to obtain
\begin{equation}\label{djh:main}
  D_{Hj}(t) \approx \sum_{\ell=-n}^{n} e_{\ell} H_{Dj\ell}(t) 
  \quad \mbox{where} \quad H_{Dj\ell}(t) \equiv \int_{0}^{t-\Delta t}
  B_{j}^{\prime}(\tau) e^{s_{\ell}(t-\tau)} \, d\tau \,.
\end{equation}
For each \emph{history-mode} $H_{Dj\ell}(t)$ we adapt Lemma
\ref{prop:duh} to readily obtain for $t\geq \Delta t$ that
\begin{subequations}\label{dhj:history_all}
\begin{equation}\label{djh:history}
  \begin{split}
  H_{Dj\ell}(t+\Delta t) &= H_{Dj\ell}(t) e^{s_{\ell}\Delta t} +
  U_{Dj\ell}(t,\Delta t) \,, \qquad \mbox{with} \quad H_{Dj\ell}(\Delta t)=0\,, \\
  \noindent\mbox{where} \qquad
  U_{Dj\ell}(t,\Delta t)  &\equiv e^{2s_{\ell}\Delta t} \int_{0}^{\Delta t}
  e^{-s_{\ell} z} B_{j}^{\prime}(t-\Delta t + z) \, dz \,.
 \end{split}
\end{equation}
{To approximate (\ref{djh:history}) for $t\geq 2\Delta t$, we
  use the ETD2 scheme that ensures that the update integral
  $U_{D\ell j}$ is exact for linear functions. This yields that
\begin{equation}\label{djh:updateU_1}
  U_{Dj\ell}(t,\Delta t) \approx
    B_{j}^{\prime}(t) b_{3\ell} + B_{j}^{\prime}(t-\Delta t) b_{4\ell} \,, 
    \qquad \mbox{for} \quad t\geq 2\Delta t \,,
\end{equation}
where $b_{3\ell}$ and $b_{4\ell}$ are defined by
\begin{equation}\label{djh:b3b4}
  b_{3\ell} = \frac{e^{s_{\ell}\Delta t}}{s_{\ell}^2 \Delta t}
  \left( e^{s_{\ell} \Delta t} - 1 - s_{\ell}\Delta t \right)\,, \qquad
  b_{4\ell} = e^{s_{\ell} \Delta t} \frac{ \left(e^{s_{\ell}\Delta t}-1\right)}{s_{\ell}}
  - b_{3\ell} \,.
\end{equation}
For $t=\Delta t$, we integrate 
$U_{Dj\ell}(\Delta t,\Delta t)$ from (\ref{djh:history}) by parts and use
$B_j(0)=0$ to get
\begin{equation}\label{dhj:first_1}
  U_{Dj\ell}(\Delta t, \Delta t) = e^{s_{\ell} \Delta t} B_{j}(\Delta t)
  + s_{\ell} e^{2 s_{\ell} \Delta t} \int_{0}^{\Delta t} e^{-s_{\ell} z} B_{j}(z)\, dz\,.
\end{equation}
By using an ETD1 scheme to estimate
$\int_{0}^{\Delta t} e^{-s_{\ell} z} B_j(z)\, dz\approx B_j(\Delta t)
\int_{0}^{\Delta t} e^{-s_{\ell} z}\, dz$, we readily obtain after
calculating the integral $\int_{0}^{\Delta t} e^{-s_{\ell} z}\, dz$
explicitly that}
\begin{equation}\label{djh:first}
{  U_{Dj\ell}(\Delta t,\Delta t) \approx
    B_{j}(\Delta t) e^{2s_{\ell} \Delta t} \,.    }
\end{equation}
\end{subequations}

{In this way, by combining (\ref{djh:main})--(\ref{djh:updateU_1}), we
obtain for $t\geq 2 \Delta t$ that
\begin{equation}\label{djh:march_sup}
  D_{Hj}(t+\Delta t) \approx \sum_{\ell=-n}^{n} e_{\ell} \, e^{s_{\ell} \Delta t}
  H_{Djl}(t) + \left( \sum_{\ell=-n}^{n} e_{\ell} \, b_{3\ell}\right)
   B_{j}^{\prime}(t) +  \left( \sum_{\ell=-n}^{n} e_{\ell} \, b_{4\ell}\right)
   B_{j}^{\prime}(t-\Delta t)\,.
 \end{equation}
Moreover, we have
\begin{equation}\label{djh:march_sup_first}
  D_{Hj}(2\Delta t) \approx \left(\sum_{\ell=-n}^{n} e_{\ell} \, e^{2s_{\ell} \Delta t}
 \right) B_j(\Delta t) \,, \qquad D_{Hj}(\Delta t)=0 \,.
 \end{equation}}

{Finally, by using (\ref{dlj:approx}), (\ref{djh:march_sup}) and
(\ref{djh:march_sup_first}) we readily obtain the
marching scheme
\begin{subequations}\label{dhj:march}
\begin{equation}\label{dhj:march_1}
  \begin{split}
    D_j(t+\Delta t) &\approx B_{j}^{\prime}(t+\Delta t) b_1 +
    B_{j}^{\prime}(t) \left(b_2 +  \sum_{\ell=-n}^{n} e_{\ell} b_{3\ell}\right) +
   \left( \sum_{\ell=-n}^{n} e_{\ell} \, b_{4\ell}\right)
   B_{j}^{\prime}(t-\Delta t) \\
   & \qquad + \sum_{\ell=-n}^{n} e_{\ell} \, e^{s_{\ell} \Delta t}H_{Djl}(t) \,,
   \qquad t\geq 2\Delta t \,,\\
   D_j(2\Delta t) &\approx  B_j^{\prime}(2\Delta t) b_1 +
   B_{j}^{\prime}(\Delta t)b_2 + \left(\sum_{\ell=-n}^{n} e_{\ell} e^{2s_{\ell}\Delta t}
   \right) B_j(\Delta t) \,,\\
   D_{j}(\Delta t) &= E_1(\sigma \Delta t) B_{j}(\Delta t) +
  \int_{0}^{\Delta t} \left( B_{j}(\Delta t) - B_{j}(\Delta t-z)\right)
  \frac{ e^{-\sigma z}}{z} \, dz \,.
\end{split}
\end{equation}

From (\ref{dhj:history_all}) we have the following update rule for the
next time-step:
\begin{equation}\label{dhj:hd_update}
  H_{Dj\ell}(t+\Delta t) = H_{Dj\ell}(t)e^{s_{\ell} \Delta t} +
  \begin{cases}
    B_{j}^{\prime}(t) b_{3\ell} + B_{j}^{\prime}(t-\Delta t) b_{4\ell} \,, & \quad
    t\geq 2\Delta t \,,\\
    B_{j}(\Delta t) e^{2s_{\ell} \Delta t} \,, & \quad t=\Delta t \,.
  \end{cases}
\end{equation}
\end{subequations}
In (\ref{dhj:march}), $b_1$, $b_2$, $b_{3\ell}$ and $b_{4\ell}$ are
defined in (\ref{dlj:b1b2}) and (\ref{djh:b3b4}).}

\subsection{Time-marching scheme for $C_{jk}(t)$}\label{sec:cjk_march}
A similar approach can be used to derive a time-marching scheme for
$C_{jk}(t)$ defined in (\ref{4:limit_2}). As in (\ref{dj:decomp}), we
decompose $C_{jk}(t)$ as
\begin{subequations}\label{cj:decomp}
\begin{equation}\label{cj:decomp_1}
  C_{jk}(t)=C_{Hjk}(t) + C_{Ljk}(t) \,,
\end{equation}
where for some $\Delta t$, with $0<\Delta t\ll 1$, we define the local and
history-dependent terms as
\begin{align}
  C_{Hjk}(t) &\equiv \int_{0}^{t-\Delta t} B_{k}(\tau) G(a_{jk},t-\tau) \,
                d\tau \,, \label{cj:decomp_H} \\
  C_{Ljk}(t) &\equiv \int_{t-\Delta t}^{t} B_{k}(\tau) G(a_{jk},t-\tau) \,
               d\tau = \int_{0}^{\Delta t} B_{k}(t-z) G(a_{jk},z) \,
               dz \,. \label{cj:decomp_L}
\end{align}
\end{subequations}

The marching scheme for the history term is derived in the same way as in
\S \ref{sec:dj_march}. We have 
\begin{equation}\label{cjh:main}
  C_{Hjk}(t) \approx \sum_{\ell=-n}^{n} \omega_{jk\ell} H_{Ck\ell}(t) 
  \quad \mbox{where} \quad H_{Ck\ell}(t) \equiv \int_{0}^{t-\Delta t}
  B_{k}(\tau) e^{s_{\ell}(t-\tau)} \, d\tau \,.
\end{equation}
For each \emph{history-mode} $H_{Ck\ell}(t)$ we use Lemma \ref{prop:duh} to
obtain the update scheme
\begin{equation}\label{cjh:history}
  \begin{split}
  H_{Ck\ell}(t+\Delta t) &= H_{Ck\ell}(t) e^{s_{\ell}\Delta t} +
  U_{Ck\ell}(t,\Delta t) \,, \qquad \mbox{with} \quad H_{Ck\ell}(\Delta t)=0\,, \\
  \noindent\mbox{where} \qquad
  U_{Ck\ell}(t,\Delta t)  &\equiv e^{2s_{\ell}\Delta t} \int_{0}^{\Delta t}
  e^{-s_{\ell} z} B_{k}(t-\Delta t + z) \, dz \,.
 \end{split}
\end{equation}
To approximate (\ref{cjh:history}) for $t\ge 2\Delta t$, we use the
ETD2 scheme to obtain
\begin{equation}\label{cjh:updateU_1}
{  U_{Ck\ell}(t,\Delta t) \approx B_{k}(t) b_{3\ell} +
  B_{k}(t-\Delta t) b_{4\ell} \,, \quad \mbox{for} \quad t\ge 2\Delta t\,,}
\end{equation}
where $b_{3\ell}$ and $b_{4\ell}$ are defined in (\ref{djh:b3b4}).
{For $t=\Delta t$, we use an ETD1 scheme to get
\begin{equation}\label{cjh:updateU_0}
 U_{Ck\ell}(\Delta t,\Delta t) \approx 
  B_{k}(\Delta t) b_{40\ell} \,, \qquad \mbox{where} \qquad
 b_{40\ell}\equiv  e^{s_{\ell} \Delta t}
  \left( \frac{e^{s_{\ell} \Delta t}-1}{s_{\ell}} \right)\,.
\end{equation}}

The approximation of the local contribution $C_{Ljk}(t)$ is simpler
than for $D_{Ljk}$ owing to the exponential decay of $G(a_{jk},z)$ for
$z>0$. We use an ETD1 scheme and estimate
\begin{equation}\label{cjh:local_approx}
  C_{Ljk}(t) \approx B_{k}(t)\int_{0}^{\Delta t} G(a_{jk},z)\, dz
  \approx B_{k}(t) E_{1}\left(\frac{a_{jk}^2}{\Delta t}\right) \,.
\end{equation}
By using $E_1(z)\sim {e^{-z}/z}$ as $z\to \infty$, we conclude that
$C_{Ljk}(t)$ is exponentially small when ${a_{jk}^2/\Delta t}\gg 1$.
However, if the diffusivity $D$ is large so that ${a_{jk}^2/\Delta t}$
is not so large, then we may need to use the higher order approximation
for $C_{Ljk}(t)$ given by
\begin{equation}\label{cjh:local_approx_better}
  C_{Ljk}(t) \approx B_{k}(t) E_{1}\left(\frac{a_{jk}^2}{\Delta t}\right)
   - \left(\Delta t e^{-a_{jk}^2/\Delta t} - a_{jk}^2
    E_{1}\left(\frac{a_{jk}^2}{\Delta t}\right) \right) \left(\sigma B_{k}(t) +
    \frac{B_{k}(t)- B_{k}(t-\Delta t)}{\Delta t} \right)\,.
\end{equation}

In this way, by combining (\ref{cjh:main})--(\ref{cjh:updateU_0}) for
the history modes and (\ref{cjh:local_approx}) for the local term, and
using only the leading-order result for $C_{Ljk}(t)$, we obtain the
following marching scheme for $C_{jk}(t)$ for $t\geq \Delta t$:
\begin{subequations}\label{chj:march}
\begin{equation}\label{chj:march_1}
  \begin{split}
{    C_{jk}(t+\Delta t)} &{\approx B_{k}(t+\Delta t)
    E_1\left(\frac{a_{jk}^2}{\Delta t}\right) +
       \left(\sum_{\ell=-n}^{n} \omega_{jk\ell} b_{3\ell}\right) B_{k}(t)} \\
      & \qquad +
      {\left( \sum_{\ell=-n}^{n} \omega_{jk\ell} \, b_{4\ell}\right)
        B_{k}(t-\Delta t)
   + \sum_{\ell=-n}^{n} \omega_{jk\ell} \, e^{s_{\ell} \Delta t}H_{Ck\ell}(t) \,,
   \,\,\, \mbox{for} \,\,\, t\ge 2\Delta t\,,}\\
{   C_{jk}(2\Delta t)} &\approx {B_{k}(2\Delta t)
   E_1\left(\frac{a_{jk}^2}{\Delta t}\right) +
   \left(\sum_{\ell=-n}^{n} \omega_{jk\ell} b_{40\ell} \right)B_{k}(\Delta t) \,,}
   \\
{ C_{jk}(\Delta t)} &\approx {B_{k}(\Delta t)
   E_1\left(\frac{a_{jk}^2}{\Delta t}\right)\,.}
 \end{split}
\end{equation}
In addition, from (\ref{cjh:updateU_1}) and (\ref{cjh:updateU_0}),
  we have the update rule
\begin{equation}\label{chj:march_update}
{  H_{Ck\ell}(t+\Delta t) = \begin{cases}
    H_{Ck\ell}(t) e^{s_{\ell}\Delta t} + B_{k}(t) b_{3\ell} + B_{k}(t-\Delta t)
    b_{4\ell}\,, & \quad t\ge 2\Delta t\,, \\
      B_{k}(\Delta t)b_{40\ell} \,, & \quad t= \Delta t\,.
  \end{cases}}
\end{equation}
\end{subequations}
{Here $b_{3\ell}$, $b_{4\ell}$ and $b_{40\ell}$  are defined in
(\ref{dlj:b1b2}), (\ref{djh:b3b4}) and (\ref{cjh:updateU_0}),
respectively.}

{
\subsection{Marching scheme for integro-differential system}
\label{sec:all_march}

By substituting (\ref{chj:march}) and (\ref{dhj:march}) into
(\ref{duh:r2}), and recalling (\ref{2:reduced_1}) for the ODE
intracellular dynamics, we now develop a time-marching algorithm for
approximating solutions to (\ref{2:reduced}).  Let
$\lbrace{t_0, ..., t_q\rbrace}$ be a discretization of the time domain
$[0, T]$ where $T$ is some final time of choice. Then, $t_0=0$,
$t_q=T$, $t_{i+1} - t_i = \Delta t$ for $i\in\{0, ..., q-1\}$, where
$\Delta t = {T/q}$ and $t_i = i \Delta t$. In the formulation of our
algorithm we will use
$B_j^{\prime}(t_i)\approx {\left(
    B_{j}(t_i)-B_{j}(t_{i-1})\right)/\Delta t}$, for $i\geq 2$ in
(\ref{dhj:march}).
Below, we denote the numerical approximation to $u_{1j}(t_i)$
by $u_{1j}^{(i)}$ and the vector of all $u_{1j}^{(i)}$ for
$j\in\lbrace{1,\ldots,N\rbrace}$ by $\v{u}_1^{(i)}$. We use a similar
notation for the reaction kinetic functions.

{\bf Step 1; $0 \curvearrowright \Delta t$:} For this first time-step
we discretize the reaction kinetics in each cell by the explicit
Runge-Kutta RK4 method \cite{numerics} and use
\begin{equation}\label{alg:step0}
  \begin{split}
  \v{u}_{1}^{(1)} &= \v{u}_{1}^{(0)} + \frac{\Delta t}{6} (k_1^{(0)} +
                  2 k_2^{(0)} + 2 k_3^{(0)} + k_4^{(0)}) + \v{B}^{(0)} \Delta t\,, \\
  \v{u}_{2}^{(1)} &= \v{u}_{2}^{(0)} + \frac{\Delta t}{6} (\tilde{k}_1^{(0)}
  + 2 \tilde{k}_2^{(0)} + 2 \tilde{k}_3^{(0)} + \tilde{k}_4^{(0)}) \,,
\end{split}
\end{equation}
where the RK4 weights are
\begin{eqnarray*}
	k_1^{(0)} &=& \v{F}_1(\v{u}_{1}^{(0)}, \v{u}_{2}^{(0)})\,, \quad
	\tilde{k}_1^{(0)}= \v{F}_2(\v{u}_{1}^{(0)}, \v{u}_{2}^{(0)})\,, \\
  k_2^{(0)} &=&  \v{F}_1(\v{u}_{1}^{(0)} + \frac{\Delta t}{2} k_1^{(0)}\,,
                \v{u}_{2}^{(0)} + \frac{\Delta t}{2} \tilde{k}_1^{(0)})\,, \quad
  \tilde{k}_2^{(0)} =\v{F}_2(\v{u}_{1}^{(0)} + \frac{\Delta t}{2} k_1^{(0)},
                     \v{u}_{2}^{(0)} + \frac{\Delta t}{2} \tilde{k}_1^{(0)})\,, \\
  k_3^{(0)} &=&  \v{F}_1(u_{1}^{(0)} + \frac{\Delta t}{2} k_2^{(0)},
                \v{u}_{2}^{(0)} + \frac{\Delta t}{2} \tilde{k}_2^{(0)})\,, \quad
  \tilde{k}_3^{(0)} = \v{F}_2(\v{u}_{1}^{(0)} + \frac{\Delta t}{2} k_2^{(0)}\,,
                   \v{u}_{2}^{(0)} + \frac{\Delta t}{2} \tilde{k}_2^{(0)})\,, \\
  k_4^{(0)} &=&  \v{F}_1(\v{u}_{1}^{(0)} + \Delta t k_3^{(0)}, \v{u}_{2}^{(0)} +
                \Delta t \tilde{k}_3^{(0)})\,, \quad
  \tilde{k}_4^{(0)}  =\v{F}_2(\v{u}_{1}^{(0)} + \Delta t k_3^{(0)},
                        \v{u}_{2}^{(0)} + \Delta t \tilde{k}_3^{(0)})\,.
\end{eqnarray*}
In (\ref{alg:step0}) we imposed the explicit short-time behavior for
$\v{B}^{(0)}$ given in (\ref{mat:bj}), which is valid since
${\mathcal O}(\eps^2)\ll \Delta t \ll {\mathcal O}(1)$. The truncation
error for this approximation is $\mathcal{O}(\Delta t)$, whereas RK4
gives  a truncation error of $\mathcal{O}(\Delta t^5)$ for
the reaction kinetics.

In terms of the computed $\v{u}_{1}^{(1)}\approx \v{u}_1(t_1)$, in
Appendix \ref{app:bdeltat} we derive an improved approximation for
$\v{B}^{(1)}=(B_{1}^{(1)},\ldots,B_{N}^{(1)})^{T}$, in which the components
are
\begin{equation}\label{alg:b1}
  B_{j}^{(1)} = -\frac{u_{1j}^{(1)} \gamma_j}
  {\log\left({\Delta t/(\kappa_je^{-\gamma_e})}\right)} \left(
    1 - \frac{\pi^2}{6\,\left[\log\left({\Delta t/(\kappa_je^{-\gamma_e})}
          \right)\right]^2}  \right)\,.
  \end{equation}
  
{\bf Step 2; $\Delta t \curvearrowright 2\Delta t$:} We use an RK4 scheme
for the reaction kinetics with RK4 weights as given in the first time step,
and use a lagged $\v{B}^{(1)}$ so that
\begin{equation}\label{alg:step2}
  \begin{split}
  \v{u}_{1}^{(2)} &= \v{u}_{1}^{(1)} + \frac{\Delta t}{6}
 (k_1^{(1)} + 2 k_2^{(1)} + 2 k_3^{(1)} + k_4^{(1)}) + \v{B}^{(1)} \Delta t, \\
  \v{u}_{2}^{(2)} &= \v{u}_{2}^{(1)} + \frac{\Delta t}{6} (\tilde{k}_1^{(1)}
  + 2 \tilde{k}_2^{(1)} + 2 \tilde{k}_3^{(1)} + \tilde{k}_4^{(1)})\,.
  \end{split}
\end{equation}
In terms of the computed $\v{u}_{1}^{(2)}$, we determine $\v{B}^{2}$
from the linear system
\begin{equation}\label{alg:b2}
  \v{B}^{(2)} = A^{-1} \left( M_1 B^{(1)} + \Delta t \Gamma u_{1}^{(2)} -
  b_2 \Delta t \v{B}^{\prime}(\Delta t)\right)\,,
\end{equation}
where the components of $\v{B}^{\prime}(\Delta t)$ are given in
(\ref{mat:bjp}). In (\ref{alg:b2}),
$\Gamma\equiv \mbox{diag}(\gamma_1,\ldots,\gamma_N)$, while the
matrices $A$ and $M_1$ have the following entries for
$k\neq j \in \{1, ..., N\}$:
\begin{equation}\label{alg:a_m1}
	\begin{array}{rclrcl}
          A_{jj} &=& b_1 - \eta_j \Delta t\,,
 \qquad &A_{jk} &=& - \Delta t E_1\left({a_{jk}^2/\Delta t}\right)\,, \\
          M_{1,jj} &=& b_1-\Delta t \left(\sum_{\ell=-n}^{n}
                       e_{\ell} e^{2s_{\ell} \Delta t}
                       \right)\,, \qquad
          &M_{1,jk} &=& \Delta t \sum_{\ell=-n}^n \omega_{jk\ell} b_{40\ell}\,.
	\end{array}
\end{equation}

{\bf Recursive step; $t_i \curvearrowright t_{i+1}$, for $i \geq 2$:} We
use the RK4 method with a lagged $\v{B}^{(i)}$, so that
\begin{equation}\label{alg:step_i}
  \begin{split}
  \v{u}_{1}^{(i+1)} &= \v{u}_{1}^{(i)} + \frac{\Delta t}{6} (k_1^{(i)} +
        2 k_2^{(i)} + 2 k_3^{(i)} + k_4^{(i)}) + \v{B}^{(i)} \Delta t\,, \\
  \v{u}_{2}^{(i+1)} &= \v{u}_{2}^{(i)} + \frac{\Delta t}{6}
  (\tilde{k}_1^{(i)} + 2 \tilde{k}_2^{(i)} + 2 \tilde{k}_3^{(i)} +
  \tilde{k}_4^{(i)})  \,.
\end{split}
\end{equation}
In terms of the computed $\v{u}_{1}^{(i+1)}$ we calculate $\v{B}^{(i+1)}$ as
\begin{equation}\label{step:bi}
  \v{B}^{(i+1)} = A^{-1} \left( M \v{B}^{(i)} + \mathcal{N} \v{B}^{(i-1)} +
  \left( \sum_{\ell=-n}^n e_\ell b_{4\ell} \right) \, \v{B}^{(i-2)} +
  \Delta t \Gamma \v{u}_{1}^{(i+1)} -
  \Delta t \v{H}^{(i)}\right)\,,
\end{equation}
where the matrices $M$ and $\mathcal{N}$ have the entries
\begin{equation}\label{alg:a_m_n}
	\begin{array}{rclrcl}
          M_{jj} &=& b_1 - b_2 - \sum_{\ell=-n}^n e_\ell b_{3\ell}\,, \qquad
          &M_{jk} &=& \Delta t \sum_{\ell=-n}^n \omega_{jk\ell} b_{3\ell}\,, \\
  \mathcal{N}_{jj} &=& b_2 + \sum_{\ell=-n}^n e_\ell (b_{3\ell} - b_{4\ell})\,, \qquad
     &\mathcal{N}_{jk} &=& \Delta t \sum_{\ell=-n}^n \omega_{jk\ell} b_{4\ell}, \\
	\end{array}
\end{equation}
for $k\neq j \in \{1, ..., N\}$. In (\ref{step:bi}) the history vector
$\v{H}^{(i)}\equiv (H_{1}^{(i)},\ldots,H_{N}^{(i)})^T$ has entries
\begin{subequations}\label{alg:hji_update}
\begin{equation}
  H_{j}^{(i)} = \sum_{\ell=-n}^n \left(e_\ell e^{s_\ell\Delta t} H_{Dj\ell}^{(i)}
    - \sum_{k=1, k\neq j}^N \omega_{jk\ell} e^{s_\ell \Delta t} H_{Ck\ell}^{(i)}\right)\,,
\end{equation}
and is updated with the scheme
\begin{equation}
  \begin{split}
  \v{H}_{D\ell}^{(i)} &= \v{H}_{D\ell}^{(i-1)} e^{s_{\ell} \Delta t} +
  b_{3\ell} \frac{\left(\v{B}^{(i-1)}-\v{B}^{(i-2)}\right)}{\Delta t} +
  \begin{cases}
   b_{4\ell} \frac{\left(\v{B}^{(i-2)}-\v{B}^{(i-3)}\right)}{\Delta t} \,,
    & \,\, \mbox{if} \,\, i\geq 4 \,,\\
    \v{B}^{\prime}(\Delta t) b_{4\ell}     & \,\, \mbox{if} \,\, i=3 \,,
  \end{cases} \\
  \v{H}_{D\ell}^{(2)} &= e^{2s_{\ell} \Delta t} \v{B}(\Delta t) \,,
 \end{split}
\end{equation}
together with
\begin{equation}
  \begin{split}
  \v{H}_{C\ell}^{(i)} &= \v{H}_{C\ell}^{(i-1)} e^{s_{\ell} \Delta t} +
  b_{3\ell} \v{B}^{(i-1)} + b_{4\ell} \v{B}^{(i-2)} \,, \quad
  \mbox{if} \quad i\geq 3\,,\\
  \v{H}_{C\ell}^{(2)} &= b_{40\ell} \v{B}(\Delta t) \,.
\end{split}
\end{equation}

\end{subequations}

Overall, our formulation is an an operator-splitting scheme of a
semi-implicit kind in the sense that the right-hand sides of the
reaction kinetic vector fields are treated explicitly using the RK4
method with a lagged $\v{B}^{(i-1)}$, while in (\ref{step:bi})
$\v{u}_1^{(i)}$ appears implicitly in the update to $\v{B}^{(i)}$.}

\section{Numerical results: Sel'kov intracellular
  dynamics}\label{sec:selkov}

To illustrate our analysis of \eqref{DimLess_bulk}, we will consider
the two-component Sel'kov kinetics.  For this choice, the
intracellular kinetics
$\v{F}_j(u_{1j},u_{2j})\equiv
\left(F_{1j}(u_{1j},u_{2j}),F_{j2}(u_{1j},u_{2j})\right)^{T}$ are
given by
\begin{equation}\label{isolated:selkov:ode}
  F_{1j} = \alpha_j u_{2j}+ u_{2j} u_{1j}^2- u_{1j}\,,
  \qquad F_{2j} =\zeta_j
  \left[\mu_j-\left(\alpha_j u_{2j}+u_{2j} u_{1j}^2\right)\right] \,.
\end{equation}
We refer to $\alpha_j>0$, $\zeta_j>0$ and $\mu_j>0$ as the
\emph{reaction-kinetic parameters} for the $j^{\mbox{th}}$ cell.

\subsection{An Isolated Cell}\label{sec:selkov_isolated}

For an isolated cell uncoupled from the bulk, the unique steady-state for
${d\v{u}_j/dt}=\v{F}_j$ is $u_{1j}=\mu_j$ and
$u_{2j} = {\mu_j/\left(\alpha_j + \mu_j^2\right)}$. At this steady-state
we calculate 
\begin{equation}\label{isolated:selkov:jac}
  \det(J_{j})= \zeta_j \left(\alpha_j + \mu_j^2\right)
  \,, \quad \mbox{tr}(J_{j}) = \frac{1}{\alpha_j+\mu_j^2}
  \left(2\mu_j \mu_j - (\alpha_j+\mu_j^2) - \zeta_j
    (\alpha_j + \mu_j^2)^2 \right) \,,
\end{equation}
for the determinant and trace of the Jacobian $J_j$ of the kinetics.
Since $\det(J_{j})>0$, the unique steady-state for an isolated cell is
linearly stable if and only if $\mbox{tr}(J_{j})<0$.  The Hopf
bifurcation boundary in the $\alpha_j$ versus $\mu_j$ parameter plane
occurs when $\mbox{tr}(J_{j})=0$. From the Poincar\'{e}-Bendixson
theorem, we conclude that whenever the steady-state is unstable the
isolated cell will have limit cycle oscillations. For $\zeta_j=0.15$,
in Fig.~\ref{fig:isolated:plane} we show the region in the $\alpha_j$
versus $\mu_j$ plane where limit cycle oscillations occur. In
Fig.~\ref{fig:isolated:limit} we plot the limit cycle in the
$u_{2j}$ versus $u_{1j}$ plane for $\alpha_j=0.4$ and $\mu_j=2.0$.

\begin{figure}[htbp]
  \centering
    \begin{subfigure}[b]{0.48\textwidth}  
      \includegraphics[width=\textwidth,height=4.5cm]{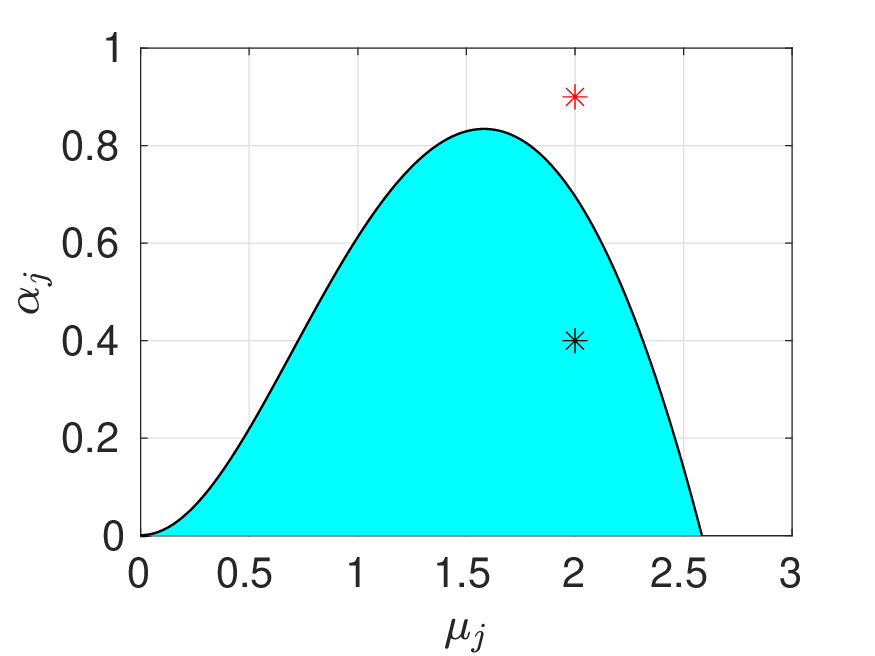}
        \caption{Instability region: isolated cell}
        \label{fig:isolated:plane}
    \end{subfigure}
    \begin{subfigure}[b]{0.48\textwidth}
      \includegraphics[width=\textwidth,height=4.4cm]{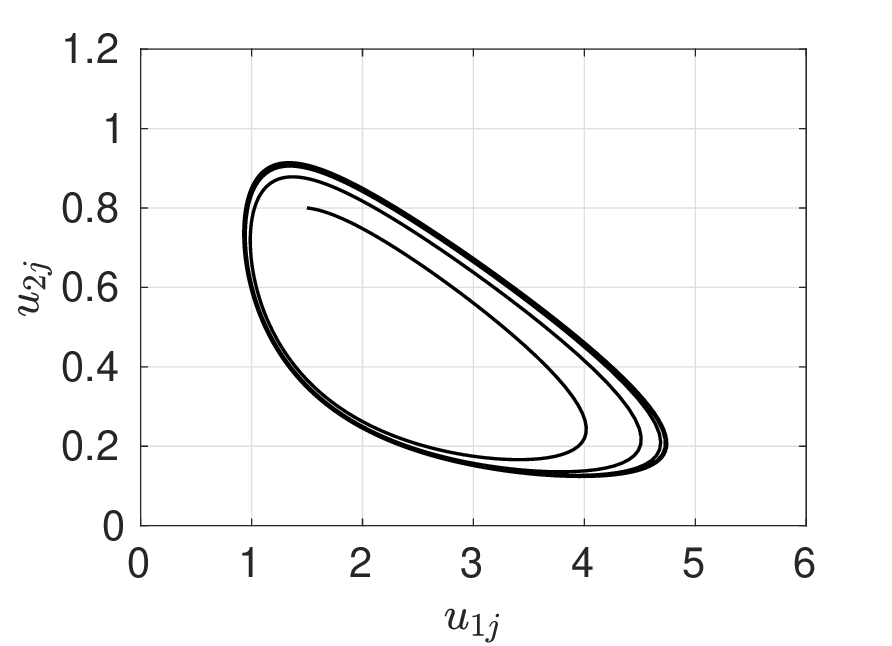}
        \caption{Limit cycle in phase plane} 
        \label{fig:isolated:limit}
    \end{subfigure}
    \caption{Left: Blue-shaded region of instability in the $\alpha_j$
      versus $\mu_j$ plane for the steady-state of an isolated cell
      when $\zeta_j=0.15$. In this region, since $\mbox{tr}(J_{j})>0$
     the unique steady-state is unstable, and a time-periodic
      (limit cycle) solution occurs for an isolated cell. In the unshaded
      region the steady-state is linearly stable for an isolated
      cell. Right: For $\alpha_j=0.4$ and $\mu_j=2.0$ (black star in
      left panel), there is a limit cycle in the $u_{2j}$ versus
      $u_{1j}$ plane.}
\label{fig:selkov}
\end{figure}

{Next, we determine how the instability region for an isolated cell
  changes when we include the effect of efflux across the cell
  membrane, but neglect any influx from the bulk medium (recall the
  schematic in Fig.~\ref{fig:schematic_zoom}). Setting $U=0$ in
  (\ref{DimLess_Intra}), for an isolated cell with boundary efflux the
  unique steady-state for
  ${d\v{u}_j/dt}=\v{F}_j-2\pi d_{2j} u_{1j}\v{e}_1$ is now
  $u_{1j}={\mu_j/(1+2\pi d_{2j})}$ and
  $u_{2j} = {\mu_j/\left(\alpha_j + u_{1j}^2\right)}$. The Hopf
  bifurcation boundary in the $\alpha_j$ versus $d_{2j}$ plane occurs
  when $\mbox{tr}(J_{j})=0$, which yields
\begin{equation}\label{isolated:efflux}
  \alpha_j = -\frac{\mu_j^{2}}{(1+2\pi d_{2j})^2} +
  \frac{1}{2\zeta_j} \left[-(1+2\pi d_{2j}) + \sqrt{
      (1+2\pi d_{2j})^2 + \frac{8\mu_j^2\zeta_j}{1+2\pi d_{2j}}}\right]\,.
\end{equation}
For $\zeta_j=0.15$ and $\mu_j=2.0$, in Fig.~\ref{fig:selkov_efflux} we
show that when $d_{2j}$ increases past a threshold, the range of
$\alpha_j$ where intracellular oscillations occur decreases significantly when
there is efflux from an isolated cell.}

\begin{figure}[htbp]
  \centering
      \includegraphics[width=0.5\textwidth,height=4.5cm]{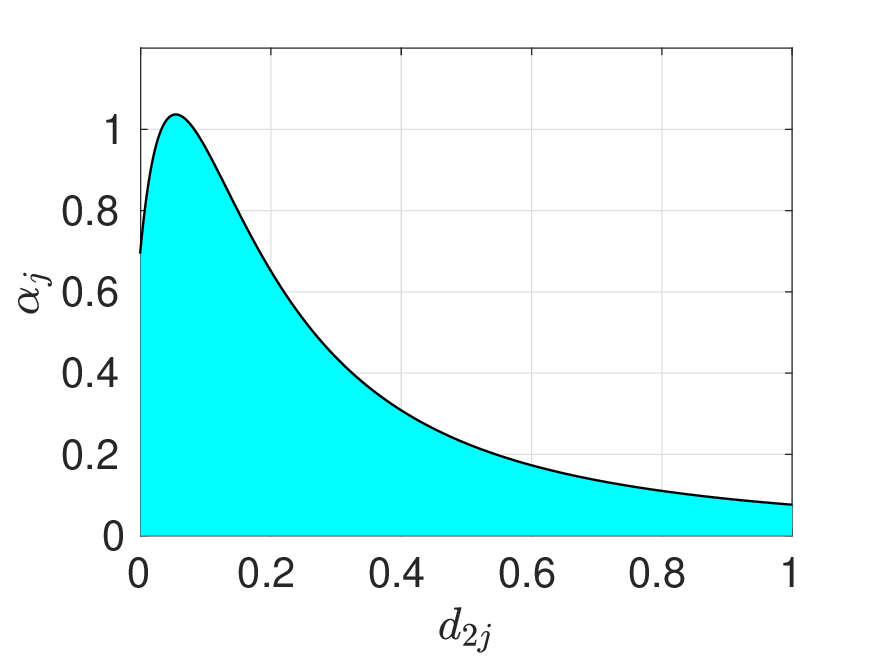}
      \caption{{Blue-shaded region of instability in the $\alpha_j$
          versus $d_{2j}$ plane for the steady-state of an isolated
          cell with boundary efflux when $\zeta_j=0.15$ and
          $\mu_j=2.0$. As $d_{2j}$ increases past a threshold, the
          range in $\alpha_j$ where limit cycle oscillations will
          occur for the isolated cell decreases.}}
\label{fig:selkov_efflux}
\end{figure}

\subsection{Steady-states and their stability with cell-bulk coupling}\label{sec:selkov_couple}

With cell-bulk coupling, the steady-state solutions of
(\ref{DimLess_bulk}) are obtained from the solution to the NAS
(\ref{4:ssintra}) and (\ref{4:ss_nas}) where $\v{F}_j$ is given in
(\ref{isolated:selkov:ode}). We readily obtain that
\begin{subequations}\label{selkov:equil}
  \begin{equation}\label{selkov:equil_1}
    u_{1js}= \mu_j + B_{js} \,, \qquad u_{2js}= \frac{\mu_j}{\alpha_j +
      u_{1js}^2} \,,
  \end{equation}
where, with $\eta_j$ and $\gamma_j$ as defined in (\ref{2:reduced_3}),
the steady-state source strengths $B_{js}$ satisfy 
\begin{equation}\label{selkov:equil_2}
  \left(\gamma_j + \eta_j\right) B_{js} + 2
  \sum_{\stackrel{k=1}{k\neq j}}^{N} B_{ks}
  K_0\left( \sqrt{\frac{\sigma}{D}} |\v{x}_j-\v{x}_k| \right) =
  -\gamma_j \mu_j \,, \quad j\in\lbrace{1,\ldots,N\rbrace} \,.
\end{equation}
\end{subequations}
We remark that although (\ref{4:ss_nas}) and (\ref{4:ssintra})
generally yields a nonlinear algebraic system characterizing
steady-state solutions, with Sel'kov kinetics one must only solve the
linear algebraic system (\ref{selkov:equil_2}).

To analyze the linear stability of this steady-state solution, we must
solve the nonlinear eigenvalue problem (\ref{TransDent}) of
Proposition \ref{prop:stab}. In calculating ${\mathcal M}(\lambda)$
from (\ref{5:gcep_2}), it is only the diagonal matrix
${\mathcal K}(\lambda)$ defined in (\ref{5:kjac}) that depends on the
choice of the intracellular kinetics. For the Sel'kov kinetics
(\ref{isolated:selkov:ode}), 
\begin{equation}\label{selkov:kmat}
 \mathcal{K}(\lambda) \equiv \mbox{diag}\left( \mathit{K}_1,\ldots,
  \mathit{K}_N\right)\,, \quad \mbox{where} \quad {\mathit K}_j \equiv
\frac{\lambda + \det{(J_{j})}}{\lambda^2 - \mbox{tr}(J_j)\lambda +
  \det{(J_j)}} \,, 
\end{equation}
where $\det{(J_j)}$ and $\mbox{tr}(J_j)$, evaluated at the steady-state of
the cell-bulk model, are now given by
\begin{equation}\label{coupled:selkov:jac}
  \det(J_{j}) = \zeta_j \left(\alpha_j + u_{1js}^2\right)>0
  \,,\quad
  \mbox{tr}(J_{j}) = \frac{1}{\alpha_j+u_{1js}^2}
  \left(2\mu_j u_{1js} - \left(\alpha_j+u_{1js}^2\right) - \zeta_j
   \left(\alpha_j + u_{1js}^2\right)^2 \right) \,.
\end{equation}
Here $u_{1js}=\mu_j+B_{js}$, where $B_{js}$ for
$j\in\lbrace{1,\ldots,N\rbrace}$ are obtained from
(\ref{selkov:equil_2}).  Since the linear system
(\ref{selkov:equil_2}) is always solvable, there are no transcritical
or fold bifurcation points along the steady-state solution branch as
parameters are varied for Sel'kov kinetics. As such, since
$\lambda=0$ is never a root of $\det{\mathcal M}(\lambda)=0$, the
steady-state solution is never destabilized by a zero-eigenvalue
crossing (see Proposition 1 of \cite{smjw_diff}).

In this way, for reaction-kinetic parameters for which each isolated
cell has a stable steady-state, any instability that arises from the
cell-bulk coupling must occur through a Hopf bifurcation. As a result,
we will identify stability boundaries in the ${1/\sigma}$ versus $D$
parameter plane for steady-state solutions with Sel'kov kinetics by
seeking Hopf bifurcation (HB) thresholds for which
$\lambda=i\lambda_{I}\in \Lambda({\mathcal M})$ with
$\lambda_I\in \R$. For arbitrary locations $\x_j$, with
$j\in\lbrace{1,\ldots,N\rbrace}$, of a collection of non-identical
cells, with possibly cell-dependent permeability and reaction-kinetic
parameters, we must compute all paths in the ${1/\sigma}$ versus $D$
parameter space where $\det{\mathcal M}(i\lambda_I)=0$ for some
$\lambda_I>0$. HB boundaries in parameter space can readily be 
computed numerically for a ring arrangement of identical cells, with
and without a defective center cell, as the matrix spectrum of
${\mathcal M}(\lambda)$ is known analytically (cf.~\cite{smjw_diff}).
When explicit analytical formulae are known for the eigenvalues of
${\mathcal M}(\lambda)$, simple scalar root-finding algorithms can be
used to determine the HB thresholds in parameter space using a
pseudo-arclength continuation scheme in $D$ (see \cite{smjw_diff} for
details). However, for a general spatial arrangement of non-identical
cells, the numerical solution of the nonlinear matrix eigenvalue
problem $\det{\mathcal M}(i\lambda_I)=0$ in the ${1/\sigma}$
versus $D$ parameter space is highly challenging. Nonlinear matrix
eigenvalue problems and effective solution strategies for various
classes of matrices are discussed in \cite{betcke}, \cite{betcke2} and
\cite{guttel}.

To determine regions of instability in open sets of the ${1/\sigma}$
versus $D$ parameter plane off of the HB boundaries {and to
  count the number, ${\mathcal Z}$, of destabilizing eigenvalues of the
  linearization of the cell-bulk model (\ref{DimLess_bulk}) around the
  steady-state, as defined by the number of
  $\lambda\in \Lambda({\mathcal M})$ with $\mbox{Re}(\lambda)>0$
  (counting multiplicity)}, we use the argument principle of complex
analysis applied to
$\mathcal{F}(\lambda) \equiv \det(\mathcal{M}(\lambda))$, where
${\mathcal M}(\lambda)$ is the complex symmetric GCEP matrix
(\ref{5:gcep_2}) in which ${\mathcal K}(\lambda)$ is given in
(\ref{selkov:kmat}). In the right-half plane, we take the contour
$\Gamma_{\mathcal R}$ as the union of the imaginary axis
$\Gamma_{I} = i \lambda_I$, for $|\lambda_I|\leq \mathcal{R}$, and the
semi-circle $C_{\mathcal{R}}$, defined by $|\lambda| = \mathcal{R}>0$
with $|\mbox{arg}(\lambda)| \leq \pi/2$. Provided that there are no
zeroes or poles on $\Gamma_{\mathcal R}$ for ${\mathcal F}(\lambda)$,
the argument principle yields that the number
${\mathcal Z}_{\mathcal R}$ of zeroes of $\mathcal{F}(\lambda) = 0$
inside $\Gamma_{\mathcal R}$ is
\begin{equation}\label{ArgMent_Principle}
 {\mathcal Z}_{\mathcal R} = \frac{1}{2\pi} \big[ \mbox{arg} \,
  \mathcal{F}(\lambda) \big]_{\Gamma_{\mathcal R}} + {\mathcal P}_{\mathcal R}\,.
\end{equation}
Here ${\mathcal P}_{\mathcal R}$ is the number of poles of
$\mathcal{F}(\lambda)$ inside $\Gamma_{\mathcal R}$, while
$\big[\mbox{arg}\,\mathcal{F}(\lambda)\big]_{\Gamma_{\mathcal R}}$
denotes the change in the argument of $\mathcal{F}(\lambda)$ over the
counter-clockwise oriented contour $\Gamma_{\mathcal R}$. We pass to
the limit ${\mathcal R}\to \infty$ and use (\ref{selkov:kmat}) to
obtain that ${\mathcal K}(\lambda)\to 0$ as ${\mathcal R}\to \infty$
on the semi-circle $C_{\mathcal R}$. Moreover, for
${\mathcal R}\gg 1$, (\ref{GreenMat}) and
(\ref{5:gcep_2}) yields that
${\mathcal M}(\lambda)\approx -({\nu/2}) \log\left({\mathcal R}\right)
I+{\mathcal O}(1)$ on ${\mathcal C}_{\mathcal R}$. As a result, we
conclude that
$\lim_{{\mathcal R}\to \infty} \big[ \mbox{arg} \,
\mathcal{F}(\lambda) \big]_{C_{\mathcal R}}=0$. Finally, to evaluate
the change of argument on the imaginary axis we use
${\mathcal F}(\overline{\lambda})=\overline{{\mathcal F}(\lambda)}$
with $\lambda=i\lambda_I$ to reduce the computation to the positive
imaginary axis. In this way, by letting
${\mathcal R}\to\infty$ we get that the number of zeroes
${\mathcal Z}$ of ${\mathcal F}(\lambda)$ in $\mbox{Re}(\lambda)>0$ is
\begin{equation}\label{wind:form}
  {\mathcal Z} = {\mathcal P} - \frac{1}{\pi}\big[ \mbox{arg} \,
  \mathcal{F}(i\lambda_I) \big]_{\Gamma_{I+}} \,,
\end{equation}
where ${\mathcal P}$ denotes the number of poles (counting
multiplicity) of ${\mathcal F}(\lambda)$ in $\mbox{Re}(\lambda)>0$,
and $\Gamma_{I+}$ denotes the entire positive imaginary axis now
directed upwards starting from $\lambda_I=0$.  To determine
${\mathcal P}$, we observe that since $\mathcal{G}_{\lambda}$ is
analytic in $\mbox{Re}(\lambda) > 0$ any singularity of
$\mathcal{F}(\lambda)$ must arise from the poles of the diagonal
matrix ${\mathcal K}(\lambda)$ with entries given in
(\ref{selkov:kmat}). Since $\det(J_j)>0$, we conclude that
${\mathcal P}=2p$ where $p$ is the total number of integers
$j\in\lbrace{1,\ldots,N \rbrace}$ for which $\mbox{tr}(J_j)>0$.  {For
  a given parameter set, we will use (\ref{wind:form}) to numerically
  calculate ${\mathcal Z}$ in a fine discretization of the
  ${1/\sigma}$ versus $D$ parameter plane. Below, we refer to this
  ``phase diagram'' as a {\em scatter plot}. For the cell
  configurations considered below, where the matrix spectrum of
  ${\mathcal M}(\lambda)$ is known, the determinant is readily
  evaluated at a fine discretization along the imaginary axis. The
  axis-crossing algorithm of \cite{mirandawinding} can then be used to
  compute the winding number.}

\begin{figure}[htbp]
  \centering
      \begin{subfigure}[b]{0.32\textwidth}
      \includegraphics[width=\textwidth]{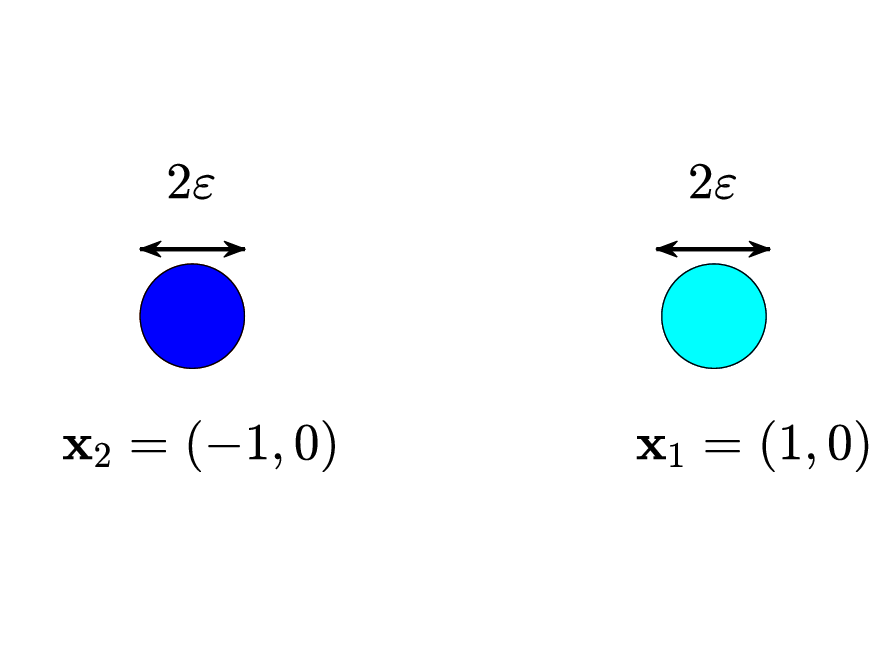}
        \caption{Two-cell configuration}
        \label{fig:schem:twocell}
      \end{subfigure}
    \begin{subfigure}[b]{0.32\textwidth}
      \includegraphics[width=\textwidth]{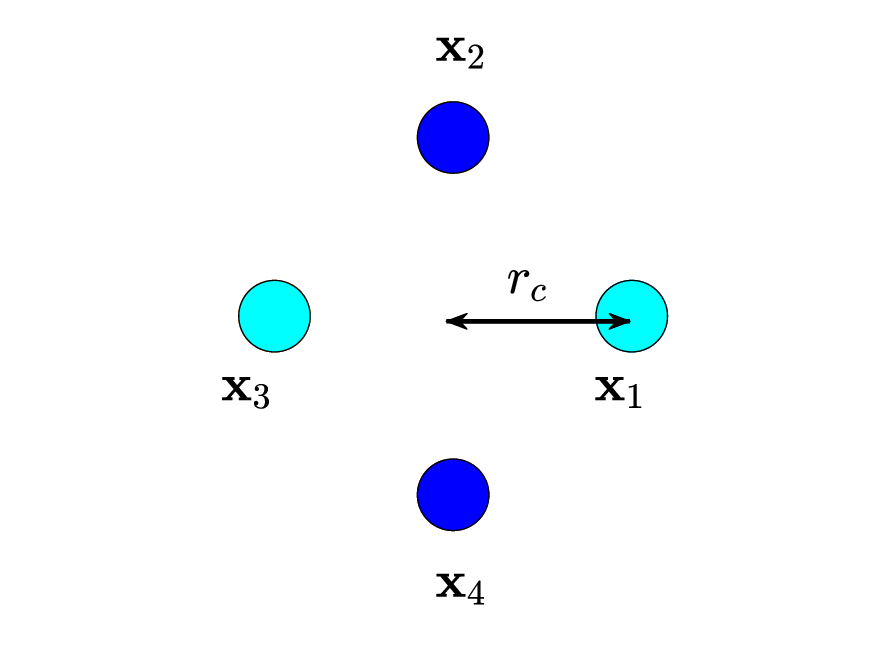}
        \caption{Four-cell configuration}
        \label{fig:fourcell}
      \end{subfigure}
          \begin{subfigure}[b]{0.32\textwidth}  
      \includegraphics[width=\textwidth]{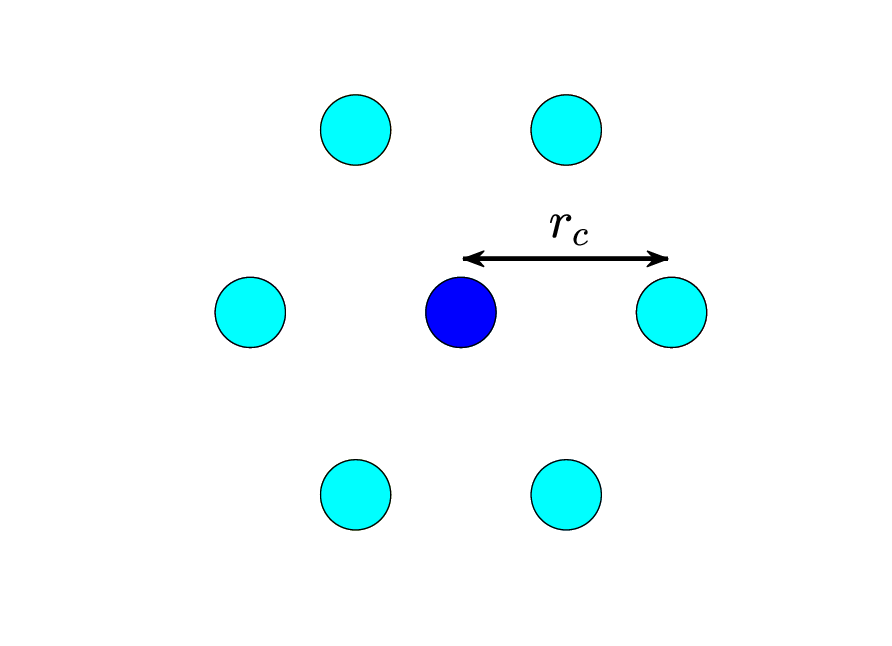}
        \caption{Ring with center cell}
        \label{fig:ringcenter}
    \end{subfigure}
    \caption{Left: A two-cell configuration. Middle: four
      equally-spaced cells on a ring of radius $r_c$.  Right:
      Hexagonal ring arrangement of identical cells where the center
      cell is a pacemaker or signaling cell.  Equally-colored
      pairs of cells are taken to have identical permeabilities and
      reaction-kinetic parameters.}
\label{fig:schema:two-four-eight}
\end{figure}

In the results given below in \S
\ref{cell:twocell}--\ref{cell:hexagonal} we have fixed the cell radius
as $\eps=0.03$ and the Sel'kov reaction-kinetic parameters as
$\mu_j=2$ and $\zeta_j=0.15$ for $j\in\lbrace{1,\ldots,N\rbrace}$. We
will explore the effect of changing the kinetic parameter $\alpha_j$
and the influx and efflux parameters $d_{1j}$ and $d_{2j}$,
respectively. More specifically, we will study the effect of choosing
pairs $(d_{2j},\alpha_j)$ either inside or outside the blue-shaded
region of instability, as shown in Fig.~\ref{fig:selkov_efflux}, for
an isolated cell with boundary efflux. Initial conditions will be
chosen near the steady-state values so that the scatter plots are
informative for predicting the onset of any instability.

\subsection{Two-Cell Configurations}\label{cell:twocell}

\begin{figure}[htbp]
  \centering
  \begin{subfigure}[b]{0.32\textwidth}  
     \includegraphics[width=\textwidth,height=4.4cm]{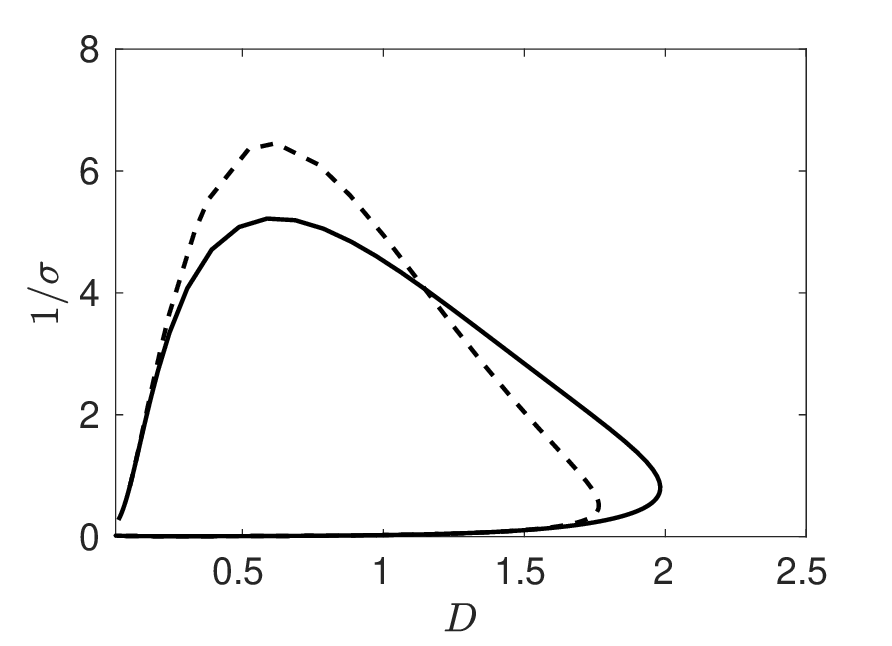}
        \caption{$d_{1j}=0.4$}
        \label{fig:twocell_hopf_b}
    \end{subfigure}
  \begin{subfigure}[b]{0.32\textwidth}  
     \includegraphics[width=\textwidth,height=4.4cm]{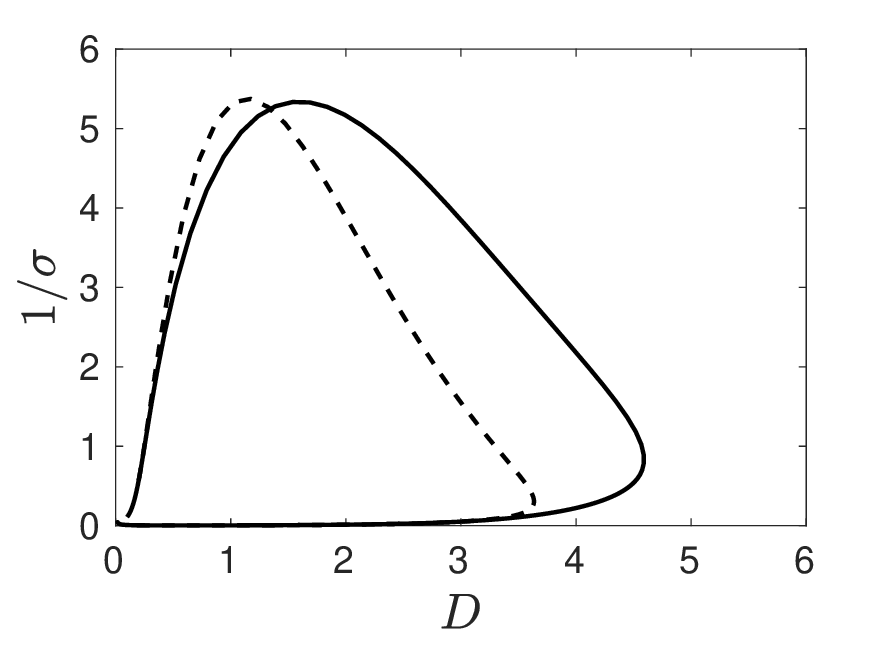}
        \caption{$d_{1j}=0.8$}
        \label{fig:twocell_hopf_a}
      \end{subfigure}
        \begin{subfigure}[b]{0.32\textwidth}  
            \includegraphics[width=\textwidth,height=4.4cm]{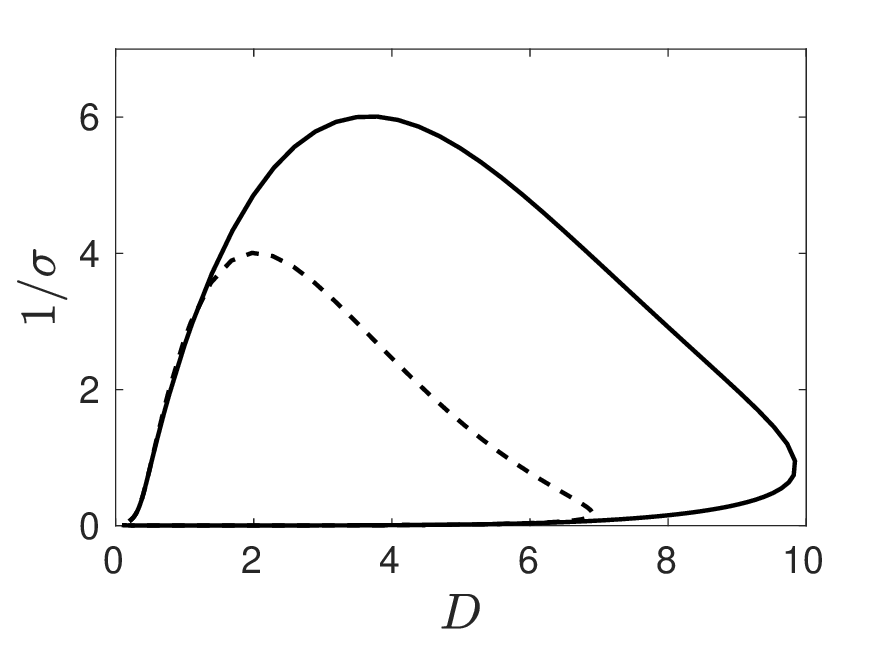}
        \caption{$d_{1j}=1.5$}
        \label{fig:twocell_hopf_d}
    \end{subfigure}
    \caption{Left: HB boundaries in the ${1/\sigma}$ versus $D$
      parameter plane for the linearization of the steady-state for
      the two-cell configuration of Fig.~\ref{fig:schem:twocell} for
      identical cells with permeabilities $d_{1j}=0.4$, $d_{2j}=0.2$
      and kinetic parameter $\alpha_j=0.9$. The HB boundaries for the
      in-phase $\v{c}=(1,1)^T$ and anti-phase $\v{c}=(1,-1)^T$ modes
      are the solid and dashed curves, respectively.  The steady-state
      is unstable to the specific mode inside each lobe.  Middle:
      $d_{1j}=0.8$. Right: $d_{1j}=1.5$. The range in $D$ of the lobes
      of instability increases with the influx parameter $d_{1j}$.}
\label{fig:twocell_hopf}
\end{figure}

For the linearization of the steady-state for the two-cell
configuration of Fig.~\ref{fig:schem:twocell}, and for three values of
the influx parameter $d_{1j}$, in Fig.~\ref{fig:twocell_hopf} we plot
the HB boundaries for the in-phase mode (solid curves) and the
anti-phase mode (dashed curves) in the ${1/\sigma}$ versus $D$
parameter plane when two identical cells are initially in a quiescent
state when uncoupled from the bulk.  The other parameters are given in
the figure caption. The steady-state is unstable to in-phase and the
anti-phase perturbations only inside the corresponding lobes, while
the steady-state is linearly stable outside the union of the two
lobes.  From Fig.~\ref{fig:twocell_hopf} we observe the possibility of
either a purely anti-phase instability or a purely in-phase
instability for some parameter pairs $\left({1/\sigma},D\right)$. Upon
comparing Figs.~\ref{fig:twocell_hopf_b}--\ref{fig:twocell_hopf_d},
the range in $D$ where the lobes of instability occur become larger as
$d_{1j}$ is increased.

\begin{figure}[htbp]
  \centering
  \begin{subfigure}[b]{0.32\textwidth}  
     \includegraphics[width=\textwidth,height=4.4cm]{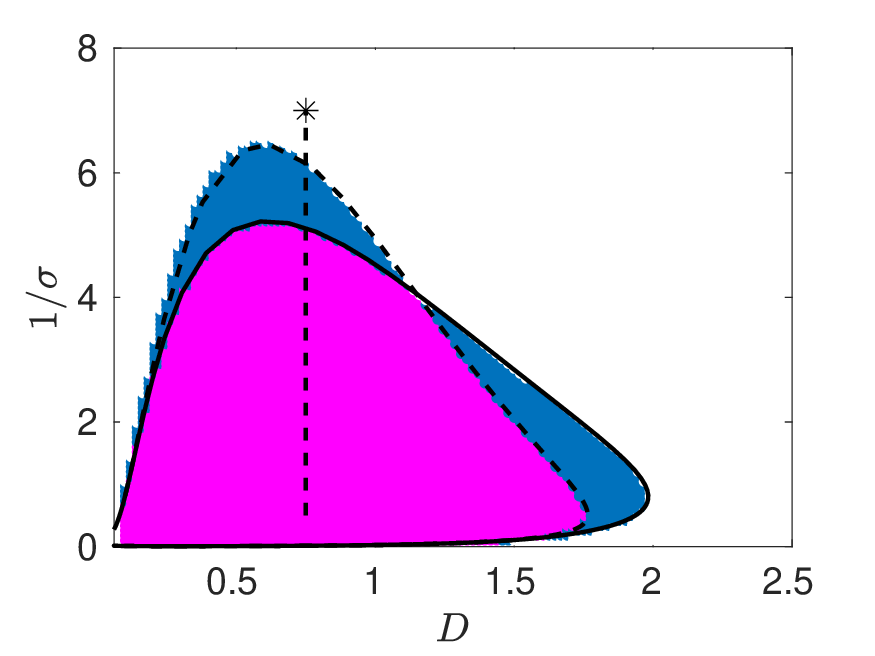}
        \caption{$d_{1j}=0.4$}
        \label{fig:twocell_hopf_scatt_b}
    \end{subfigure}
  \begin{subfigure}[b]{0.32\textwidth}  
     \includegraphics[width=\textwidth,height=4.4cm]{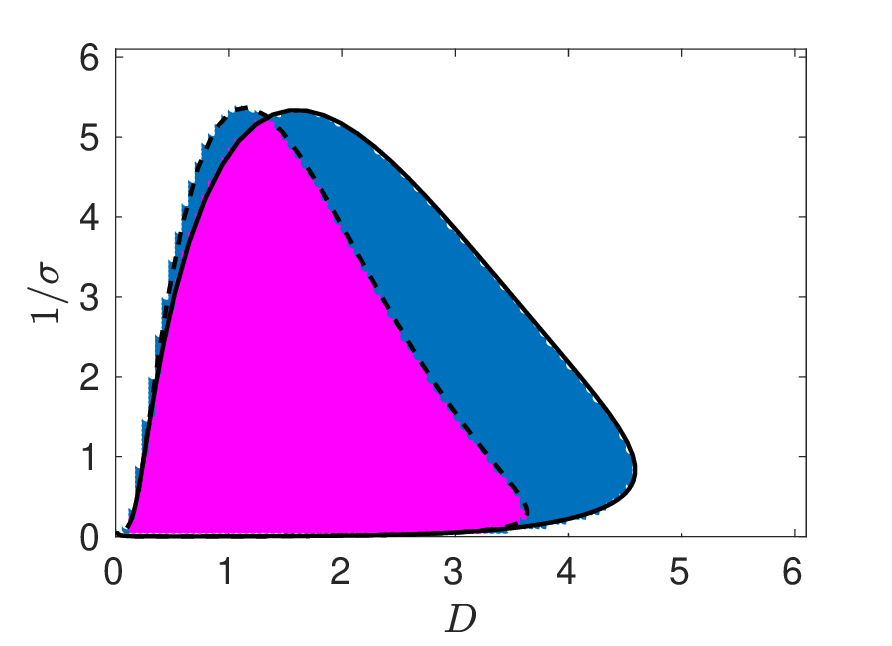}
        \caption{$d_{1j}=0.8$}
        \label{fig:twocell_hopf_scatt_a}
      \end{subfigure}
        \begin{subfigure}[b]{0.32\textwidth}  
  \includegraphics[width=\textwidth,height=4.4cm]{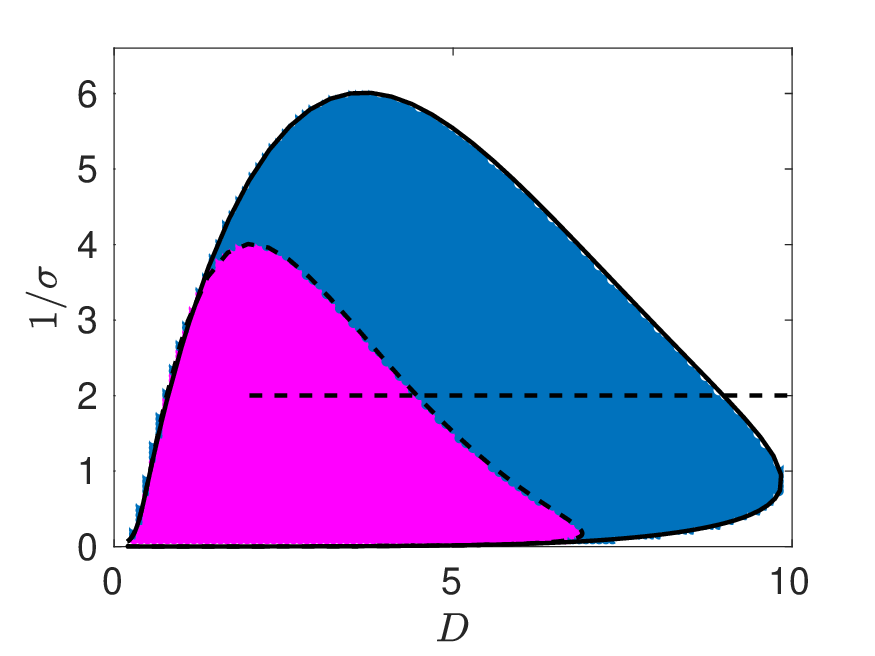}
        \caption{$d_{1j}=1.5$}
        \label{fig:twocell_hopf_scatt_d}
    \end{subfigure}
    \caption{Scatter plot of the number ${\mathcal Z}$ of destabilizing
      eigenvalues satisfying $\mbox{Re}(\lambda)>0$ in the
      ${1/\sigma}$ versus $D$ parameter plane corresponding to the
      linearization of the steady-state for the two-cell configuration
      of Fig.~\ref{fig:schem:twocell} for identical cells with
      parameter values $d_{1j}=0.4$ (left), $d_{1j}=0.8$ (middle),
      $d_{1j}=1.5$ (right). Here, ${\mathcal Z}=0$ is white,
      ${\mathcal Z}=2$ is blue, and ${\mathcal Z}=4$ is magenta. The
      HB boundaries are superimposed. Remaining parameters are as in
      Fig.~\ref{fig:twocell_hopf}.}
\label{fig:twocell_hopf_scatt}
\end{figure}

For Fig.~\ref{fig:twocell_hopf_scatt} we numerically implemented the
winding number criterion (\ref{wind:form}) to provide a scatter plot
in the ${1/\sigma}$ versus $D$ plane that indicates the number of
destabilizing eigenvalues in $\mbox{Re}(\lambda)>0$ associated with the
linearization of the steady-state. In Fig.~\ref{fig:twocell_slice_b_d}
we plot the path of both the real and imaginary parts of the two
dominant eigenvalues $\lambda\in {\mathbb C}$ along the parameter path
indicated by the vertical and horizontal dotted lines in
Fig.~\ref{fig:twocell_hopf_scatt_b} and
Fig.~\ref{fig:twocell_hopf_scatt_d}, respectively. Along the vertical
path in Fig.~\ref{fig:twocell_hopf_scatt_b}, we observe from
Fig.~\ref{fig:twocell_vertical_real_b} that the in-phase and
anti-phase modes of instability have comparable growth rates when
$D=0.75$ and $\sigma=0.5$ and that the in-phase mode becomes stable
before the anti-phase mode as ${1/\sigma}$ is increased. Since the
imaginary part $\mbox{Im}(\lambda)$ is roughly the same for both
modes, and shows little variation with $\sigma$, we conclude that the
temporal frequencies of small amplitude oscillations are roughly
similar for both modes. In contrast, along the horizontal parameter
path in Fig.~\ref{fig:twocell_hopf_scatt_d}, we observe from
Fig.~\ref{fig:twocell_horizontal_real_d} that the growth rate for the
in-phase mode is larger than that for the anti-phase mode as $D$ is
increased.

\begin{figure}[htbp]
  \centering
  \begin{subfigure}[b]{0.48\textwidth}  
     \includegraphics[width=\textwidth,height=4.2cm]{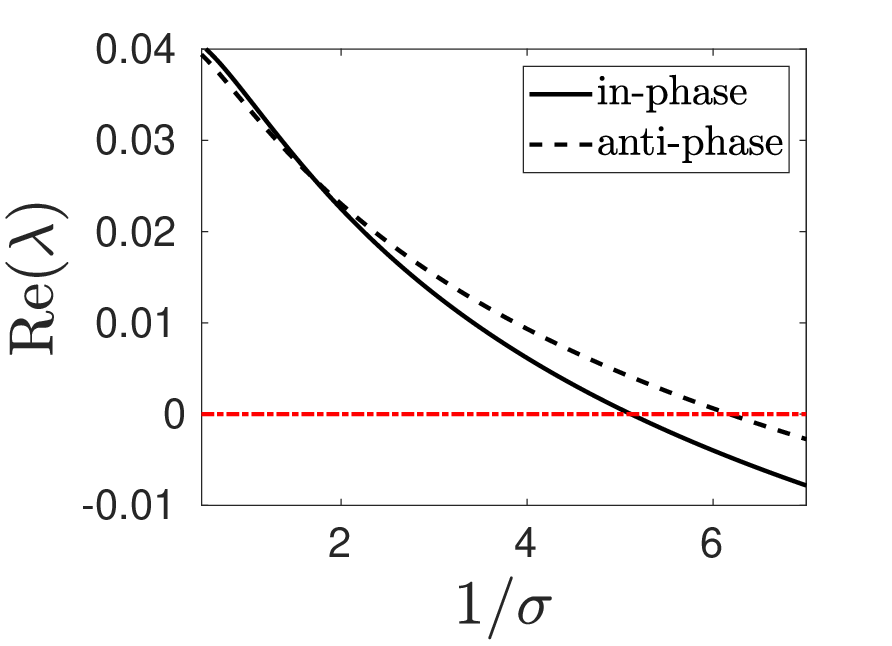}
  \caption{$\mbox{Re}(\lambda)$ vertical slice in
   Fig.~\ref{fig:twocell_hopf_scatt_b}}
        \label{fig:twocell_vertical_real_b}
      \end{subfigure}
  \begin{subfigure}[b]{0.48\textwidth}  
     \includegraphics[width=\textwidth,height=4.2cm]{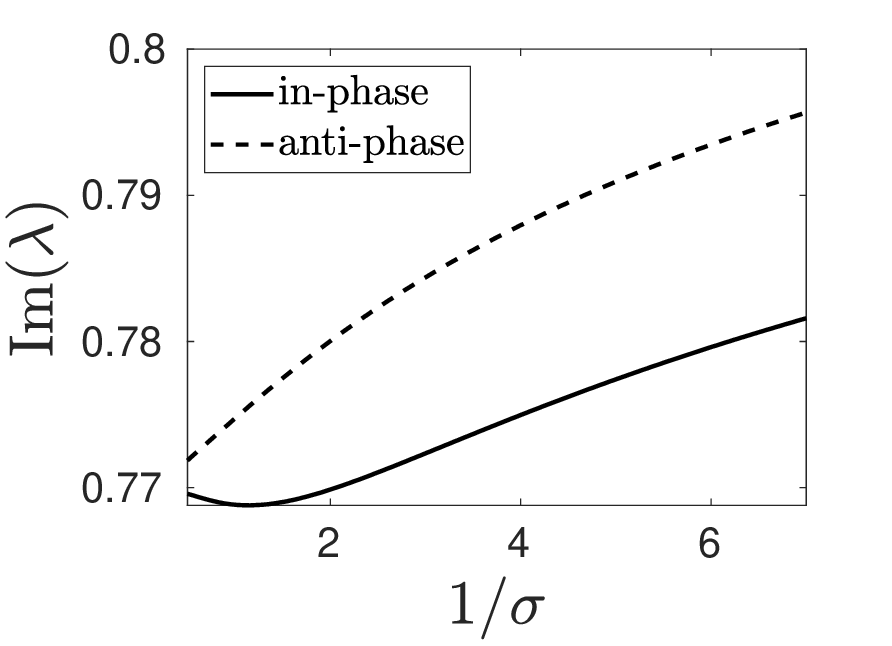}
  \caption{$\mbox{Im}(\lambda)$ vertical slice in
   Fig.~\ref{fig:twocell_hopf_scatt_b}}
        \label{fig:twocell_vertical_imag_b}
      \end{subfigure}
  \begin{subfigure}[b]{0.48\textwidth}  
     \includegraphics[width=\textwidth,height=4.2cm]{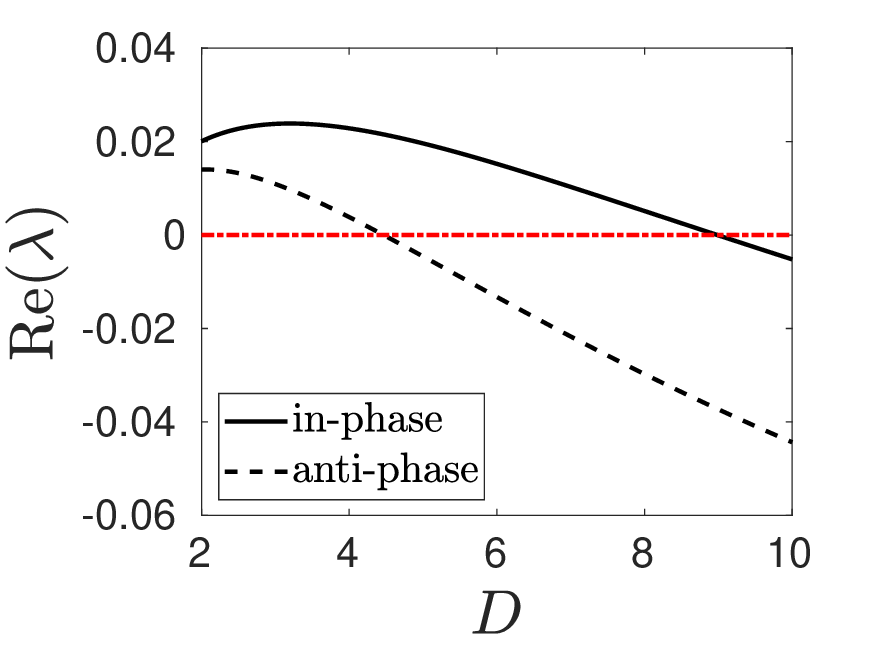}
  \caption{$\mbox{Re}(\lambda)$ horizontal slice in
   Fig.~\ref{fig:twocell_hopf_scatt_d}}
        \label{fig:twocell_horizontal_real_d}
      \end{subfigure}
  \begin{subfigure}[b]{0.48\textwidth}  
     \includegraphics[width=\textwidth,height=4.2cm]{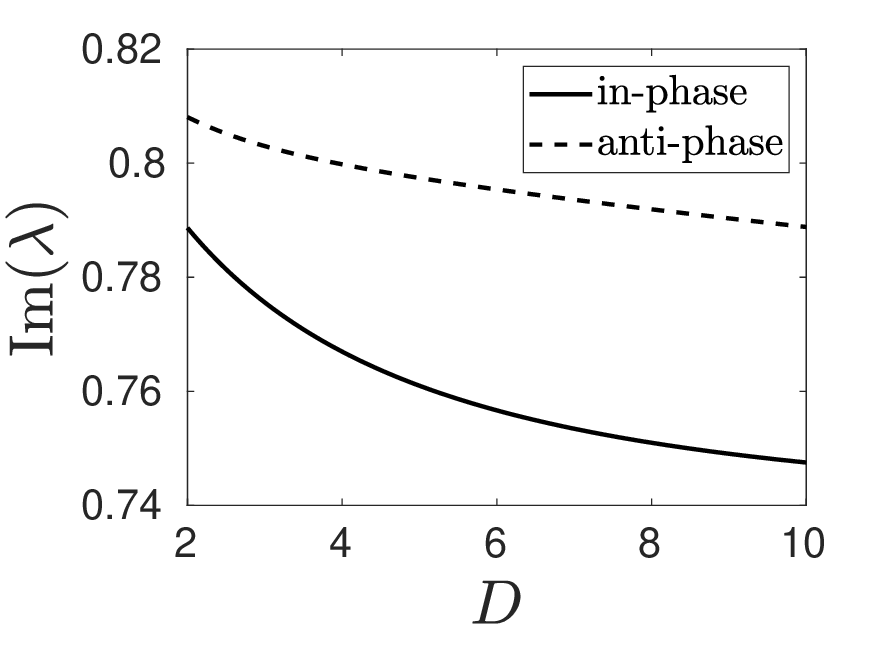}
  \caption{$\mbox{Im}(\lambda)$ horizontal slice in
   Fig.~\ref{fig:twocell_hopf_scatt_d}}
        \label{fig:twocell_horizontal_imag_a}
      \end{subfigure}
      \caption{Top row: two dominant eigenvalues, $\mbox{Re}(\lambda)$
        (left) and $\mbox{Im}(\lambda)$ (right), computed by solving
        $\det{\mathcal M} (\lambda)=0$ along the vertical dotted path
        with $D=0.75$ indicated in
        Fig.~\ref{fig:twocell_hopf_scatt_b}. Bottom row: two dominant
        eigenvalues, $\mbox{Re}(\lambda)$ (left) and
        $\mbox{Im}(\lambda)$ (right), computed by solving
        $\det{\mathcal M} (\lambda)=0$ along the horizontal dotted
        path with $\sigma=0.5$ as shown in
        Fig.~\ref{fig:twocell_hopf_scatt_d}. The horizontal red lines
        in the left panels is the marginal stability threshold
        $\mbox{Re}(\lambda)=0$.}
\label{fig:twocell_slice_b_d}
\end{figure}

{We now compare our numerical solution of the reduced
  integro-differential system (\ref{2:reduced}), as computed using our
  algorithm in \S \ref{sec:all_march}, with that of the full cell-bulk
  system (\ref{DimLess_bulk}), as computed with the commercial PDE
  solver FlexPDE Professional 6.50/L64 \cite{flexpde2015} using the
  domain-truncation approach given in Appendix \ref{app:artificial}.
  The comparison is made at the star-labeled point shown in
  Fig.~\ref{fig:twocell_hopf_scatt_b} where rather intricate long-time
  dynamics occur.  For the algorithm in \S \ref{sec:all_march}, we
  used a time step size $\Delta t = 0.002$ and chose $n=75$ for the
  discretization of the Laplace space contour for the
  sum-of-exponentials approximation (see
  Fig.~\ref{fig:laplace_discret}).  In Fig.~\ref{fig:D0p75Sig1over7}
  we show a very close qualitative agreement between our hybrid
  asymptotic-numerical results and the full numerical results. We
  observe that our hybrid approach is able to capture a transiently
  decaying oscillation, which transitions to a mixed-mode oscillation
  on some intermediate time scale, and that ultimately tends to a
  steady-state solution as obtained from the solution to
  (\ref{selkov:equil_2}). These final steady-state values correctly
  approximates the FlexPDE computed steady-state to several decimal
  places of accuracy. This intricate mixed-mode behavior over the
  rather long time scale seen in Fig.~\ref{fig:D0p75Sig1over7} stems
  from the proximity to the anti-phase Hopf bubble in parameter space
  as shown in Fig.~\ref{fig:twocell_hopf_scatt_b}, together with the
  two dominant, but rather small magnitude, closely-spaced eigenvalues
  shown in Fig.~\ref{fig:twocell_vertical_real_b} with
  $\sigma={1/7}$. As a more quantitative validation of our hybrid
  approach, in Fig.~\ref{fig:D0p75Sig1over7_detail} we show a very
  close comparison for the amplitude and period of the intracellular
  species $u_{11}$ and $u_{12}$ as extracted numerically from both our
  fast algorithm for (\ref{2:reduced}) and from the full FlexPDE
  solution to (\ref{DimLess_bulk}).

  Overall we conclude that both the integro-differential system, as
  derived in (\ref{2:reduced}) under the asymptotic assumption
  $\varepsilon\ll 1$, and the mixed-order numerical time-marching
  scheme formulated in \S \ref{sec:all_march}, are able to replicate
  with a high degree of accuracy detailed fine features in
  intracellular oscillations for the full cell-bulk model
  (\ref{DimLess_bulk}) over long time intervals. We emphasize that for
  the FlexPDE solution, the time-integration to $t\approx 669$ took
  many hours of CPU time, owing to the need for a fine spatial mesh
  in the boundary layers near the two cells at each time step. In
  contrast, our fast algorithm, implemented in {\em Fortran77} on a
  {\em Dell Precision} laptop with an {\em Intel Core I7} processor,
  completed in roughly one minute.

\begin{figure}[htbp]
  \centering
  \begin{subfigure}[b]{0.49\textwidth}
        \includegraphics[width=\textwidth]{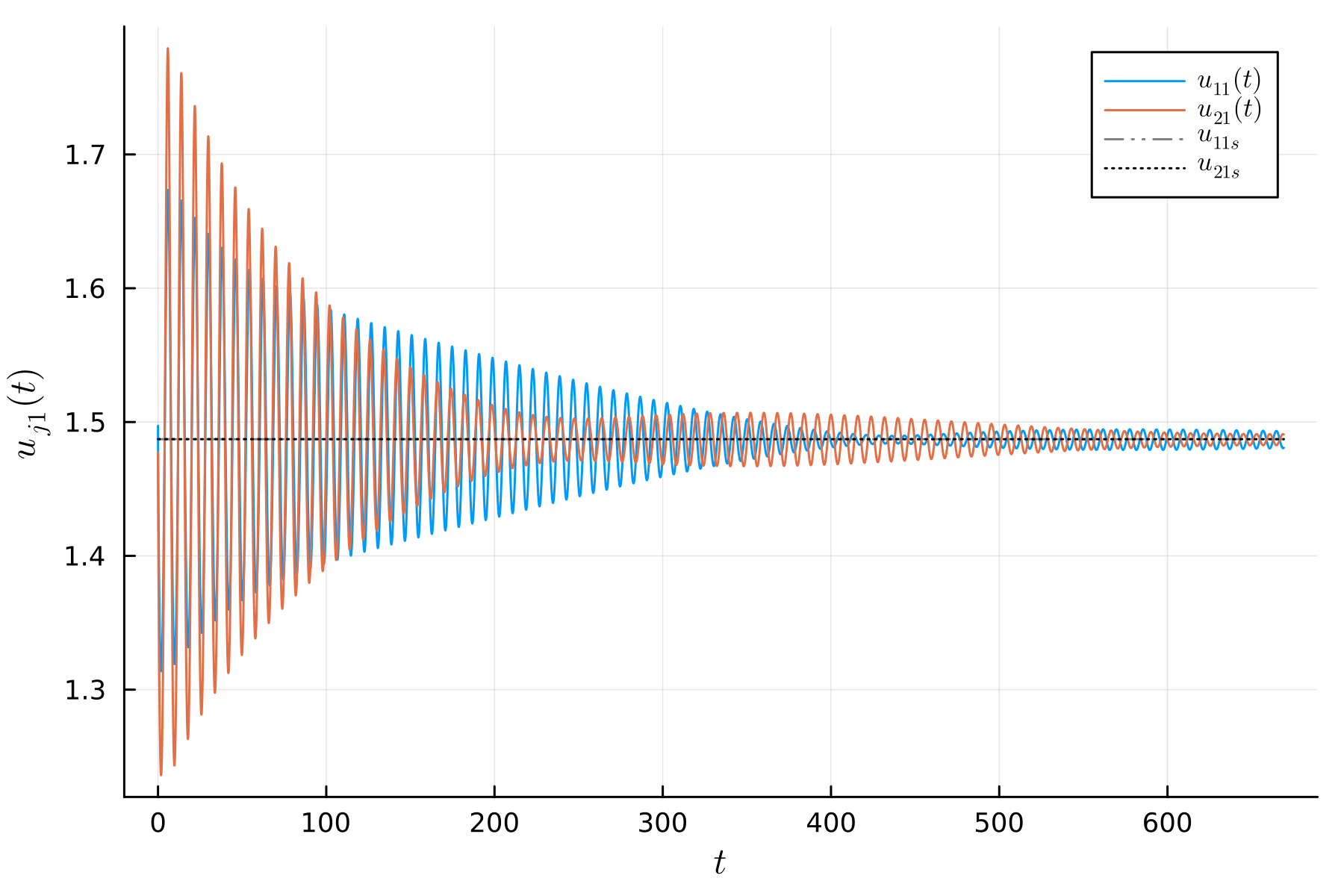}
        \caption{$u_{1j}$ hybrid scheme}
  \end{subfigure}
  \begin{subfigure}[b]{0.49\textwidth}
        \includegraphics[width=\textwidth]{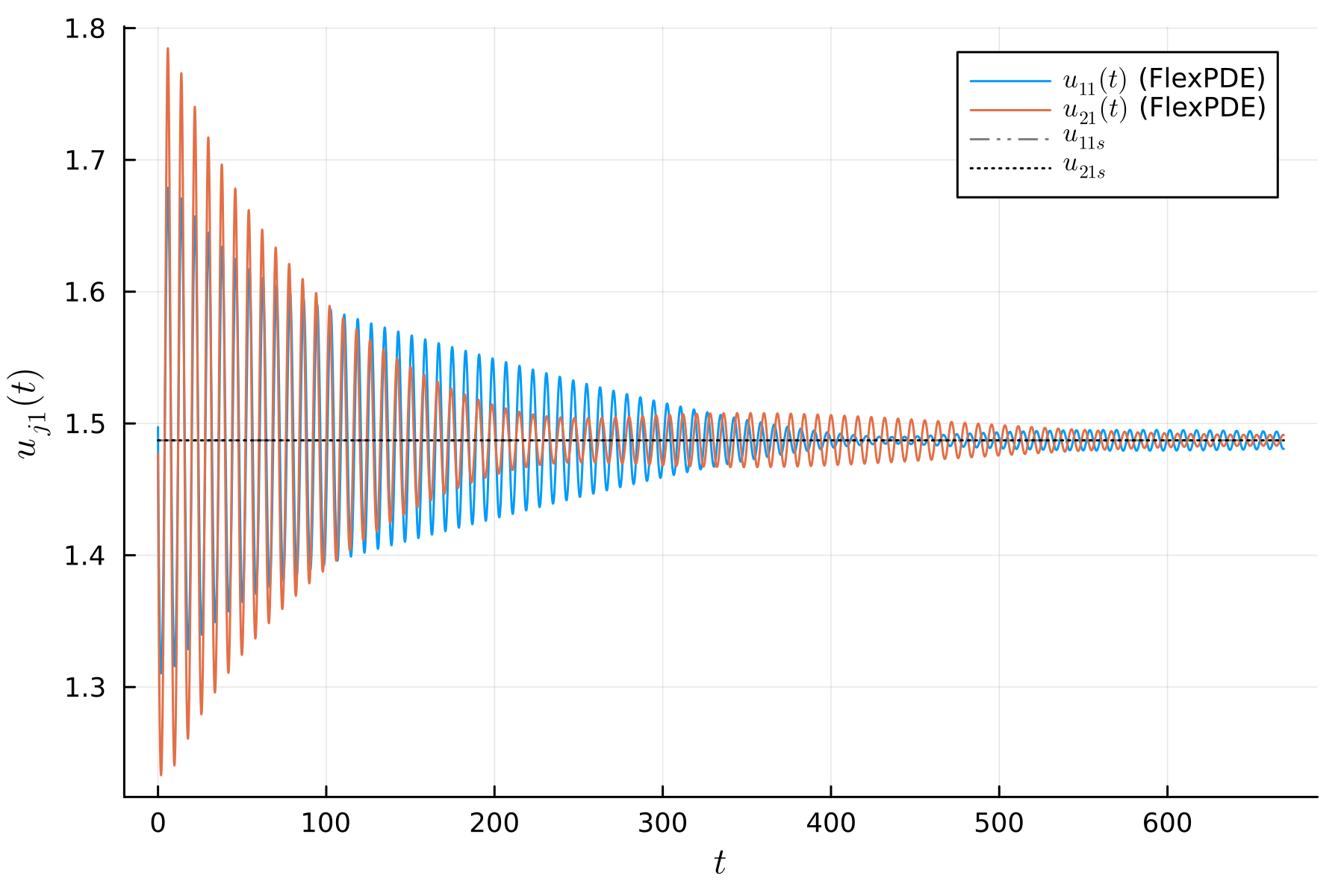}
   \caption{$u_{1j}$ FlexPDE}
  \end{subfigure}
  \begin{subfigure}[b]{0.49\textwidth}
        \includegraphics[width=\textwidth]{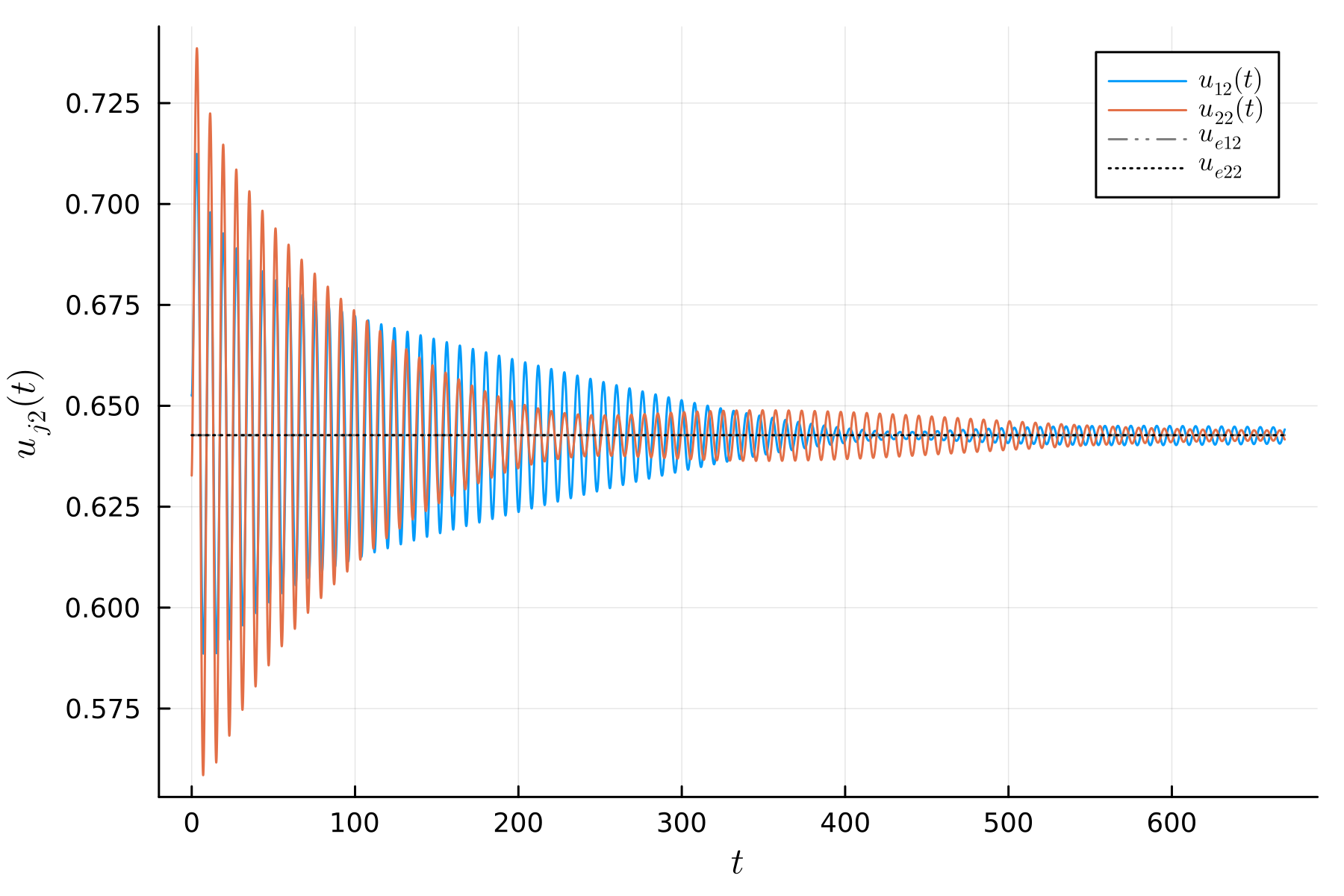}
        \caption{$u_{2j}$ hybrid scheme}
  \end{subfigure}
  \begin{subfigure}[b]{0.49\textwidth}
        \includegraphics[width=\textwidth]{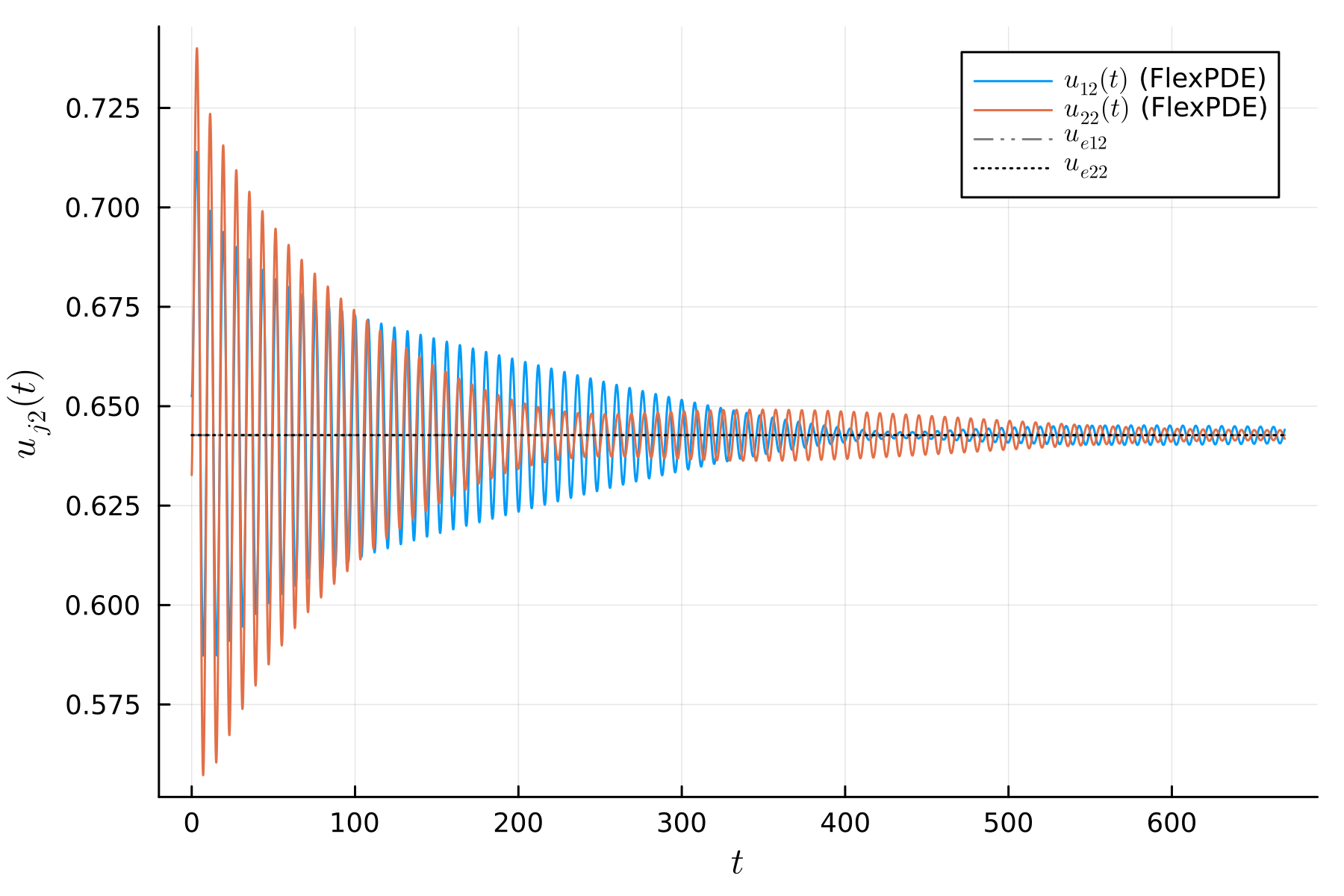}
   \caption{$u_{2j}$ FlexPDE}
  \end{subfigure}
  \caption{{Comparison of the numerical solutions computed
      from the integro-differential system (\ref{2:reduced}) using the
      algorithm in \S \ref{sec:all_march} with that computed from the
      cell-bulk model \eqref{DimLess_bulk} using FlexPDE
      \cite{flexpde2015} with the artificial boundary condition given
      in (\ref{app:art_eq}) of Appendix
      \ref{app:artificial}. Parameters: $D=0.75$, $\sigma={1/7}$ with
      $d_{1j}=0.4$, $d_{2j}=0.2$ and $\alpha_j=0.9$ corresponding to
      the star-labeled point in Fig.~\ref{fig:twocell_hopf_scatt_b}. The
      inital condition imposed was the steady-state with an anti-phase
      perturbation:
      $\v{u}_1^{(0)} = (u_{11s}, u_{21s})^T + 0.01 \cdot (1, -1)^T$, and
      similarly for $\v{u}_2^{(0)}$. The initial bulk solution for
      \eqref{DimLess_bulk} was $U(\v{x},0)=0$. The steady-state is the
    black horizontal line.}}
    \label{fig:D0p75Sig1over7}
\end{figure}}

\begin{figure}[htbp]
  \centering
  \begin{subfigure}[b]{0.49\textwidth}
        \includegraphics[width=\textwidth,height=4.2cm]{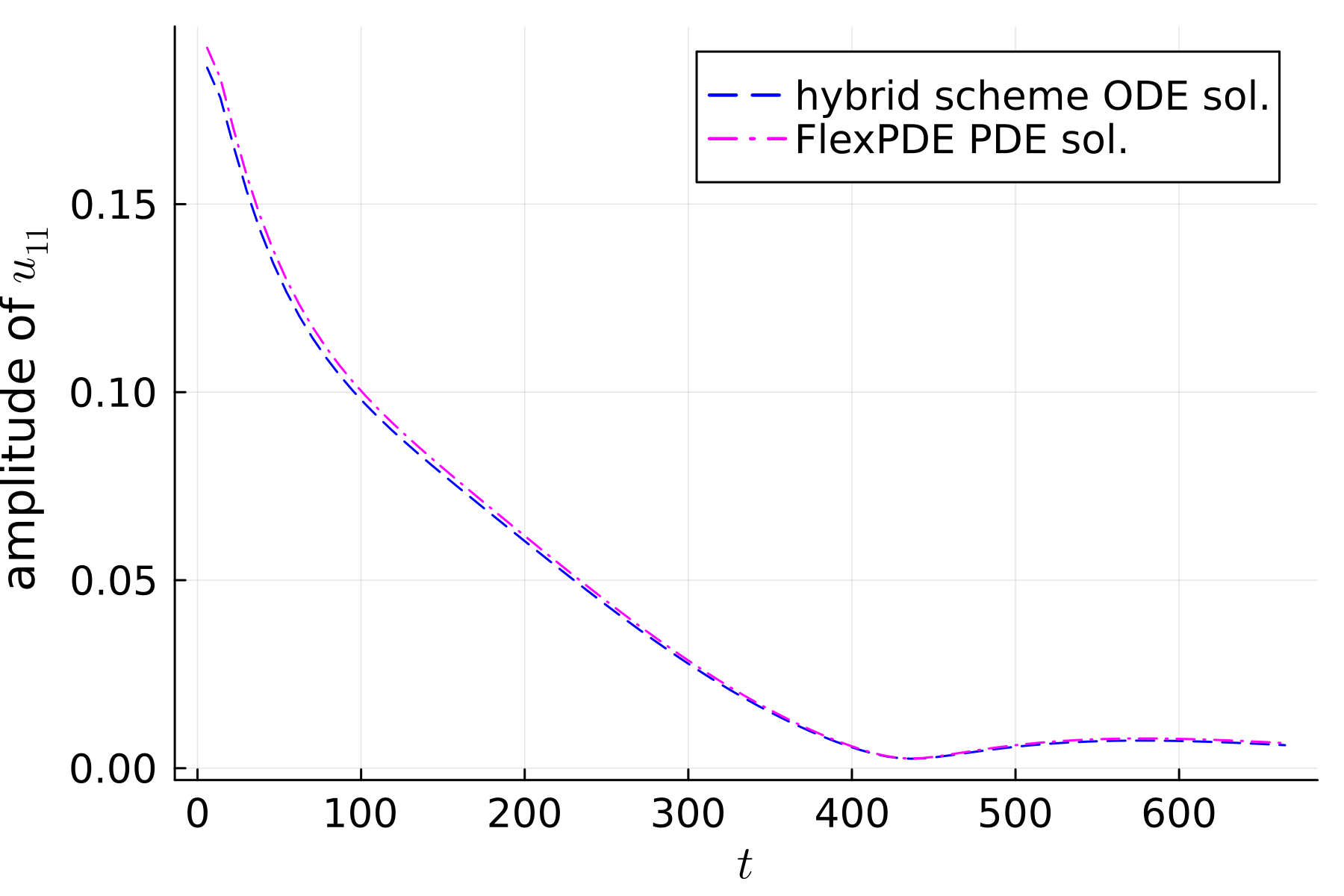}
        \caption{Comparison of $u_{11}$ amplitudes}
  \end{subfigure}
  \begin{subfigure}[b]{0.49\textwidth}
        \includegraphics[width=\textwidth,height=4.2cm]{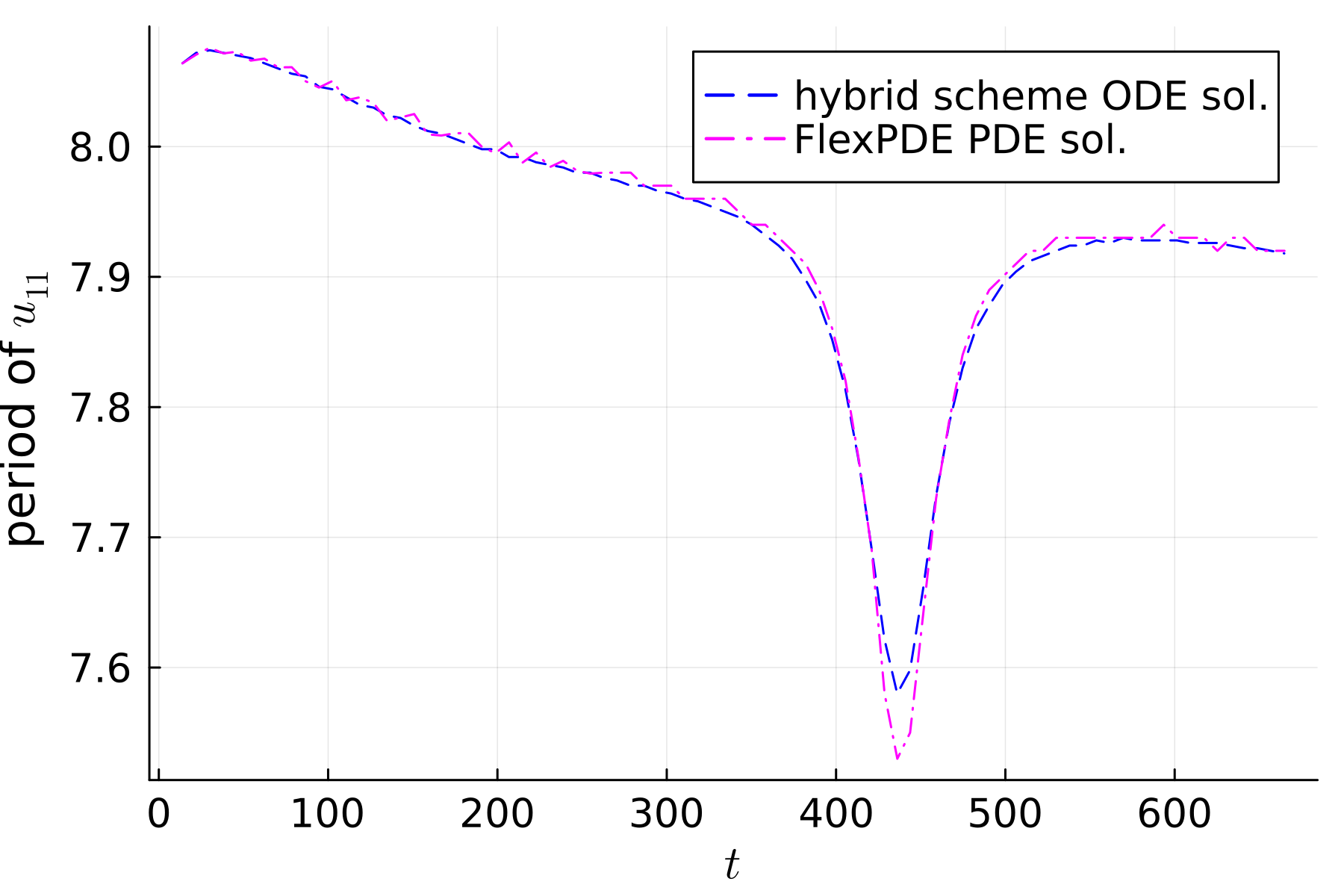}
   \caption{Comparison of $u_{11}$ periods}
  \end{subfigure}
  \begin{subfigure}[b]{0.49\textwidth}
        \includegraphics[width=\textwidth,height=4.2cm]{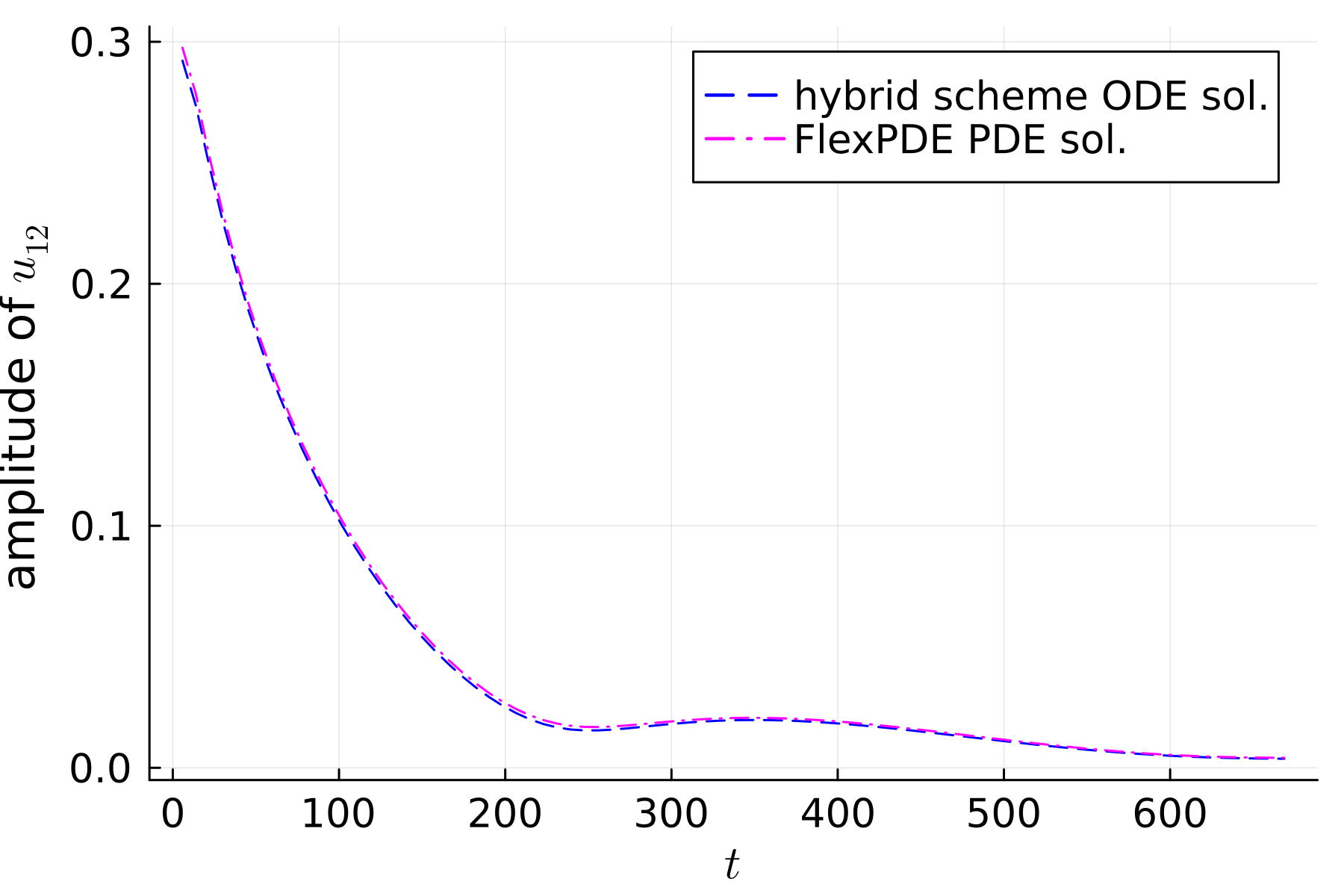}
        \caption{Comparison of $u_{12}$ amplitudes}
  \end{subfigure}
  \begin{subfigure}[b]{0.49\textwidth}
        \includegraphics[width=\textwidth,height=4.2cm]{periodmu1plot_D0p75_sig0p1429_antipert.png}
   \caption{Comparison of $u_{12}$ periods}
  \end{subfigure}
  \caption{Comparison of the amplitude and period of the numerical
    solutions to $u_{1j}$ for $j\in \lbrace{1,2\rbrace}$ as
    computed from the integro-differential system (\ref{2:reduced})
    using the fast algorithm in \S \ref{sec:all_march} with that computed
    from the cell-bulk model (\ref{DimLess_bulk}) using FlexPDE
    \cite{flexpde2015}. The parameters are as in
    in Fig.~\ref{fig:D0p75Sig1over7}.} \label{fig:D0p75Sig1over7_detail}
\end{figure}

Next, we consider the same parameters as in
Figs.~\ref{fig:twocell_hopf_d} and \ref{fig:twocell_hopf_scatt_d}
except that we now modify the kinetic parameter to $\alpha_1=0.4$ for
the first cell centered at $\x_1=(1,0)$. {For $\alpha_1=0.4$,
  an isolated cell with no boundary efflux would have limit cycle
  oscillations (see the blue star in
  Fig.~\ref{fig:isolated:plane}). Since $d_{21}=0.2$, we further
  observe from Fig.~\ref{fig:selkov_efflux} that the cell centered at
  $\x_1$ would have limit cycle oscillations even with boundary efflux
  when it is uncoupled from the bulk. As a result, we refer to this cell as
  the {\em signaling cell}.} In the scatter plot shown in
Fig.~\ref{fig:twocell_hopf_scatt_c} we observe that the steady-state
is always unstable in the ${1/\sigma}$ versus $D$ parameter plane. In
the blue-shaded region, the unique destabilizing mode is the one for which
the amplitude of intracellular oscillations for the signaling cell
centered at $\x_{1}$ is large, and where the second cell centered at
$\x_2$ has very small oscillations, and so is effectively silent. In
Fig.~\ref{fig:twocell_hopf_scatt_c} there is only one HB boundary and
it corresponds to a marginal mode where intracellular oscillations
emerge in the otherwise silent cell centered at $\v{x}_2=(-1,0)$ and
where the signaling cell centered at $\x_1$ is effectively
silent. However, by plotting the path of the two dominant eigenvalues
in Fig.~\ref{fig:twocell_horizontal_c} and
Fig.~\ref{fig:twocell_vertical_c} along the horizontal and vertical
slices shown in Fig.~\ref{fig:twocell_hopf_scatt_c}, respectively, we
observe that the mode with the largest growth rate is always the one
for which the cell centered at $\x_2$ has only very small amplitude
oscillations in comparison to that for the signaling cell.  By
computing the eigenvector $\v{c}$ for the two modes at some points
along the horizontal slice in Fig.~\ref{fig:twocell_hopf_scatt_c}, we
use the criterion in (\ref{nstabform:c2})
to determine the relative magnitude of the intracellular oscillations
in the two cells for the linearized problem. As shown in Table
\ref{table:eigvec_d}, the mode with the largest growth rate is the one
for which the second cell is effectively quiet.

By using our algorithm in \S \ref{sec:all_march} with
$\Delta t=0.002$, and with random initial values near the steady-state
of magnitude $0.01$, in Fig.~\ref{fig:twocell_unstable_c_dyn} we plot
$u_{2j}$ versus $t$ on the time window $600<t<800$ at the three
star-labeled points along the horizontal and vertical slices in
Fig.~\ref{fig:twocell_hopf_scatt_c}. We conjecture that when both the
in-phase and anti-phase modes are destabilizing for the linearization of
the steady-state solutions, such as for $D=2$, $\sigma={1/2}$ in
Fig.~\ref{fig:unstable_center} and for $D=4$, $\sigma={1/2}$ in
Fig.~\ref{fig:unstable_right}, the oscillations for the second cell
centered at $\x_2$ are of wave-packet type, as is characteristic of
mixed-mode oscillations.

\begin{figure}[htbp]
  \centering
  \begin{subfigure}[b]{0.32\textwidth}  
     \includegraphics[width=\textwidth,height=4.4cm]{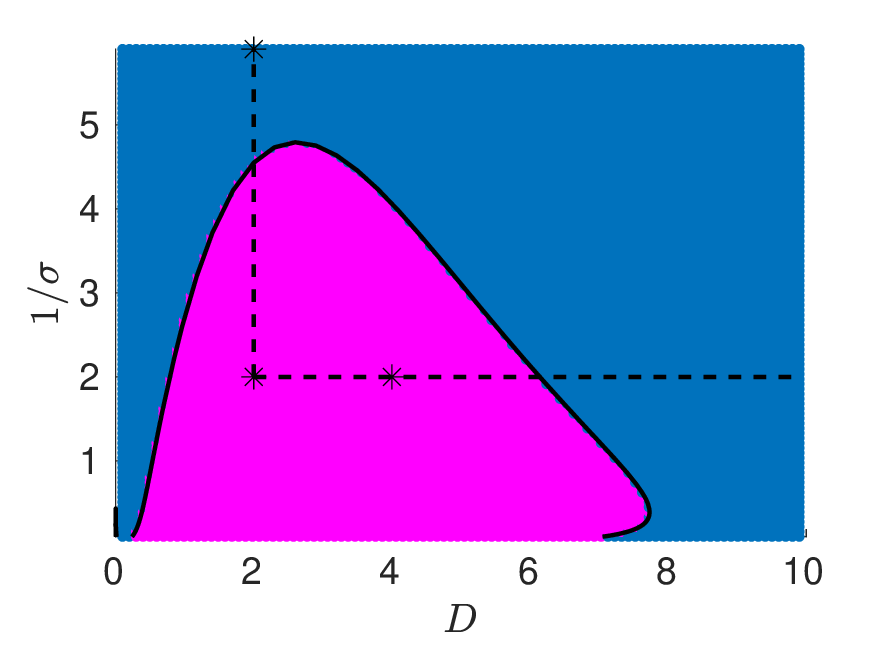}
        \caption{one non-quiescent cell}
        \label{fig:twocell_hopf_scatt_c}
      \end{subfigure}
        \begin{subfigure}[b]{0.32\textwidth}  
           \includegraphics[width=\textwidth,height=4.4cm]{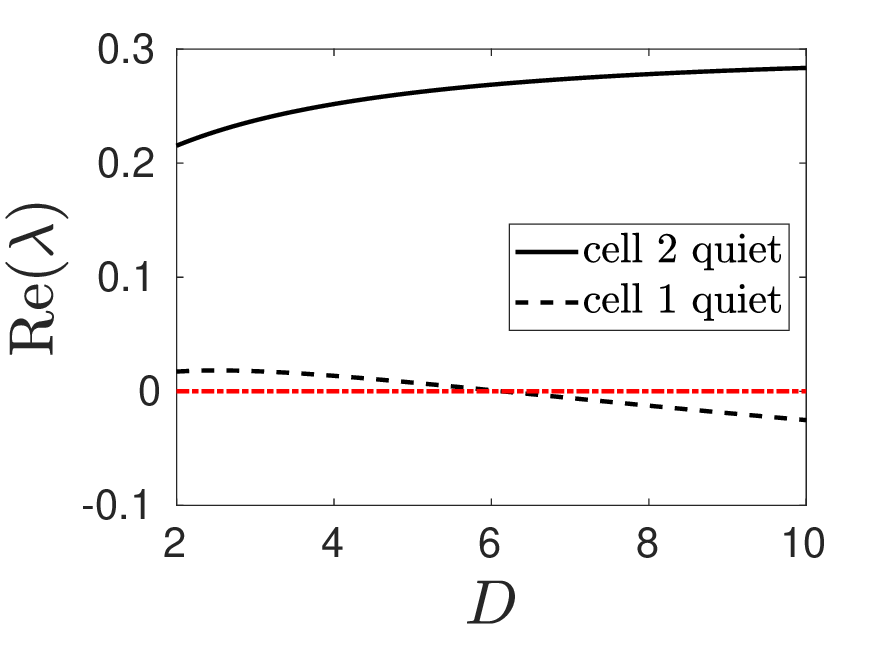}
        \caption{$\mbox{Re}(\lambda)$ horizontal slice}
        \label{fig:twocell_horizontal_c}
      \end{subfigure}
        \begin{subfigure}[b]{0.32\textwidth}  
 \includegraphics[width=\textwidth,height=4.4cm]{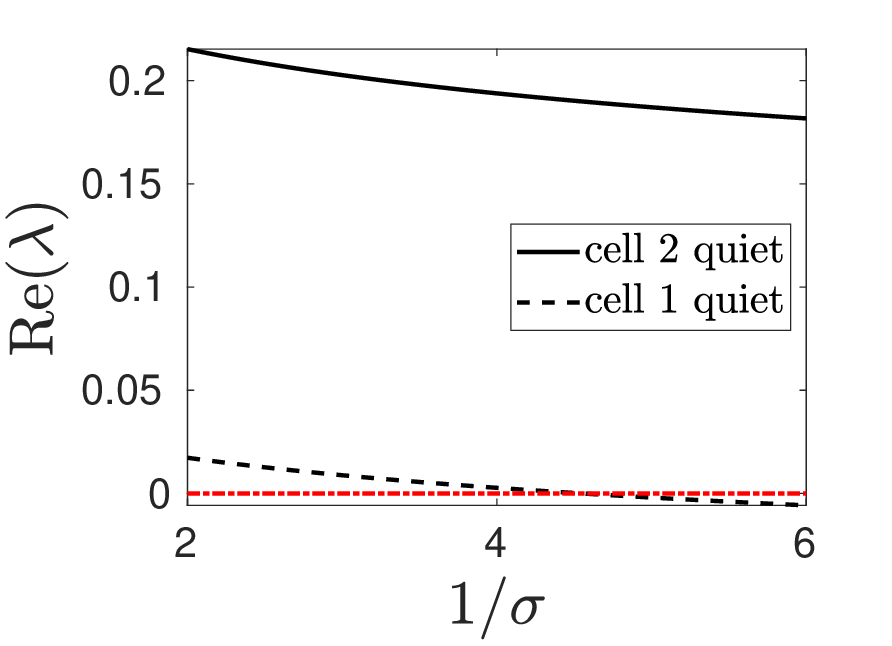}
        \caption{$\mbox{Re}(\lambda)$ vertical slice}
        \label{fig:twocell_vertical_c}
      \end{subfigure}
      \caption{Left: Scatter plot of the number ${\mathcal Z}$ of
        destabilizing eigenvalues, with ${\mathcal Z}=2$ (blue) and
        ${\mathcal Z}=4$ (magenta), for the linearization of the
        steady-state in the ${1/\sigma}$ versus $D$ plane for two
        non-identical cells with different kinetics parameter
        $\alpha_1=0.4$ and $\alpha_2=0.9$. The remaining parameters
        are $d_{1j}=1.5$ and $d_{2j}=0.2$ for
        $j\in\lbrace{1,2\rbrace}$. Only the cell with $\alpha_1=0.4$
        (signaling cell) would have limit cycle oscillations with
        boundary efflux when it is uncoupled to the bulk. The sole HB
        boundary is superimposed. The steady-state is now always
        unstable. Middle: $\mbox{Re}(\lambda)$ versus $D$ along the
        horizontal slice. Right: $\mbox{Re}(\lambda)$ versus
        ${1/\sigma}$ along the vertical slice. The dominant mode of
        instability is for intracellular oscillations to be
        concentrated to the first cell, while the second cell is
        essentially quiescent.}
\label{fig:twocell_hopf_unstable}
\end{figure}

\begin{figure}[htbp]
  \centering
  \begin{subfigure}[b]{0.32\textwidth}  
 \includegraphics[width=\textwidth,height=4.4cm]{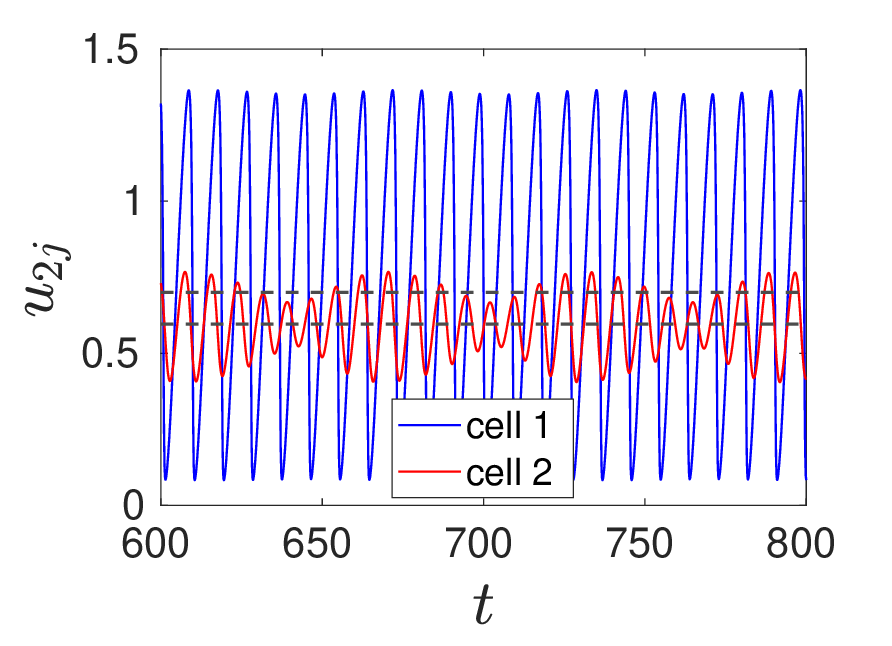}
  \caption{$D=2$, $\sigma={1/2}$.} 
        \label{fig:unstable_center}
      \end{subfigure}
  \begin{subfigure}[b]{0.32\textwidth}  
 \includegraphics[width=\textwidth,height=4.4cm]{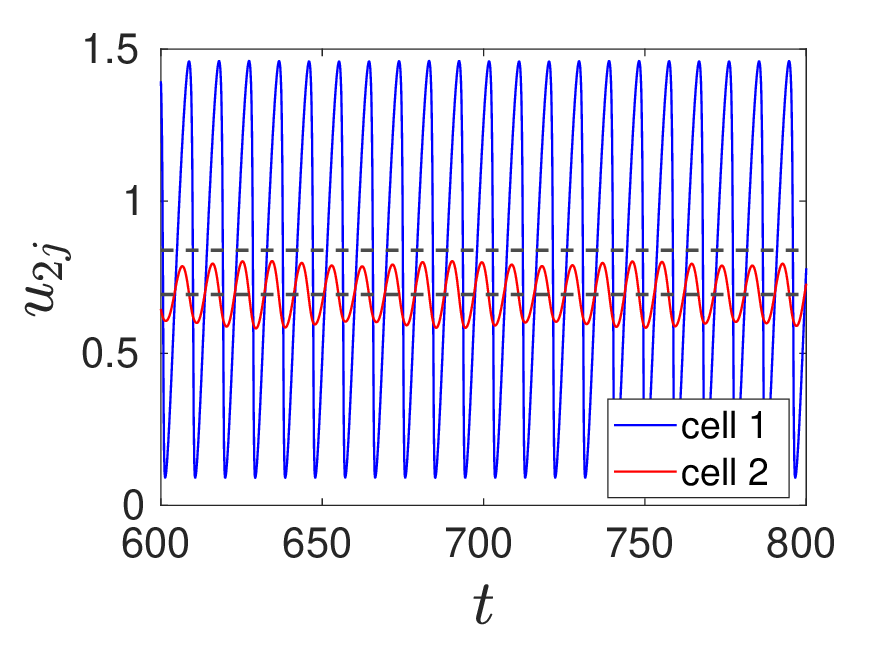}
  \caption{$D=4$, $\sigma={1/2}$.} 
        \label{fig:unstable_right}
      \end{subfigure}
  \begin{subfigure}[b]{0.32\textwidth}  
 \includegraphics[width=\textwidth,height=4.4cm]{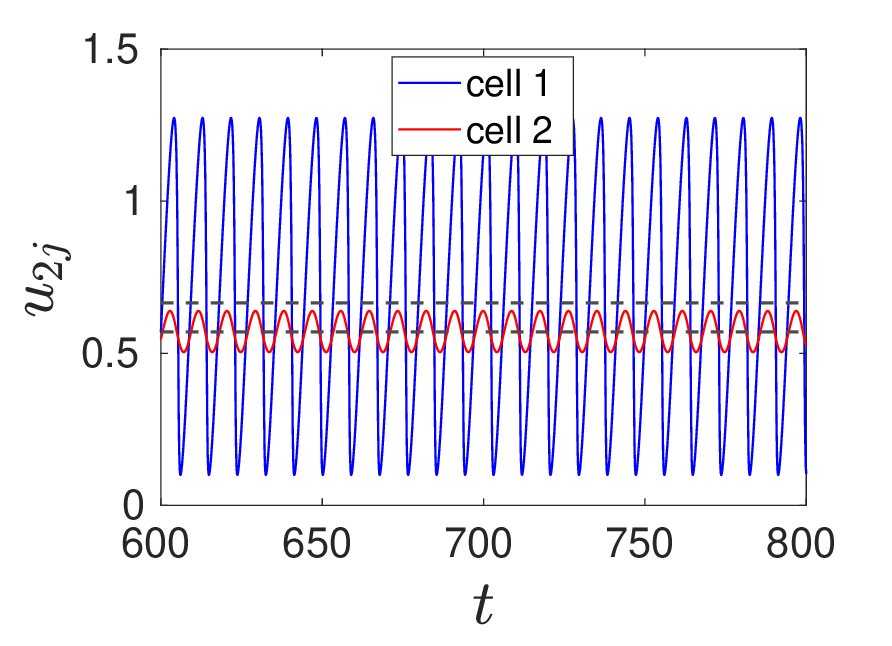}
  \caption{$D=2$, $\sigma={1/6}$.}
  \label{fig:unstable_up}
      \end{subfigure}
      \caption{{Intracellular dynamics $u_{2j}(t)$ between $600<t<800$
          at the three star-labeled points along the vertical and
          horizontal slices in
          Fig.~\ref{fig:twocell_hopf_scatt_c}. Left: $D=2$,
          ${\sigma}={1/2}$: wave-packet solution for second
          cell. Middle: $D=4$, ${\sigma}={1/2}$: wave-packet solution
          transitions to regular oscillations. Right: $D=2$,
          ${\sigma}={1/6}$: regular oscillations occur for the second
          cell but with smaller amplitude.  The dashed horizontal
          lines are the unstable steady-state values for $u_{2j}$.
          Parameters: $\alpha_1=0.4$, $\alpha_2=0.9$, with
          $d_{1j}=1.5$ and $d_{2j}=0.2$ for
          $j\in\lbrace{1,2\rbrace}$.}}
\label{fig:twocell_unstable_c_dyn}
\end{figure}

\begin{table}[!httbp]
\centering
  \begin{tabular}{ c | c | c | c | c | c | c| c| c| } \hline 
    $D$ &$|\v{c}_1|$ &  $|\v{c}_2|$ &  $\Delta \theta_c (rad)$ &
$|({\mathit K}\v{c})_1|$ & $|({\mathit K}\v{c})_2|$ & $\Delta \theta_k (rad)$ &
  $\mbox{Re}(\lambda)$ & Quiet \\
    \hline \hline   
    $2$ &  $1.0000$ & $0.0078$ & $2.71$ & $0.9989$ & $0.0467$ & $0.78$ & $0.215$ & cell 2 \\
    $2$ &  $0.1018$ & $0.9948$ & $2.31$ & $0.0468$ & $0.9989$ & $2.25$  & $0.017$ & cell 1 \\
    $4$ &  $0.9999$ & $0.0127$ & $1.26$ & $0.9973$ & $0.0736$ & $0.59$  & $0.252$ & cell 2 \\
    $4$ &  $0.1422$ & $0.9898$ & $3.77$ & $0.0731$ & $0.9973$ & $2.46$  & $0.0135$ & cell 1 \\
    $6$ &  $0.9998$ & $0.0182$ & $0.93$ & $0.9965$ & $0.0838$ & $0.49$  & $0.269$ & cell 2 \\
    $6$ &  $0.1556$ & $0.9878$ & $3.69$ & $0.0834$ & $0.9965$ & $2.56$  & $0.0010$ & cell 1 \\
    $8$ &  $0.9998$ & $0.0210$ & $0.80$ & $0.9962$ & $0.0872$ & $0.45$  & $0.278$ & cell 2  \\
    $8$ &  $0.1588$ & $0.9873$ & $3.64$ & $0.0804$ & $0.9963$ & $2.60$  & $-0.013$ & cell 1 \\
  \hline
\end{tabular}
\caption{Moduli of the components $c_j$ of the eigenvector
  $\v{c}=(c_1,c_2)^T$, normalized by $\v{c}^H\v{c}=1$, and of the
  normalized components of ${\mathit K}\v{c}$ at some specific points
  along the horizontal path through the scatter plot in
  Fig.~\ref{fig:twocell_hopf_scatt_c} for the two roots of
  $\det{\mathcal M}(\lambda)=0$ with the largest real part.
  $\Delta \theta_c$ and $\Delta \theta_k$ denote the phase shift for
  the eigenvector $\v{c}$ and for ${\mathit K}\v{c}$, respectively,
  between the cells. The dominant mode of instability always
  corresponds to the cell at $\x_2$ having very small amplitude
  oscillations as compared to that for the signaling cell. Parameters
  are as in Fig.~\ref{fig:twocell_hopf_unstable}.}
\label{table:eigvec_d}
\end{table}

For $D=3$ and $\sigma={1/2}$ in the scatter plot of
Fig.~\ref{fig:twocell_hopf_scatt_c}, in Fig.~\ref{fig:D3sig0.5_comp}
we show that our fast algorithm with $\Delta t=0.002$ is able to
reproduce with high accuracy the delicate wave-packet type
oscillations for cell 2 over long time scales that occurs in the
FlexPDE numerical solution of \eqref{DimLess_bulk}. The FlexPDE
results required hours of CPU time, whereas the hybrid algorithm
completed in less than a minute on a laptop.

\begin{figure}[htbp]
  \centering
  \begin{subfigure}[b]{0.49\textwidth}
 \includegraphics[width=\textwidth]{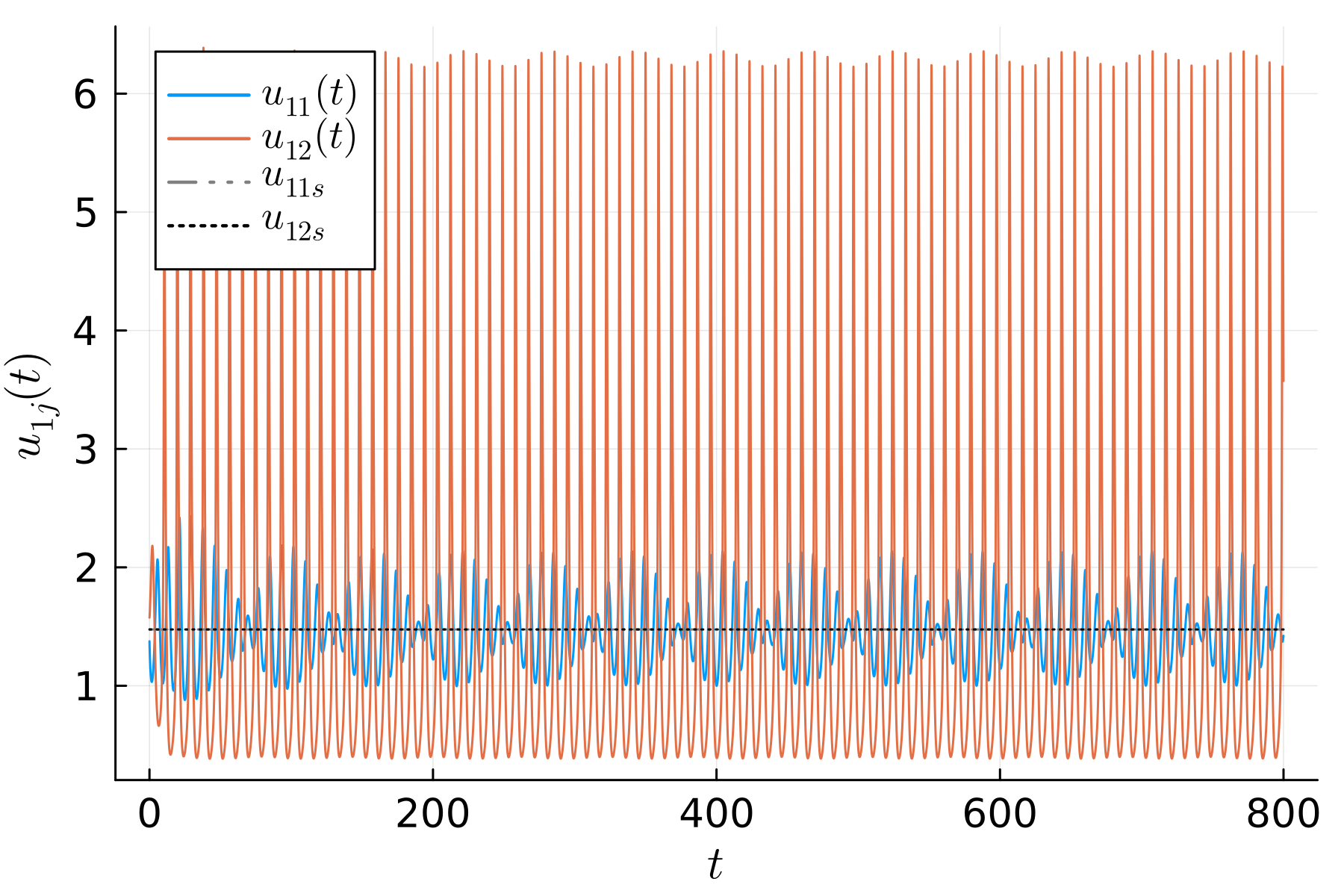}
        \caption{$u_{1j}$ hybrid scheme}
  \end{subfigure}
  \begin{subfigure}[b]{0.49\textwidth}
  \includegraphics[width=\textwidth]{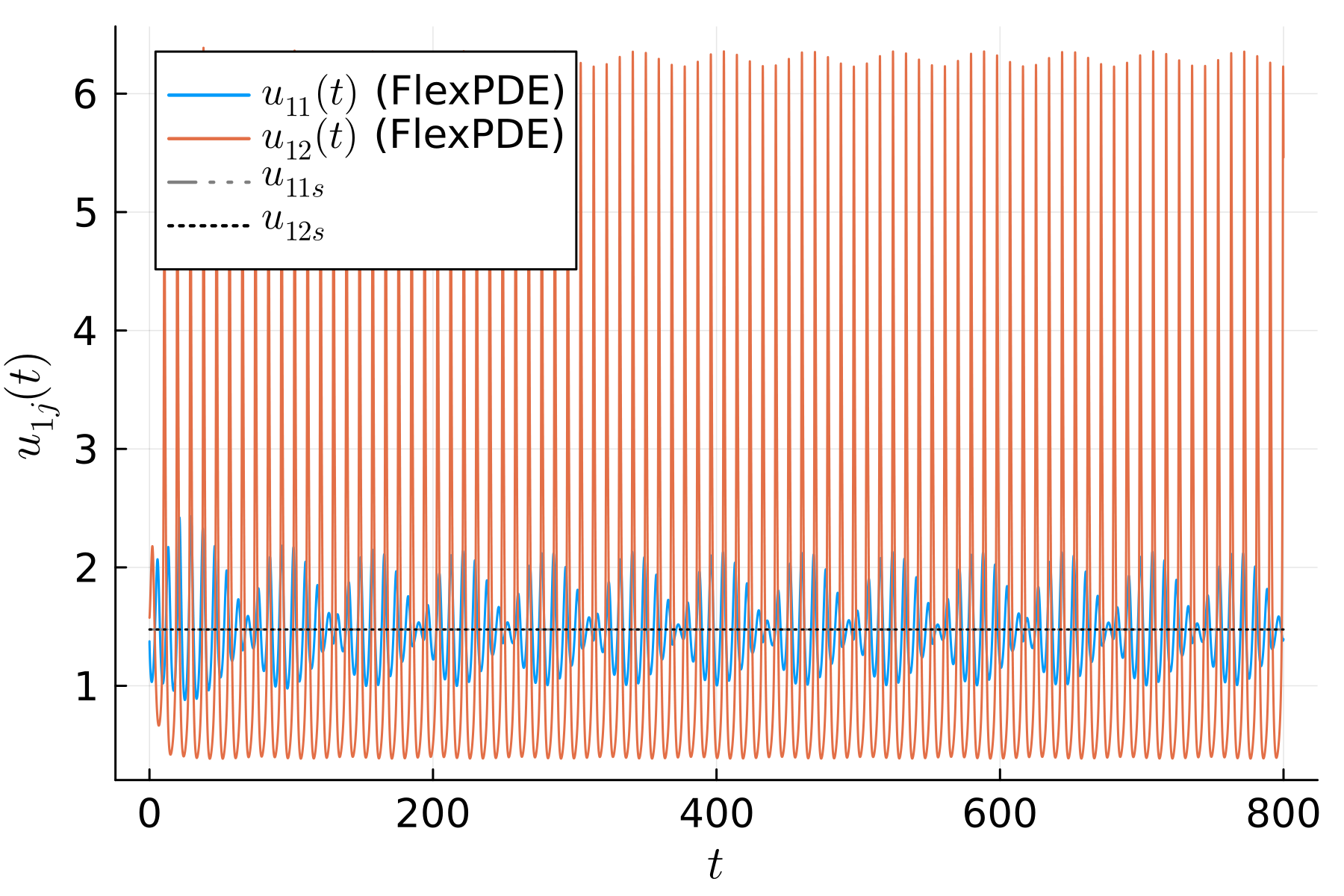}
   \caption{$u_{1j}$ FlexPDE}
  \end{subfigure}
  \begin{subfigure}[b]{0.49\textwidth}
 \includegraphics[width=\textwidth]{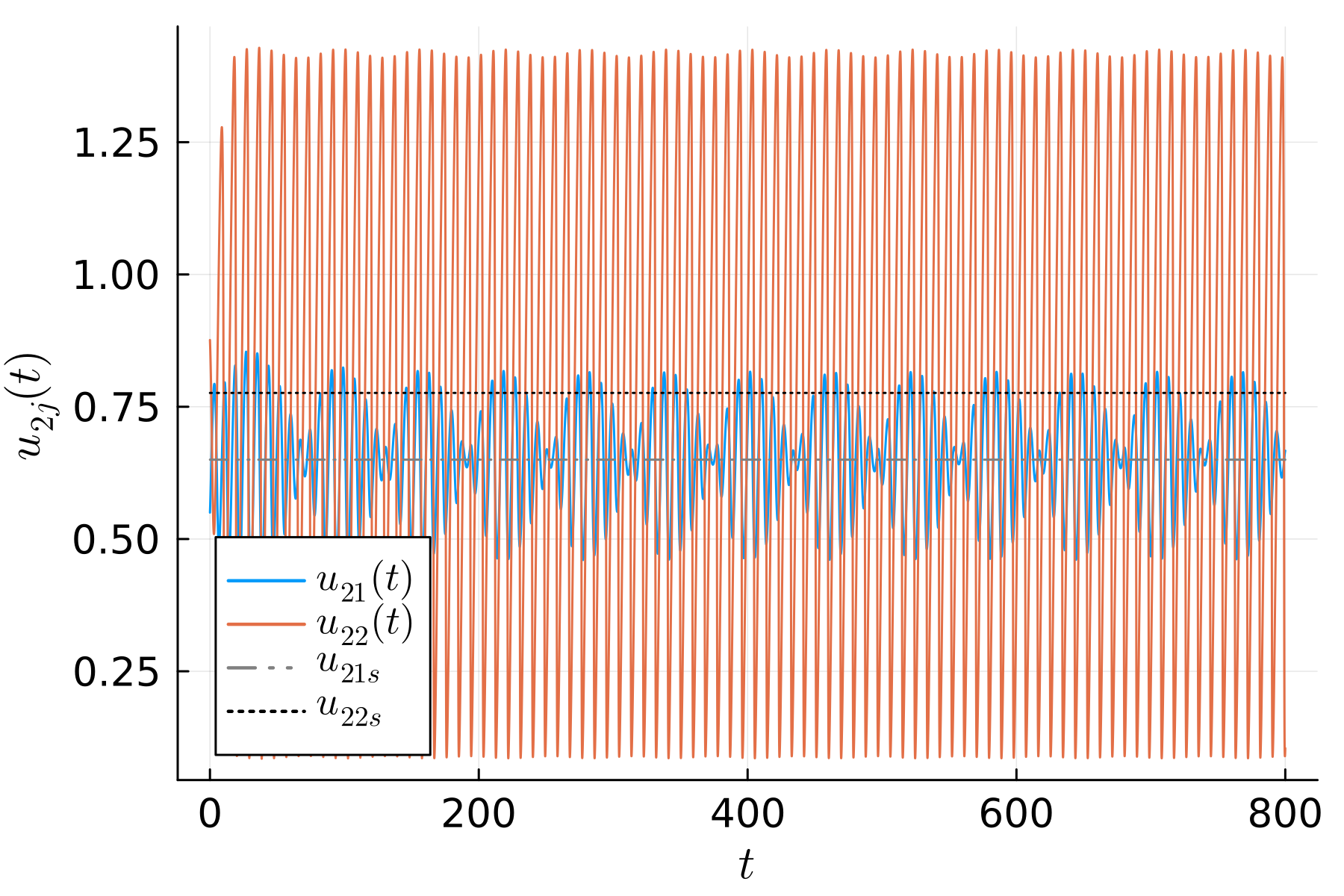}
        \caption{$u_{2j}$ hybrid scheme}
  \end{subfigure}
  \begin{subfigure}[b]{0.49\textwidth}
 \includegraphics[width=\textwidth]{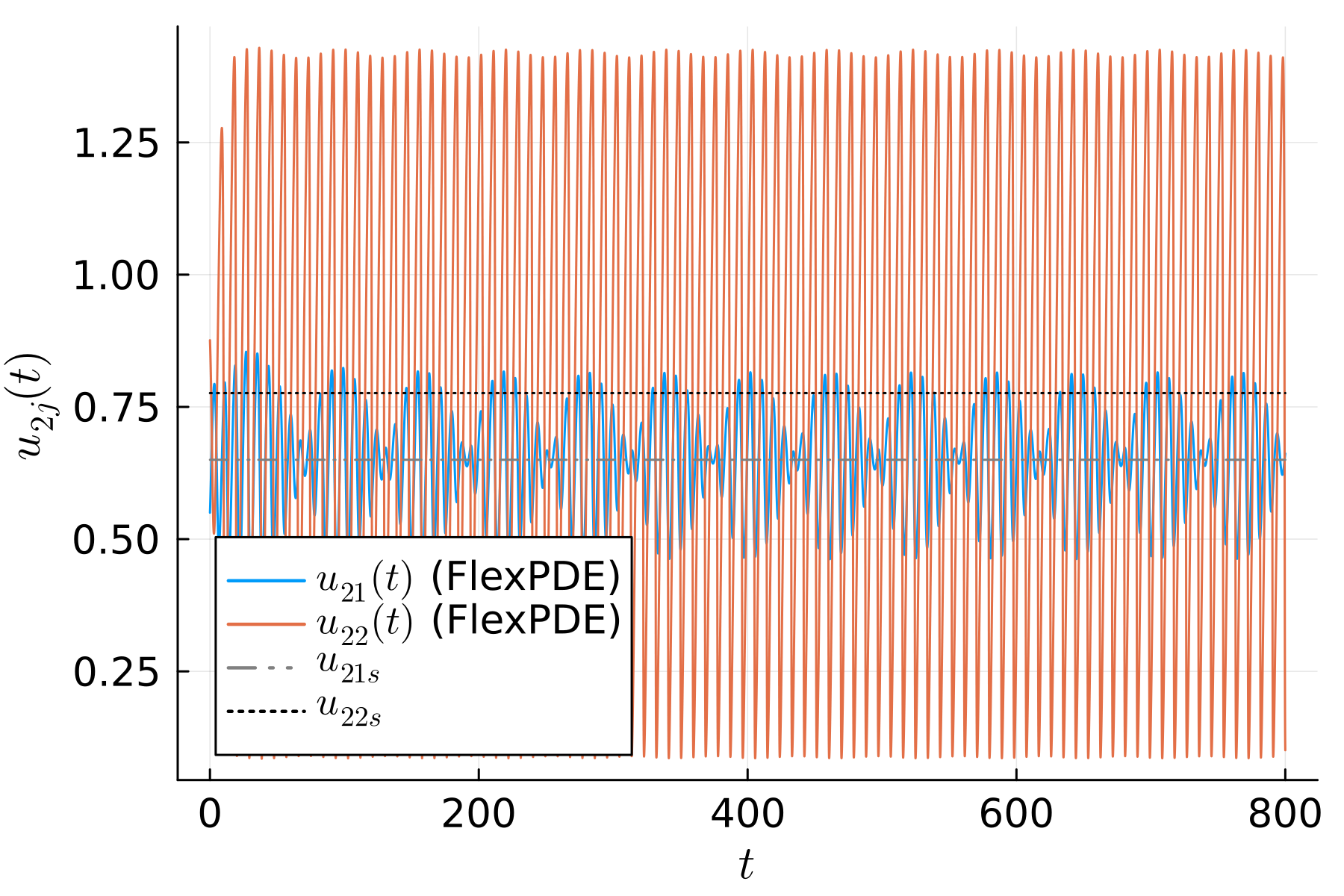}
   \caption{$u_{2j}$ FlexPDE}
  \end{subfigure}
  \caption{{Same caption as in Fig.~\ref{fig:D0p75Sig1over7} except
    that parameters are now $D=3$, $\sigma={1/2}$.  $\alpha_1=0.4$,
    $\alpha_2=0.9$ and $d_{1j}=1.5$ and $d_{2j}=0.2$ for
    $j\in\lbrace{1,2\rbrace}$ corresponding to the scatter plot in
    Fig.~\ref{fig:twocell_hopf_scatt_c}. The initial condition imposed
    was the steady-state with an anti-phase perturbation:
    $\v{u}_1^{(0)} = (u_{11s}, u_{21s})^T + 0.1 \cdot (1, -1)^T$, and
    similarly for $\v{u}_2^{(0)}$. The initial bulk solution for
    \eqref{DimLess_bulk} was $U(\v{x},0)=0$. The hybrid algorithm
    accurately reproduces the wave-packet oscillations of cell 2 over
    long time intervals. The unstable steady-states for cell 1 and cell 2 are
    the upper and lower black horizontal lines, respectively.}}
    \label{fig:D3sig0.5_comp}
\end{figure}

Next, we consider the effect on intracellular oscillations of
increasing the efflux permeability for the prior signaling cell
centered at $\x_1$ to $d_{21}=0.5$. From Fig.~\ref{fig:selkov_efflux}
we observe that limit cycle oscillations no longer occur for this cell
when it is uncoupled from the bulk, and so we refer to this cell as
being {\em deactivated}. Moreover, we decrease the influx parameter
for this cell to $d_{11}=0.4$, so that there is less influx from the
bulk medium as compared to the cell at $\x_2$. In
Fig.~\ref{fig:two_hopf_scatt_e} we show the resulting scatter plot in
the ${1/\sigma}$ versus $D$ plane. We now observe that there is a
parameter range (i.e. the white region) where the steady-state is
linearly stable. Along the dotted parameter path shown in
Fig.~\ref{fig:two_hopf_scatt_e}, in
Figs.~\ref{fig:two_hopf_f_mode1}-\ref{fig:two_hopf_f_mode2} we plot
$\mbox{Re}(\lambda)$ on the left vertical axis for each of the two
dominant modes. On the right vertical axes we plot our criterion in
(\ref{nstabform:c2}) that predicts which cell will have larger
amplitude oscillations near the steady-state.  In contrast to the
behavior in Fig.~\ref{fig:twocell_hopf_unstable} for the case where
the permeabilities were identical for the two cells, we observe from
the right vertical axes in
Figs.~\ref{fig:two_hopf_f_mode1}-\ref{fig:two_hopf_f_mode2} that, as
$D$ is increased with $\sigma=2$, the dominant mode of instability is
the one for which the deactivated cell with $\alpha_1=0.4$ now has
much smaller intracellular oscillations than the cell with
$\alpha_2=0.9$.

\begin{figure}[htbp]
  \centering
  \begin{subfigure}[b]{0.32\textwidth}  
 \includegraphics[width=\textwidth,height=4.4cm]{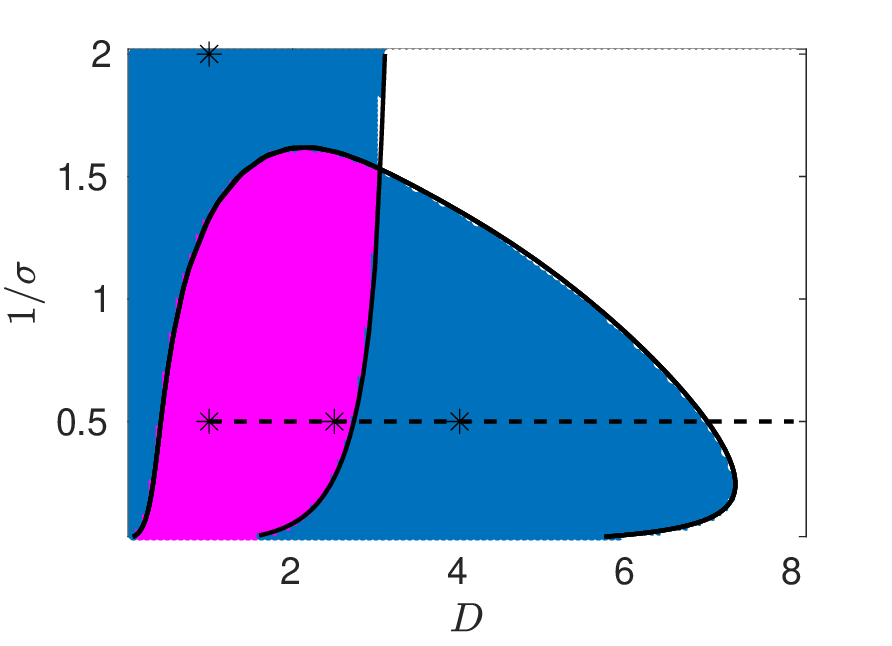}
        \caption{$d_{11}=0.4$ and $d_{21}=0.5$}
        \label{fig:two_hopf_scatt_e}
      \end{subfigure}
        \begin{subfigure}[b]{0.32\textwidth}  
         \includegraphics[width=\textwidth,height=4.4cm]{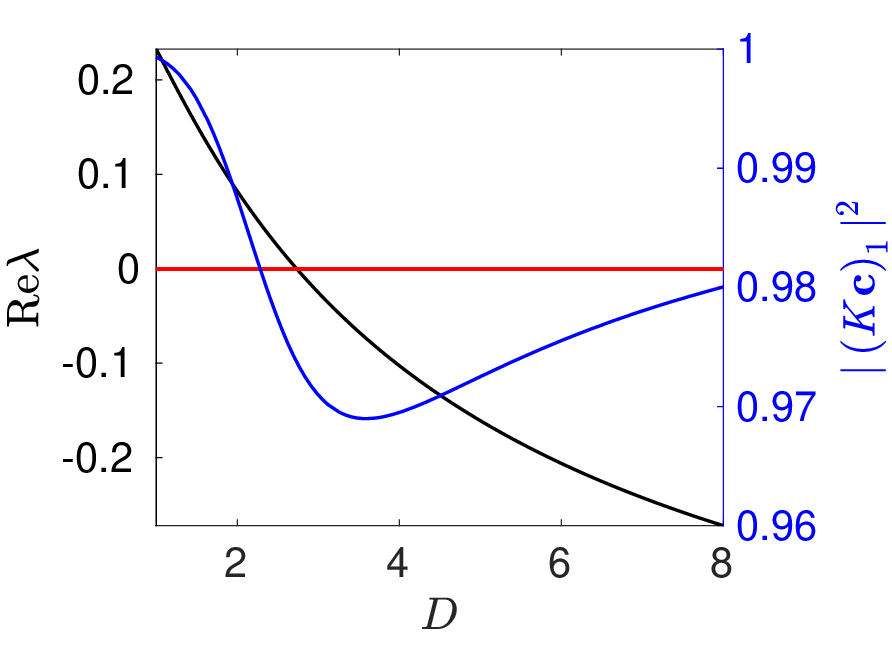}
        \caption{$\mbox{Re}(\lambda)$ for one mode}
        \label{fig:two_hopf_f_mode1}
      \end{subfigure}
      \begin{subfigure}[b]{0.32\textwidth}  
         \includegraphics[width=\textwidth,height=4.4cm]{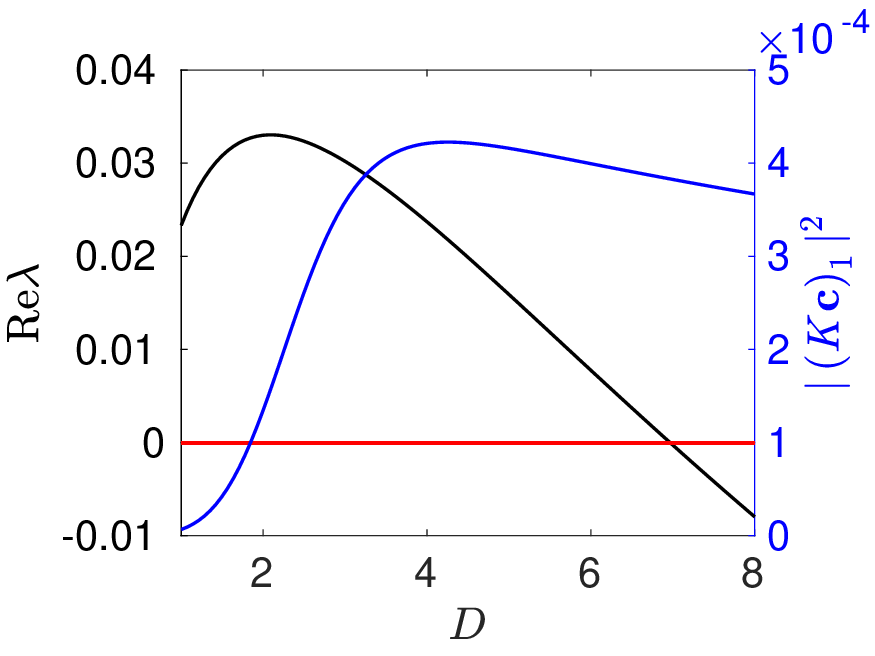}
        \caption{$\mbox{Re}(\lambda)$ for second mode}
        \label{fig:two_hopf_f_mode2}
      \end{subfigure}
      \caption{{Left: Scatter plot after decreasing the influx
          parameter to $d_{11}=0.4$ and increasing the efflux
          parameter to $d_{21}=0.5$ for the cell centered at $\x_1$
          with $\alpha_1=0.4$. Remaining parameter are as in
          Fig.~\ref{fig:twocell_hopf_unstable}. The number
          ${\mathcal Z}$ of destabilizing eigenvalues for the linearization
          of the steady-state is: ${\mathcal Z}=0$ (white),
          ${\mathcal Z}=2$ (blue) and ${\mathcal Z}=4$ (magenta).
          Middle: Left vertical axis is $\mbox{Re}(\lambda)$ along the
          dotted path in the left panel for the mode where
          oscillations have larger amplitude in cell 1. The right
          vertical axis is $|({\mathit K}\v{c})_1|^2$ (see
          (\ref{nstabform:c2})) that measures the relative amplitude
          of the oscillations near the steady-state for cell 1. Here
          we normalized $\sum_{j=1}^{2}|({\mathit
            K}\v{c})_j|^2=1$. Right: Same as middle plot but now for
          the other dominant mode where oscillations are predicted to
          be much larger in cell 2 since
          $|({\mathit K}\v{c})_1|^2={\mathcal O}(10^{-4})\ll 1$. The red
          horizontal lines denote the stability threshold
          $\mbox{Re}(\lambda)=0$.}}
\label{fig:twocell_hopf_f}
\end{figure}

{With the algorithm in \S \ref{sec:all_march} using $\Delta t=0.005$,
  and with uniformly random initial conditions near the steady-state
  values, in Fig.~\ref{fig:twocell_unstable_f_dyn} we show $u_{2j}$
  versus $t$ at the three star-labeled points with increasing $D$ along
  the parameter path in Fig.~\ref{fig:two_hopf_scatt_e}. As $D$
  increases, the amplitude of the intracellular oscillations in the
  deactivated cell 1 decreases as predicted by
  Figs.~\ref{fig:two_hopf_f_mode1}-\ref{fig:two_hopf_f_mode2}. However,
  when $D=1$ and $\sigma={1/2}$, the deactivated cell will have much
  larger amplitude oscillations than for the other cell centered at
  $\v{x}_2$.}

\begin{figure}[htbp]
  \centering
  \begin{subfigure}[b]{0.46\textwidth}  
     \includegraphics[width=\textwidth,height=4.4cm]{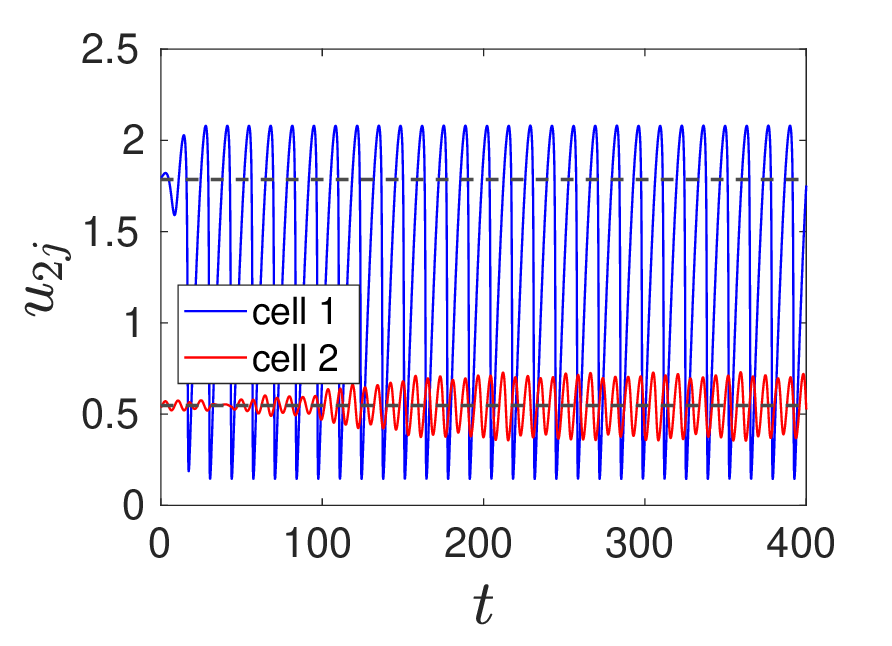}
  \caption{$D=1$, $\sigma=2$.} 
        \label{fig:unstable_center_f}
      \end{subfigure}
  \begin{subfigure}[b]{0.46\textwidth}  
     \includegraphics[width=\textwidth,height=4.4cm]{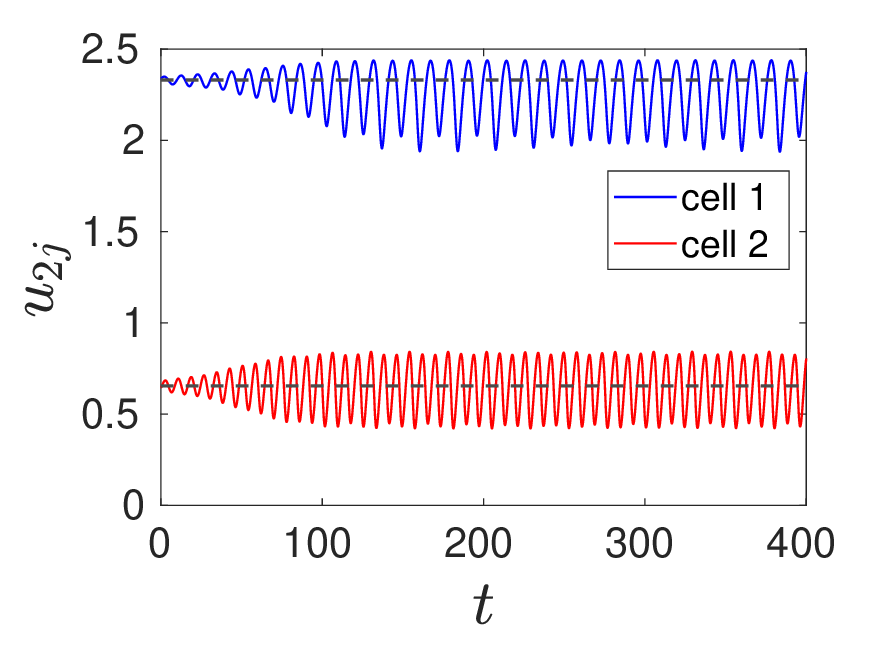}
  \caption{$D=2.5$, $\sigma=2$.} 
        \label{fig:unstable_middle_f}
      \end{subfigure}
  \begin{subfigure}[b]{0.46\textwidth}  
     \includegraphics[width=\textwidth,height=4.4cm]{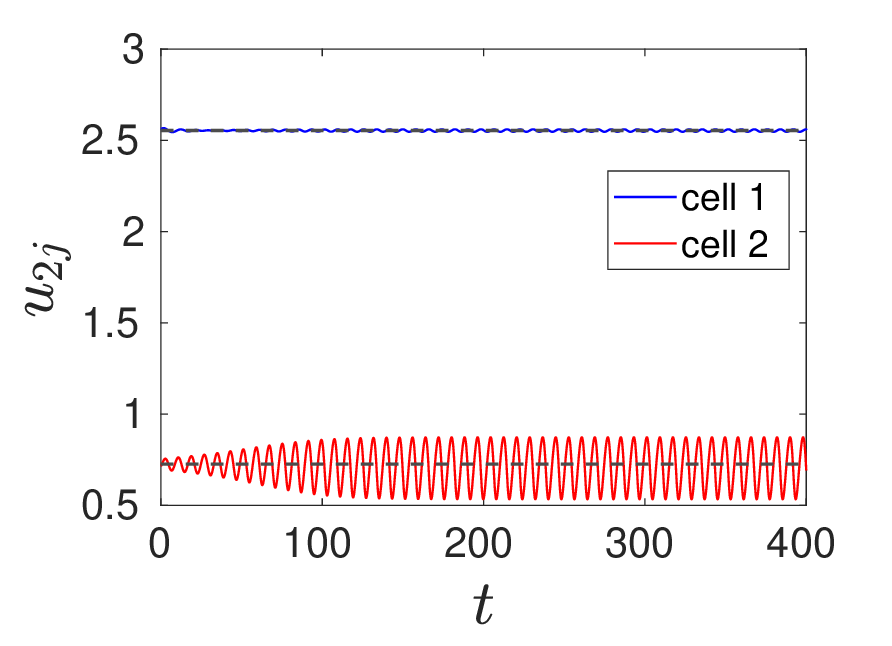}
  \caption{$D=4$, $\sigma=2$.}
  \label{fig:unstable_right_f}
\end{subfigure}
  \begin{subfigure}[b]{0.46\textwidth}  
     \includegraphics[width=\textwidth,height=4.4cm]{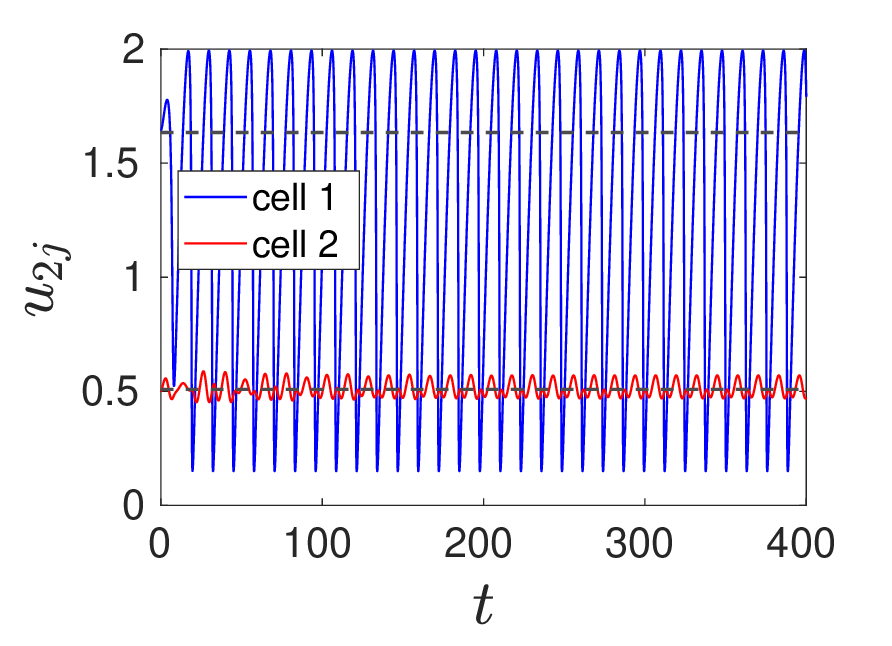}
  \caption{$D=1$, $\sigma={1/2}$.}
  \label{fig:unstable_top_f}
      \end{subfigure}
      \caption{{Top left, top right, bottom left: intracellular
          dynamics $u_{2j}(t)$ at the three star-labeled points along the path
          in Fig.~\ref{fig:two_hopf_scatt_e}. As $D$ increases, the
          amplitude of oscillations in the deactivated cell 1
          decreases. Bottom right: $u_{2j}(t)$ when $D=1$ and
          $\sigma={1/2}$ (see Fig.~\ref{fig:two_hopf_scatt_e}), where
          large oscillations in the deactivated cell occur.  Dashed
          horizontal lines are the steady-states.
          Parameters: $\alpha_1=0.4$, $\alpha_2=0.9$, $d_{11}=0.4$,
          $d_{21}=0.5$, $d_{12}=1.5$, $d_{22}=0.2$.}}
\label{fig:twocell_unstable_f_dyn}
\end{figure}

\subsection{A ring configuration of cells}\label{cell:fourcell}

{Next, we consider the four-cell configuration shown in
  Fig.~\ref{fig:fourcell}  where there are
  two pairs of cells with common kinetic parameters and
  permeabilities. For this case, and where $a$, $b$, $h$, and $d$
  can be identified from (\ref{5:gcep_2}), the GCEP matrix
  ${\mathcal M}$ in (\ref{5:gcep_2}) has the form
\begin{equation}\label{four:mat}
  {\mathcal M} = \begin{pmatrix}
                   a & h & d & h \\
                   h & b & h & d \\
                   d & h & a & h \\
                   h & d & h & b
                \end{pmatrix}\,.
\end{equation}
Omitting the details of the derivation, the
matrix spectrum of ${\mathcal M}$ is readily obtained as follows:

\begin{lemma} For the matrix in (\ref{four:mat}), we have
  \begin{equation}\label{four:mat_det}
   \det{\mathcal M} = (a-d)(b-d)\left[ (a+d)(b+d)-4h^2\right]\,.
\end{equation}
The matrix spectrum for ${\mathcal M}\v{c}=\chi\v{c}$ is
\begin{subequations}\label{four:mat_eig}
\begin{equation}\label{four:mat_eig_a}
  \begin{split}
    \v{c}_1&=(1,0,-1,0)^{T}\,, \quad \chi_1=a-d \,; \qquad
    \v{c}_2 =(0,1,0,-1)^{T} \,,\quad \chi_2=b-d \,, \\    
    \v{c}_{\pm}&=(1,f_{\pm},1,f_{\pm})^{T} \,, \quad \chi_{\pm}=
    (a+d)+2h f_{\pm}\,; \quad   f_{\pm}\equiv
    \frac{(b-a)}{4h} \pm \sqrt{ \frac{(b-a)^2}{(4h)^2}
      + 1 } \,.
  \end{split}
\end{equation}
\end{subequations}
\end{lemma}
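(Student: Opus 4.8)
The plan is to exploit the permutation symmetry of $\mathcal{M}$ in (\ref{four:mat}). First I would observe that the permutation $P$ swapping the indices $1\leftrightarrow 3$ and $2\leftrightarrow 4$ leaves $\mathcal{M}$ invariant, i.e.\ $P\mathcal{M}P^{-1}=\mathcal{M}$, since this swap exchanges the first row (and column) with the third, and the second with the fourth, while mapping each entry of $\mathcal{M}$ onto an equal entry. As $P^2=I$, its eigenvalues are $\pm 1$ and the associated eigenspaces are each two-dimensional: the symmetric subspace $\mathcal{S}_+$ spanned by $(1,0,1,0)^{T}$ and $(0,1,0,1)^{T}$, and the antisymmetric subspace $\mathcal{S}_-$ spanned by $(1,0,-1,0)^{T}$ and $(0,1,0,-1)^{T}$. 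Because $\mathcal{M}$ commutes with $P$, it preserves $\mathcal{S}_\pm$ and block-diagonalizes with respect to this orthogonal splitting.

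Next I would restrict $\mathcal{M}$ to each invariant subspace. Acting with $\mathcal{M}$ on a vector $(c_1,c_2,-c_1,-c_2)^{T}\in\mathcal{S}_-$ shows, after the $d$-terms cancel against the diagonal, that $\mathcal{M}$ reduces to the diagonal matrix $\mathrm{diag}(a-d,\,b-d)$ on $\mathcal{S}_-$; this immediately gives the eigenpairs $\v{c}_1=(1,0,-1,0)^{T}$ with $\chi_1=a-d$ and $\v{c}_2=(0,1,0,-1)^{T}$ with $\chi_2=b-d$ of (\ref{four:mat_eig_a}). Acting on $(c_1,c_2,c_1,c_2)^{T}\in\mathcal{S}_+$ shows that $\mathcal{M}$ reduces there to the $2\times 2$ matrix $\begin{pmatrix} a+d & 2h \\ 2h & b+d\end{pmatrix}$. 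Seeking an eigenvector of the form $(1,f,1,f)^{T}$ and eliminating $\chi=(a+d)+2hf$ between the two scalar equations leads to the quadratic $f^2+\tfrac{a-b}{2h}\,f-1=0$, whose two roots are exactly $f_\pm$ as stated in (\ref{four:mat_eig_a}); the corresponding eigenvalues are then $\chi_\pm=(a+d)+2h f_\pm$.

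Finally, for the determinant I would use that $\det\mathcal{M}$ equals the product of all four eigenvalues, which factors across the two invariant subspaces as $\det\mathcal{M}=(\chi_1\chi_2)(\chi_+\chi_-)$. The first factor is $(a-d)(b-d)$, while $\chi_+\chi_-$ is the determinant of the $2\times 2$ block of $\mathcal{M}$ on $\mathcal{S}_+$, namely $(a+d)(b+d)-4h^2$, which together yield (\ref{four:mat_det}). There is no genuine obstacle here; the only mildly delicate step is confirming that the eigenvector ansatz on $\mathcal{S}_+$ reproduces precisely the stated closed form for $f_\pm$, which is just the quadratic-formula solution of the displayed equation. An equivalent route that avoids invoking the symmetry abstractly would be to apply the explicit row/column operations $R_1\pm R_3$ and $R_2\pm R_4$ (the change of basis to $\mathcal{S}_\pm$ coordinates) directly to $\mathcal{M}$, bringing it to block-diagonal form, from which both the spectrum and the factored determinant follow at once.
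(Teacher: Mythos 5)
Your proof is correct and complete. Note that the paper itself omits the derivation entirely (``Omitting the details of the derivation, the matrix spectrum of ${\mathcal M}$ is readily obtained as follows''), so there is no in-text argument to compare against; your route — observing that the permutation $(1\,3)(2\,4)$ commutes with ${\mathcal M}$ in (\ref{four:mat}), splitting $\mathbb{C}^4$ into the symmetric and antisymmetric subspaces, reading off $\chi_1=a-d$ and $\chi_2=b-d$ from the diagonal action on ${\mathcal S}_-$, and reducing ${\mathcal S}_+$ to the $2\times 2$ block $\bigl(\begin{smallmatrix} a+d & 2h \\ 2h & b+d\end{smallmatrix}\bigr)$ whose eigenvector ansatz $(1,f,1,f)^T$ yields the quadratic $f^2+\tfrac{a-b}{2h}f-1=0$ and hence $f_{\pm}$ — is the natural one and is exactly parallel to the ansatz the authors do spell out for the analogous centered-ring case in Lemma \ref{lem:ring_center} (cf.\ (\ref{spec:two})). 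Your closing remark that the determinant factorization (\ref{four:mat_det}) follows from the block-diagonalization itself, independently of diagonalizability, is a nice touch that makes the argument airtight; the only implicit hypothesis, shared with the statement of the lemma via the $4h$ in the denominator of $f_{\pm}$, is $h\neq 0$.
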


To determine the eigenvalues $\lambda$ of the linearization of the
steady-state, which satisfy $\det{\mathcal M}(\lambda)=0$, we need
only find the union of the roots of the scalar root-finding problems
$a=d$, $b=d$ and $4h^2=(a+d)(b+d)$. The HB boundaries in the
${1/\sigma}$ versus $D$ plane are obtained by setting
$\lambda=i\lambda_I$. For identical cells where $a=b$, the modes
$\v{c}_1$ and $\v{c}_2$ are degenerate since $a=b$. For identical
cells, where $f_{\pm}=\pm 1$, the in-phase and anti-phase modes are
$\v{c}_{+}=(1,1,1,1)^T$ and $\v{c}_{-}=(1,-1,1,-1)^T$, respectively.

For identical cells with $d_{1j}=0.4$, $d_{2j}=0.2$ and $\alpha_j=0.9$
for $j\in\lbrace{1,\ldots,4\rbrace}$, and with a ring radius $r_c=1$, the
scatter plot is shown in Fig.~\ref{fig:fourcell_scatt_ident}. In
Fig.~\ref{fig:fourcell_ident_eig} we plot the real parts of the three
dominant eigenvalues of $\det{\mathcal M}(\lambda)=0$ along the
horizontal path shown in Fig.~\ref{fig:fourcell_scatt_ident}. In the
blue-shaded region, only the in-phase mode is destabilzing, while in the
green-shaded region all modes are destabilizing. Moreover, for $D=0.4$, all
the destabilizing modes have comparable growth rates. For this parameter
set, Fig.~\ref{fig:selkov_efflux} shows that no intracellular
oscillations would occur when there is no coupling to the bulk.}

\begin{figure}[htbp]
  \centering
  \begin{subfigure}[b]{0.48\textwidth}  
     \includegraphics[width=\textwidth,height=4.4cm]{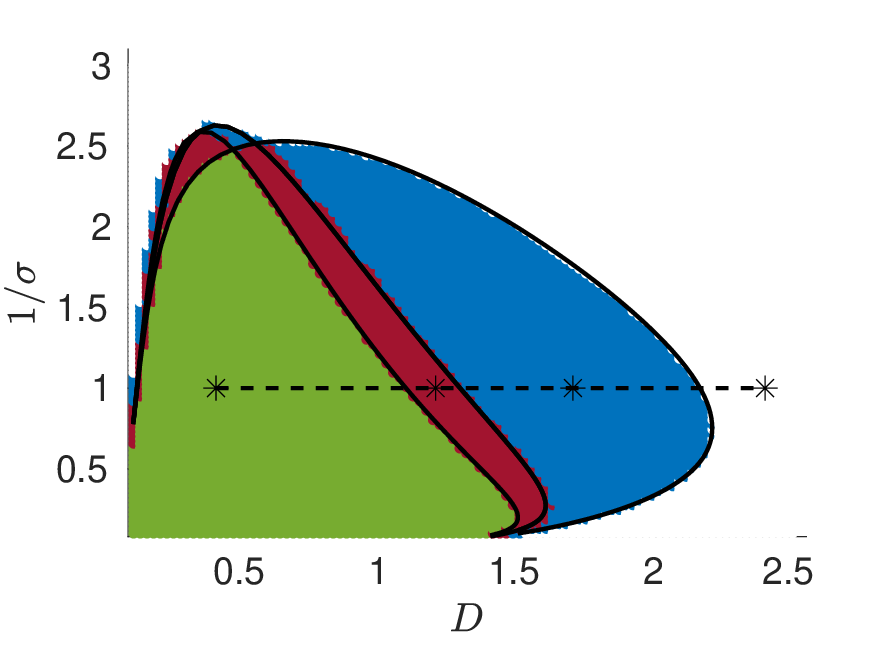}
        \caption{Four identical cells on a ring}
        \label{fig:fourcell_scatt_ident}
      \end{subfigure}
        \begin{subfigure}[b]{0.48\textwidth}  
           \includegraphics[width=\textwidth,height=4.4cm]{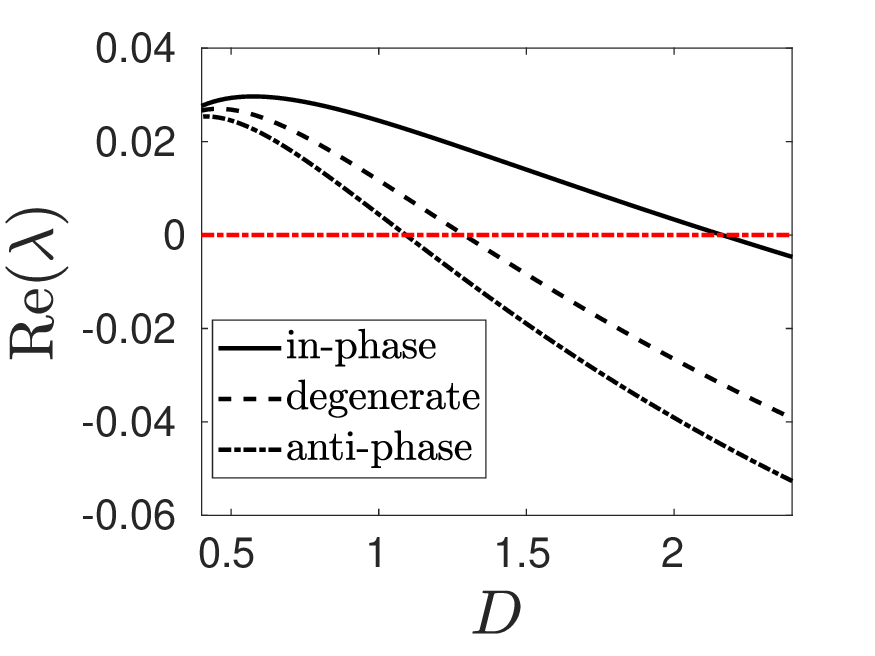}
        \caption{$\mbox{Re}(\lambda)$ on horizontal slice}
        \label{fig:fourcell_ident_eig}
      \end{subfigure}
      \caption{{Left: Scatter plot of the number ${\mathcal Z}$ of
          destabilizing eigenvalues, with ${\mathcal Z}=0$ (white),
          ${\mathcal Z}=2$ (blue), ${\mathcal Z}=6$ (rust) and
          ${\mathcal Z}=8$ (green) for the linearization of the
          steady-state for four identical cells with a ring radius
          $r_c=1$ (see Fig.~\ref{fig:fourcell}), with $d_{1j}=0.4$,
          $d_{2j}=0.2$ and $\alpha_j=0.9$ for
          $j\in\lbrace{1\,\ldots,4\rbrace}$. The HB boundaries (solid
          black curves) are superimposed. Right: Real part of the
          dominant eigenvalues of $\det{\mathcal M}(\lambda)=0$ along
          the horizontal path with $\sigma=1$ in the left
          panel. Owing to the circulant matrix structure of
          ${\mathcal M}(\lambda)$ there is mode degeneracy. In the
          blue region only the in-phase mode is destabilizing. For $D=0.4$,
          the three destabilizing modes have similar growth rates.}}
\label{fig:fourcell_hopf_identical}
\end{figure}

{In Fig.~\ref{fig:fourcell_iden_dyn} we plot $u_{2j}$ versus $t$ at
  the star-labeled points shown along the parameter path in
  Fig.~\ref{fig:fourcell_scatt_ident} as computed using the algorithm
  of \S \ref{sec:all_march} with $\Delta t=0.005$. The choice of
  initial conditions imposed near the steady-state values are
  indicated in the figure subcaptions. As predicted by the large
  growth rate of the in-phase mode in
  Fig.~\ref{fig:fourcell_ident_eig}, the intracellular oscillations
  become synchronous as time increases for $D=2.4$, $D=1.7$ and
  $D=1.2$. This is confirmed by the top row in
  Fig.~\ref{fig:fourcell_iden_dyn}.  Furthermore, for $D=0.4$, where
  all the destabilizing modes have comparable growth rates, the bottom
  row of Fig.~\ref{fig:fourcell_iden_dyn} shows, as predicted, that
  the long-term dynamics depends on the precise form of the initial
  conditions imposed.}

\begin{figure}[htbp]
  \centering
  \begin{subfigure}[b]{0.32\textwidth}  
     \includegraphics[width=\textwidth,height=4.4cm]{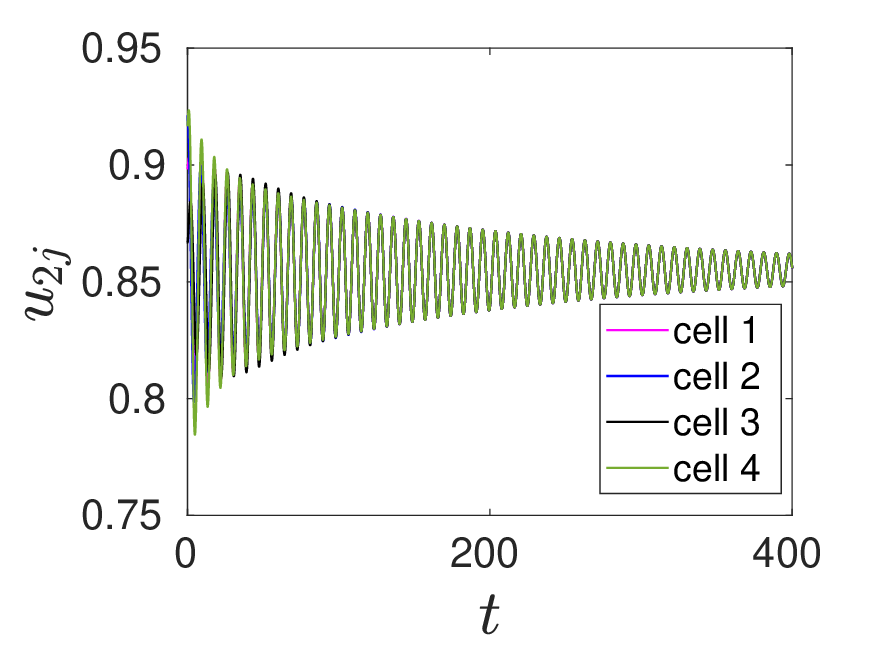}
  \caption{$D=2.4$ (random)} 
        \label{fig:iden:D2.4_arb}
  \end{subfigure}
  \begin{subfigure}[b]{0.32\textwidth}  
     \includegraphics[width=\textwidth,height=4.4cm]{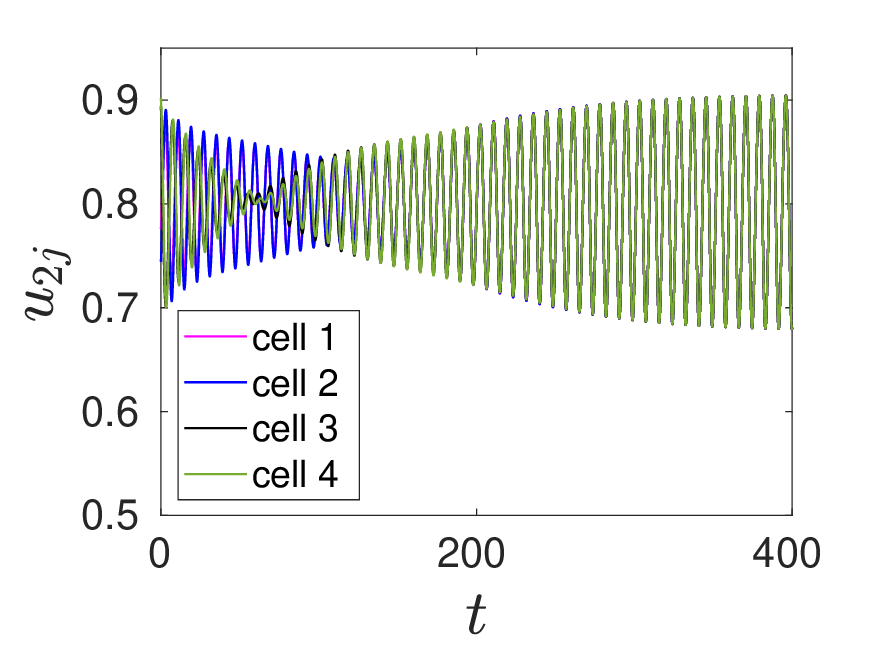}
  \caption{$D=1.7$ (random)} 
      \label{fig:iden:D1.7_arb}
  \end{subfigure}
  \begin{subfigure}[b]{0.32\textwidth}  
     \includegraphics[width=\textwidth,height=4.4cm]{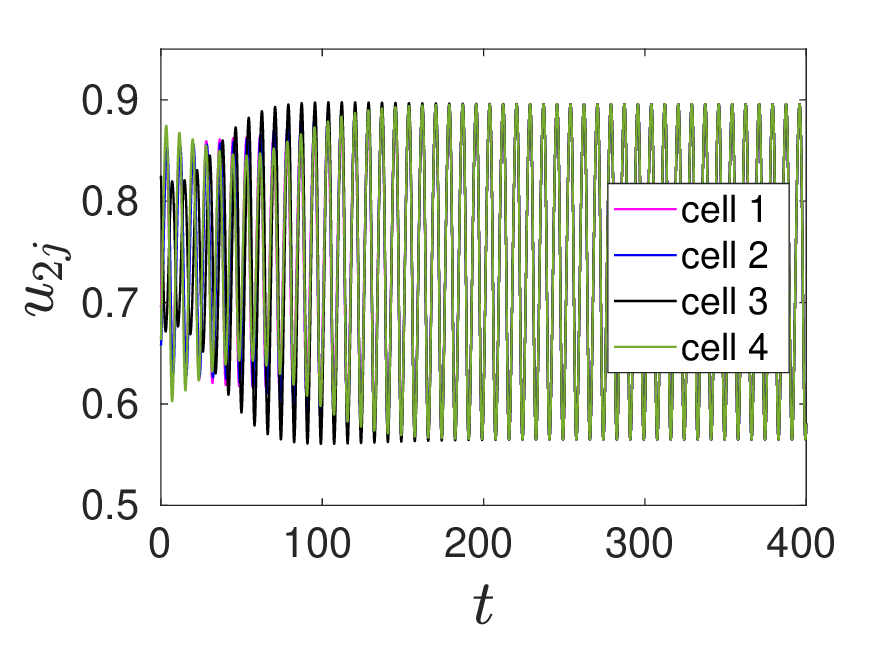}
  \caption{$D=1.2$ (random)}
  \label{fig:iden:D1.2_arb}
\end{subfigure}
\begin{subfigure}[b]{0.32\textwidth}  
   \includegraphics[width=\textwidth,height=4.4cm]{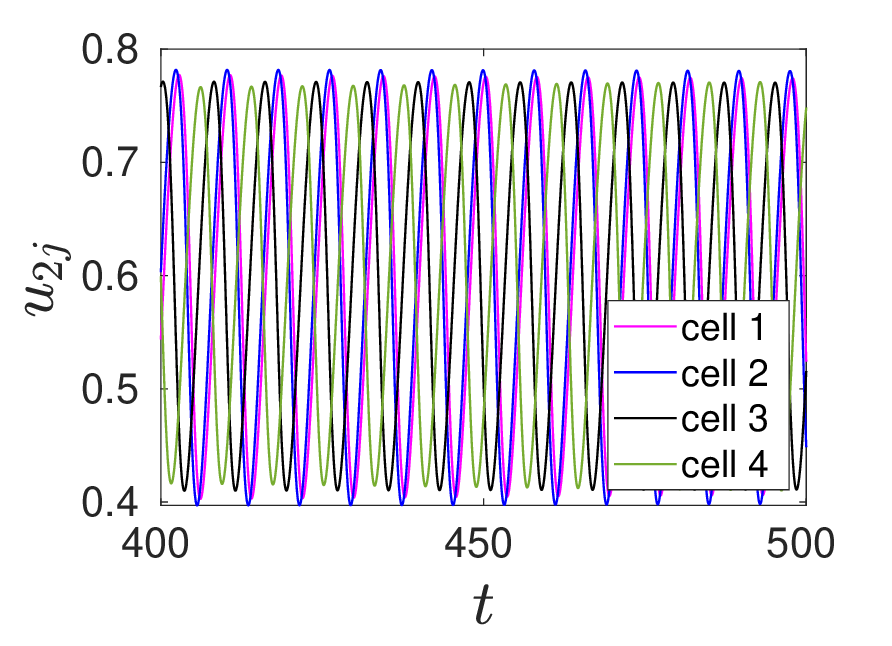}
  \caption{$D=0.4$ (random)} 
        \label{fig:iden:D0.4_arb}
      \end{subfigure}
  \begin{subfigure}[b]{0.32\textwidth}  
     \includegraphics[width=\textwidth,height=4.4cm]{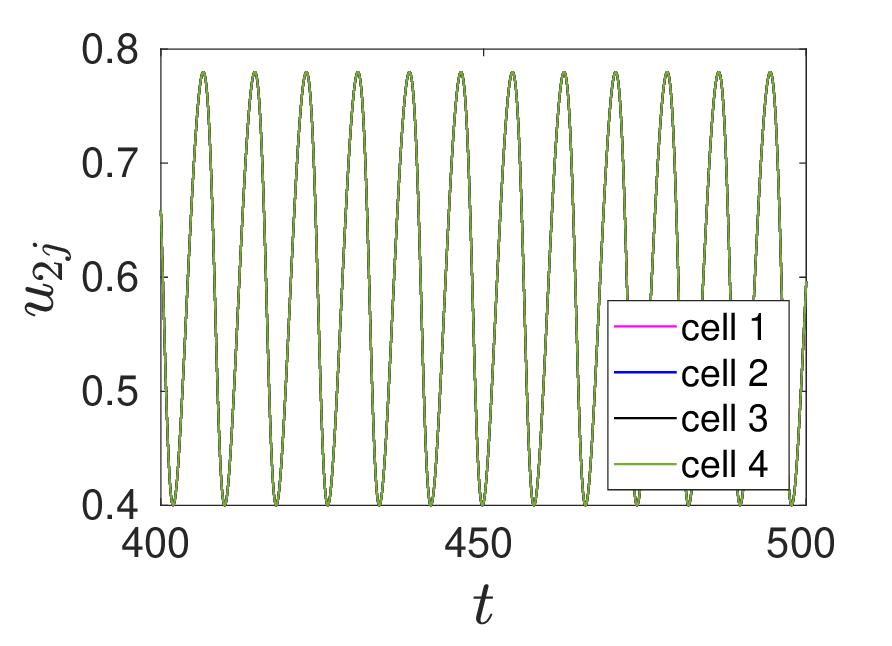}
  \caption{$D=0.4$ (in-phase)} 
        \label{fig:iden:D0.4_sync}
      \end{subfigure}
  \begin{subfigure}[b]{0.32\textwidth}  
     \includegraphics[width=\textwidth,height=4.4cm]{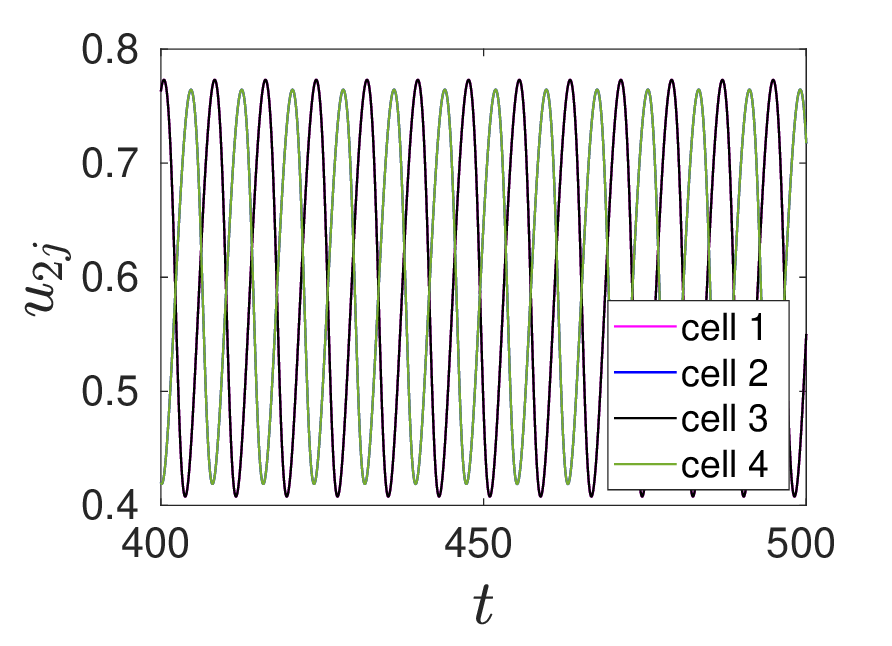}
  \caption{$D=0.4$ (anti-phase)}
  \label{fig:iden:D0.4_anti}
      \end{subfigure}
      \caption{{Intracellular dynamics $u_{2j}(t)$ at the star-labeled
          points along the horizontal slice with $\sigma=1$ in
          Fig.~\ref{fig:fourcell_scatt_ident} computed using the
          algorithm in \S \ref{sec:all_march} with $\Delta t=0.005$.
          The type of initial condition (IC) is indicated in the
          captions: Random: perturbing the steady-state values for
          $u_{1j}$ and $u_{2j}$ by $0.1 \cdot U$, where $U$ is uniformly
          distributed on $[-1,1]$. In-phase: perturbing steady-states
          by $0.1\cdot(1,1,1,1)^T$.  Anti-phase: perturbing steady-states
          by $0.1\cdot(1,-1,1,-1)^T$. Top left: For $D=2.4$, random IC lead
          to synchronous oscillations that slowly decay to the
          steady-state limit. Top middle and top right: For $D=1.7$
          and $D=1.2$, random IC lead to persistent in-phase
          oscillations.  Bottom row: For $D=0.4$, the choice of IC
          imposed leads to distinctly different long-time dynamics as
          suggested by Fig.~\ref{fig:fourcell_ident_eig}. Parameters:
          Identical cells with $\alpha_j=0.9$, $d_{1j}=0.4$ and
          $d_{2j}=0.2$ for $j\in\lbrace{1,\ldots,4\rbrace}$.}}
\label{fig:fourcell_iden_dyn}
\end{figure}

{Next, we consider two pairs of identical cells with
  $\alpha_2=\alpha_4=0.5$, $\alpha_1=\alpha_3=0.9$, and with the same
  permeabilities $d_{1j}=0.4$ and $d_{2j}=0.2$ for
  $j\in\lbrace{1,\ldots,4\rbrace}$ as in
  Fig.~\ref{fig:fourcell_scatt_ident}.  For this parameter set, from
  Fig.~\ref{fig:selkov_efflux} we observe that cells 2 and 4 would
  exhibit limit-cycle oscillations when uncoupled from the bulk. We
  refer to these two cells as signaling cells. In
  Fig.~\ref{fig:fourcell_scatt_a} we show the scatter plot, while in
  Fig.~\ref{fig:fourcell_a_eig} we plot the real parts of the four
  dominant eigenvalues of $\det{\mathcal M}(\lambda)=0$ along the
  parameter path shown in Fig.~\ref{fig:fourcell_scatt_a}. In the
  magenta-shaded region the dominant instability is where the
  signaling cells are activated. For $D=0.4$, there are four destabilizing
  modes.  In Fig.~\ref{fig:fourcell_a_dyn} we plot $u_{2j}$ versus $t$
  at the two star-labeled points along the parameter path in
  Fig.~\ref{fig:fourcell_scatt_a} as computed using the algorithm of
  \S \ref{sec:all_march} with $\Delta t=0.005$.  For $D=1.75$, the
  non-signaling cells 1 and 3 have in-phase small amplitude
  oscillations. However, for $D=0.5$, as a result of the multiple
  destabilizing modes shown in Fig.~\ref{fig:fourcell_a_eig}, the type of
  oscillation that occurs in the non-signaling cells depends on the
  choice of initial condition near the steady-state that is imposed
  (see Fig.~\ref{fig:a:D0.5_arb}--\ref{fig:a:D0.5_anti}).}

\begin{figure}[htbp]
  \centering
      \begin{subfigure}[b]{0.48\textwidth}  
         \includegraphics[width=\textwidth,height=4.4cm]{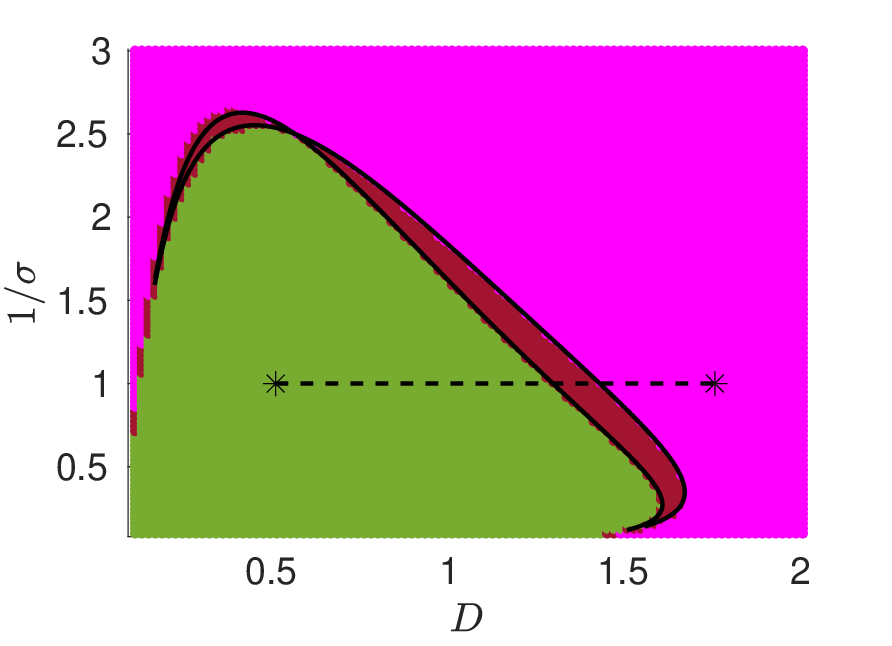}
        \caption{$\alpha_2=\alpha_4=0.5$}
        \label{fig:fourcell_scatt_a}
      \end{subfigure}
        \begin{subfigure}[b]{0.48\textwidth}  
          \includegraphics[width=\textwidth,height=4.4cm]{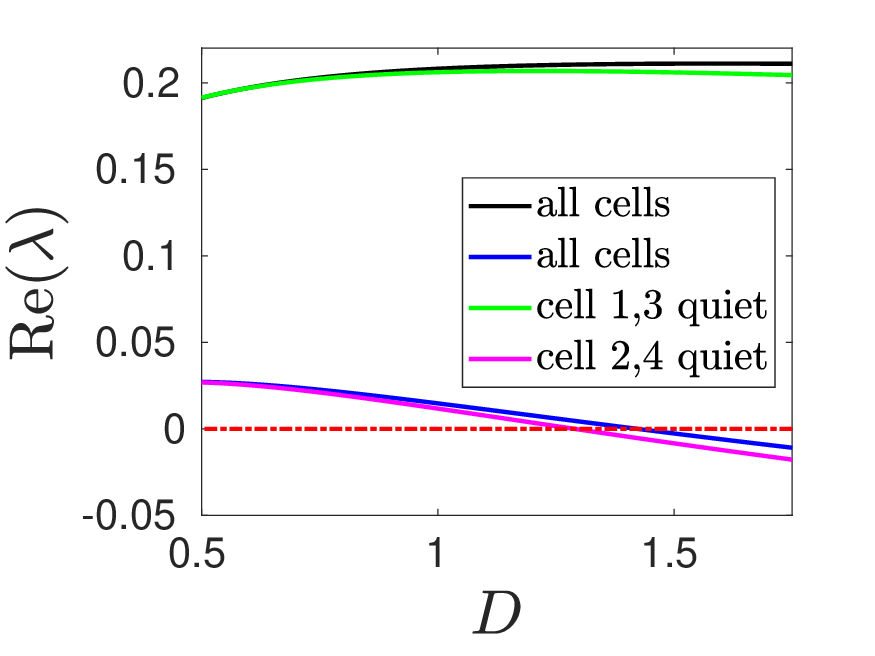}
        \caption{$\mbox{Re}(\lambda)$ on horizontal slice}
        \label{fig:fourcell_a_eig}
      \end{subfigure}
      \caption{{Left: Scatter plot of the number ${\mathcal Z}$ of
          destabilizing eigenvalues, with ${\mathcal Z}=4$ (magenta),
          ${\mathcal Z}=6$ (rust) and ${\mathcal Z}=8$ (green) for the
          linearization of the steady-state for two pairs of identical
          cells with $r_c=1$ where $\alpha_{2}=\alpha_{4}=0.5$ and
          $\alpha_1=\alpha_3=0.9$. The HB boundaries are superimposed.
          Same permeabilities $d_{1j}$ and $d_{2j}$ as in
          Fig.~\ref{fig:fourcell_scatt_ident}. Cells 2 and 4 admit
          limit cycle oscillations when uncoupled from the bulk.
          Right: Real part of the four dominant eigenvalues of
          $\det{\mathcal M}(\lambda)=0$ on the horizontal path
          with $\sigma=1$ in the left panel. The labels for the
          eigenmodes from (\ref{four:mat_eig_a}) are
          $\v{c}_1=(1,0,-1,0)^{T}$ (cells 2, 4 silent),
          $\v{c}_2 =(0,1,0,-1)^{T}$ (cells 1, 3 silent), and
          $\v{c}_{\pm}=(1,f_{\pm},1,f_{\pm})^{T}$ (all cells active). There
          are two destabilizing modes when $D=1.75$ and four when
          $D=0.5$.}}
\label{fig:fourcell_hopf_a}
\end{figure}

\begin{figure}[htbp]
  \centering
  \begin{subfigure}[b]{0.32\textwidth}  
     \includegraphics[width=\textwidth,height=4.4cm]{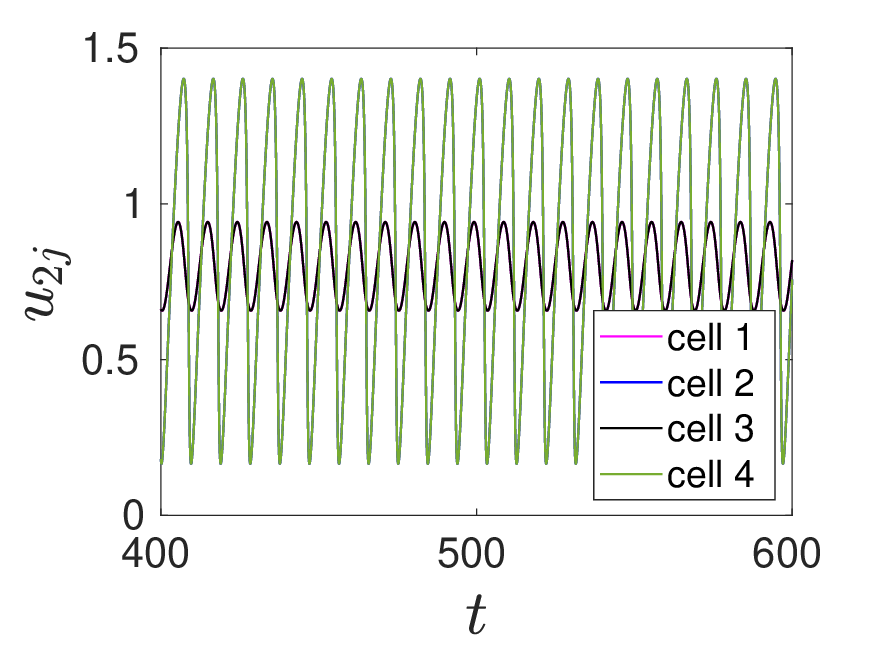}
  \caption{$D=1.75$ (random)} 
        \label{fig:a:D1.75_arb}
  \end{subfigure}
  \begin{subfigure}[b]{0.32\textwidth}  
     \includegraphics[width=\textwidth,height=4.4cm]{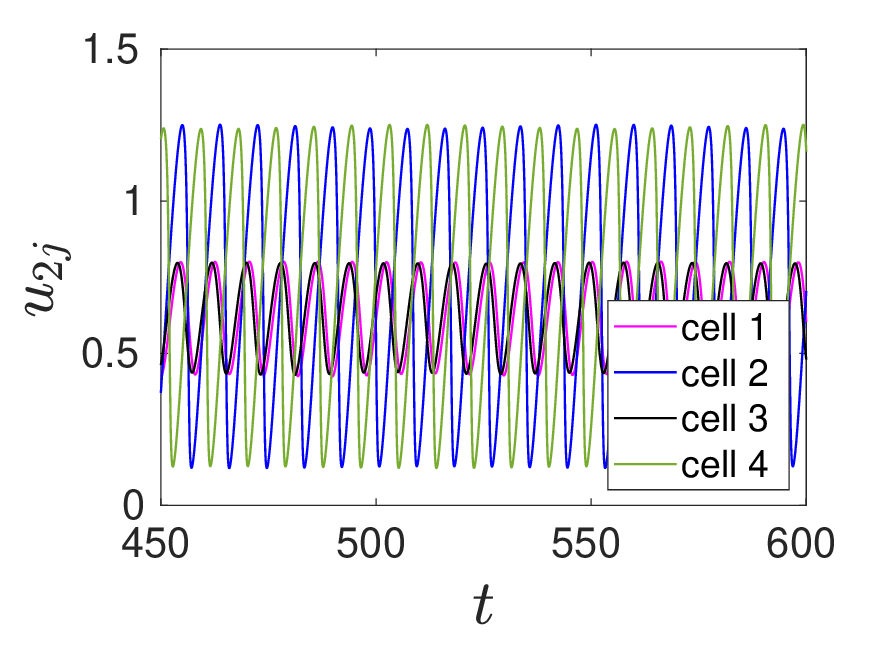}
  \caption{$D=0.5$ (random)} 
      \label{fig:a:D0.5_arb}
  \end{subfigure}
  \begin{subfigure}[b]{0.32\textwidth}  
     \includegraphics[width=\textwidth,height=4.4cm]{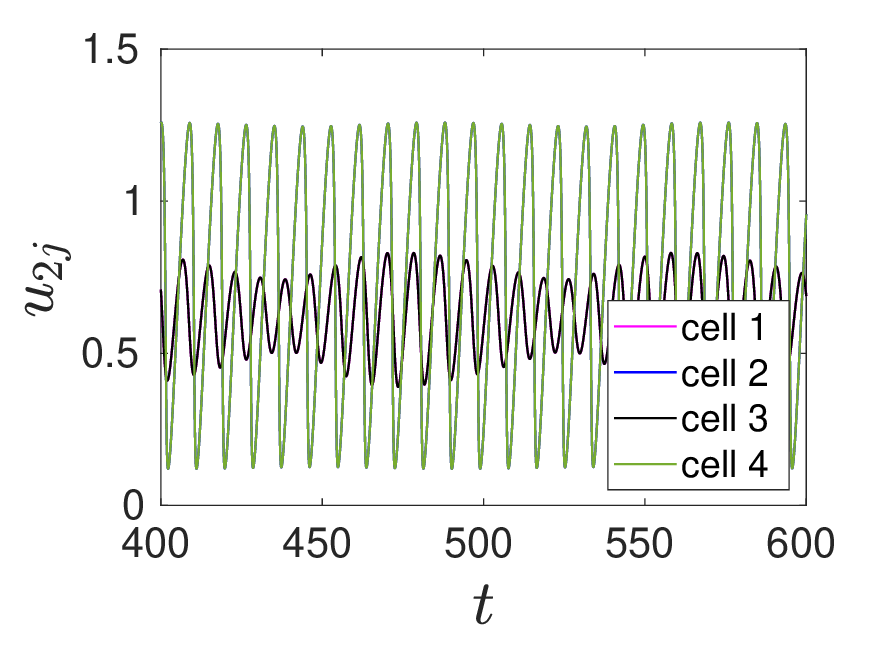}
  \caption{$D=0.5$ (anti-phase)}
  \label{fig:a:D0.5_anti}
\end{subfigure}
\caption{{Intracellular dynamics $u_{2j}(t)$ at the two star-labeled 
    points along the horizontal path with $\sigma=1$ in
    Fig.~\ref{fig:fourcell_scatt_a} computed using the algorithm in \S
    \ref{sec:all_march} with $\Delta t=0.005$.  The choice of initial
    condition (IC) is indicated (see caption of
    Fig.~\ref{fig:fourcell_a_dyn} for details). Left: For $D=1.75$
    cells 2 and 4 trigger small amplitude oscillations in cells 1 and
    3. Middle: for $D=0.5$ with a uniformly random IC, there is a
    phase shift in the oscillations of the signaling cells. Incoherent
    oscillations occur in the other cells. Right: For $D=0.5$ with
    anti-phase IC, the signaling cells oscillate in-phase. Cells 1
    and 3 exhibit mixed-mode type oscillations.}}
\label{fig:fourcell_a_dyn}
\end{figure}

{Finally, we explore the effect of increasing the efflux to
  $d_{22}=d_{24}=0.5$ for the signaling cells with
  $\alpha_2=\alpha_4=0.5$, while also increasing the influx to
  $d_{11}=d_{13}=2.0$ for the non-signaling cells where
  $\alpha_1=\alpha_3=0.9$. For this modified parameter set,
  Fig.~\ref{fig:selkov_efflux} shows that cells 2 and 4 no longer
  exhibit limit cycle oscillations when uncoupled from the bulk. As
  such, these two cells have now been deactivated.  In
  Fig.~\ref{fig:fourcell_scatt_b} we show how the scatter plot of
  Fig.~\ref{fig:fourcell_scatt_a} is modified, while in
  Fig.~\ref{fig:fourcell_b_eig} we plot the real parts of the four
  dominant eigenvalues of $\det{\mathcal M}(\lambda)=0$. In contrast
  to Fig.~\ref{fig:fourcell_a_eig}, we now observe that the dominant
  instability at the right-end of the horizontal path where
  $D=2$ and $\sigma=4$ in Fig.~\ref{fig:fourcell_scatt_b} is where the
  {deactivated cells 2 and 4 are now effectively quiet.}  In
  Fig.~\ref{fig:fourcell_b_dyn} we plot $u_{2j}$ versus $t$ at the
  four star-labeled points in Fig.~\ref{fig:fourcell_scatt_b}, as
  computed using the algorithm of \S \ref{sec:all_march} with
  $\Delta t=0.005$.  For $D=2$ and $\sigma=4$, in
  Fig.~\ref{fig:b:D2_sig4_arb} we observe that the non-signaling
  cells, cells 1 and 3 (which would not oscillate without cell-bulk
  coupling), have much larger amplitude oscillations than do the
  deactivated signaling cells, as is consistent with the prediction
  from Fig.~\ref{fig:fourcell_a_eig}. {Moreover, when both $D$ and
    $\sigma$ are smaller, i.e. $D=\sigma=1$,
    Fig.~\ref{fig:b:D1_sig1_arb} shows that the non-signaling cells
    have a much small amplitude oscillation in comparison with the
    deactivated signaling cells. To interpret this behavior, for
    smaller bulk diffusivity and with a lower degradation rate, the
    spatial gradient in (\ref{nstabform:c1}) that occurs in the
    vicinity of the deactivated cells can be large. Owing to the
    influx permeability, this large signaling gradient can effectively
    re-activate these cells when $D$ is small. We suggest that this
    behavior is related to the qualitative mechanism of
    \emph{diffusion-sensing} (cf.~\cite{DSQS}). By increasing $D$ to
    $D=1.65$, Fig.~\ref{fig:b:D1.65_sig1_arb} shows that both pairs of
    cells exhibit only small amplitude oscillations. Overall, the
    distinct time-dependent behaviors observed in
    Fig.~\ref{fig:fourcell_b_dyn} illustrate how the components of the
    eigenvector $\v{c}$ of the GCEP matrix encodes key predictive
    information on which cells will have larger oscillations near an
    unstable steady-state.}

\begin{figure}[htbp]
  \centering
        \begin{subfigure}[b]{0.48\textwidth}  
           \includegraphics[width=\textwidth,height=4.4cm]{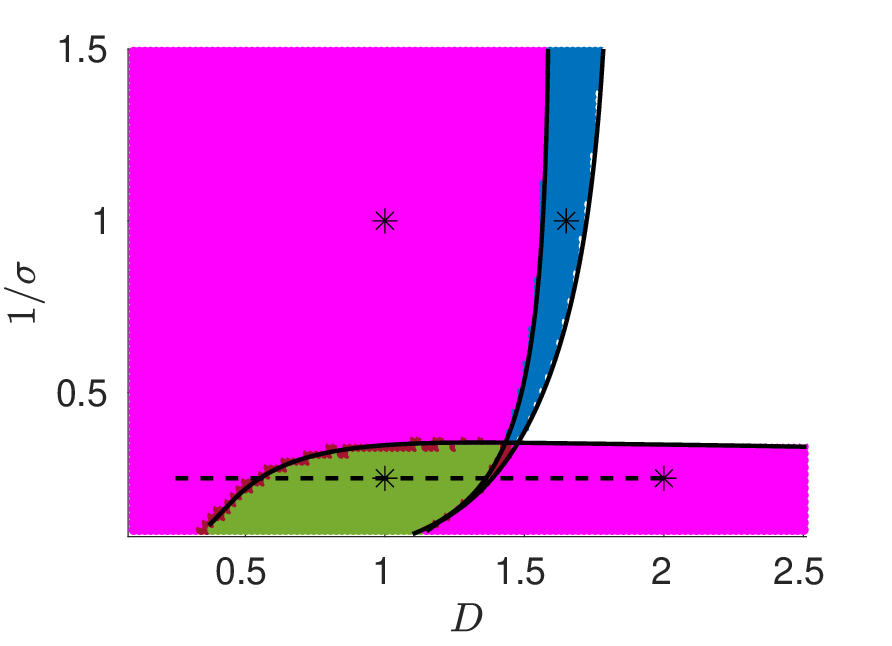}
        \caption{$d_{11}=d_{13}=2.0$, $d_{22}=d_{24}=0.5$}
        \label{fig:fourcell_scatt_b}
      \end{subfigure}
              \begin{subfigure}[b]{0.48\textwidth}  
            \includegraphics[width=\textwidth,height=4.4cm]{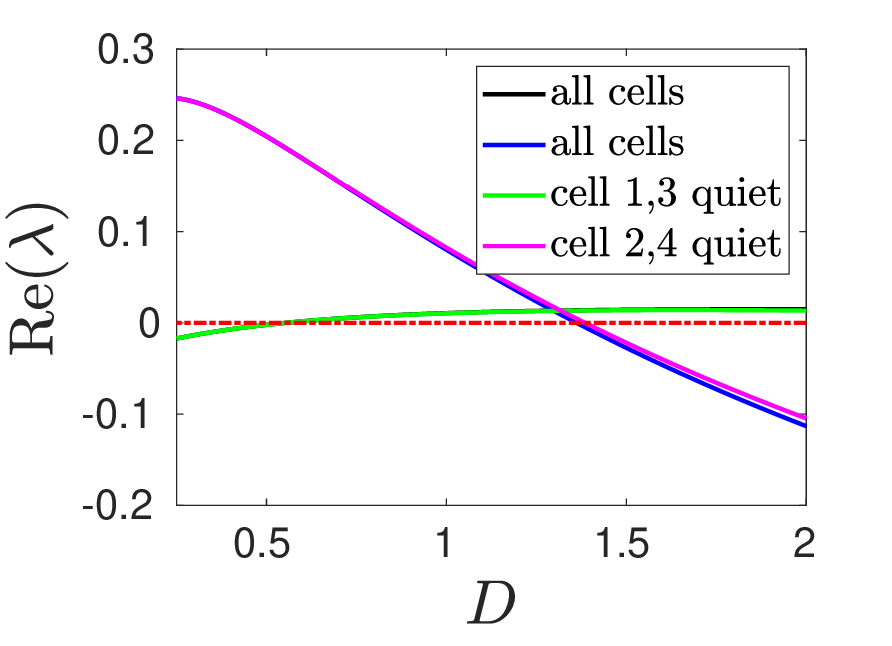}
        \caption{$\mbox{Re}(\lambda)$ on horizontal slice}
        \label{fig:fourcell_b_eig}
      \end{subfigure}
      \caption{{Left: Scatter plot with $r_c=1$,
          $\alpha_{2}=\alpha_{4}=0.5$ and $\alpha_1=\alpha_3=0.9$, but
          now $d_{11}=d_{13}=2.0$ and $d_{22}=d_{24}=0.5$. Remaining
          parameters are $d_{12}=d_{14}=0.4$ and
          $d_{21}=d_{23}=0.2$. The deactivated signaling cells 2 and
          4 now have an increased efflux and no longer admit limit
          cycle oscillations when uncoupled from the bulk. The influx
          permeability for the other two cells is much larger than in
          Fig.~\ref{fig:fourcell_scatt_a}. Right: Real part of the
          four dominant eigenvalues of $\det{\mathcal M}(\lambda)=0$
          on the horizontal path with $\sigma=0.25$ in the left
          panel.}}
\label{fig:fourcell_hopf_b}
\end{figure}

\begin{figure}[htbp]
  \centering
  \begin{subfigure}[b]{0.46\textwidth}  
     \includegraphics[width=\textwidth,height=4.4cm]{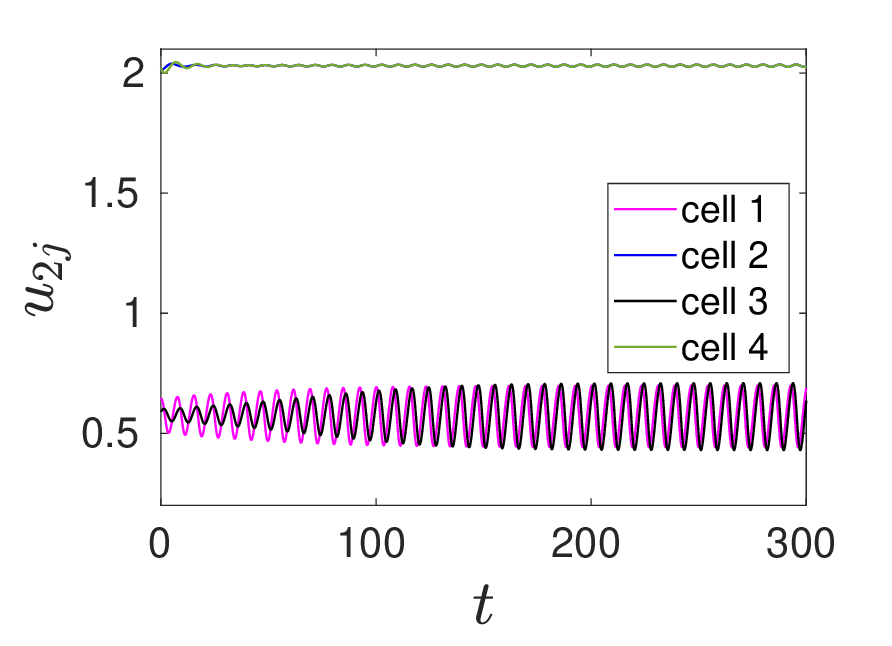}
  \caption{$D=2$, $\sigma=4$ (random)} 
        \label{fig:b:D2_sig4_arb}
      \end{subfigure}
  \begin{subfigure}[b]{0.46\textwidth}  
     \includegraphics[width=\textwidth,height=4.4cm]{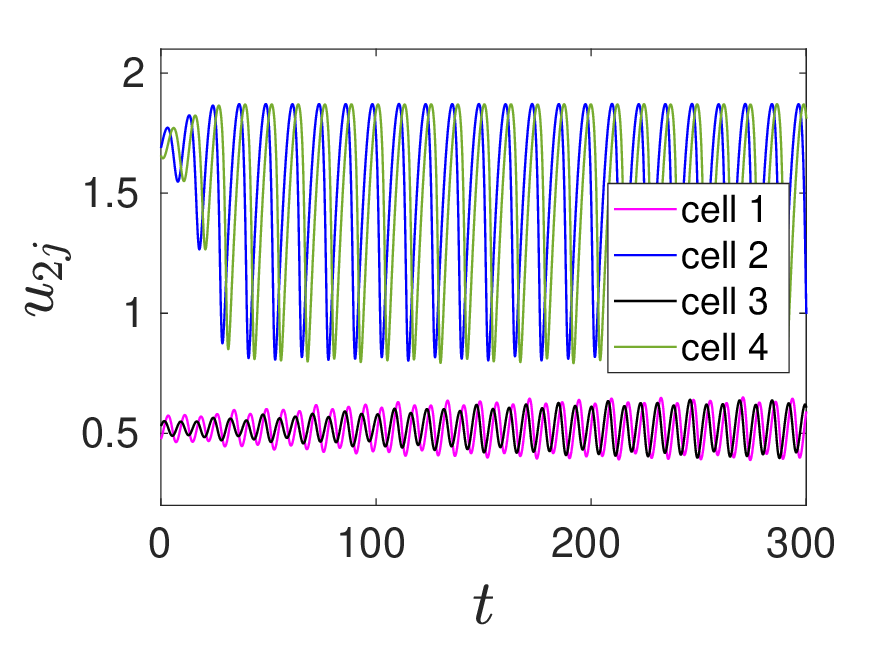}
  \caption{$D=1$, $\sigma=4$ (random)} 
        \label{fig:b:D1_sig4_arb}
      \end{subfigure}
  \begin{subfigure}[b]{0.46\textwidth}  
     \includegraphics[width=\textwidth,height=4.4cm]{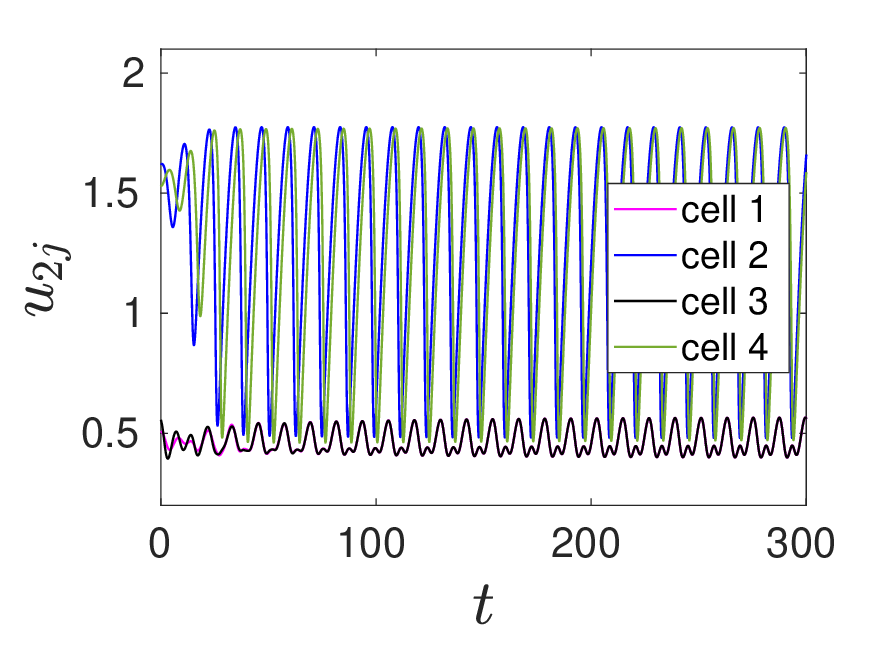}
  \caption{$D=1$, $\sigma=1$ (random)}
  \label{fig:b:D1_sig1_arb}
\end{subfigure}
  \begin{subfigure}[b]{0.46\textwidth}  
     \includegraphics[width=\textwidth,height=4.4cm]{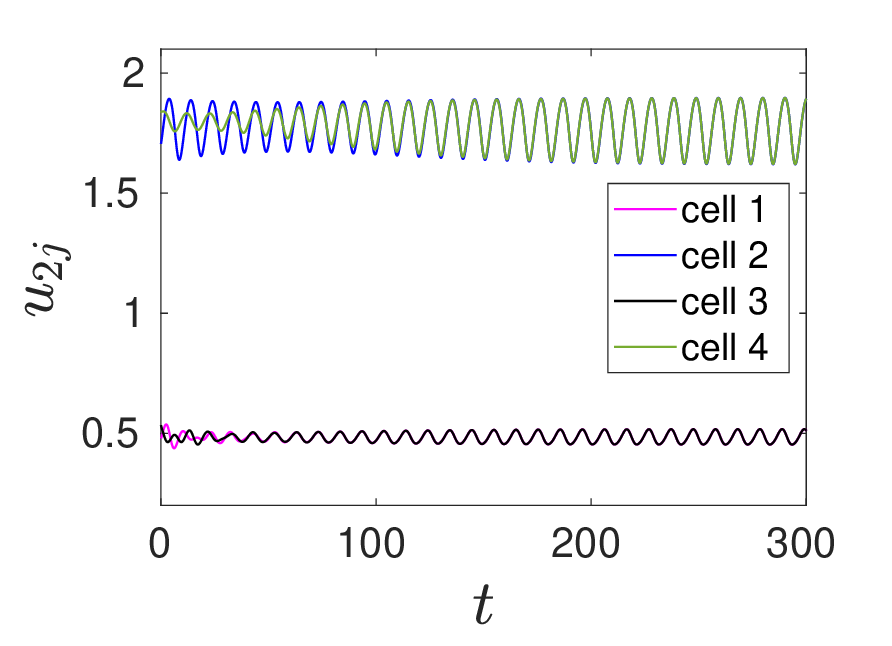}
  \caption{$D=1.65$, $\sigma=1$ (random)}
  \label{fig:b:D1.65_sig1_arb}
      \end{subfigure}
      \caption{{Intracellular dynamics $u_{2j}(t)$ at the four
          star-labeled points in Fig.~\ref{fig:fourcell_scatt_b}, as
          computed using the algorithm in \S \ref{sec:all_march} with
          $\Delta t=0.005$. Uniformly random initial perturbations of
          magnitude $0.1$ were used. Top left: Synchronized
          oscillations occur in cells 1 and 3, while the deactivated
          cells 2 and 4 are quiet. Top right: The
          deactivated cells oscillate out of phase, and mixed-mode
          oscillations occur in the other cells.  Bottom left: Large
          amplitude in-phase oscillations occur in the deactivated
          cells, with very small in-phase oscillations in the other
          cells.  Bottom right: both pairs of cells have only small
          amplitude in-phase oscillations. Parameters as in
          Fig.~\ref{fig:fourcell_hopf_b}.}}
      \label{fig:fourcell_b_dyn}
\end{figure}

\subsection{A ring configuration with a center cell}\label{cell:ring_center}

Next, we consider the centered-hexagonal cell configuration shown in
  Fig.~\ref{fig:ringcenter}. We label the ring cells in a
  counterclockwise orientation by
  $\x_j=r_c\left(\cos(\pi (j-1)/3),\sin(\pi (j-1)/3)\right)^T$ for
  $j\in\lbrace{1,\ldots,6\rbrace}$, while the center-cell at the origin
  is $\x_7=\v{0}$. The ring cells are assumed to be identical with
  $d_{1j}=d_{1r}$, $d_{2j}=d_{2r}$ and $\alpha_j=\alpha_r$ for
  $j\in\lbrace{1,\ldots,6\rbrace}$. The center cell can have different
  permeabilities $d_{17}$ and $d_{27}$ and a distinct kinetic
  parameter $\alpha_7$.

For this cell configuration, the GCEP matrix ${\mathcal M}$ in
(\ref{5:gcep_2}) can be partitioned as
\begin{subequations}\label{mat:ring_center}
\begin{equation}\label{mat:ring_center_a}
    {\mathcal M}(\lambda)=
    \left(
    \begin{array}{ccc|c}
    &  &   & a \\
    & {\mathcal M}_{6}  &  & \vdots \\
    &  &  &   a \\
    \hline
    a & \dots &  a & b
    \end{array}
    \right)\,,
\end{equation}
where ${\mathcal M}_6\in {\mathbb C}^{6,6}$ is
both circulant and symmetric. Here $a$ and $b$ are defined by
\begin{equation}\label{mat:ring_center_b}
  a \equiv \nu  K_0\left( \sqrt{\frac{\sigma+\lambda}{D}} |\v{x}_1-\v{x}_7|
  \right) \,, \quad
  b \equiv 1+\nu \left(
    \log\left(2 \sqrt{\frac{D}{\sigma+\lambda}}\right) - \gamma_e\right)
  +\frac{\nu D}{d_{17}}\left(1+ 2\pi d_{27} \mathit{K}_7\right)\,,
\end{equation}
\end{subequations}
with $\nu={-1/\log\varepsilon}$ and $\mathit{K}_7$ as given in
(\ref{selkov:kmat}). In (\ref{mat:ring_center_a}), ${\mathcal M}_6$ is
a cyclic permutation of its first row, labeled by $(m_1,\ldots,m_6)$,
in which $m_2=m_6$ and $m_3=m_5$ since ${\mathcal M}_6$ is
symmetric. The matrix spectrum of ${\mathcal M}_6$ consists of an
in-phase eigenvector $\v{e}_6\equiv (1,\ldots,1)^T\in\R^6$ and five
anti-phase eigenvectors, two of which are degenerate, which are all
orthogonal to $\v{e}_6$. For ${\mathcal M}$, these anti-phase modes
are lifted to $\R^7$ by requiring that the center cell is silent. The
in-phase mode for ${\mathcal M}_6$ will lead to two eigenmodes in
$\R^7$ for ${\mathcal M}$ where the center cell is active. The matrix
spectrum of ${\mathcal M}$ is characterized as follows:

\begin{lemma}\label{lem:ring_center}
 The matrix eigenvalues for ${\mathcal M}\v{c}=\chi\v{c}$, with ${\mathcal M}$
  as in (\ref{mat:ring_center_a}), are
\begin{subequations}\label{hex:mat_eig}
  \begin{equation}\label{spec:eig:center_hole}
    \begin{split}
    \chi_j&=\sum_{k=1}^{6} \cos\left(\frac{\pi j(k-1)}{3}\right) m_{k} \,,
    \quad j\in \lbrace{1,\ldots,5\rbrace}\,, \\
    \chi_{\pm}&=\omega_6 + a f_{\pm}\,, \qquad
    f_{\pm}\equiv \frac{\left(b-\omega_6\right)}{2a} \pm
    \sqrt{ \frac{\left(b-\omega_6\right)^2}{(2a)^2}
      + 6 } \,,
  \end{split}
\end{equation}
where $\det{{\mathcal M}} =
\left(\chi_{+}\chi_{-}\right)\left(\prod_{j=1}^{5} \chi_j \right)=
\left(b\omega_6 - 6 a^2\right)\left(\prod_{j=1}^{5} \chi_j \right)\,.$
Here $\omega_6$ is defined by
${\mathcal M}_{6}\v{e}_6=\omega_6\v{e}_6$ with
$\v{e}_6\equiv (1,\ldots,1)^T\in\R^6$, while $(m_1,\ldots,m_6)$ is
the first row of ${\mathcal M}_6$. The corresponding eigenvectors are
\begin{equation}\label{spec:vec:center_hole}
  \begin{split}
    \v{c}_{j}&= \left(1\,, \cos \left(\frac{ \pi j}{3}\right)\,, \ldots\,,
      \cos \left( \frac{5\pi j}{3} \right)\,, 0\right)^T \,,\,\,\,
    j\in \lbrace{1,2\rbrace}\,; \quad
 \v{c}_{3}=(1\,,-1\,,1\,,-1\,,1\,,-1\,,0)^T\,,\\
     \v{c}_{6-j}&=\left(0\,, \sin \left( \frac{\pi j}{3}\right)\,,
       \ldots\,, \sin \left( \frac{5\pi j}{3}\right)\,,0\right)^T \,,
     \,\,\, j\in \lbrace{1,2\rbrace}\,; \quad \v{c}_{\pm}
     =(1,\ldots,1,f_{\pm})^T \,,
   \end{split}
 \end{equation}
\end{subequations}
where $\v{c}_3$ is referred to as the sign-alternating anti-phase mode.
 Since $\chi_j=\chi_{6-j}$ for $j\in\lbrace{1,2\rbrace}$, two pairs of
 anti-phase modes for which the center cell is silent are
 degenerate. The roots $\lambda$ to $\det{{\mathcal M}(\lambda)}=0$
 are obtained from the union of the five scalar root-finding problems
 $\chi_1=0$, $\chi_2=0$, $\chi_3=0$ and $\chi_{\pm}=0$. Setting
 $\chi_{\pm}=0$, yields the parameter constraint $6a^2=b\omega_6$ and
 $f_{\pm}={\mp} \sqrt{6\omega_6/b}$.
\end{lemma}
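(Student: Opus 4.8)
The plan is to exploit the bordered-circulant structure of $\mathcal{M}$ in (\ref{mat:ring_center_a}): the leading $6\times 6$ block $\mathcal{M}_6$ is symmetric circulant, it is bordered by the constant column $a\v{e}_6$, and the corner entry is $b$. The decisive observation is that any eigenvector $\v{w}\in\R^6$ of $\mathcal{M}_6$ that is orthogonal to the all-ones vector $\v{e}_6$ lifts to an eigenvector of $\mathcal{M}$ with a silent center cell. Indeed, writing $\v{c}=(\v{w},0)^T$, we have $\mathcal{M}\v{c}=(\mathcal{M}_6\v{w},\,a\v{e}_6^T\v{w})^T=(\chi\v{w},0)^T=\chi\v{c}$, since $a\v{e}_6^T\v{w}=0$. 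This accounts for a five-dimensional $\mathcal{M}$-invariant subspace on which the center cell decouples.

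First I would diagonalize $\mathcal{M}_6$. Any circulant has $\v{e}_6$ as an eigenvector with eigenvalue equal to the common row sum, so I set $\mathcal{M}_6\v{e}_6=\omega_6\v{e}_6$. The remaining eigenvalues are the discrete Fourier transform of the first row $(m_1,\ldots,m_6)$; because $\mathcal{M}_6$ is real and symmetric (so $m_2=m_6$ and $m_3=m_5$), these DFT values are real and satisfy $\chi_j=\chi_{6-j}$. This collapses the transform to the cosine expression in (\ref{spec:eig:center_hole}) and produces the two degenerate pairs $(\chi_1,\chi_5)$ and $(\chi_2,\chi_4)$ together with the simple sign-alternating mode $\chi_3$. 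The associated real cosine and sine eigenvectors, each orthogonal to $\v{e}_6$, are precisely the $\v{c}_j$ and $\v{c}_{6-j}$ listed in (\ref{spec:vec:center_hole}); padding them with a zero center component yields the five anti-phase eigenmodes of $\mathcal{M}$.

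Next I would treat the in-phase direction, which is the only direction that couples to the center. Restricting $\mathcal{M}$ to the two-dimensional invariant subspace spanned by $(\v{e}_6,0)^T$ and $(\v{0},1)^T$, and using $\v{e}_6^T\v{e}_6=6$, gives the reduced action
\[
\begin{pmatrix} s \\ t \end{pmatrix}\longmapsto
\begin{pmatrix} \omega_6 & a \\ 6a & b \end{pmatrix}
\begin{pmatrix} s \\ t \end{pmatrix}.
\]
Its eigenvalues are $\chi_\pm$ with eigenvectors $\v{c}_\pm=(\v{e}_6,f_\pm)^T$; writing $\chi=\omega_6+af$ and eliminating $\chi$ reproduces the quadratic $af^2+(\omega_6-b)f-6a=0$, i.e.\ the closed form for $f_\pm$ in (\ref{spec:eig:center_hole}), while $\chi_\pm=\omega_6+af_\pm$ and the $2\times 2$ trace/determinant give $\chi_++\chi_-=\omega_6+b$ and $\chi_+\chi_-=b\omega_6-6a^2$. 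For the determinant of $\mathcal{M}$ I would use the bordered-determinant (Schur-complement) identity: for invertible $\mathcal{M}_6$, $\det\mathcal{M}=\det(\mathcal{M}_6)\,(b-a^2\v{e}_6^T\mathcal{M}_6^{-1}\v{e}_6)$, and since $\mathcal{M}_6^{-1}\v{e}_6=\omega_6^{-1}\v{e}_6$ this equals $\big(\prod_{j=0}^{5}\chi_j\big)(b-6a^2/\omega_6)=(b\omega_6-6a^2)\prod_{j=1}^{5}\chi_j$ with $\chi_0=\omega_6$, extending to all $\lambda$ by continuity. The zeros of $\det\mathcal{M}(\lambda)$ are then the union of the zeros of the distinct factors $\chi_1,\chi_2,\chi_3$ and $\chi_+\chi_-=b\omega_6-6a^2$; imposing $\chi_\pm=0$ forces $6a^2=b\omega_6$, whence $\chi=\omega_6+af=0$ gives $f=-\omega_6/a=\mp\sqrt{6\omega_6/b}$.

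The content is entirely linear-algebraic, and the only genuine care is bookkeeping. The main point requiring attention is confirming the real cosine form and the exact pattern of degeneracies forced by the circulant symmetry $m_2=m_6$, $m_3=m_5$, and checking that the listed cosine/sine eigenvectors are mutually orthogonal and span $\v{e}_6^{\perp}$, so that together with $\v{c}_\pm$ they furnish a complete eigenbasis of $\R^7$; this completeness is what justifies reading $\det\mathcal{M}$ directly as the product of the seven eigenvalues $(\chi_+\chi_-)\prod_{j=1}^{5}\chi_j$. The other delicate spot is that the Schur-complement factorization presupposes $\omega_6\neq 0$, so I would record the $\omega_6=0$ case as a limit (or simply invoke the eigenbasis), and I would keep track of the sign of $a$ when extracting $f_\pm=\mp\sqrt{6\omega_6/b}$ at the constraint.
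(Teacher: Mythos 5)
Your proposal is correct and follows essentially the same route as the paper: the five anti-phase modes come from the circulant diagonalization of ${\mathcal M}_6$ restricted to $\v{e}_6^{\perp}$ (padded with a zero center component), and the two in-phase modes come from the $2\times 2$ reduction on $\mbox{span}\{(\v{e}_6,0)^T,(\v{0},1)^T\}$, which is exactly the paper's substitution $\v{c}=(\v{e}_6,f)^T$ leading to $\omega_6+af=\chi$ and $6a+bf=\chi f$. Your Schur-complement verification of the determinant identity and your caution about the sign of $a$ in extracting $f_{\pm}=\mp\sqrt{6\omega_6/b}$ are small additions beyond what the paper records explicitly, but the argument is the same.
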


\begin{proof}
  The proof of this result for $\chi_j$ and $\v{c}_j$ for
  $j\in\lbrace{1,\ldots,5\rbrace}$ is immediate since ${\mathcal M}_6$
  is a circulant matrix and $\v{e}^T\v{c}_j=0$ for
  $j\in\lbrace{1,\ldots,5\rbrace}$, where
  $\v{e}=(\v{e}_6,1)^T\in\R^7$. For the remaining two eigenmodes, we
  let $\v{c}=(\v{e}_6,f)^T$ and calculate using
  ${\mathcal M}_6\v{e}_6=\omega_6\v{e}_6$ that
  \begin{equation}\label{spec:two}
    {\mathcal M}\v{c}=
\begin{pmatrix}
   {\mathcal M}_6 \v{e}_6 + a f \v{e}_6 \\
    6a + bf
  \end{pmatrix}=
\begin{pmatrix}
   \left(\omega_6 + a f\right) \v{e}_6 \\
    6a + bf
  \end{pmatrix}= \chi
\begin{pmatrix}
  \v{e}_6\\
  f
\end{pmatrix}\,.
\end{equation}
This yields that $\omega_6 + a f = \chi$ and $6a + bf = \chi f$, which
gives a quadratic equation for $f$. The solution yields
(\ref{spec:eig:center_hole}) and (\ref{spec:vec:center_hole}) for
$\chi_{\pm}$ and $f_{\pm}$. Setting $\chi=0$ in (\ref{spec:two}) we
readily obtain the constraint $6a^2=b\omega_6$.
\end{proof}}

\begin{figure}[htbp]
  \centering
        \begin{subfigure}[b]{0.32\textwidth}  
    \includegraphics[width=\textwidth,height=4.4cm]{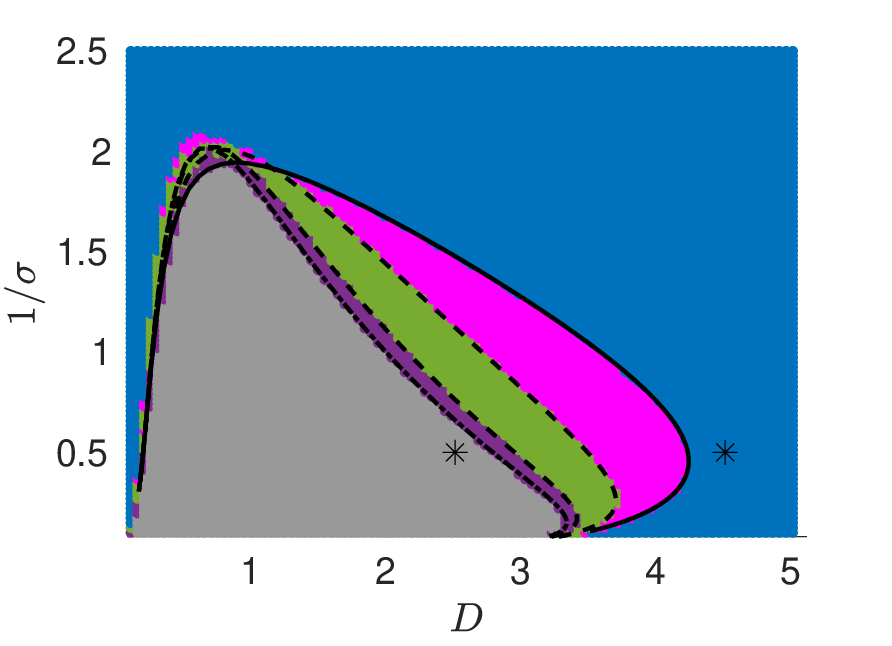}
        \caption{signaling cell on}
        \label{fig:7cell_scatt_oneshell}
      \end{subfigure}
              \begin{subfigure}[b]{0.32\textwidth}  
     \includegraphics[width=\textwidth,height=4.4cm]{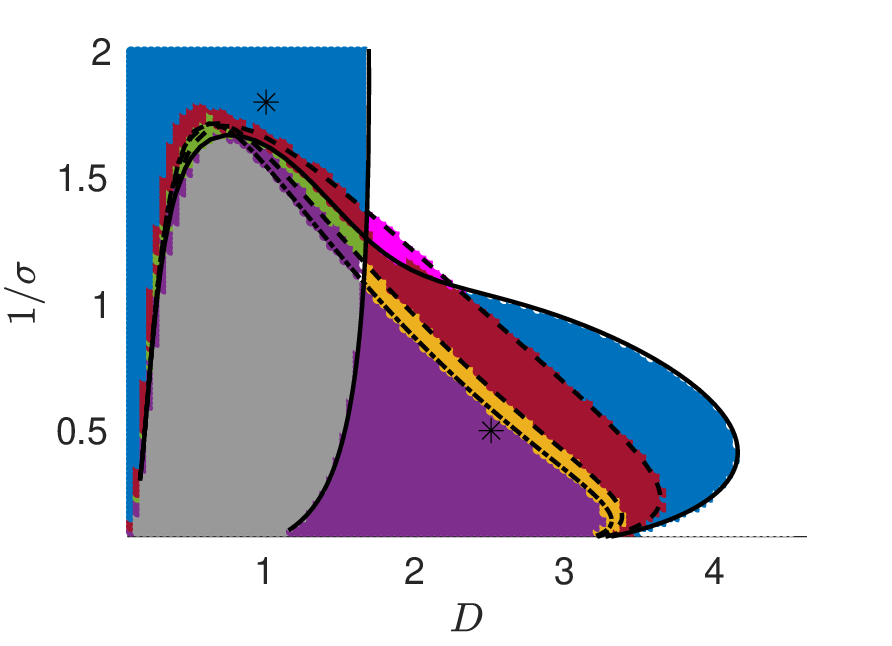}
        \caption{signaling cell deactivated}
        \label{fig:7cell_scatter_c}
      \end{subfigure}
        \begin{subfigure}[b]{0.32\textwidth}  
\includegraphics[width=\textwidth,height=4.4cm]{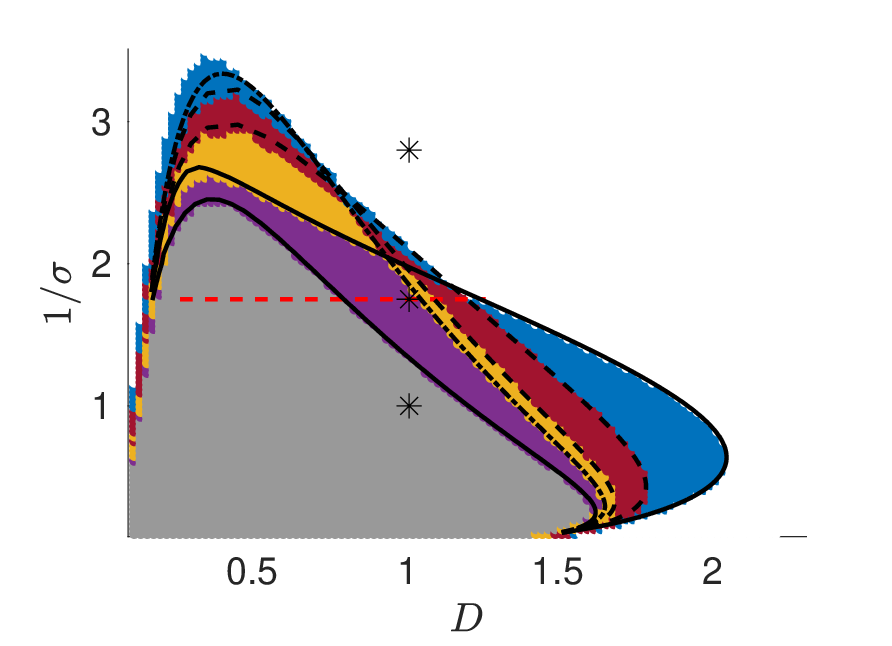}
        \caption{identical cells}
        \label{fig:7spot_scatter_ident}
      \end{subfigure}
      \caption{{Scatter plots for the ring and center cell
          configuration of Fig.~\ref{fig:ringcenter} with $r_c=2$. The
          number of destabilizing eigenvalues ${\mathcal Z}$ for the
          linearization of (\ref{DimLess_bulk}) around the
          steady-state solution is additionally coded as
          ${\mathcal Z}=10$ (orange), ${\mathcal Z}=12$ (purple) and
          ${\mathcal Z}=14$ (gray). The HB boundaries are shown:
          dashed black curves are degenerate anti-phase modes,
          dot-dashed black curve is the sign-alternating anti-phase
          mode, and solid black curves are the HB modes where the
          center cell is active. Left: Identical ring cells have
          $d_{1r}=0.8$, $d_{2r}=0.2$ and $\alpha_r=0.9$. The center
          cell is signaling with $d_{17}=0.4$, $d_{27}=0.2$ and
          $\alpha_7=0.5$. Middle: Same parameters as in the left panel
          except that the efflux from the center cell is increased to
          $d_{27}=0.5$.  The center cell is now deactivated. Right:
          All cells are identical with $d_{1r}=d_{17}=0.4$,
          $d_{2r}=d_{27}=0.2$ and $\alpha_r=\alpha_7=0.9$. Each cell
          would be in a quiescent state without any cell-cell
          coupling. The Kuramoto order parameter is computed on the
          red-dashed line.}}
\label{fig:7cell_scatter}
\end{figure}

Fixing $d_{1r}=0.8$, $d_{2r}=0.9$ and $\alpha_r=0.9$ for each of the
six ring cells, and with influx permeability $d_{17}=0.4$ and kinetic
parameter $\alpha_7=0.5$ for the center cell, in
Fig.~\ref{fig:7cell_scatt_oneshell} and Fig.~\ref{fig:7cell_scatter_c}
we show the scatter plots when the efflux permeability for the center
cell is either $d_{27}=0.2$ or $d_{27}=0.5$, respectively. When
$d_{27}=0.2$ the center cell is a signaling cell (see
Fig.~\ref{fig:selkov_efflux}) and would have limit cycle oscillations
when uncoupled from the bulk. However, when $d_{27}=0.5$ this
signaling cell is deactivated. For comparison, in
Fig.~\ref{fig:7spot_scatter_ident} we show the scatter plot for seven
identical cells with parameter values $d_{1j}=0.4$, $d_{2j}=0.2$ and
$\alpha_j=0.9$ for $j\in\lbrace{1,\ldots,7\rbrace}$. These cells would
all be in a quiescent state without any cell-bulk coupling. In
generating the scatter plots in Fig.~\ref{fig:7cell_scatter} we used
the analytical expression in Lemma \ref{lem:ring_center} for
$\det{\mathcal M}$ to calculate ${\mathcal Z}$ from (\ref{wind:form}).
In Fig.~\ref{fig:7cell_scatter} the HB boundaries for the various
modes are shown. Setting $\lambda=i\lambda_I$, the degenerate
anti-phase HB modes occur on the black dashed curves, for which
$\chi_1=0$ or $\chi_3=0$, the black dot-dashed curve corresponds to
the sign-alternating anti-phase HB mode, for which $\chi_3=0$, while
the solid black curves are the HB boundaries where the center cell is
active, as obtained from setting $\chi_{\pm}=0$.  The slight
raggedness in Fig.~\ref{fig:7cell_scatter} shows that our winding
number algorithm has some challenges in correctly calculating
${\mathcal Z}$ very close to the HB boundaries where the eigenvalues
associated with the degenerate anti-phase modes simultaneously cross
the imaginary axis of the spectral plane.

\begin{figure}[h!tbp]
  \centering
  \begin{subfigure}[b]{0.32\textwidth}  
  \includegraphics[width=\textwidth,height=4.2cm]{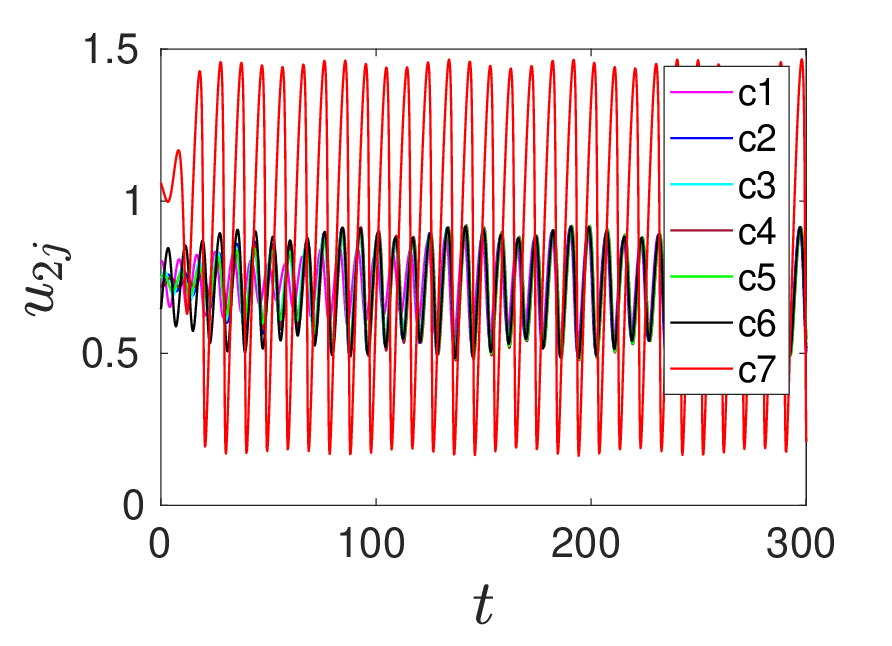}
  \caption{$D=2.5$, $\sigma=2$ (short)} 
        \label{fig:7_cell_on_shell_a}
      \end{subfigure}
  \begin{subfigure}[b]{0.32\textwidth}  
     \includegraphics[width=\textwidth,height=4.2cm]{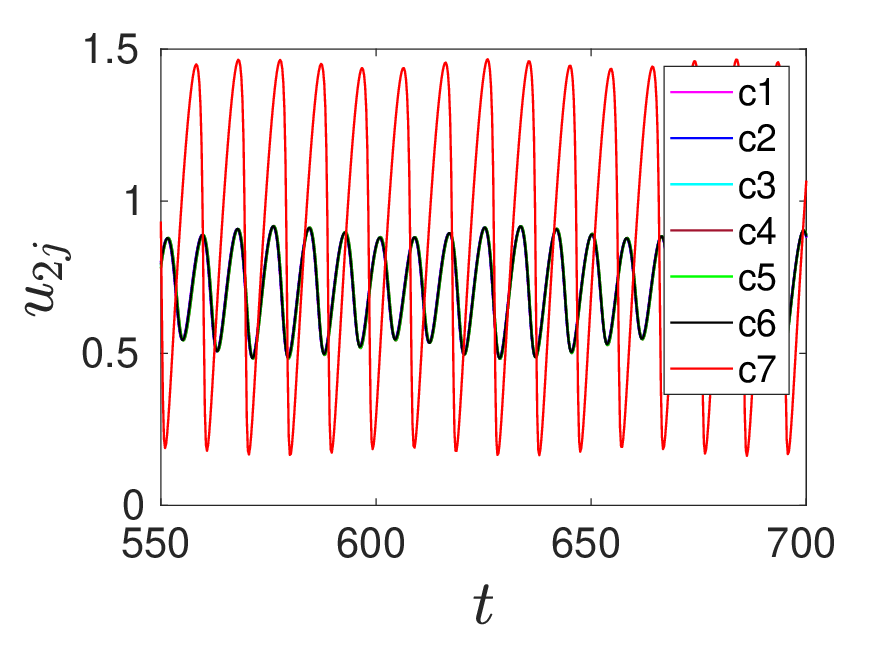}
  \caption{$D=2.5$, $\sigma=2$ (long)} 
        \label{fig:7_cell_on_shell_b}
      \end{subfigure}
  \begin{subfigure}[b]{0.32\textwidth}  
     \includegraphics[width=\textwidth,height=4.2cm]{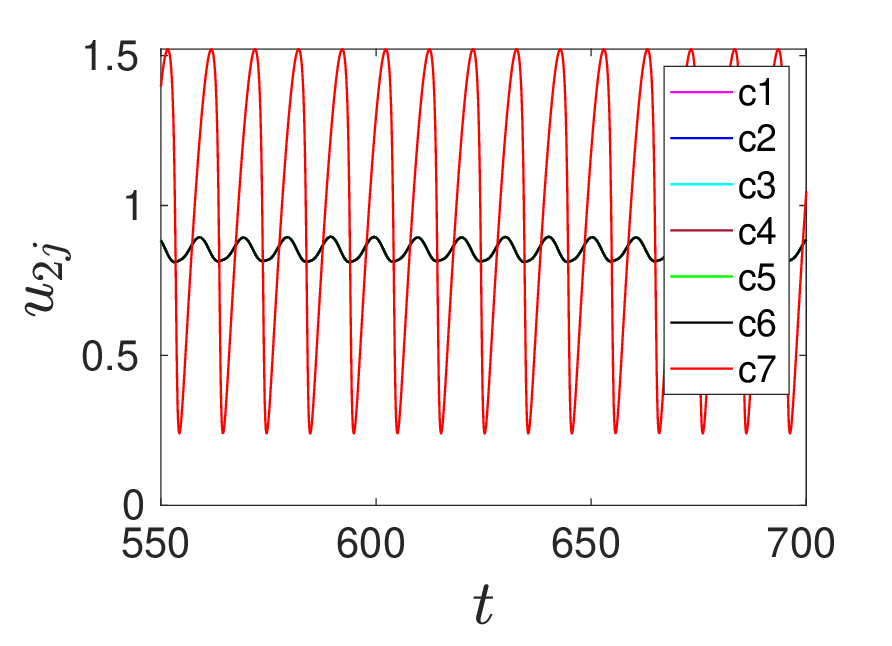}
  \caption{$D=4.5$, $\sigma=2$.}
  \label{fig:7_cell_on_shell_c}
\end{subfigure}
  \begin{subfigure}[b]{0.32\textwidth}  
    \includegraphics[width=\textwidth,height=4.2cm]{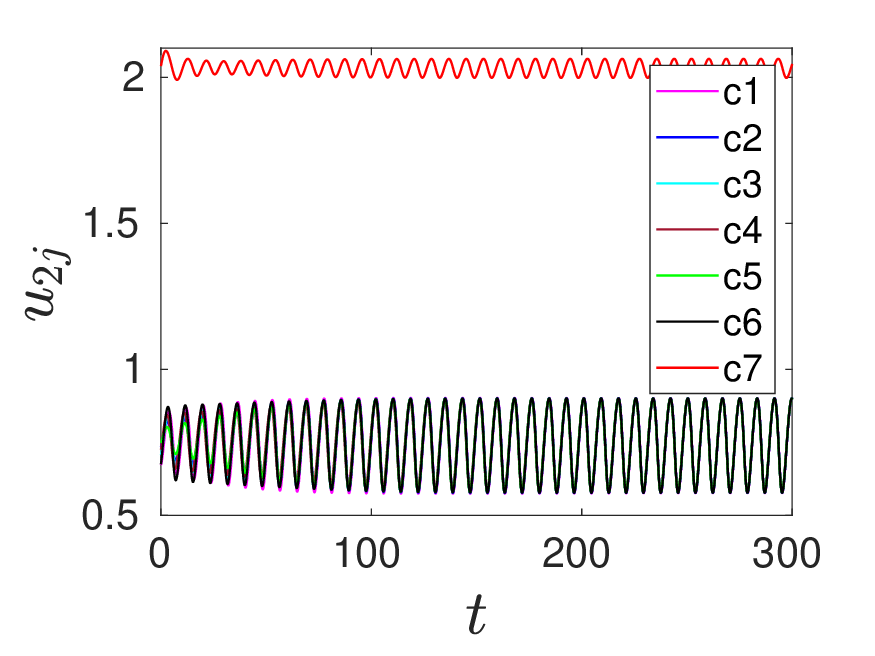}
  \caption{$D=2.5$, $\sigma=2$ (short)} 
        \label{fig:7_cell_off_a}
      \end{subfigure}
  \begin{subfigure}[b]{0.32\textwidth}  
     \includegraphics[width=\textwidth,height=4.2cm]{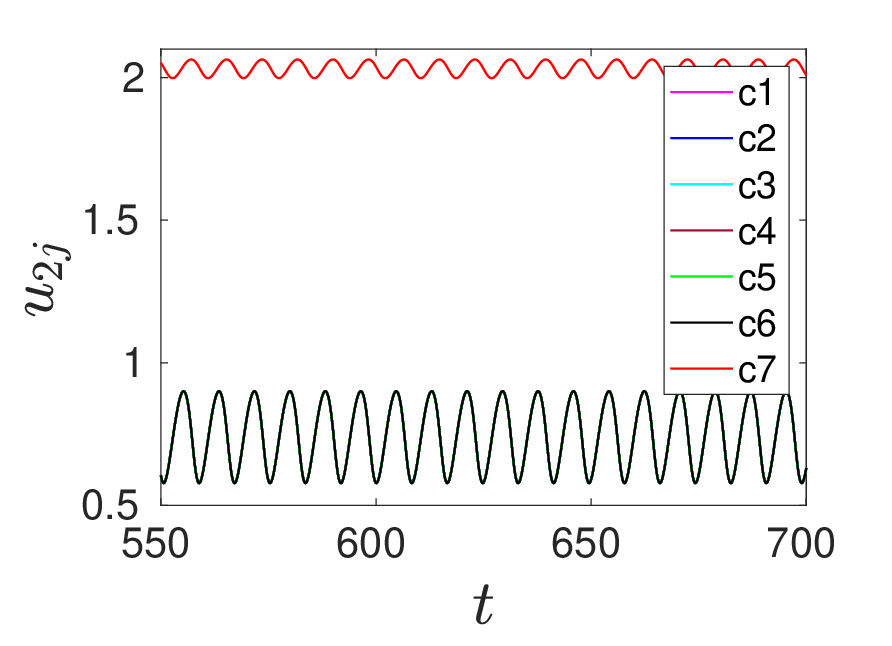}
  \caption{$D=2.5$, $\sigma=2$ (long)} 
        \label{fig:7_cell_off_b}
      \end{subfigure}
  \begin{subfigure}[b]{0.32\textwidth}  
    \includegraphics[width=\textwidth,height=4.2cm]{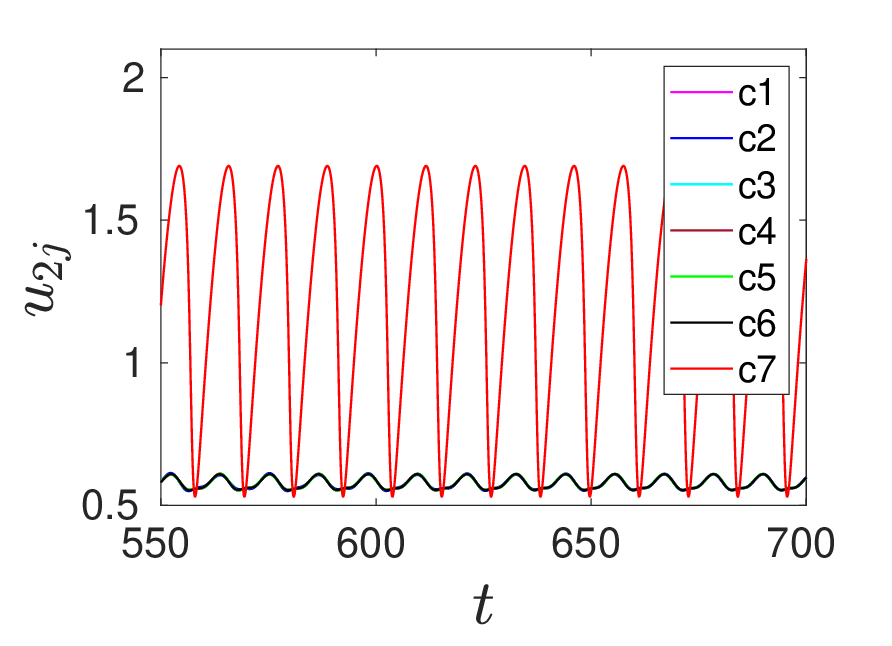}
  \caption{$D=1$, $\sigma={5/9}$}
  \label{fig:7_cell_off_c}
      \end{subfigure}
  \begin{subfigure}[b]{0.32\textwidth}  
     \includegraphics[width=\textwidth,height=4.2cm]{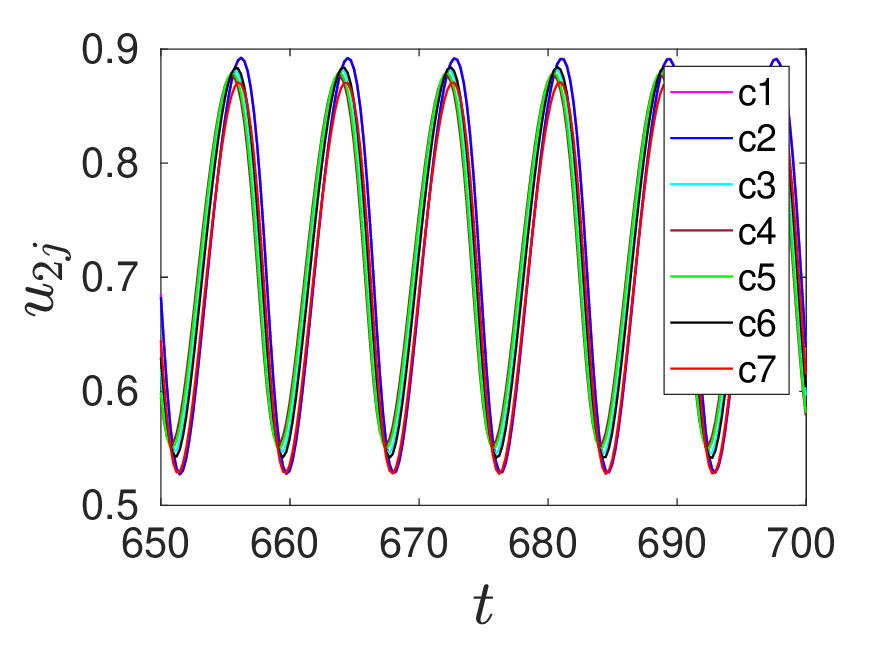}
  \caption{$D=1$, $\sigma=1$} 
        \label{fig:7_cell_ident_a}
      \end{subfigure}
  \begin{subfigure}[b]{0.32\textwidth}  
     \includegraphics[width=\textwidth,height=4.2cm]{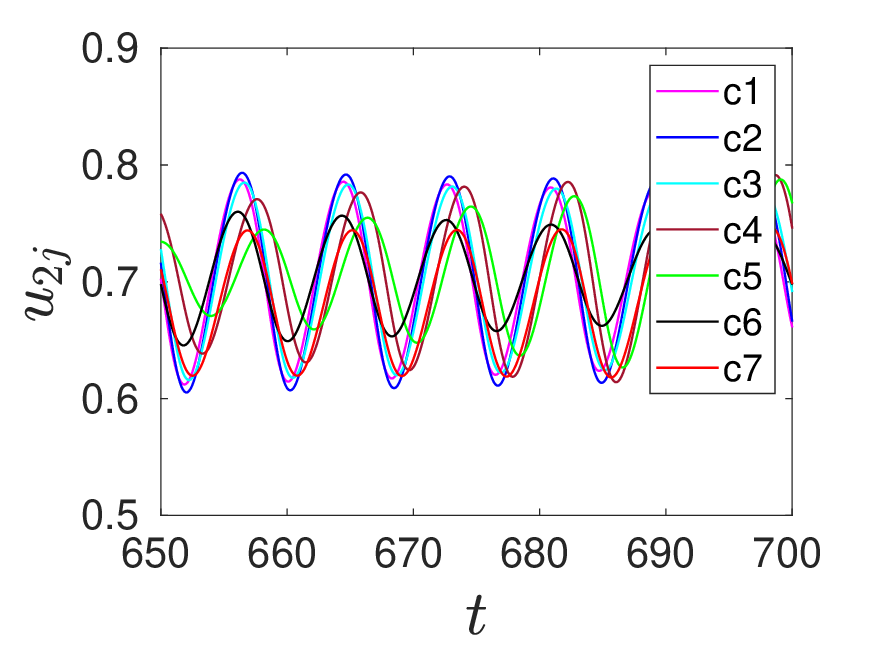}
  \caption{$D=1$, $\sigma={4/7}$} 
        \label{fig:7_cell_ident_b}
      \end{subfigure}
  \begin{subfigure}[b]{0.32\textwidth}  
     \includegraphics[width=\textwidth,height=4.2cm]{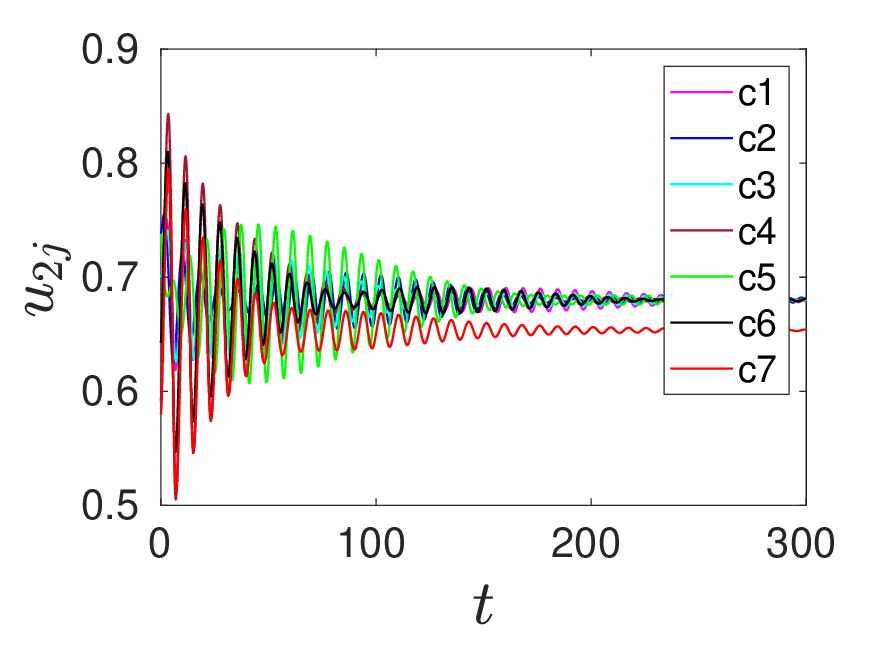}
  \caption{$D=1$, $\sigma={5/14}$}
  \label{fig:7_cell_ident_c}
      \end{subfigure}
      \caption{{Intracellular dynamics $u_{2j}(t)$ for the ring and
          center cell configuration of Fig.~\ref{fig:ringcenter} with
          $r_c=2$ computed using the algorithm of \S
          \ref{sec:all_march} with $\Delta t=0.005$ at the
          star-labeled points in Fig.~\ref{fig:7cell_scatter} using
          uniformly random initial conditions of magnitude $0.1$. The
          center cell is labeled as C7. Top row: Short- and long-time
          dynamics for $D=2.5$ and $\sigma=1$ in
          Fig.~\ref{fig:7cell_scatt_oneshell} is shown. A synchronous
          wave-packet solution emerges for the ring cells as
          $t$ increases. For $D=4.5$, the ring cells synchronize to a
          small amplitude regular oscillation. Middle row: For
          $D=2.5$ and $\sigma=2$ (left and middle) in
          Fig.~\ref{fig:7cell_scatter_c} the multiple destabilizing
          anti-phase modes of the ring cells eventually synchronize to
          a periodic oscillation that is larger than that of the
          center cell. For $D=1$ and $\sigma={5/9}$, the center cell
          now has much larger amplitude oscillations than do the ring
          cells. Bottom row: For the identical cells in
          Fig.~\ref{fig:7spot_scatter_ident} all cells exhibit roughly
          in-phase oscillations with a similar amplitude when
          $D=\sigma=1$. For $D=1$, but with a smaller bulk degradation
          $\sigma={4/7}$, the ring cells show large amplitude
          incoherent oscillations (middle). The center cell has a
          small amplitude regular oscillation. For the smaller
          value $\sigma={5/14}$, the mixed-mode oscillations in the
          ring cells and the regular oscillation of the center cell
          tend to their linearly stable steady-state values.}}
\label{fig:7cell_dynamics}
\end{figure}

In Fig.~\ref{fig:7cell_dynamics} we plot $u_{2j}$ versus $t$ at the
star-labeled points in the scatter plots of
Fig.~\ref{fig:7cell_scatter}, as computed using the algorithm of \S
\ref{sec:all_march} with $\Delta t=0.005$ and with uniformly random
initial conditions of magnitude $0.1$ about the steady-state. For
$D=2.5$ and $\sigma=2$ in the scatter plot of
Fig.~\ref{fig:7cell_scatt_oneshell} where all the modes are destabilizing,
in Figs.~\ref{fig:7_cell_on_shell_a}--\ref{fig:7_cell_on_shell_b} we
show that the ring cells synchronize on a wave packet solution and
that large oscillations occur in the center signaling cell. By
increasing $D$ to $D=4.5$, Fig.~\ref{fig:7_cell_on_shell_c} shows that
the ring cells now exhibit regular in-phase oscillations but with
rather small amplitude.  In contrast, when the center signaling cell
has been deactivated, for the values $D=2.5$ and $\sigma=2$ in the
scatter plot of Fig.~\ref{fig:7cell_scatter_c} the results in
Figs.~\ref{fig:7_cell_off_a}--\ref{fig:7_cell_off_b} show that the
ring cells eventually synchronize to a rather large in-phase
oscillation in comparison to that of the deactivated center
cell. Observe from the HB boundaries in Fig.~\ref{fig:7cell_scatter_c}
that when $D=2.5$ and $\sigma=2$ only one of the two modes where the
center cell is active is destabilizing. This is different than in
Fig.~\ref{fig:7cell_scatt_oneshell} where, for $D=2.5$ and $\sigma=2$,
both modes where the center cell are active are destabilizing. For
the pair $D=1$ and $\sigma={5/9}$ in Fig.~\ref{fig:7cell_scatter_c}
all anti-phase modes where the center cell is silent are linearly
stable, while exactly one mode for which the center cell is not
quiescent and the ring cells are in-phase is destabilizing.  As expected,
Fig.~\ref{fig:7_cell_off_c} shows that the center cell has much larger
amplitude oscillations than do the ring cells, which oscillate
in-phase.  Finally, for identical cells that are all quiescent without
cell-bulk coupling, the last row of Fig.~\ref{fig:7cell_dynamics}
shows the intracellular dynamics at the three star-labeled points in
Fig.~\ref{fig:7spot_scatter_ident}. For $D=1$ and $\sigma=1$,
Fig.~\ref{fig:7_cell_ident_a} shows that all cells eventually
oscillate in-phase and with a comparable amplitude. For $\sigma={4/7}$
(purple-shaded region in the scatter plot), where there are twelve
destabilizing eigenvalues for the linearization of the steady-state,
Fig.~\ref{fig:7_cell_ident_b} shows that the ring cells have
incoherent oscillations while the center cell has a small amplitude
regular oscillation. When the degradation is decreased further to
$\sigma={5/14}$, Fig.~\ref{fig:7_cell_ident_c} shows that the
amplitudes of the center cell and of the mixed-mode oscillations for
the ring cells decay to their steady-state limits, as predicted by
the scatter plot in Fig.~\ref{fig:7spot_scatter_ident}.

\subsection{Measuring phase synchronization}\label{cell:hexagonal}

For several cell configurations we now quantify the phase
synchronization or coherence of intracellular oscillations based on
the Kuramoto order parameter
\begin{equation}\label{more:kuram}
  Q(t) \equiv \frac{1}{N} \sum_{j=1}^{N} e^{i\theta_j} \,, \quad \mbox{where}
  \quad
  \theta_j \equiv \mbox{arctan}\left(\frac{ u_{2j}(t)-u_{2js}}{u_{1j}(t)-u_{1js}}
  \right) \in (0,2\pi)\,.
\end{equation}
Here $u_{2js}$ and $u_{1js}$ are the steady-state values obtained from
(\ref{selkov:equil}). To safely disregard the effect of transients, we
compute a time-averaged order parameter $Q_{ave}$, which we define by
\begin{equation}\label{more:kuram_ave}
  Q_{ave}= \frac{1}{t_{up}-t_{low}} \int_{t_{low}}^{t_{up}} Q(t) \, dt \,,
\end{equation}
and where we chose $t_{low}=1300$ and $t_{up}=1500$ for the
simulations described below.  To use our fast marching algorithm in \S
\ref{sec:all_march} with $\Delta t=0.005$ for the longer time interval
$t\leq 1500$, we set $n=114$, which gives $229$ quadrature points for
discretizing the Laplace space contour (see Tables \ref{tab:G2eps} and
\ref{tab:E1eps}).

For the ring with center cell configuration of identical cells, in
Fig.~\ref{fig:7cell_kuram} we plot $Q_{ave}$ versus $D$ along the
parameter path $0.25\leq D\leq 1.25$ with $\sigma={4/7}$ in the
scatter plot in Fig.~\ref{fig:7spot_scatter_ident} (red-dashed path)
for two choices of initial condition. For either choice of initial
condition, we observe an apparent transition to complete phase
coherence when $D$ is near unity. This transition to phase coherence
is further illustrated for $D=0.75$, $D=1$ and $D=1.25$ in
Fig.~\ref{fig:7cell_kuram_dynamics}, where we plot the trajectories
$u_{j2}(t)$ versus $t$ on both short- and long-time intervals, as
computed from our fast algorithm of \S \ref{sec:all_march}. Overall,
this example shows that that a ring and center cell configuration,
where each cell would be in a quiescent state without any cell-bulk
coupling, can exhibit fully phase coherent intracellular oscillations
due to the intercellular communication, as mediated by the bulk diffusion
field, when the bulk diffusivity exceeds a threshold.

\begin{figure}[!htbp]
  \centering
    \centering
  \begin{subfigure}[b]{0.48\textwidth}  
  \includegraphics[width=\textwidth,height=4.3cm]{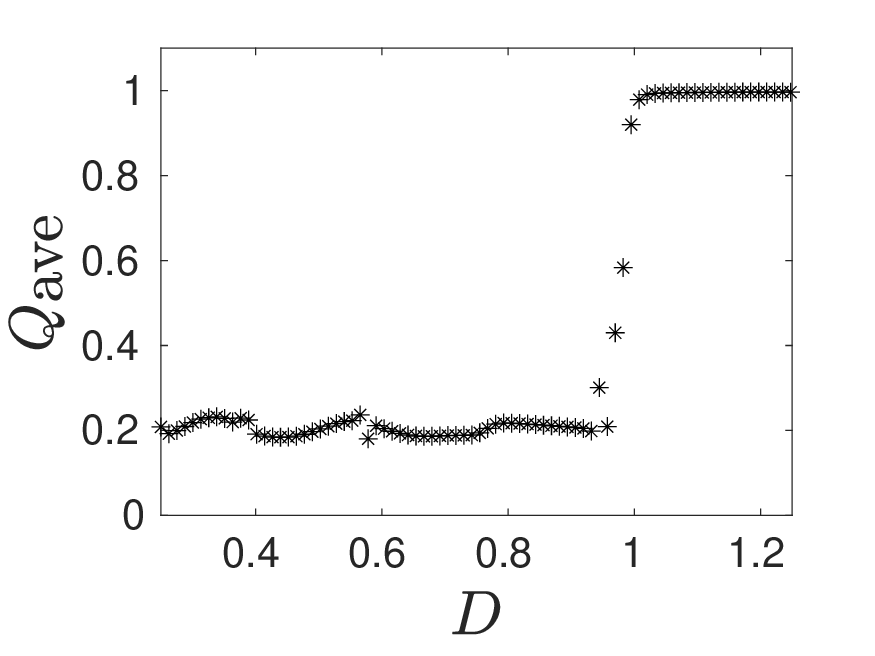}
  \caption{random IC} 
        \label{fig:7cell_kuram_rand}
      \end{subfigure}
  \begin{subfigure}[b]{0.48\textwidth}  
     \includegraphics[width=\textwidth,height=4.3cm]{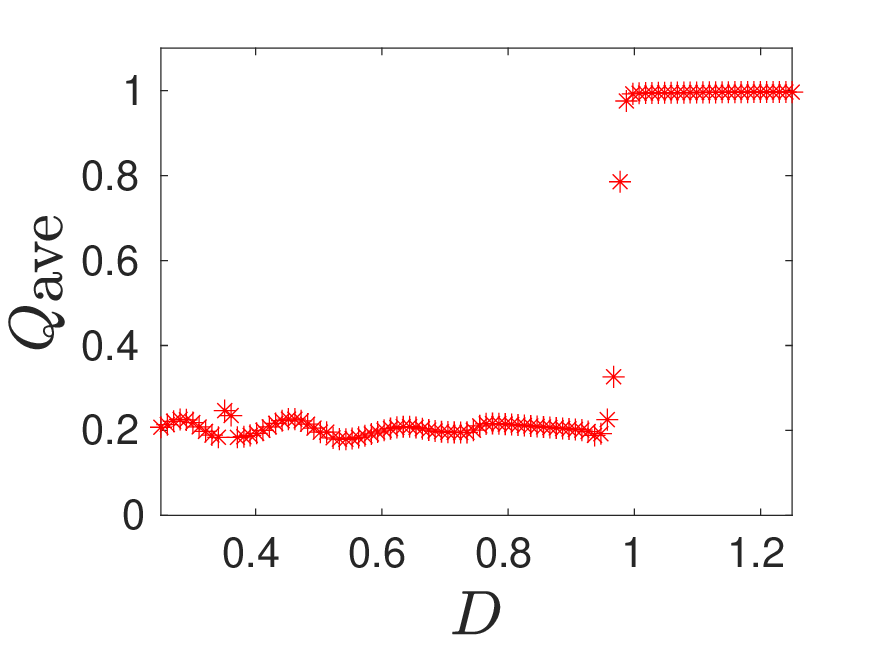}
   \caption{sign-alternating IC} 
        \label{fig:7cell_kuram_alt}
      \end{subfigure}
      \caption{{Averaged order parameter (\ref{more:kuram_ave}) for
          the ring and center cell configuration of identical cells in
          Fig.~\ref{fig:ringcenter} along the red-dashed path in the
          scatter plot of Fig.~\ref{fig:7spot_scatter_ident} with
          $\sigma={4/7}$ and $0.25<D<1.25$. Black (left panel) and red
          (right panel) marked points are for either a uniformly
          random initial perturbation or a sign-alternating initial
          perturbation of the steady-state of magnitude $0.1$. A
          transition to phase coherence occurs for $D$ near unity for
          either choice of initial condition.  Parameters: $r_c=2$,
          $d_{1j}=0.4$, $d_{2j}=0.2$ and $\alpha_j=0.9$ for
          $j\in \lbrace{1,\ldots,7\rbrace}$.}}
\label{fig:7cell_kuram}
\end{figure}

Next, we consider the two-ring configuration of cells shown in
Fig.~\ref{fig:eightcell}. For the rightmost ring, the cells are
identical with $d_{1j}=1.5$, $d_{2j}=0.2$ and $\alpha_j=0.9$ for
$j\in \lbrace{1,\ldots,4\rbrace}$, and the cell centers are at
$\x_1=(4.5,0)$, $\x_2=(2.5,2)$, $\x_3=(0.5,0)$ and
$\x_4=(2.5,-2)$. These cells are in a quiescent state when uncoupled
from the bulk. For the leftmost ring, with cell centers at
$\x_5=(-0.5,0.0)$, $\x_6=(-1.5,1)$, $\x_7=(-2.5,0)$ and
$\x_8=(-1.5,-1)$, the cells have identical permeabilities $d_{1j}=0.4$
and $d_{2j}=0.2$ for $j\in\lbrace{5,\ldots,8\rbrace}$, but are chosen
to have distinct kinetic parameters $\alpha_5=0.6$, $\alpha_6=0.45$,
$\alpha_7=0.4$ and $\alpha_8=0.5$.  From Fig.~\ref{fig:selkov_efflux},
we observe that, when uncoupled from the bulk, these leftmost ring
cells are activated and would have different frequencies of
intracellular oscillations.

\begin{figure}[h!tbp]
  \centering
  \begin{subfigure}[b]{0.48\textwidth}  
  \includegraphics[width=\textwidth,height=4.2cm]{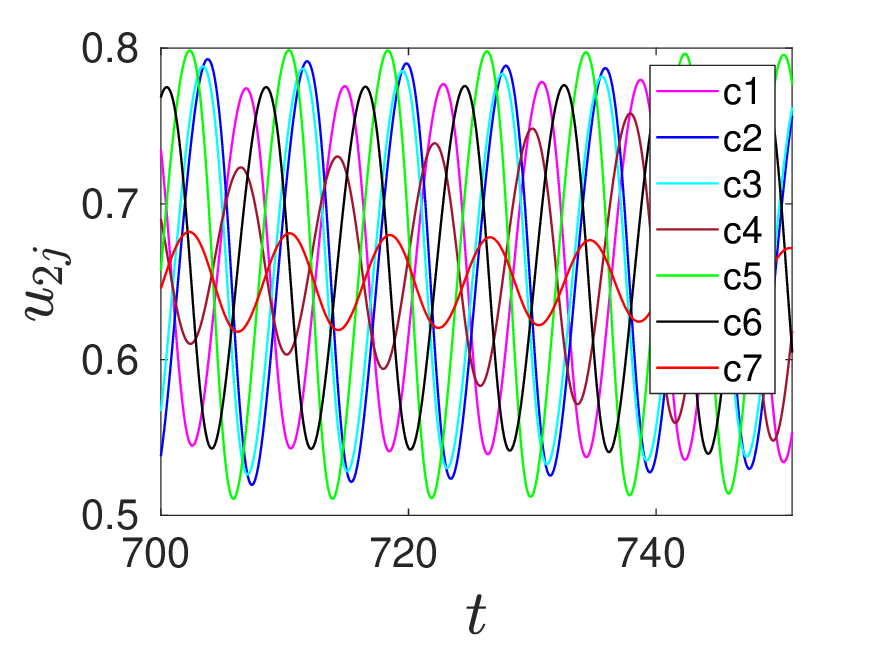}
  \caption{$D=0.75$ (short)} 
        \label{fig:7_kuram_D075_short}
      \end{subfigure}
  \begin{subfigure}[b]{0.48\textwidth}  
  \includegraphics[width=\textwidth,height=4.2cm]{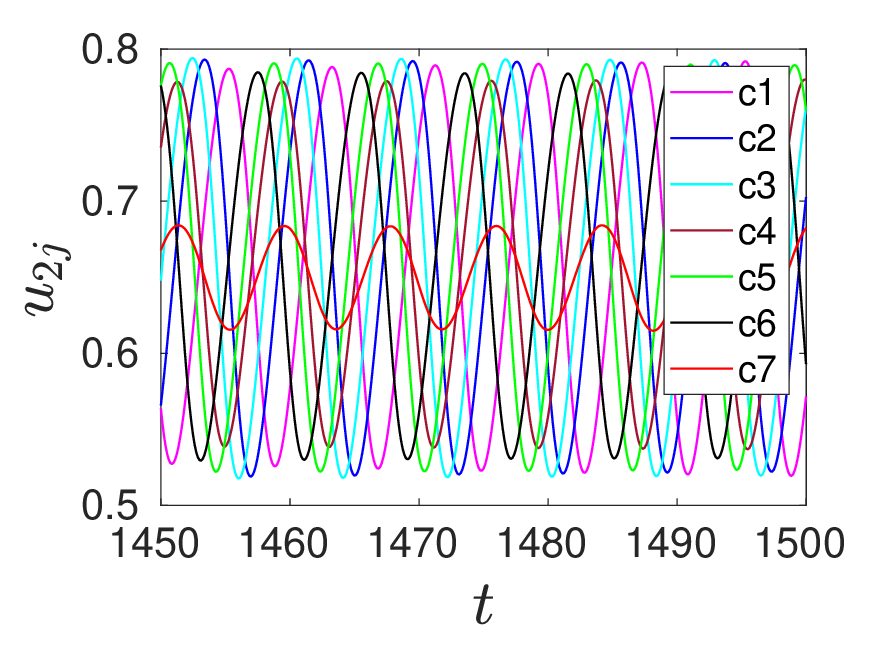}
  \caption{$D=0.75$ (long)} 
        \label{fig:7_kuram_D075_long}
      \end{subfigure}
  \begin{subfigure}[b]{0.48\textwidth}  
  \includegraphics[width=\textwidth,height=4.2cm]{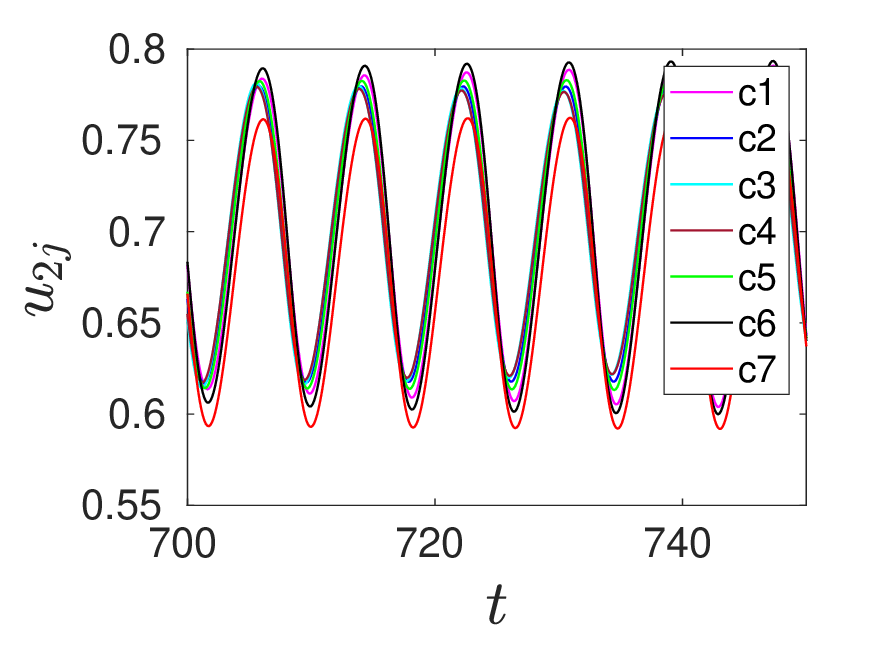}
  \caption{$D=1$ (short)}
          \label{fig:7_kuram_D1_short}
\end{subfigure}
  \begin{subfigure}[b]{0.48\textwidth}  
    \includegraphics[width=\textwidth,height=4.2cm]{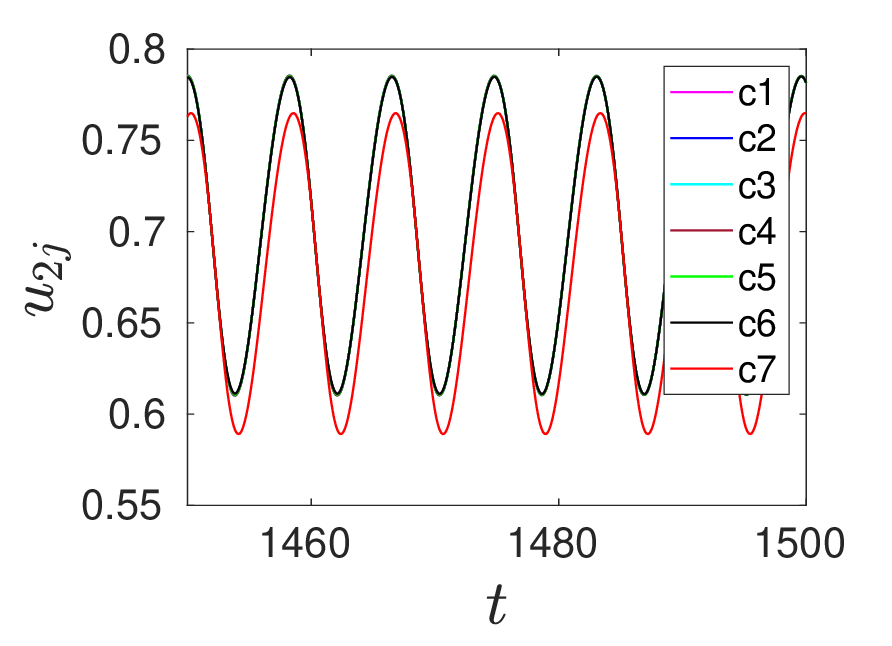}
  \caption{$D=1$ (long)} 
        \label{fig:7_kuram_D1_long}
      \end{subfigure}
  \begin{subfigure}[b]{0.48\textwidth}  
  \includegraphics[width=\textwidth,height=4.2cm]{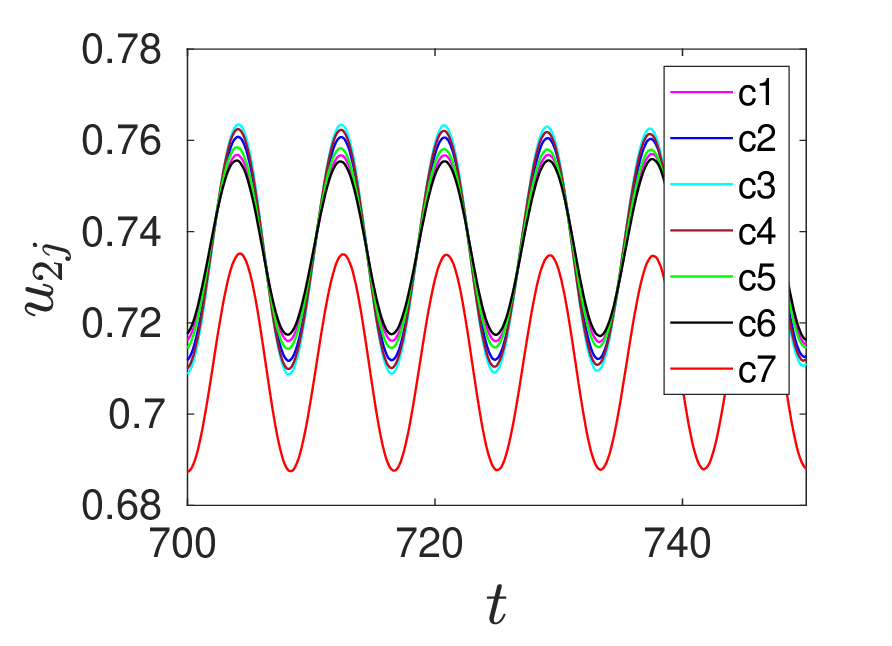}
  \caption{$D=1.25$ (short)} 
        \label{fig:7_kuram_D125_short}
      \end{subfigure}
  \begin{subfigure}[b]{0.48\textwidth}  
    \includegraphics[width=\textwidth,height=4.2cm]{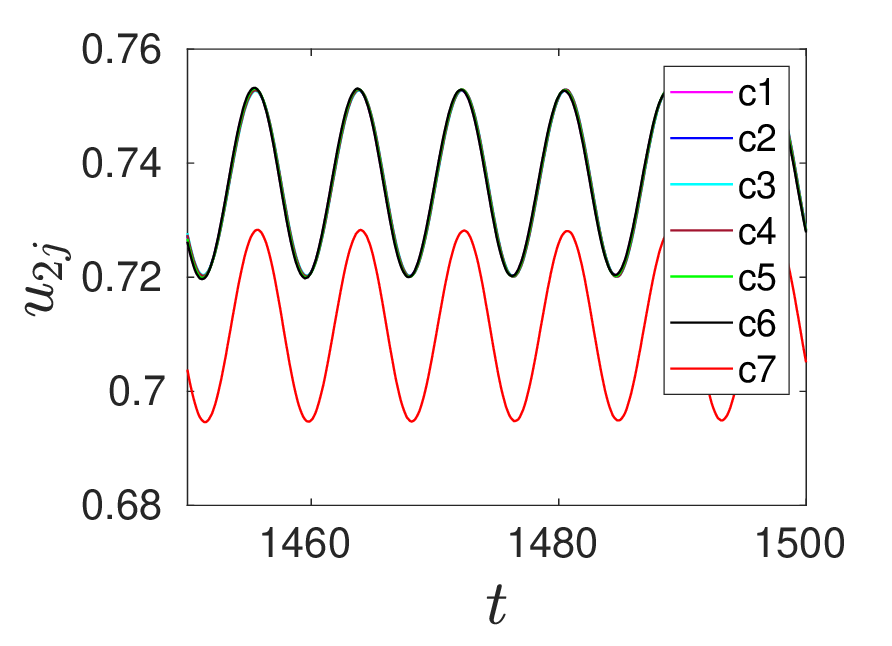}
  \caption{$D=1.25$ (long)}
  \label{fig:7_kuram_D125_long}
      \end{subfigure}
      \caption{Intracellular dynamics $u_{2j}(t)$ for the ring and
          center cell configuration of Fig.~\ref{fig:ringcenter} 
          computed using the algorithm of \S \ref{sec:all_march} with
          $\Delta t=0.005$ at three values of $D$ along the red-dashed
          path in Fig.~\ref{fig:7spot_scatter_ident}. Uniformly
          random initial conditions of magnitude $0.1$ near the
          steady-state were used. The center cell is C7. Short- and
          long-time dynamics are shown. There is a transition to phase
          coherence near $D=1$. Parameters as in Fig.~\ref{fig:7cell_kuram}.}
\label{fig:7cell_kuram_dynamics}
\end{figure}

\begin{figure}[h!tbp]
  \centering
    \begin{subfigure}[b]{0.48\textwidth}
      \includegraphics[width=\textwidth,height=4.4cm]{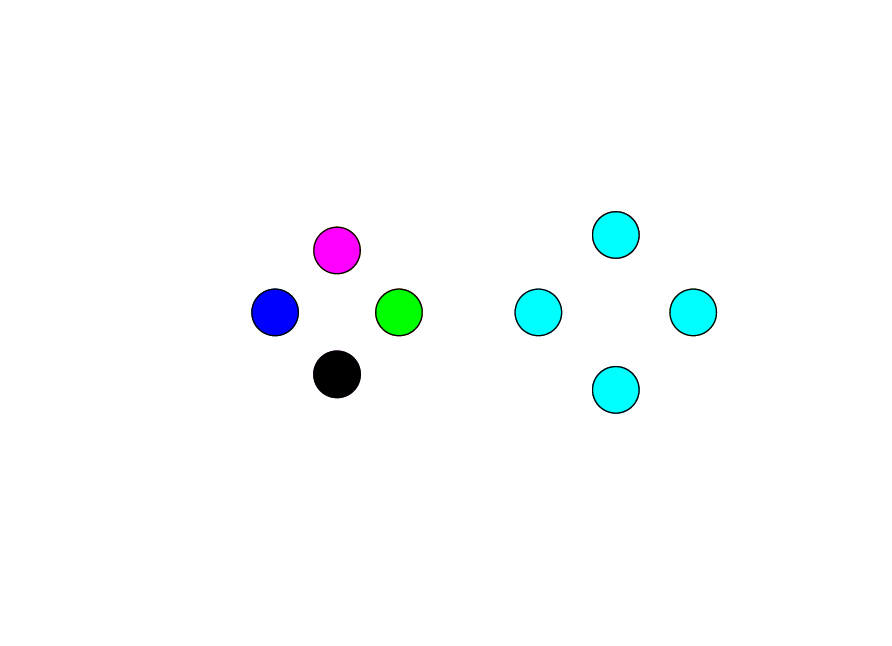}
        \caption{Two-ring configuration}
        \label{fig:eightcell}
    \end{subfigure}
    \begin{subfigure}[b]{0.48\textwidth}
      \includegraphics[width=\textwidth,height=4.4cm]{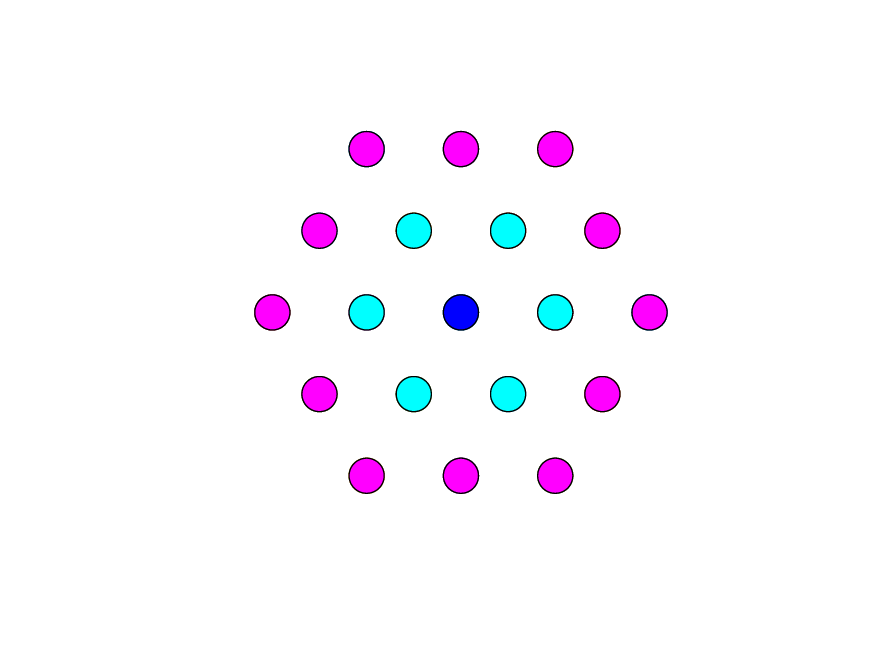}
        \caption{Centered hexagon with two shells}
        \label{fig:centerhex}
      \end{subfigure}
      \caption{Left: Two non-concentric rings of cells, where the
        two rings have different radii. The rightmost ring has identical
        cells, while the other ring has cell-dependent
        parameters. Right: A centered hexagonal configuration of cells
        with two shells. Each shell has identical cells, while the
        center cell is the signaling or pacemaker cell.}
\label{fig:schem:hex}
\end{figure}

By using our fast algorithm of \S \ref{sec:all_march}, numerical
results for $Q_{ave}$, defined in (\ref{more:kuram_ave}), are given in
Table \ref{eightcell:kuramoto} for various pairs of $(D,\sigma)$.  For
$\sigma=1$, the intracellular dynamics are shown for three values of $D$ in
Fig.~\ref{fig:8cell_kuram_dynamics}. We observe that an increase in
the bulk diffusivity $D$ has the dual effect of enhancing the phase
coherence of the activated cells in the leftmost ring as well as for
triggering large intracellular oscillations in the otherwise
quiescent cells in the rightmost ring. Moreover, as $\sigma$
decreases, there is a stronger intercellular communication via the
bulk medium owing to the weaker spatial decay for the bulk species. As
a result, a more pronounced phase coherence is expected when $\sigma$
decreases. This trend is confirmed in Table \ref{eightcell:kuramoto}.

\begin{table}[httbp]
\centering
  \begin{tabular}{ccc}
    \hline
  \multirow{2}{*}{$D$} & \multicolumn{2}{c}{$Q_{ave}$} \\
    & $\sigma=1$ & $\sigma=0.5$ \\
    \hline
   $0.5$ & $0.329$ & $0.427$  \\
   $1.0$ & $0.419$ & $0.540$  \\
  $2.0$ & $0.496$ & $0.624$  \\
  $5.0$ & $0.596$ & $0.826$  \\
  $10.0$ & $0.763$ & $0.875$  \\
  \hline
  \end{tabular}
  \caption{{Averaged Kuramoto order parameter $Q_{ave}$ from
      (\ref{more:kuram_ave}) for the two-ring configuration of
      Fig.~\ref{fig:eightcell} for various bulk diffusivities $D$ when
      either $\sigma=1$ or $\sigma=0.5$. The IC's were the steady-state
      values.  Phase coherence increases as $D$ increases for fixed
      $\sigma$. Moreover, when the bulk degradation rate $\sigma$ is
      smaller, the bulk signal has a weaker spatial decay and so is
      more able to enhance phase synchronization.}}
  \label{eightcell:kuramoto}
  \end{table}

\begin{figure}[h!tbp]
  \centering
  \begin{subfigure}[b]{0.48\textwidth}  
  \includegraphics[width=\textwidth,height=4.3cm]{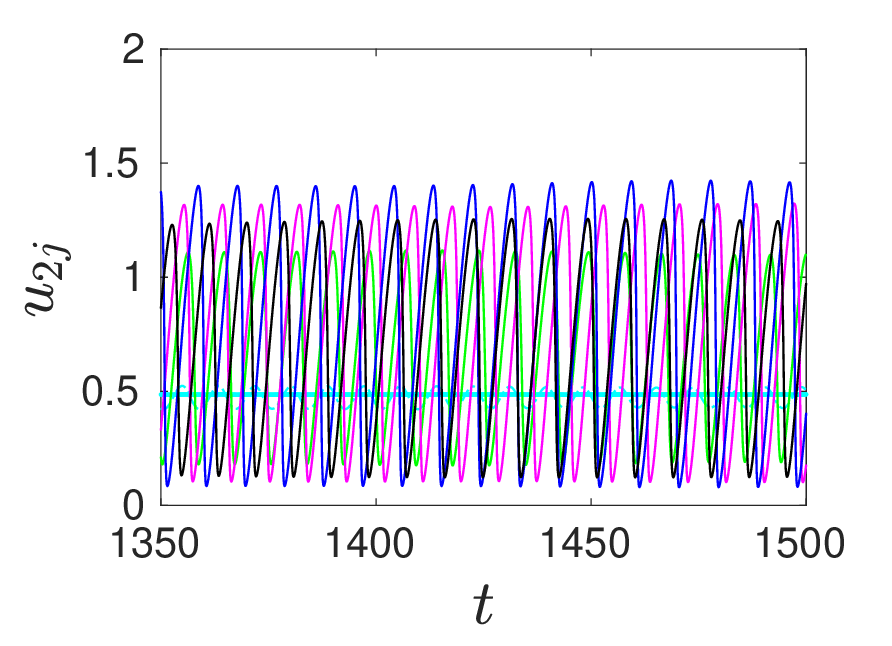}
  \caption{$D=0.5$ (all cells)} 
        \label{fig:8_kuram_D05sig1}
      \end{subfigure}
  \begin{subfigure}[b]{0.48\textwidth}  
  \includegraphics[width=\textwidth,height=4.3cm]{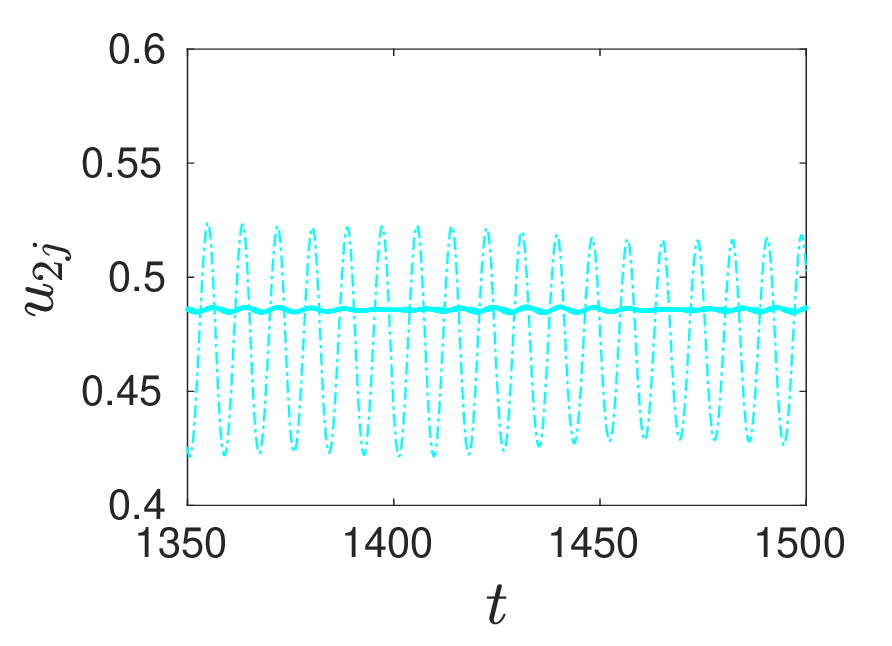}
  \caption{$D=0.5$ (rightmost ring cells)} 
        \label{fig:8_kuram_D05sig1_right}
      \end{subfigure}
  \begin{subfigure}[b]{0.48\textwidth}  
  \includegraphics[width=\textwidth,height=4.3cm]{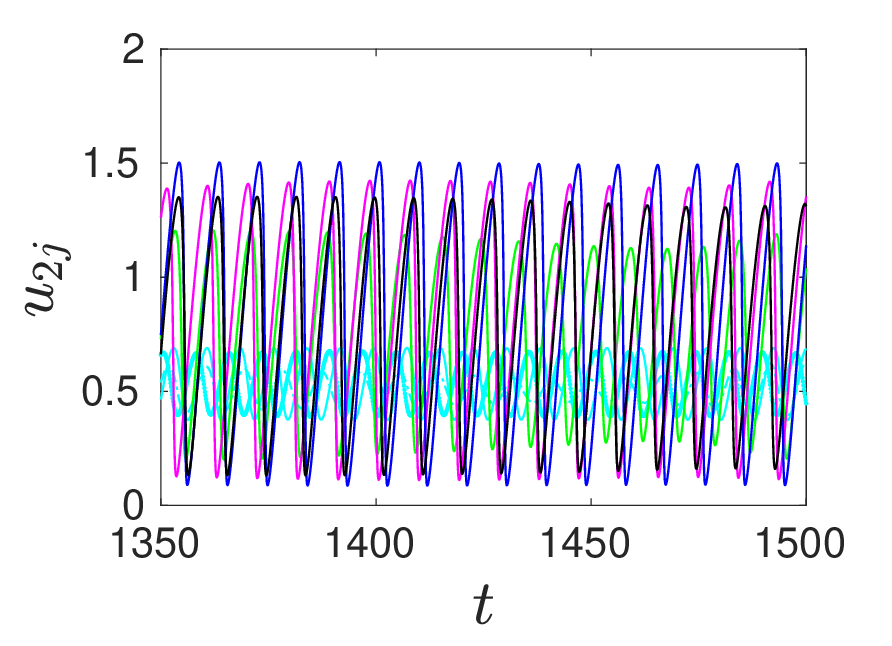}
  \caption{$D=1$ (all cells)}
          \label{fig:8_kuram_D1sig1}
\end{subfigure}
  \begin{subfigure}[b]{0.48\textwidth}  
    \includegraphics[width=\textwidth,height=4.3cm]{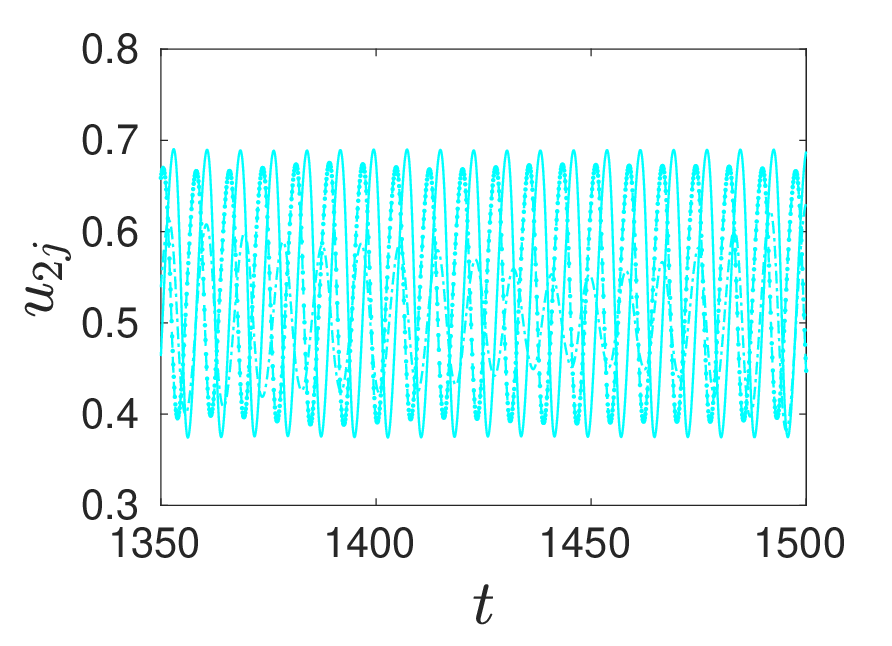}
  \caption{$D=1$ (rightmost ring cells)} 
        \label{fig:8_kuram_D1sig1_right}
      \end{subfigure}
  \begin{subfigure}[b]{0.48\textwidth}  
 \includegraphics[width=\textwidth,height=4.3cm]{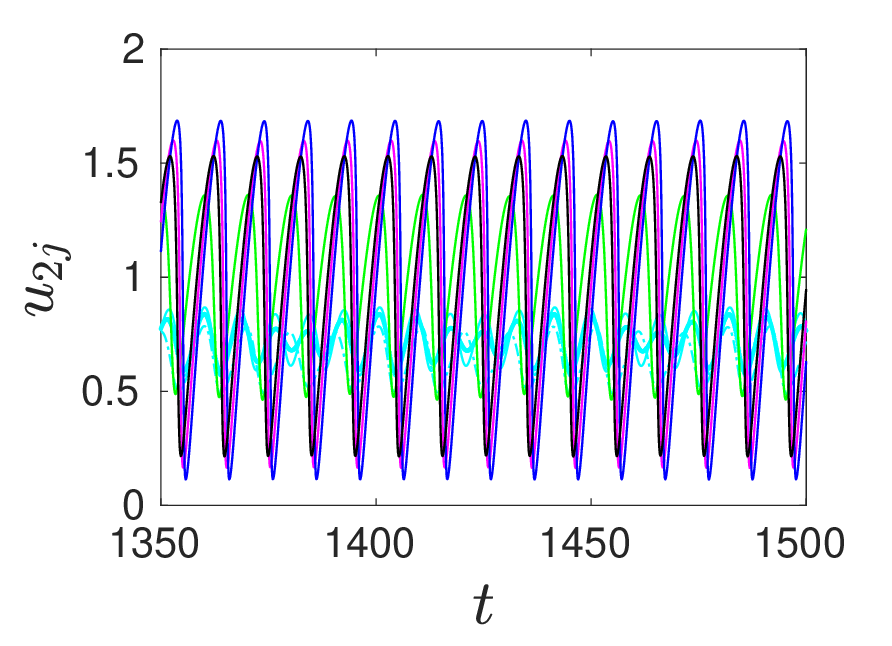}
  \caption{$D=5$ (all cells)} 
        \label{fig:8_kuram_D5sig1}
      \end{subfigure}
  \begin{subfigure}[b]{0.48\textwidth}  
    \includegraphics[width=\textwidth,height=4.3cm]{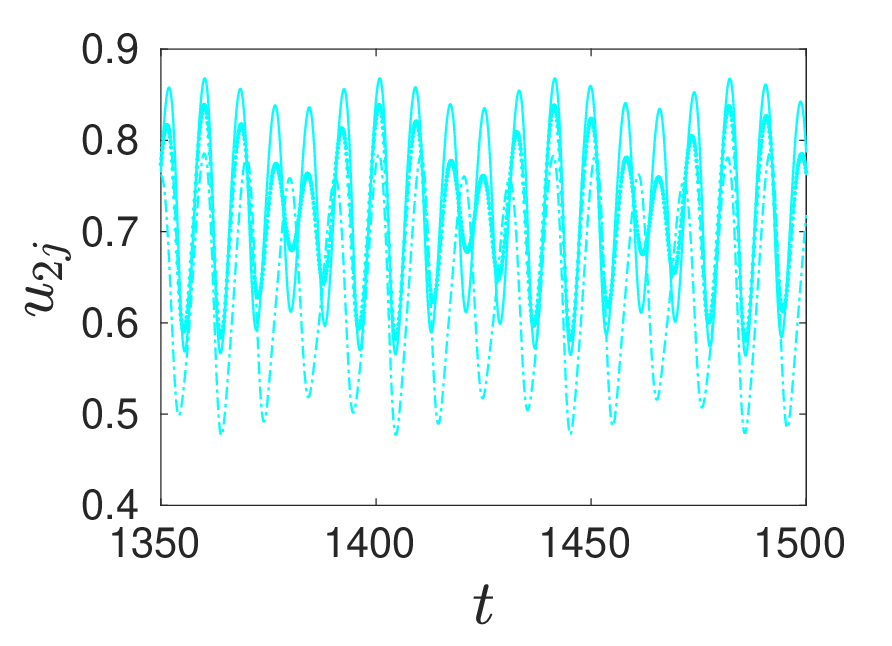}
  \caption{$D=5$ (rightmost ring cells)}
  \label{fig:8_kuram_D5sig1_right}
      \end{subfigure}
      \caption{Intracellular dynamics $u_{2j}(t)$ for the two-ring
          configuration of cells in Fig.~\ref{fig:eightcell} computed
          using the algorithm of \S \ref{sec:all_march} with
          $\Delta t=0.005$ for three values of $D$ when
          $\sigma=1$. The left panels show the dynamics in all the
          cells, identified according to the colors in
          Fig.~\ref{fig:eightcell}.  The right panels show the
          intracellular dynamics for the cells in the rightmost ring,
          where the solid cyan curve is for the rightmost cell
          $\x_1$. With a counterclockwise orientation, solid-dashed,
          dashed-dotted and dotted curves identifies the dynamics in
          the cells $\x_2$, $\x_3$ and $\x_4$, respectively.  Top row:
          For $D=0.05$, the leftmost ring cells show little phase
          coherence (left panel), and the only cell in the rightmost
          ring that exhibits oscillations is $\x_3$ (which is the cell
          closest to the leftmost ring). Middle row: For $D=1$ there
          is enhanced phase coherence in the leftmost ring cells, and all
          cells in the rightmost ring now have significant
          oscillations. Bottom row: For $D=5$, the cells on the leftmost
          ring show significant phase coherence, while the rightmost
          ring cells exhibit in-phase mixed-mode
          oscillations. Parameter values are given in the text.}
\label{fig:8cell_kuram_dynamics}
\end{figure}

For our last numerical experiment we consider the centered hexagonal
arrangement of cells shown in Fig.~\ref{fig:centerhex} that has two
shells of cells and a center cell. The generators for this finite
lattice are taken to be
$\lb_1 = \left(\left(\frac 43\right)^{1/4},0\right)^T$ and
$\lb_2= \left(\frac 43\right)^{1/4}\left(\frac 12, \frac{\sqrt 3}{2}
\right)^T$, so that the area $|\lb_1\times\lb_2|$ of the primitive
cell is unity.

We fix $d_{1j}=0.8$ and $d_{2j}=0.2$ for all cells, i.e.~for
$j\in \lbrace{1,\ldots,19 \rbrace}$, and we set the bulk
degradation rate as $\sigma=1$.  For the center cell we take
$\alpha_1=0.5$, so that from Fig.~\ref{fig:selkov_efflux} this cell is
activated when uncoupled from the the bulk. For the cells on the inner
and outer shells we take $\alpha_j=0.7$ for
$j\in\lbrace{2,\ldots,7\rbrace}$ and $\alpha_j=0.8$ for
$j\in\lbrace{8,\ldots,19\rbrace}$, respectively. These cells are 
quiescent when uncoupled from the bulk.

  \begin{table}[htbp]
\centering
  \begin{tabular}{c|c||c|c}
        \hline
 $D$   & $Q_{ave}$ & $D$  & $Q_{ave}$ \\
        \hline
   $0.2$ & $0.171$ & $0.6$ & $0.798$\\
   $0.3$ & $0.237$ & $0.7$ & $0.935$\\
   $0.4$ & $0.498$ & $0.8$ & $0.958$ \\
   $0.5$ & $0.664$ & $0.9$ & $0.973$\\
  \hline
  \end{tabular}
  \caption{{Averaged Kuramoto order parameter $Q_{ave}$ from
      (\ref{more:kuram_ave}) for the finite hexagonal lattice
      configuration of Fig.~\ref{fig:centerhex} for various bulk
      diffusivities $D$. Uniformly random IC's of magnitude $0.1$ were
      used.  The phase coherence increases significantly with
      $D$. Parameter values as in the caption of
      Fig.~\ref{fig:hex_kuram_dynamics}.}}
  \label{hexcell:kuramoto}
  \end{table}

  The numerical results for $Q_{ave}$ in Table \ref{hexcell:kuramoto}
  show that the intracellular oscillations become increasingly
  phase-coherent as $D$ increases. In
  Fig.~\ref{fig:8cell_kuram_dynamics} we plot $u_{j2}(t)$ versus $t$
  for three values of $D$ as computed using our algorithm in \S
  \ref{sec:all_march} with $\Delta t=0.005$, and with uniformly random
  initial conditions of magnitude $0.1$. For $D=0.3$, in
  Figs.~\ref{fig:hex_kuram_D03sig1}--\ref{fig:hex_kuram_D03sig1_long}
  we observe that the cells on each ring now oscillate with different
  amplitudes and phases, and that there is little phase
  coherence. However, upon increasing $D$ to $D=0.7$ for which
  $Q_{ave}=0.935$, in
  Figs.~\ref{fig:hex_kuram_D07sig1}--\ref{fig:hex_kuram_D07sig1_long}
  we observe that cells on each of the two shells oscillate with a
  common amplitude and phase, and that there is only relatively minor
  phase differences between the intracellular oscillations on the two
  shells and the center cell. Overall, this example clearly show that
  a single activated cell at the center can trigger intracellular
  oscillations in the otherwise quiescent cells that form the inner
  and outer shells of a truncated lattice, and that the coherence of
  these oscillations increases strongly as the bulk diffusivity
  increases.

      \begin{figure}[h!tbp]
  \centering
  \begin{subfigure}[b]{0.48\textwidth}  
  \includegraphics[width=\textwidth,height=4.3cm]{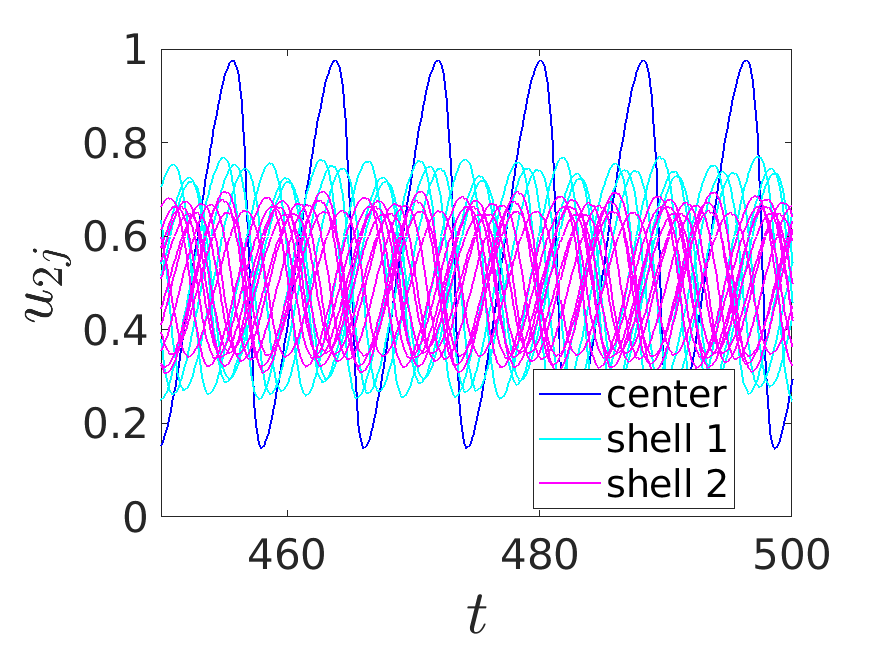}
  \caption{$D=0.3$ (short time)} 
        \label{fig:hex_kuram_D03sig1}
      \end{subfigure}
  \begin{subfigure}[b]{0.48\textwidth}  
  \includegraphics[width=\textwidth,height=4.3cm]{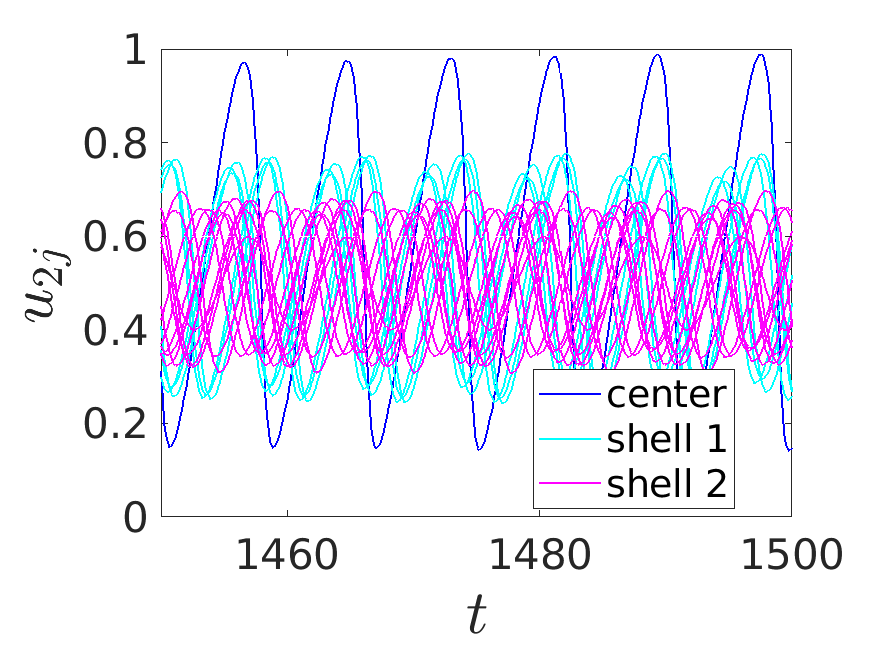}
  \caption{$D=0.3$ (long time)} 
        \label{fig:hex_kuram_D03sig1_long}
      \end{subfigure}
  \begin{subfigure}[b]{0.48\textwidth}  
  \includegraphics[width=\textwidth,height=4.3cm]{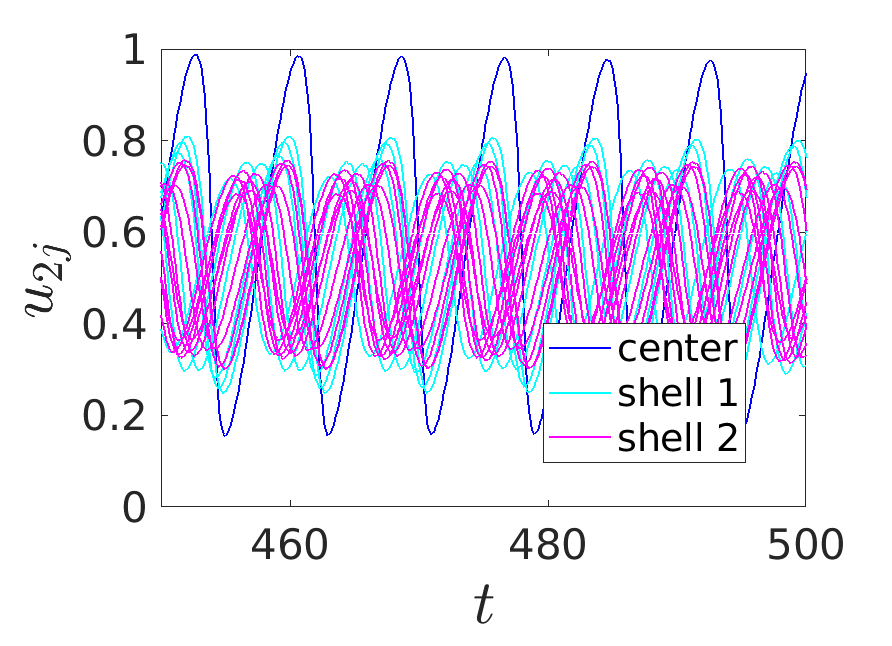}
  \caption{$D=0.5$ (short time)}
          \label{fig:hex_kuram_D05sig1}
\end{subfigure}
  \begin{subfigure}[b]{0.48\textwidth}  
    \includegraphics[width=\textwidth,height=4.3cm]{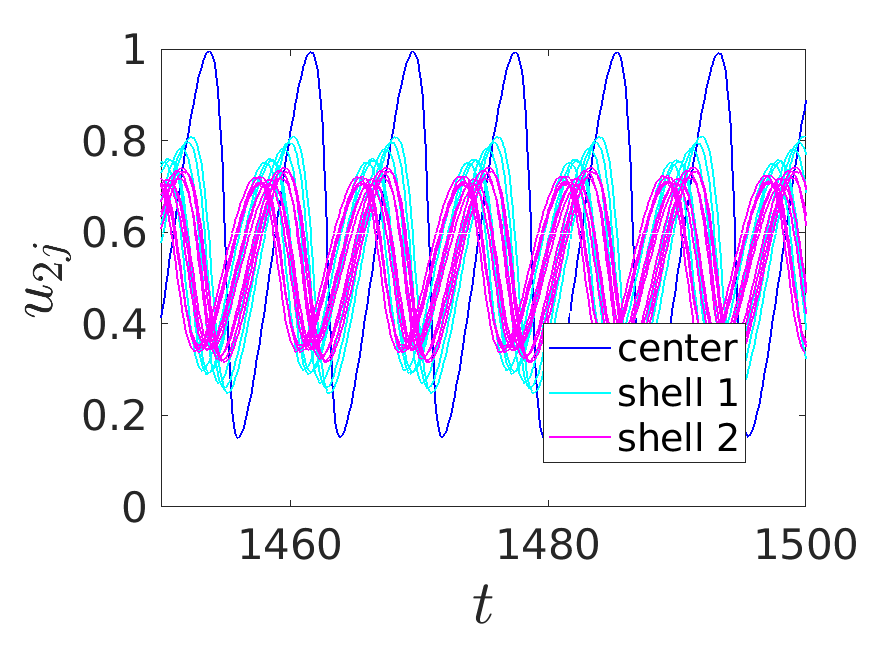}
  \caption{$D=0.5$ (long time)} 
        \label{fig:hex_kuram_D05sig1_long}
      \end{subfigure}
  \begin{subfigure}[b]{0.48\textwidth}  
 \includegraphics[width=\textwidth,height=4.3cm]{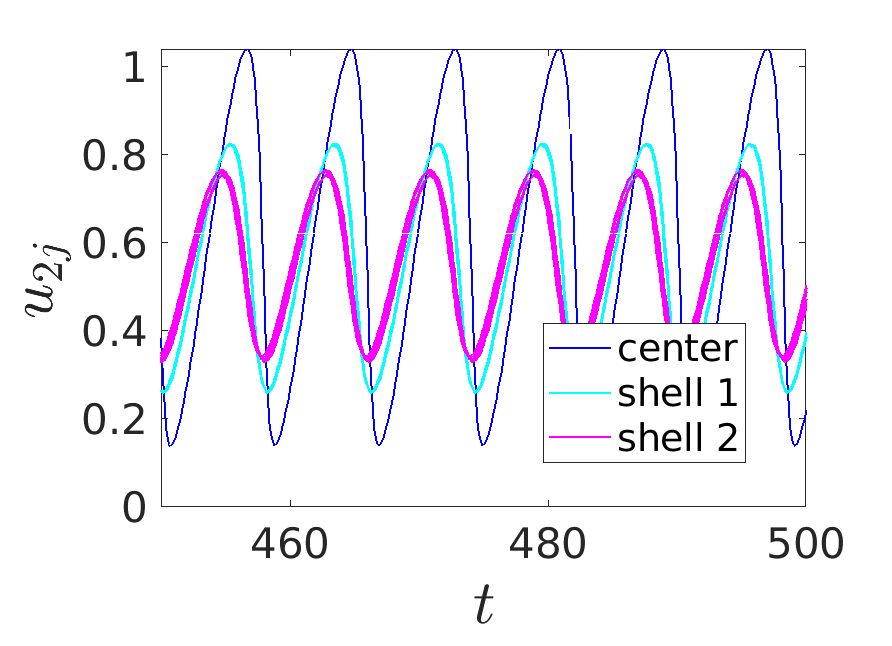}
  \caption{$D=0.7$ (short time)} 
        \label{fig:hex_kuram_D07sig1}
      \end{subfigure}
  \begin{subfigure}[b]{0.48\textwidth}  
    \includegraphics[width=\textwidth,height=4.3cm]{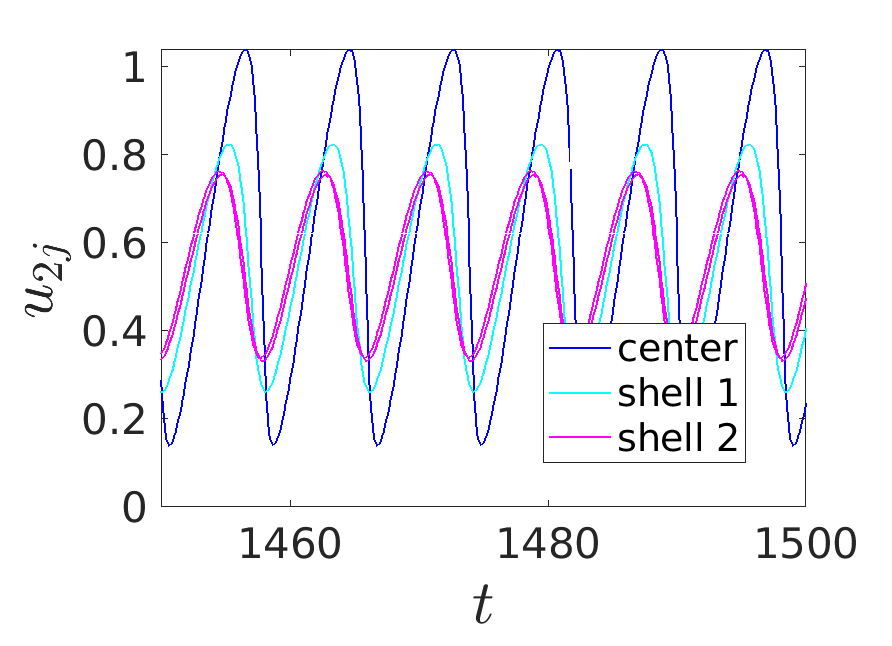}
  \caption{$D=0.7$ (long time)}
  \label{fig:hex_kuram_D07sig1_long}
      \end{subfigure}
      \caption{{Short- (left panels) and long-time (right panels)
          intracellular dynamics $u_{2j}(t)$ for the finite lattice
          configuration in Fig.~\ref{fig:centerhex} as computed using
          the algorithm of \S \ref{sec:all_march} with
          $\Delta t=0.005$ and for three values of $D$. The cell at
          the center and on the two shells are identified by the
          colours in Fig.~\ref{fig:centerhex}. Only the center cell is
          activated when uncoupled from the bulk. Uniformly random IC
          of magnitude $0.1$ around the steady-state intracellular
          values were given. Phase coherence increases substantially
          near $D=0.5$. Parameters: $\sigma=1$, $d_{1j}=0.8$,
          $d_{2j}=0.2$ for $j\in \lbrace{1,\ldots,19\rbrace}$,
          $\alpha_1=0.5$, $\alpha_j=0.7$ for
          $j\in\lbrace{2,\ldots,7\rbrace}$ (inner shell) and
          $\alpha_j=0.8$ for $j\in\lbrace{8,\ldots,19\rbrace}$ (outer
          shell).}}
\label{fig:hex_kuram_dynamics}
\end{figure}

\section{Discussion}\label{sec:discussion}

The hybrid asymptotic-numerical theory developed herein has provided
  new theoretical and computationally efficient approaches for
  studying how oscillatory dynamics associated with spatially
  segregated dynamically active ``units'' or ``cells'' are regulated
  by a PDE bulk diffusion field that is produced and absorbed by the
  entire cell population. For the cell-bulk model (\ref{DimLess_bulk})
  we extended the strong localized perturbation theory, surveyed in
  \cite{ward2018spots}, to a time-dependent setting in order to derive
  a new integro-differential ODE system that characterizes
  intracellular dynamics in a memory-dependent bulk-diffusion
  field. For this nonlocal system, a fast time-marching scheme,
  relying on the {\em sum-of-exponentials} method of
  \cite{greengard_2015} and \cite{beylkin_2005}, was developed to
  reliably numerically compute solutions over long time intervals.

  For the special case of Sel'kov reaction kinetics, we have used our
  hybrid approach to investigate the effect that the influx and efflux
  permeability parameters, the bulk degradation rate and bulk
  diffusivity, and the reaction-kinetic parameters have in both
  triggering intracellular oscillations and in mediating oscillator
  synchronization for the entire collection of cells. Quorum- and
  diffusion-sensing behavior were illustrated for various cell
  configurations. Comparisons of numerical results from our fast
  time-marching algorithm for the integro-differential ODE system with
  the numerical results for the full cell-bulk system
  (\ref{DimLess_bulk}) computed using the commercial solver FlexPDE
  \cite{flexpde2015} have shown that our hybrid asymptotic-numerical
  theory accurately reproduces, within roughly a minute of CPU time on
  a laptop, the intricate mixed-mode oscillatory dynamics that can
  occur for the full system (\ref{DimLess_bulk}) over long-time
  intervals. From a computation of the Kuramoto order parameter, we
  have exhibited an apparent phase transition to complete phase
  coherence for a hexagonal arrangement of identical cells (see
  Fig.~\ref{fig:7cell_kuram}).  These cells would not exhibit
  intracellular oscillations without any inter-cell
  coupling. Moreover, in Fig.\ref{fig:hex_kuram_dynamics} we have
  shown that a single pacemaker or signaling cell can trigger
  intracellular oscillations for all cells on a truncated lattice, and
  that these oscillations become increasingly phase-coherent as the
  bulk diffusivity increases.

Although we have only illustrated and benchmarked our theory for a
moderate number of cells, with our fast time-marching scheme, possibly
applied to Fitzhugh-Nagumo kinetics, it is viable numerically to
compute target or spiral wave patterns of oscillatory dynamics, or to
seek to identify chimera states, associated with a large $N\ge 100$
collection of cells. We emphasize that our integro-differential ODE
system can also be used for specific multi-component reaction kinetics
such as those modeling the initiation and synchronization of
glycolysis oscillations (see \cite{hauser} for a recent survey) or for
those modeling quorum-sensing transitions between bistable states that
are believed to be responsible for bioluminscence phenomena for the
marine bacterium {\em Aliivibrio fischeri}
(cf.~\cite{taga2003chemical}, \cite{qs_jump}, \cite{ridgway} see also
the references therein).

Moreover, our new theoretical and computational approaches for first
deriving and then computing solutions to integro-differential systems
of ODE's for cell-bulk models with one diffusing species can readily
be extended to $\R^3$ and to the case of finite domains
$\Omega\in \R^d$ with $d\in\lbrace{1,2,3\rbrace}$. In a 1-D setting,
steady-states and their linear stability properties for cell-bulk
models consisting of discrete oscillators located at either the domain
boundaries or at interior points in the domain, and which are coupled
by a bulk diffusion field, have been studied in \cite{gou2015},
\cite{gou2017} and \cite{paquin_memb} (see also the references
therein). Our approach is also readily extended to cell-bulk models
with two bulk diffusing species, such as in \cite{turing2d} and
\cite{turing1d}, where the focus was mainly on describing bifurcation
and pattern-formation properties of the model. Cell-bulk models,
allowing for a spatially uniform bulk drift velocity, such as in
\cite{Mueller2014uecke}, should also be tractable to analyze with our
approach.  The main two features that we require to derive the fast
time-marching algorithm are an exact analytical representation of the
Laplace-transformed bulk diffusion field and that this representation
has singularities only along the negative real axis in the Laplace
transform space. These properties are inherent for Green's functions
for heat-type equations with bulk degradation in either bounded or
unbounded domains.

Finally, we discuss three numerical challenges that warrant further
investigation. Firstly, the primary numerical errors in our
time-marching scheme of \S \ref{sec:all_march} are due to the ETD2
discretization of the Duhamel integrals as well as the Forward-Euler
discretization in (\ref{alg:step_i}) for the dynamics. By increasing
the truncation order of this aspect of our algorithm it should be
possible to take larger time steps than we have done while still
preserving a comparable level of accuracy.  Secondly, the computation
of roots of $\det{{\mathcal M}}(\lambda)=0$ for the GCEP in
(\ref{5:gcep_2}), which determines the HB boundaries and oscillation
frequencies near the steady-state, becomes highly challenging when the
number of cells becomes large. As a result, efficient numerical
methods to compute path-dependent solutions to nonlinear matrix
eigenvalue problems that are complex-symmetric, but non-Hermitian, are
required. Finally, the computation of the winding number in
(\ref{wind:form}) for an arbitrary cell configuration with a large
number of cells is highly challenging as there is typically no
analytical expression for $\det{{\mathcal M}}(\lambda)$. To overcome
this difficulty, a Takagi factorization (cf.~\cite{horn}) of the
complex symmetric matrix ${\mathcal M}(\lambda)$ that smoothly depends
on a bifurcation parameter may provide a numerically well-conditioned
approach to compute the winding number (cf.~\cite{dieci}).

\begin{appendices}
\renewcommand{\theequation}{\Alph{section}.\arabic{equation}}
\setcounter{equation}{0}

\section{Non-dimensionalization of the cell-bulk
  model}\label{app:nondim}

We summarize the non-dimensionalization of
(\ref{Dim_bulk}) into the form (\ref{DimLess_bulk}).
Labeling $\left[\gamma\right]$ as the dimensions of the
variable $\gamma$, we have
\begin{equation}\label{app:dimensions}
  \begin{aligned}
   \left[{\mathcal U}\right] &= \frac{\mbox{moles}}{\mbox{(length)}^2}
   \,, \quad \left[\v{\mu}_j\right] = \mbox{moles} \,, \quad
   \left[\mu_c \right] = \mbox{moles} \,, \quad \left[D_B \right]
   =\frac{\mbox{(length)}^2}{\mbox{time}} \,, \quad\\ \left[k_B \right]
   &=\left[k_R\right] = \frac{1}{\mbox{time}} \,, \quad \left[\beta_1
     \right] =\frac{\mbox{length}}{\mbox{time}} \,,
   \quad\ \left[\beta_2 \right]
   =\frac{1}{\mbox{length}\times\mbox{time}} \,,
\end{aligned}
\end{equation}
for the dimensions of the terms in (\ref{Dim_bulk}).  We introduce the
dimensionless variables $t$, $\v{x}$, $U$ and $\v{u}$ as
\begin{equation}\label{form:non_dim}
  t = k_R T \,, \quad  \v{x} = {\v{X}/L} \,, \quad
  U= L^2 {{\mathcal U}/\mu_c} \,, \quad \v{u}_j = 
  {\v{\mu}_j/\mu_c} \,, \quad  \mbox{where} \quad L\equiv
  \mbox{min}_{i\neq j}|\v{x}_i-\v{x}_j| \,.
\end{equation}
We define the small parameter $\eps={R_0/L}$ where $R_0$ is the common
cell radius. By noting that ${\rm d}S_{\v{X}}=L{\rm d}S_{\v{x}}$, we
readily obtain (\ref{DimLess_bulk}) where the dimensionless parameters
are defined in (\ref{dim:param}).

\section{Long-Time Dynamics of the Transient
  solution}\label{app:trans}

We follow \cite{LS} and summarize some results for inverting the
Laplace transform in (\ref{3:small_s}). We first define
$\hat{H}(s)\equiv {1/\left[s\log{s}\right]}$, with a branch cut
along the negative real axis. To determine
$H(\tau) = {\mathcal L}^{-1}\left[\hat{H}\right]$ we integrate over a
Bromwich contour
\begin{equation*}
 H(\tau)= \frac{1}{2\pi i} \int_{c-i\infty}^{c+i\infty} \frac{e^{s\tau}}
  {s\log{s}} \, ds\,, \quad \mbox{where} \quad c>1\,.
\end{equation*}
By calculating the residue at the simple pole $s=1$ and from
deforming the integration path we obtain
\begin{equation*}
  H(\tau)=e^{\tau} + \frac{1}{2\pi i} \int_{0}^{\infty} \left(
    \frac{e^{-\xi\tau}}{\xi\left(\log\xi+i\pi\right)} -
    \frac{e^{-\xi\tau}}{\xi\left(\log\xi-i\pi\right)}\right) \, d\xi \,,
\end{equation*}
where the two integrand terms represent contributions from the top and
the bottom of the branch cut along the negative real axis. In this
way, we obtain the identity
\begin{equation}\label{appb:int}
  N(\tau) \equiv \int_{0}^{\infty} \frac{e^{-\tau\xi}}{\xi\left[\pi^2+
      (\log \xi)^2\right]} \, d\xi = e^{\tau} - H(\tau) \,,
\end{equation}
where $N(\tau)$ is Ramanujan's integral. By taking the Laplace transform
of (\ref{appb:int}) we obtain the identity
\begin{equation*}
  \hat{N}(s) = \frac{1}{s-1} - \frac{1}{s\log{s}}\,.
\end{equation*}
Upon using $\log{s}\sim (s-1)- {(s-1)^2/2}$ as $s\to 1$, we get
$\hat{N}(s)\sim {1/2}$ as $s\to 1$, so that $s=1$ is a removable
singularity for $\hat{N}$. Therefore, $\hat{N}$ is analytic in
$\mbox{Re}(s)>0$ and satisfies
$\hat{N}(s)\sim -\left[s\log{s}\right]^{-1} + {\mathcal O}(1)$ as
$s\to 0$. By the scaling relation
${\mathcal L}\left[N({\tau/c})\right]=c \hat{N}(cs)$ for the Laplace
transform with $c=\kappa_{0j}>0$, we obtain (\ref{3:N1}).

\section{Improved approximation for
  $B_j(\Delta t)$}\label{app:bdeltat}
{For $u_{1j}(0)\neq 0$, we derive the approximation $B_j(\Delta t)$ in
(\ref{alg:b1}) that is an improvement of that given in (\ref{mat:bj}).

By substituting $D_{j}(\Delta t)$ and $C_{jk}(\Delta t)$ from
(\ref{dhj:march_1}) and (\ref{chj:march_1}) into
(\ref{duh:r2}), we obtain 
\begin{equation}\label{app:imp_1}
  \left[E_1(\sigma \Delta t) - \eta_j\right] B_j(\Delta t) =
  \gamma_j u_{1j}(\Delta t) - I  + \sum_{k\ne j}^{N} B_{k}(\Delta t)
  E_{1}\left({a_{jk}^2/\Delta t}\right)\,, \quad
  \mbox{for} \quad j\in\lbrace{1,\ldots,N\rbrace}\,,
\end{equation}
where
\begin{equation}\label{app:Ieval}
    I \equiv \int_{0}^{\Delta t} \left( B_{j}(\Delta t) - B_{j}(\Delta t-z)\right)
  \frac{ e^{-\sigma z}}{z} \, dz \,.
\end{equation}
We let $\Delta t\ll 1$, and use $E_1(\sigma \Delta t)=-
\log\left(\sigma \Delta t\right)-\gamma_e+{\mathcal O}(\Delta t)$, and
neglect the last term on the right-hand side of (\ref{app:imp_1})
since $E_1(z)$ is exponentially small for $z\gg 1$. This yields that
\begin{equation}\label{app:imp_2}
  B_j(\Delta t) = -\frac{u_{1j}(\Delta t) \gamma_j}{\log\left({\Delta t/(
        \kappa_j e^{-\gamma_e})}\right)} +
     \frac{I}{\log\left({\Delta t/(
        \kappa_j e^{-\gamma_e})}\right)} \,,
\end{equation}
where we used $\sigma e^{\eta_j}={1/\kappa_j}$. Upon substituting
the two-term expansion for $B_j(\Delta t)$, given by
\begin{equation}\label{app:imp_3}
  B_j(\Delta t) = \frac{c_j}{\log\left({\Delta t/(
        \kappa_j e^{-\gamma_e})}\right)} +
     \frac{d_j}{\left[\log\left({\Delta t/(
        \kappa_j e^{-\gamma_e})}\right)\right]^2} \,,
\end{equation}
with $c_j=-u_{1j}(\Delta t)\gamma_j$ into (\ref{app:imp_2}) and
(\ref{app:Ieval}), we determine $d_j$ as
\begin{equation}
  d_j = c_j \int_{0}^{\Delta t} \frac{ \log\left( 1-{z/\Delta t}\right)
    e^{-\sigma z}}{z \log\left( {(\Delta t-z)/(\kappa_je^{-\gamma_e})}
    \right)} \, dz = c_j \int_{0}^{1} \frac{\log(1-\xi) e^{-\sigma \xi \Delta
      t}}{\xi \log\left(\frac{\Delta t}{\kappa_j e^{-\gamma_e}}(1-\xi)\right)}
  \, d\xi \,.
\end{equation}
By using $e^{-\sigma \xi\Delta t}\approx 1$ and
$|\log\left({\Delta t/(\kappa_j e^{-\gamma_e})}\right)|\gg 1$, we obtain
for $\Delta t\ll 1$ that
\begin{equation}\label{app:dj}
  d_j \sim \frac{c_j}{\log\left(\frac{\Delta t}{\kappa_j e^{-\gamma_e}}
    \right)} \int_{0}^{1} \frac{\log(1-\xi)}{\xi} \, d\xi\,.
\end{equation}
Using $\int_{0}^{1} \xi^{-1}\log(1-\xi)\, d\xi=-{\pi^2/6}$, we
substitute (\ref{app:dj}) and $c_j=-u_{1j}(\Delta t)\gamma_j$
into (\ref{app:imp_3}) to get (\ref{alg:b1}).
}

\section{An artificial boundary condition for full
  PDE simulations}\label{app:artificial}

{To numerically solve the cell-bulk system (\ref{DimLess_bulk})
using FlexPDE 6.50 \cite{flexpde2015}, we must truncate $\R^2$ by both
choosing and formulating a boundary condition on a confining boundary
$\partial\Omega$ that encloses all of the cells.

For the full PDE computations shown in Fig.~\ref{fig:D0p75Sig1over7}
and Fig.~\ref{fig:D3sig0.5_comp} we chose the artificial boundary
$\partial \Omega$ to have approximately the same shape as the
asymptotic levels sets of the steady-state solution away from the two
cells. Assuming that the solution is approximately radially symmetric
due to the cell arrangement with center of mass at $\v{x}_c=\v{0}$, we
took $\Omega = \{\v{x}\in\R^2 \, | \, |\v{x}| \leq R_a\}$ with
$R_{a}\gg 1$. The steady-state far away from the asymptotic support of
the two cells satisfies
$\Delta U_s - \omega^2 U_s \approx C_1\delta(\v{x})$, for some $C_1$
where $\omega \equiv \sqrt{\sigma/D}$. This yields the asymptotic
behavior $U_s = C_2 K_0(\omega |\v{x}|)$ where
$C_2=-{C_1/(2\pi)}$. Upon using
$K_{0}(z)\sim \sqrt{\pi/2}z^{-1/2} e^{-z} \left(1+{\mathcal
    O}(z^{-1})\right)$ for $z\gg 1$, we let $r=|\v{x}|$, and readily
calculate
\begin{equation*}
  \frac{d}{dr} U_s  \sim -\left(\omega + \frac{1}{2r}\right) U_{s}
  + \mathcal{O}\left(\frac{1}{r^2}\right) U_s \,, \qquad \mbox{for}
  \quad r\gg 1\,.
\end{equation*}
Neglecting the ${\mathcal O}(r^{-2})$ term, the artificial boundary
condition that we imposed in our FlexPDE simulations, which
asymptotically approximates the steady-state on $\R^2$, was
\begin{equation}\label{app:art_eq}
  \partial_r U(r, t) = -\left(\omega + \frac{1}{2r}\right)\,
  U(r,t) \quad \text{on} \quad r = R_a\,,
\end{equation}
where $\omega=\sqrt{\sigma/D}$.  We chose $R_a=6$ for the results
shown in Fig.~\ref{fig:D0p75Sig1over7} and Fig.~\ref{fig:D3sig0.5_comp}.

More sophisticated, memory-dependent, artificial boundary conditions
for parabolic PDE problems have been formulated in
\cite{artificialBC}. However, owing to the very close agreement
between our asymptotic and FlexPDE results using (\ref{app:art_eq}),
we did not find it essential to implement the approach in
\cite{artificialBC}.}

\end{appendices}

\section*{Acknowledgments}
Merlin Pelz was partially supported by a UBC Four-Year Doctoral
Fellowship. Michael Ward gratefully acknowledges the
support of the NSERC Discovery Grant Program.

\bibliographystyle{plain}
\bibliography{diffkuramoscillref}

\begin{thebibliography}{10}

\bibitem{AS}
M.~Abramowitz and I.~Stegun.
\newblock {\em Handbook of mathematical functions, with formulas, graphs and
  mathematical tables}.
\newblock Dover Publications, NY, USA, 1974.

\bibitem{acebron2005kuramoto}
J.~A. Acebr{\'o}n, L.~Bonilla, C.~J. Vicente, F.~Ritort, and Renato Spigler.
\newblock The {K}uramoto model: A simple paradigm for synchronization
  phenomena.
\newblock {\em Rev. of Modern Phys.}, 77(1):137, 2005.

\bibitem{mirandawinding}
D.~Alciatore and R.~Miranda.
\newblock A winding number and point-in-polygon algorithm.
\newblock {\em Glaxo Virtual Anatomy Project Research Report, Department of
  Mechanical Engineering, Colorado State University}, 1995.

\bibitem{viana}
R.~P. Aristides and R.~L. Viana.
\newblock An integro-differential equation for dynamical systems with
  diffusion-mediated coupling.
\newblock {\em Nonlinear Dynamics}, 100(4):3759--3770, 2020.

\bibitem{betcke2}
T.~Betcke, N.~G. Highan, V.~Mehrmann, G.~M.~N. Porzio, C.~Schr\"oder, and
  Tisseur F.
\newblock {NLEVP}: A collection of nonlinear eigenvalue problems. users' guide.
\newblock {\em MIMS EPring 2011.117, Manchester Institue for Mathematical
  Sciences, The University of Manchester, UK}, page 10 pages, 2019.

\bibitem{betcke}
T.~Betcke, N.~G. Highan, V.~Mehrmann, C.~Schr\"oder, and Tisseur F.
\newblock {NLEVP}: A collection of nonlinear eigenvalue problems.
\newblock {\em ACM Trans. Math. Software}, 39(2):7.1--7.28, 2013.

\bibitem{beylkin_2005}
G.~Beylkin and L.~Monz\'{o}n.
\newblock On approximation of functions by exponential sums.
\newblock {\em Appl. and Comput. Harm. Anal.}, 19(1):17--48, 2005.

\bibitem{beylkin_2010}
G.~Beylkin and L.~Monz\'{o}n.
\newblock Approximation by exponential sums revisited.
\newblock {\em Appl. and Comput. Harm. Anal.}, 28(2):131--140, 2010.

\bibitem{Bouk}
C.~J. Bouwkamp.
\newblock Note on an asymptotic expansion.
\newblock {\em Indiana U. Math. J.}, 21:547--549, 1971.

\bibitem{breakspear2010generative}
M.~Breakspear, S.~Heitmann, and A.~Daffertshofer.
\newblock Generative models of cortical oscillations: neurobiological
  implications of the {K}uramoto model.
\newblock {\em Frontiers in human neuroscience}, 4:190, 2010.

\bibitem{viana_2023}
B.~M Czajkowski, C.~A.~S. Batista, and R.~L. Viana.
\newblock Synchronization of phase oscillators with chemical coupling: Removal
  of oscillators and external feedback control.
\newblock {\em Physica A}, 610:128418, 2023.

\bibitem{da2018kuramoto}
J.~D. Da~Fonseca and C.~V. Abud.
\newblock The {K}uramoto model revisited.
\newblock {\em J. Stat. Mech.: Theory and Experiment}, 2018(10):103204, 2018.

\bibitem{dano}
S.~Dan{\o}, M.~F. Madsen, and P.~G. S{\o}rensen.
\newblock Quantitative characterization of cell synchronization in yeast.
\newblock {\em Proceedings of the National Academy of Sciences},
  104(31):12732--12736, 2007.

\bibitem{dano2}
S.~Dan{\o}, P.~G. S{\o}rensen, and F.~Hynne.
\newblock Sustained oscillations in living cells.
\newblock {\em Nature}, 402(6759):320--322, 1999.

\bibitem{de2007dynamical}
S.~De~Monte, F.~d'Ovidio, S.~Dan{\o}, and P.~G. S{\o}rensen.
\newblock Dynamical quorum sensing: {P}opulation density encoded in cellular
  dynamics.
\newblock {\em Proceedings of the National Academy of Sciences},
  104(47):18377--18381, 2007.

\bibitem{dieci}
L.~Dieci, A.~Papini, and A.~Pugliese.
\newblock Takagi factorization of matrices depending on parameters and locating
  degeneracies of singular values.
\newblock {\em SIAM J. Matrix Analysis}, 43(3):1148--1161, 2022.

\bibitem{dunny1997cell}
G.~M. Dunny and B.~Leonard.
\newblock Cell-cell communication in gram-positive bacteria.
\newblock {\em Annual Rev. Microbiology}, 51(1):527--564, 1997.

\bibitem{flexpde2015}
PDE FlexPDE.
\newblock Solutions inc.
\newblock {\em URL http://www. pdesolutions. com}, 2015.

\bibitem{goldbeter1997biochemical}
A.~Goldbeter.
\newblock {\em Biochemical oscillations and cellular rhythms: the molecular
  bases of periodic and chaotic behaviour}.
\newblock Cambridge University Press, 1997.

\bibitem{q_survey}
D.~Gomez, S.~Iyaniwura, F.~Paquin-Lefebvre, and M.~J. Ward.
\newblock Pattern forming systems coupling linear bulk diffusion to dynamically
  active membranes or cells.
\newblock {\em Phil. Trans. Roy. Soc. A.}, 379:20200276, 2021.

\bibitem{gomezmarin}
A.~Gomez-Marin, J.~Garcia-Ojalvo, and J.~M. Sancho.
\newblock Self-sustained spatiotemporal oscillations induced by membrane-bulk
  coupling.
\newblock {\em Phys. Rev. Lett.}, 98:168303, Apr 2007.

\bibitem{gou2017}
J.~Gou, W.-Y. Chiang, P.-Y. Lai, M.~J. Ward, and Y.-X. Li.
\newblock A theory of synchrony by coupling through a diffusive chemical
  signal.
\newblock {\em Physica D}, 339:1--17, 2017.

\bibitem{gou2015}
J.~Gou, Y.~X. Li, W.~Nagata, and M.~J. Ward.
\newblock Synchronized oscillatory dynamics for a 1-{D} model of membrane
  kinetics coupled by linear bulk diffusion.
\newblock {\em SIAM J. Appl. Dyn. Sys.}, 14(4):2096--2137, 2015.

\bibitem{gou2d}
J.~Gou and M.~J. Ward.
\newblock An asymptotic analysis of a 2-{D} model of dynamically active
  compartments coupled by bulk diffusion.
\newblock {\em Journal of Nonlinear Science}, 26(4):979--1029, 2016.

\bibitem{gou2016}
J.~Gou and M.~J. Ward.
\newblock Oscillatory dynamics for a coupled membrane-bulk diffusion model with
  {F}itzhugh-{N}agumo kinetics.
\newblock {\em SIAM J. Appl. Math.}, 76(2):776--804, 2016.

\bibitem{gregor2010}
T.~Gregor, K.~Fujimoto, N.~Masaki, and S.~Sawai.
\newblock The onset of collective behavior in social amoebae.
\newblock {\em Science}, 328(5981):1021--1025, 2010.

\bibitem{grzybowski2016}
J.~Grzybowski, E.~Macau, and T~Yoneyama.
\newblock On synchronization in power-grids modelled as networks of
  second-order {K}uramoto oscillators.
\newblock {\em Chaos: An Interdisciplinary J. of Nonlin. Sci}, 26(11), 2016.

\bibitem{guttel}
S.~G\"uttel and F.~Tisseur.
\newblock The nonlinear eigenvalue problem.
\newblock {\em Acta Numerica}, 26(1):1--94, 2017.

\bibitem{hauser}
M.~J.~B. Hauser.
\newblock Synchronization of glycolytic activity in yeast cells.
\newblock {\em Curr. Genet.}, 68(1):69--81, 2022.

\bibitem{DSQS}
B.~A. Hense, C.~Kuttler, J.~M{\"{u}}ller, M.~Rothballer, A.~Hartmann, and J.~U.
  Kreft.
\newblock Does efficiency sensing unify diffusion and quorum sensing?
\newblock {\em Nature Reviews. Microbiology}, 5:230--239, 2007.

\bibitem{horn}
R.~Horn and C.~Johnson.
\newblock {\em Matrix Analysis}.
\newblock Cambridge University Press, 1985.

\bibitem{smjw_quorum}
S.~Iyaniwura and M.~J. Ward.
\newblock Localized signaling compartments in 2-{D} coupled by a bulk diffusion
  field: Quorum sensing and synchronous oscillations in the well-mixed limit.
\newblock {\em Europ. J. Appl. Math.}, 32(5):1001--1031, 2021.

\bibitem{smjw_diff}
S.~Iyaniwura and M.~J. Ward.
\newblock Synchrony and oscillatory dynamics for a 2-d {PDE-ODE} model of
  diffusion-mediated communication between small signaling compartments.
\newblock {\em SIAM J. Appl. Dyn. Sys.}, 20(1):438--499, 2021.

\bibitem{greengard_2015}
S.~Jiang, L.~Greengard, and S.~Wang.
\newblock Efficient sum-of-exponentials approximations for the heat kernel and
  their applications.
\newblock {\em Adv. Comput. Math.}, 41:529--551, 2015.

\bibitem{kuram_orig}
Y.~Kuramoto.
\newblock International symposium on mathematical problems in theoretical
  physics.
\newblock In H.~Arakai, editor, {\em Lecture Notes in Physics, p. 420},
  volume~39. Springer, New York, NY, 1975.

\bibitem{kuramoto_book}
Y.~Kuramoto.
\newblock {\em Chemical oscillations, waves, and turbulence}.
\newblock Springer-Verlag, New York, 1984.

\bibitem{kuramoto1995}
Y.~Kuramoto.
\newblock Scaling behavior of turbulent oscillators with non-local interaction.
\newblock {\em Progress of Theoretical Physics}, 94(3):321--330, 1995.

\bibitem{kuram_cont}
Y.~Kuramoto and H.~Nakao.
\newblock Phase reduction approach to synchronisation of nonlinear oscillators.
\newblock {\em Contemporary Physics}, 57(2):188--214, 2015.

\bibitem{nakao_philtrans}
Y.~Kuramoto and H.~Nakao.
\newblock On the concept of dynamical reduction: the case of coupled
  oscillators.
\newblock {\em Phil. Trans. R. Soc. A.}, 377, 20190041.

\bibitem{levine2005}
H.~Levine and W.~J. Rappel.
\newblock Membrane-bound {T}uring patterns.
\newblock {\em Phys. Rev. E}, 72(6):061912, 2005.

\bibitem{LS}
S.~G. Llewellyn~Smith.
\newblock The asymptotic behavior of {R}amanujan's integral and its application
  to two-dimensional diffusion-like equations.
\newblock {\em Europ. J. Appl. Math.}, 11:13--28, 2000.

\bibitem{lopez_1}
M.~L\'{o}pez-Fern\'{a}ndez, C.~Lubich, and A.~Sch\"{a}dle.
\newblock Adaptive, fast, and oblivious convolution in evolution equations with
  memory.
\newblock {\em SIAM J. Sci. Comput.}, 30(2):1015--1037, 2008.

\bibitem{lopez_2}
M.~L\'{o}pez-Fern\'{a}ndez and C.~Palencia.
\newblock On the numerical inversion of the {L}aplace transform of certain
  holomorphic mappings.
\newblock {\em Applied Numer. Math.}, 51(2-3):289--303, 2004.

\bibitem{lopez_3}
M.~L\'{o}pez-Fern\'{a}ndez, C.~Palencia, and A.~Sch\"{a}dle.
\newblock A spectral order method for inverting sectorial {L}aplace transforms.
\newblock {\em SIAM J. Numer. Anal.}, 44(3):1332--1350, 2006.

\bibitem{qs_jump}
P.~Melke, P.~Sahlin, A.~Levchenko, and H.~Jonsson.
\newblock A cell-based model for quorum sensing in heterogeneous bacterial
  colonies.
\newblock {\em PLoS Computational Biology}, 6(6):e1000819, 2010.

\bibitem{muller2006}
J.~M{\"u}ller, C.~Kuttler, B.~A. Hense, M.~Rothballer, and A.~Hartmann.
\newblock Cell--cell communication by quorum sensing and dimension-reduction.
\newblock {\em Journal of {M}athematical {B}iology}, 53(4):672--702, 2006.

\bibitem{muller2013}
J.~M{\"u}ller and H.~Uecker.
\newblock Approximating the dynamics of communicating cells in a diffusive
  medium by odes—homogenization with localization.
\newblock {\em Journal of {M}athematical {B}iology}, 67(5):1023--1065, 2013.

\bibitem{nakao_heat}
H.~Nakao.
\newblock Anomalous spatio-temporal chaos in a two-dimensional system of
  nonlocally coupled oscillators.
\newblock {\em Chaos}, 9(4):902--909, 1999.

\bibitem{nakao_phasered}
H.~Nakao.
\newblock Phase reduction approach to synchronisation of nonlinear oscillators.
\newblock {\em Contemporary Physics}, 57(2):188--214, 2015.

\bibitem{nandy1998}
V.~Nanjundiah.
\newblock Cyclic {AMP} oscillations in {D}ictyostelium {D}iscoideum: {M}odels
  and observations.
\newblock {\em Biophysical chemistry}, 72(1-2):1--8, 1998.

\bibitem{paquin_memb}
F.~Paquin-Lefebvre, W.~Nagata, and M.~J Ward.
\newblock Pattern formation and oscillatory dynamics in a two-dimensional
  coupled bulk-surface reaction-diffusion system.
\newblock {\em SIAM J. Appl. Dyn. Sys.}, 18(3):1334--1390, 2019.

\bibitem{paquin_1d}
F.~Paquin-Lefebvre, W.~Nagata, and M.~J Ward.
\newblock Pattern formation and oscillatory dynamics in a two-dimensional
  coupled bulk-surface reaction-diffusion system.
\newblock {\em SIAM J. Appl. Math.}, 80(3):1520--1545, 2020.

\bibitem{turing1d}
M.~Pelz and M.~J. Ward.
\newblock The emergence of spatial patterns for compartmental reaction kinetics
  coupled by two bulk diffusing species with comparable diffusivities.
\newblock {\em Philosophical Transactions of the Royal Society A},
  381(2245):20220089, 2023.

\bibitem{turing2d}
M.~Pelz and M.~J. Ward.
\newblock Symmetry-breaking bifurcations for compartmental reaction kinetics
  coupled by two bulk diffusing species with comparable diffusivities in 2-d.
\newblock {\em Frontiers in Applied Mathematics and Statistics}, 9:1110497,
  2023.

\bibitem{perez2016}
J.~P{\'e}rez-Vel{\'a}zquez, M.~G{\"o}lgeli, and R.~Garc{\'\i}a~Contreras.
\newblock Mathematical modelling of bacterial quorum sensing: a review.
\newblock {\em Bull. Math. Bio.}, 78:1585--1639, 2016.

\bibitem{sync1}
B.~Pietras and A.~Daffertshofer.
\newblock Network dynamics of coupled oscillators and phase reduction
  techniques.
\newblock {\em Physics Reports}, 819:1--105, 2019.

\bibitem{raab}
M.~Raab, J.~Zeininger, Y.~Suchorski, K.~Tokuda, and G.~Rupprechter.
\newblock Emergence of chaos in a compartmentalized catalytic reaction
  nanosystem.
\newblock {\em Nature Communications}, 14:736, 2023.

\bibitem{rauch2004}
E.~M. Rauch and M.~Millonas.
\newblock The role of trans-membrane signal transduction in {T}uring-type
  cellular pattern formation.
\newblock {\em J. Theor. Biol.}, 226:401--407, 2004.

\bibitem{ridgway}
W.~Ridgway, M.~J. Ward, and B.~T. Wetton.
\newblock Quorum-sensing induced transitions between bistable steady-states for
  a cell-bulk {ODE-PDE} model with {Lux} intracellular kinetics.
\newblock {\em J. Math. Bio.}, 84(1-2), 2021.

\bibitem{schwab}
D.~J. Schwab, A.~Baetica, and P.~Mehta.
\newblock Dynamical quorum-sensing in oscillators coupled through an external
  medium.
\newblock {\em Physica D}, 241(21):1782--1788, 2012.

\bibitem{Selkov}
E.~E. Sel'kov.
\newblock Self-oscillations in glycolysis 1. a simple kinetic model.
\newblock {\em European Journal of Biochemistry}, 4(1):79--86, 1968.

\bibitem{show_epstein}
K.~Showalter and I.~R. Epstein.
\newblock From chemical systems to systems chemistry: Patterns in space and
  time.
\newblock {\em Chaos}, 25:097613, 2015.

\bibitem{viana_heat}
F.~A. Silva, S.~R. Lopes, and R.~L. Viana.
\newblock Synchronization of biological clock cells with a coupling mediated by
  the local concentration of a diffusing substance.
\newblock {\em Commun. Nonlinear Sci Numer. Simulat.}, 35:37--52, 2016.

\bibitem{strogatz2000}
S.~H. Strogatz.
\newblock From {K}uramoto to {C}rawford: exploring the onset of synchronization
  in populations of coupled oscillators.
\newblock {\em Physica D}, 143:1--20, 2000.

\bibitem{artificialBC}
A.~Y. Suhov and A.~Ditkowski.
\newblock Artificial boundary conditions for the simulation of the heat
  equation in an infinite domain.
\newblock {\em SIAM J. Sci. Comput.}, 33:1765--1784, 2011.

\bibitem{numerics}
E.~S\"uli and D.~Mayers.
\newblock {\em An Introduction to Numerical Analysis}.
\newblock Cambridge University Press, 2003.

\bibitem{taga2003chemical}
M.~E. Taga and B.~L. Bassler.
\newblock Chemical communication among bacteria.
\newblock {\em Proceedings of the National Academy of Sciences}, 100(suppl
  2):14549--14554, 2003.

\bibitem{taylor2}
A.~F. Taylor, M.~Tinsley, and K.~Showalter.
\newblock Insights into collective cell behavior from populations of coupled
  chemical oscillators.
\newblock {\em Phys. Chemistry Chem Phys.}, 17(31):20047--20055, 2015.

\bibitem{taylor1}
A.~F. Taylor, M.~Tinsley, F.~Wang, Z.~Huang, and K.~Showalter.
\newblock Dynamical quorum sensing and synchronization in large populations of
  chemical oscillators.
\newblock {\em Science}, 323(5914):614--6017, 2009.

\bibitem{tinsley2}
M.~R. Tinsley, A.~F. Taylor, Z.~Huang, and K.~Showalter.
\newblock Emergence of collective behavior in groups of excitable
  catalyst-loaded particles: {S}patiotemporal dynamical quorum sensing.
\newblock {\em Phys. Rev. Lett.}, 102:158301, 2009.

\bibitem{tinsley1}
M.~R. Tinsley, A.~F. Taylor, Z.~Huang, F.~Wang, and K.~Showalter.
\newblock Dynamical quorum sensing and synchronization in collections of
  excitable and oscillatory catalytic particles.
\newblock {\em Physica D}, 239(11):785--790, 2010.

\bibitem{tompkins}
N.~Tompkins, N.~Li, C.~Girabwe, and S.~Fraden.
\newblock Testing {T}uring's theory of morphogenesis in chemical cells.
\newblock {\em PNAS}, 111(12):4397--4402, 2014.

\bibitem{Mueller2014uecke}
H.~Uecker, J.~M{\"u}ller, and B.~A. Hense.
\newblock Individual-based model for quorum sensing with background flow.
\newblock {\em Bulletin of {M}athematical {B}iology}, 76(7):1727--1746, 2014.

\bibitem{ward2018spots}
M.~J Ward.
\newblock Spots, traps, and patches: {A}symptotic analysis of localized
  solutions to some linear and nonlinear diffusive systems.
\newblock {\em Nonlinearity}, 31(8):R189, 2018.

\bibitem{Wyman}
M.~Wyman and R.~Wong.
\newblock The asymptotic behavior of $\mu(z,\beta,\alpha)$.
\newblock {\em Canad. J. Math.}, 21:1013--1023, 1969.

\bibitem{xu}
B.~Xu and P.~Bressloff.
\newblock A {PDE-DDE} model for cell polarization in fission yeast.
\newblock {\em SIAM J. Appl. Math.}, 76(3):1844--1870, 2016.

\end{thebibliography}

\end{document}